\newenvironment{proofsketch}
  {\par\noindent{\it Proof Sketch:}\hspace{0.5em}\rm}
  {\hfill$\square$\par}  
\DeclareMathOperator{\Ima}{Im}
\newtheorem{m-theorem}{Theorem}
\theoremstyle{definition}
\newtheorem{definition}{Definition}[section]
\newtheorem*{example}{Example}
\newtheorem{proposition}[definition]{Proposition}
\newtheorem{claim}{Claim}
\theoremstyle{plain}
\newtheorem{theorem}[definition]{Theorem}
\newtheorem{lemma}[definition]{Lemma}
\newtheorem{corollary}[definition]{Corollary}
\theoremstyle{remark}
\newtheorem*{remark}{Remark}
\newcommand{\comments}[1]{} 
\newcommand{\tr}{\mathrm{Tr}}
\newcommand{\CQA}{\mathrm{CQA}}
\DeclareMathOperator{\U}{U}
\DeclareMathOperator{\SU}{SU}
\DeclareMathOperator{\Cay}{Cay}
\DeclareMathOperator{\Sym}{Sym}
\DeclareMathOperator{\Asym}{Asym}
\DeclareMathOperator{\YJM}{YJM}
\DeclareMathOperator{\Wick}{Wick}
\DeclareMathOperator{\Comm}{Comm}
\DeclareMathOperator*{\medoplus}{\text{\raisebox{0.25ex}{\scalebox{0.75}{$\bigoplus$}}}}
\newcommand{\Lbra}[1]{\left\langle #1 \right\rvert}
\newcommand{\Lket}[1]{\left\lvert #1 \right\rangle}
\DeclareMathOperator{\sbA}{\ytableaushort{{*(black)}{}, {}{}}} 
\DeclareMathOperator{\sbB}{\ytableaushort{{}{*(black)}, {}{}}}
\DeclareMathOperator{\sbC}{\ytableaushort{{}{}, {*(black)}{}}}
\DeclareMathOperator{\sbD}{\ytableaushort{{}{}, {}{*(black)}}}
\newsavebox{\mstrut}
\newcommand{\bbra}[1]{%
    \sbox{\mstrut}{\(#1\)}%
    \mathinner{\langle\kern-0.3\ht\mstrut\left\langle{#1}\right|}%
}
\newcommand{\kett}[1]{%
    \sbox{\mstrut}{\(#1\)}%
    \mathinner{\left|{#1}\right\rangle\kern-0.3\ht\mstrut\rangle}%
}
\def\l@subsubsection#1#2{} 
\begin{document}

\title{Efficient quantum pseudorandomness under conservation laws}

\author{Zimu Li}
\affiliation{Yau Mathematical Sciences Center, Tsinghua University, Beijing 100084, China}

\author{Han Zheng}
\affiliation{Pritzker School of Molecular Engineering, The University of Chicago, Chicago, IL 60637, USA}
\affiliation{Department of Computer Science, The University of Chicago, Chicago, IL 60637, USA}

\author{Zi-Wen Liu}
\affiliation{Yau Mathematical Sciences Center, Tsinghua University, Beijing 100084, China}

\begin{abstract}

The efficiency of locally generating unitary designs, which capture statistical notions of quantum pseudorandomness, lies at the heart of wide-ranging areas in physics and quantum information technologies. 
While there are extensive potent methods and results for this problem, the evidently important setting where continuous symmetries or conservation laws (most notably $\mathrm{U}(1)$ and $\mathrm{SU}(d)$) are involved is known to present fundamental difficulties. In particular, even the basic question of whether any local symmetric circuit can generate $2$-designs efficiently (in time that grows at most polynomially in the system size) remains open with no circuit constructions provably known to do so, despite intensive efforts. In this work, we resolve this long-standing open problem  for both $\mathrm{U}(1)$ and $\mathrm{SU}(d)$ symmetries by explicitly constructing local symmetric quantum circuits which we prove to converge to symmetric unitary $2$-designs in polynomial time using a combination of representation theory, graph theory, and Markov chain methods.  As a direct application, our constructions can be used to efficiently generate near-optimal covariant quantum error-correcting codes, confirming a conjecture in [PRX Quantum 3, 020314 (2022)]. 

\end{abstract}

\maketitle

%------------------------------------------------------------------------------------------------------------------------------------------------

\section{Introduction and background}

The generation of random quantum states and processes not only represents a core task in quantum information science and technology, underpinning various pivotal applications spanning areas including quantum device benchmarking~\cite{Randomized2005,knill2008randomized,Bannai2021,Elben_2022}, tomography~\cite{Scott2008,Huang2020shadow,Bertoni2024,Elben_2022,Zhu2024Clifford}, error correction~\cite{brown2013short,Brown_2015,preskill2021approxeastinknill,kong2022near}, information theory~\cite{hayden2004randomizing,Hayden-Preskill2007,dupuis2010decouplingapproachquantuminformation,Dupuis_2014,HastingsSuperAdd,Haferkamp2024linear}, and machine learning~\cite{biamonte2017quantum,NetKet,Zheng2021SpeedingUL,Liu2022QNTK,Liu2023QNTK}, but also plays critical roles in diverse areas in fundamental physics~\cite{Elben_2022,Yoshida-Kitaev2017,RobertsChaos2017,Liu_2018,PhysRevLett.120.130502,brandao2021models,Nahum1,Nahum2,Bringewatt2024,Liu2024Mpemba,chang2024deep}. In this regard, a central problem is to produce global random unitaries with quantum circuits composed of local gates, each of which can only involve a limited amount of subsystems. Such local circuit models are a pillar of the study of quantum computing and physical dynamics---they reflect the fundamental locality feature associated with experimental implementations and physical interactions, with which a time scale naturally emerges.

As a consequence of the exponential size of the unitary group in the number of subsystems, the generation of (even reasonable approximate notions of) uniformly random unitaries, namely unitaries distributed according to the Haar measure, requires exponentially large circuit depth (or time). However, for both the usage in practical tasks as well as applications in physics contexts, we are usually interested in ``pseudorandom'' distributions rigorously characterized by \emph{unitary $k$-designs}, i.e.,~ensembles of unitaries that only need to match the Haar measure up to $k$-th moments. In particular, the most important landmark is $k=2$, where the ensembles pervade the space and begin to exhibit nontrivial quantum randomness  useful for applications and genuine quantum features such as quantum information scrambling and global entanglement~\cite{Scott2008,roberts2016chaos,Liu_2018,You_2018,Kuo2020,Znidaric2021,Znidaric2022a,kong2022near,Swingle2022}. From a series of efforts~\cite{OliveiraDahlstenPlenio08,Znidaric2008,Harrow2design2009,Low2010,Brown_2010} leading up to a seminal work by Brandão, Harrow and Horodecki~\cite{harrow2016local,BHH16prl}, it is known that random quantum circuits composed of local random gates generate unitary $k$-designs efficiently, i.e.,~in polynomial time in $k$ and the system size, establishing the foundation for their technological and physical applications. Subsequently, there has been an array of progress that improves the parameters ~\cite{PhysRevX.7.021006,Haferkamp2021,brian2022linear,Gao2022,harrow2023approximate,mittal2023local,Haah2024Pauli,Metger2024,Schuster2024lowDepth,Haferkamp2024linear}.

However, when physical symmetries or conservation laws are imposed, the situation becomes much more mystifying. Under continuous global symmetries, most importantly $\U(1)$ and $\SU(d)$ symmetries which are canonical representatives of Abelian and non-Abelian symmetries respectively, global unitaries that can be generated by local circuits are severely restricted~\cite{MarvianNature,MarvianSU2,SUd-k-Design2023,marvian2023nonuniversality}, in stark contrast to the scenario without symmetry where $2$-local unitaries are already universal~\cite{Vlasov2001,Brylinski2001,Sawicki2016,Oszmaniec2022}. This fundamental distinction imposes serious restrictions on the capability of local circuits to generate designs. 
For instance, it is known that for general qudits with local dimension $d \geq 3$, $2$-local $\SU(d)$-symmetric ensembles cannot even form $2$-designs~\cite{MarvianSUd}. Partial progress towards merely understanding the \emph{possibility} of locally generating certain orders of designs has taken intensive efforts.  Ref.~\cite{SUd-k-Design2023} constructs explicit $4$-local circuit ensembles capable of converging to $\SU(d)$-symmetric designs for $k = O(n^2)$ and proves that any constant locality is not sufficient for generating designs of arbitrarily high order. For the same quadratic scaling of $k$, Ref.~\cite{Marvian3local} optimizes the locality requirement, showing that the $3$-local circuits suffice. Furthermore, Refs.~\cite{MarvianDesign,mitsuhashi2024Designs,mitsuhashi2024Designs2} systematically study the relation between the largest possible $k$ with given localities under transversal $\U(1)$, $\SU(d)$, and some discrete symmetries.  

Remarkably, the further question of whether symmetric designs can be provably efficiently generated has resisted substantial efforts and remained open, with only intrinsic barriers understood from various angles (see also e.g.~Refs.~\cite{kong2022near,SUd-k-Design2023}).
In particular, the proof strategies that have been effective in bounding the convergence time cannot be readily adapted to the case with continuous symmetries. 
First, these strategies usually involve methods from many-body theory such as Knabe bounds~\cite{Knabe1988,Gosset2016}, Nachtergaele's martingale method~\cite{Nachtergaele1996,Cirac2006}, and Lipkin--Meshkov--Glick Hamiltonian~\cite{LMG1965A,Znidaric2008,Brown_2010}, which can help prove $O(1/n)$ spectral gaps, while there are strong evidences~\cite{U(1)Design2023,SUd-k-Design2023} that $2$-local unitary ensembles under both $\U(1)$ and $\SU(2)$ symmetries have spectral gaps lower than $O(1/n)$, leading to polynomially prolonged convergence times. Moreover, the decomposition of the Hilbert space into smaller subspaces respecting the symmetry, which is a characteristic feature induced by symmetries, also prevents the use of another technique---approximate orthogonality of frame operators~\cite{harrow2016local,Haferkamp2021,Harrow2023orthogonality,Metger2024,Haferkamp2024linear}---for evaluating the convergence time, as the dimensionality of these subspaces may scale independently with the qudit local dimension. Subspace decomposition also hinders the computation of frame potential~\cite{junyu2020chargescrambler,hunter2019unitary,brian2022linear,SUd-k-Design2023}. Even in the most basic case of $k=2$, the challenges imposed by continuous symmetries are daunting. In the symmetry-free case, the convergence of unitary ensembles can be translated into the mixing of several independent random walks, which can be upper-bounded using tools from the thoery of Markov chain theory~\cite{Oliveira2design2007a,Oliveira2design2007b,Harrow2design2009,Diniz_2011,Brown_2015,harrow2023approximate}. However, the symmetry constraints obscure this reduction. For instance, the Pauli operators typically used for such reductions are no longer applicable due to the symmetry constraints. To conclude, given the broad practical motivations and theoretical depth evident from the above discussion,  the efficient generation of designs under continuous symmetries emerges as an especially important and interesting open problem.

%------------------------------------------------------------------------------------------------------------------------------------------------

\section{Summary of main results}
In this work, we make progress on this open problem by proving polynomial upper-bounds on the convergence times to $2$-designs under both $\U(1)$ and $\SU(d)$ symmetries. Combining techniques from representation theory, graph theory and Markov chain, we devise radically new approaches which manage to overcome the aforementioned difficulties. Roughly speaking, we explicit write down the representations of our circuit models by employing $S_n$ representation theory~\cite{Okounkov1996,Sagan01,Goodman2009} and study the convergence times by comparing with random walks on certain \emph{Cayley graphs} $\mathcal{G}(\mathcal{T})$ over the symmetric group $S_n$ given generating sets $\mathcal{T}$~\cite{Aldous1992,CaputoProof2010,Bacher1994,Flatto1985,Friedman2000,DiaconisComparison1993,Levin2009, chang2024deep}. 

Concretely, we consider the so-called Convolutional Quantum Alternating circuits proposed in Refs.~\cite{Zheng2021SpeedingUL,SUd-k-Design2023}, which we denote by  $\mathcal{E}_{\CQA,\times}$  for brevity, where ``$\times$'' can stand for either $\SU(d)$ or $\U(1)$ symmetry. Similar notations are also applied hereafter. Let $\tau$ denote a transposition/SWAP acting on 2 qudits and let $\mathcal{T}$ be a generating subset of $S_n$. In defining $\mathcal{E}_{\CQA,\times}$, we sample from two gates sets: (a) unitary time evolutions $e^{-i\theta \tau}$ for $\theta \in [0,2\pi]$ and $\tau \in \mathcal{T}$, and (b) unitary time evolutions of local Hamiltonians specified in \eqref{eq: wick-contraction-superoperator-U1} and \eqref{eq: wick-contraction-superoperator-SUd} ($2$-local for $\U(1)$ symmetry and $4$-local for $\SU(d)$ symmetry). Our main results can be informally summarized as follows: 
\begin{m-theorem}[Informal] \label{thm: CQA-second-largest-eigenvals}
	The local circuit ensembles defined by $\mathcal{E}_{\CQA,\times}$ can converge to symmetric $2$-designs within a polynomial depth. Specifically, for $n$ qubits under $\U(1)$ symmetry,
	\begin{enumerate}[itemsep=0mm, left=0mm]
		\item  If $\mathcal{T}$ consists of geometrically adjacent (nearest-neighbour) SWAPs on a 1D chain, the convergence time is $O(n^3(4n \log 2 + \log (1/\epsilon) ))$.
		\item If $\mathcal{T}$ consists of SWAPs on a complete graph (representing all-to-all interactions) or star graph, the convergence time is $O(n(4n \log 2 + \log (1/\epsilon) ))$.
	\end{enumerate}
	Under $\SU(d)$ symmetry with adjacent SWAPs, the convergence time is $O(n^3(4n \log d+ \log (1/\epsilon) ))$.
\end{m-theorem}

Here we study $\SU(d)$ symmetry for general $d$ (note the sharp difference for $d \geq 3$~\cite{Biedenharn1,Biedenharn2,Marin1,Marin2,Zheng2021SpeedingUL,MarvianSUd}) and only consider qubits for $\U(1)$ symmetry since the generalization to qudits is straightforward. For fixed localities, all other possible symmetric ensembles should similarly exhibit a polynomial scaling of convergence time (see Appendix \ref{sec:sketches}). We briefly introduce symmetric designs and articulate the associated fundamental difficulties in Section \ref{sec:basic}. Then we formally define the CQA ensemble and present our proof strategy in Section \ref{sec:CQAconvergence}.

%------------------------------------------------------------------------------------------------------------------------------------------------

\section{Unitary designs and symmetries}\label{sec:basic}

Let $\mathcal{E}$ be an ensemble (distribution) of unitaries acting on the Hilbert space $\mathcal{H}$. For any operator $M \in \operatorname{End}(\mathcal{H}^{\otimes k})$, the \emph{$k$-fold (twirling) channel} (or the \emph{$k$-th moment (super-)operator}) with respect to $\mathcal{E}$ acting on $M$ is defined by $T_k^{\mathcal{E}}(M) = \mathbb{E}_{\mathcal{E}}[ U^{\otimes k} M U^{\dagger \otimes k}]$.
Besides, given any compact group $G$, we use $T_k^G$ to denote the $k$-th moment operator defined by the Haar measure over $G$. An ensemble is called an \emph{(exact) unitary $k$-design} with respect to the group $G$ if $T^{\mathcal{E}}_k = T_k^G$. More generally, we call $\mathcal{E}$ an \emph{$\epsilon$-approximate $k$-design} if the strong notion of $\epsilon$-approximation in terms of complete positivity~\cite{harrow2016local,Gao2022,Metger2024,Haferkamp2024linear} holds, namely, $(1 - \epsilon) T_k^G \leq_{\mathrm{cp}} T_k^{\mathcal{E}} \leq_{\mathrm{cp}} (1 + \epsilon) T_k^G$, where $A \leq_{\mathrm{cp}} B$ means $B - A$ is completely positive and $c_{\mathrm{cp}}(\mathcal{E}, k)$ is the smallest constant $\epsilon$ achieving the above bound. Suppose $T^{\mathcal{E}}_k$ is Hermitian and positive semidefinite (PSD). It turns out that $c_{\mathrm{cp}}(\mathcal{E}, k)$ can be evaluated via the spectrum of $T^{\mathcal{E}}_k$. To be precise, let $\Delta(T^{\mathcal{E}}_k) = 1 - \lambda_2(T^{\mathcal{E}}_k)$ be the \emph{spectral gap} of $T^{\mathcal{E}}_k$ with $\lambda_2(T^{\mathcal{E}}_k)$ being its second largest eigenvalue. 
The spectral gap determines the rate at which the ensemble $\mathcal{E}$ converges to $k$-designs, and consequently, the required circuit depth for generating $\epsilon$-approximate a $k$-design. To be more precise, 
consider a circuit consisting of $p$ steps of random walks where in each step we sample a unitary from the ensemble $\mathcal{E}$. It can be shown~\cite{vanDam2002,Low2010,harrow2016local,SUd-k-Design2023} that when
\begin{align}\label{eq:ConvergenceDepth}
	p \geq \frac{1}{\Delta(T^{\mathcal{E}}_k)} (2kn\log d + \log 1/\epsilon),
\end{align}
this random circuit forms an $\epsilon$-approximate $k$-design (also see Ref.~\cite{Schuster2024lowDepth} for an improved dependence on $k$).
The canonical case where $G$ is the unitary group $\U(\mathcal{H}) \equiv \U(N) \equiv \U(d^n)$ of an $n$-qudit system with $\dim \mathcal{H} = N = d^n$ has been extensively studied in mathematics and quantum information literature. Prominent results include, e.g., efficient convergence to approximate $k$-designs by $2$-local random circuits~\cite{Dankert2026PRA,Gross2006,Znidaric2008,Harrow2design2009,Brown_2010,harrow2016local,hunter2019unitary,harrow2023approximate,mittal2023local,Metger2024,Haferkamp2024linear}, exact $k\leq 3$-designs induced by the Clifford group~\cite{DiVincenzo_2002,Gross2006,Dankert2026PRA,webb2015clifford,zhu2016clifford,zhu2017multiqubit,CliffordSampling2014,CliffordSampling2021} and exact designs of general orders~\cite{Guralnick2005,Bannai2018,Bannai2019,Bannai2021,PRXQuantum.2.030339}.

Let $\hat{\sigma} = \sum_{i_1, \cdots, i_k} \ket{i_{\sigma(1)}, \cdots i_{\sigma(k)}}\bra{i_1, \cdots, i_k}$ for $\sigma \in S_k$ and $i_j =0, \cdots, N-1$ denote permutation operators acting on the $k$-fold tensor $\mathcal{H}^{\otimes k}$. Obviously, $T^{\U(\mathcal{H})}_k(\hat{\sigma}) = \hat{\sigma}$. Moreover, it is well known that these permutations span all possible operators $M$ such that $T^{\U(\mathcal{H})}_k(M) = M$ by the double commutant theorem~\cite{Goodman2009,Tolli2009}. This invariance motivates the following complete set of projection super-operators, $\{\mathcal{P}_{\nu \zeta} := \Pi_\nu (\cdot) \Pi_{\zeta}: \nu, \zeta \vdash k \}$, where $\Pi_{\nu}, \Pi_{\zeta} $ denotes the projection  onto the irreducible representations (irreps) $S^\nu, S^\zeta$ of the symmetric group $S_k$ for the integer partitions $\nu, \zeta \vdash k$, respectively. We note that (see in Appendix~\ref{sec:kDesigns})
\begin{align}
	\mathcal{P}_{\nu \zeta}  \circ T^{\mathcal{E}}_k =  T^{\mathcal{E}}_k \circ \mathcal{P}_{\nu \zeta}.
\end{align}
In particular, this implies that $T^{\mathcal{E}}_k$ can be block-diagonalized by $\mathcal{P}_{\nu \zeta}$, namely $T^{\mathcal{E}}_k = \bigoplus_{\nu, \zeta \vdash k } \mathcal{P}_{\nu \zeta} \circ T^{\mathcal{E}}_k \circ \mathcal{P}_{\nu \zeta} \equiv T^{\mathcal{E}}_{k, \nu \zeta}$. For $\mathcal{E}$ to form a (approximate) unitary $k$-design, the unit eigenspaces of $T^{\mathcal{E}}_k$  must be spanned by these permutation operators $\hat{\sigma}$. By Schur's lemma, $\mathcal{P}_{\nu \zeta}(\hat{\sigma}) = \hat{\sigma} \delta_{\nu \zeta}$, so that $\mathcal{P}_{\nu \zeta}$ decompose the unit eigenspace into $\{ \Ima \mathcal{P}_{\nu \nu }; \nu \vdash k \}$ and it is well-known that the number of independent eigenvectors should be $k! = \sum_{\nu \vdash k} \dim(\Ima \mathcal{P}_{\nu \nu}) = \sum_{\nu \vdash k} (\dim S^\nu)^2$ when $k < d^n$~\cite{Rains1998}. It is important to note that the projection $\mathcal{P}_{\nu \zeta}$ does not guarantee all unit eigenvectors are separated, i.e. two eigenmatrices may lie in the same $\Ima \mathcal{P}_{\nu \nu}$.

For the design generation problem it is natural to introduce techniques from the study of spectral gaps and convergence times, which is a central topic in Markov chain theory~\cite{Diaconis1988,Levin2009}, by converting the moment operator $T_k^{\mathcal{E}}$ into Markov transition operators~\cite{Oliveira2design2007a,Oliveira2design2007b,Harrow2design2009,Diniz_2011,Brown_2015}. It is generally unclear how to further separate unit eigenvectors within $\Ima \mathcal{P}_{\nu \nu}$, which invalidate most well-known methods~\cite{DiaconisCheeger1991,DiaconisComparison1993} regardless of whether symmetries are involved or not. However, in the special case of $k=2$, $\nu \vdash 2$ can only correspond to either the symmetric (trivial) or the anti-symmetric (sign) irrep, both of which are one-dimensional. Letting $\Pi_{\pm}$ denote the corresponding projections, we define 
\begin{align}
	\begin{aligned}
		&\mathcal{P}_{\sbA} = \Pi_+ (\cdot) \Pi_+, \quad \mathcal{P}_{\sbD} = \Pi_- (\cdot) \Pi_-, \\
		&\mathcal{P}_{\sbB} = \Pi_+ (\cdot) \Pi_-, \quad \mathcal{P}_{\sbC} = \Pi_- (\cdot) \Pi_+,
	\end{aligned}
\end{align}
in which case the two orthonormal eigenmatrices are naturally split within $\Ima \mathcal{P}_{\sbA}$ and $\Ima \mathcal{P}_{\sbD}$ respectively. 

Besides using irreps, there is a simpler way to divide eigenvectors into different subspaces using Pauli basis in $2$-designs~\cite{Harrow2design2009,Diniz_2011,harrow2023approximate}. However, when conserved quantities are involved, this method is no longer applicable since the presence of conserved quantities shields certain degrees of freedom in the physical spaces. Let $G = \mathcal{U}_\times$ be the group consisting of unitaries respecting the symmetry. We typically assume the group elements as a direct sum of smaller unitaries $\bigoplus^J_{j=1} I_{m_{\lambda_j}} \otimes U_{\lambda_j}$ corresponding to the decomposition of Hilbert space $\mathcal{H} = \bigoplus^J_{j=1} \mathcal{H}_{j} \otimes \mathcal{H}^{\lambda_j}$, where $m_{\lambda_j} = \dim \mathcal{H}_j$ denotes the space of multiplicities and $\lambda_j$ labels  inequivalent charge sectors or irreps~\cite{kong2022near,MarvianNature,Zheng2021SpeedingUL,U(1)Design2023}. The Haar measure over $\mathcal{U}_\times$ is simply given by taking each $U_{\lambda_j} \in \U(\mathcal{H}^{\lambda_j})$ from the corresponding Haar measure. In such cases, the $k$-fold channel associated with $\mathcal{U}_\times$ involve a similar decomposition (for simplicity we ignore multiplicities and denote by $\lambda_{i_1}$ an arbitrary subspace label), 
\begin{align}\label{eq: k-fold-channel-symmetry-decomposition}
    & T^{\mathcal{U}_\times}_k (M) =  (\medoplus_{\substack{\lambda_{i_1}, \cdots, \lambda_{i_k} \\ \lambda_{j_1},\cdots, \lambda_{j_k}}} T^{\lambda_{i_1}, \cdots, \lambda_{i_k}}_{\lambda_{j_1},\cdots, \lambda_{j_k}})(M) \\
    & = (\medoplus_{\substack{\lambda_{i_1}, \cdots, \lambda_{i_k} \\ \lambda_{j_1},\cdots, \lambda_{j_k}}} \int_{\mathcal{U}_\times} dU  U^{\otimes^k_{\ell=1}}_{\lambda_{i_\ell}} \otimes \bar{U}^{\otimes^k_{\ell = 1}}_{\lambda_{j_\ell}}) (M)
    \equiv \mathcal{P}^{\Wick}_k \circ \widetilde{T}^{\mathcal{U}_\times}_k. \notag
\end{align}
Here we call $\mathcal{P}^{\Wick}_k$ the \emph{Wick projection}, which is a super-operator projection that aligns subspace labels and basis elements (see Appendix~\ref{sec:sketches} and \ref{sec:detailsT2CQA} for more details) in a way analogous to the Wick contraction in quantum field theory. The projection $\widetilde{T}^{\mathcal{U}_\times}_k$ can be taken arbitrarily as long as their composition yields $T^{\mathcal{U}_\times}_k$. The above formula emphasizes the significance of $\mathcal{P}^{\Wick}_k$, which will be introduced in detail and drastically simplifies the analysis. For example, when $k = 2$, for fixed subspace labels $i_1, i_2, j_1, j_2$ and its basis indices $a_1, a_2, b_1, b_2$, we can write $\mathcal{P}^{\Wick}_2 = \delta^{\{i_1, i_2\}}_{\{j_1, j_2\}} \delta^{\{a_1, a_2\}}_{\{b_1, b_2\}}$, which are nonzero when $\{i_1, i_2\} = \{j_1, j_2\}$ and $\{a_1, a_2\} = \{b_1, b_2\}$. Mathematically, $\mathcal{P}^{\Wick}_k$ can be constructed by elements in the so-called \emph{Gelfand-Tsetlin (GZ) algebra}~\cite{Okounkov1996,Tolli2009}. In particular, for our current focus on transversal $\U(1)$ and $\SU(d)$ symmetries, it is shown in Ref.~\cite{SUd-k-Design2023} and Appendix~\ref{sec:detailsT2CQA} that the Wick projections can be written as 
\begin{small}
    \begin{align}
       \hspace{-1mm}  & \mathcal{P}^{\Wick}_{k,\U(1)} \hspace{-1mm} =  \hspace{-1mm} \prod_{r,s} \hspace{-1mm} \int \hspace{-1mm} d\gamma \exp(i\gamma_{rs}Z_r Z_s)^{\otimes k}  \hspace{-1mm}  \otimes  \exp(-i\gamma_{rs}Z_r Z_s)^{\otimes k}, \label{eq: wick-contraction-superoperator-U1} \\
    \hspace{-1mm} &\mathcal{P}^{\Wick}_{k,\SU(d)} \hspace{-1mm} = \hspace{-1mm} \prod_{r,s} \hspace{-1mm} \int \hspace{-1mm} d\gamma \exp(i\gamma_{rs}X_r X_s)^{\otimes k} \hspace{-1mm}  \otimes  \exp(-i\gamma_{rs}X_r X_s)^{\otimes k} \hspace{-1mm}, \label{eq: wick-contraction-superoperator-SUd}
\end{align}
\end{small}
where $Z_i$ denotes Pauli-$Z$ acting on the $i$-th site and $X_i$ denote the $i$-th \emph{Young--Jucys--Murphy (YJM) operators}~\cite{Young1977,Jucys1974,Murphy1981,Okounkov1996}. By definition, these Pauli strings are $2$-local. The YJM operators are defined by SWAPs and hence their products are $4$-local (also see Appendix~\ref{sec:SnTheory} for more details). Finally, we note that $\mathcal{P}_{\ytableaushort{{} {}, {} {}}} \circ \mathcal{P}^{\Wick}_2 = \mathcal{P}^{\Wick}_2 \circ \mathcal{P}_{\ytableaushort{{} {}, {} {}}}$ with $\mathcal{P}_{\ytableaushort{{} {}, {} {}}}$ denoting any choice of invariant subspace projection. The proof, which holds for general $k$, is given in Appendix~\ref{sec:kDesigns}. 

%------------------------------------------------------------------------------------------------------------------------------------------------

\section{Symmetric CQA ensembles and convergence time}\label{sec:CQAconvergence}

By definition, the Wick projections for both $\U(1)$ and $\SU(d)$ can be constructed with locality-preserving operators in the formation of $2$-designs. They are also of great significance in facilitating our proof on the spectral gap. 
We include them into our symmetric circuit ensembles $\mathcal{E}_{\CQA,\times}$. Formally, we focus on the following second moment operators: 
\begin{align}
    & T^{\mathcal{E}_{\CQA}}_{\U(1)} := \mathcal{P}^{\Wick}_{2, \U(1)} \circ \Big(\frac{1}{|\mathcal{T}|}\sum_{\tau \in \mathcal{T}} T^\tau_2 \Big) \circ \mathcal{P}^{\Wick}_{2, \U(1)},  \label{eq: 2nd-channel-cqa-U1} \\
    & T^{\mathcal{E}_{\CQA}}_{\SU(d)} := \mathcal{P}^{\Wick}_{2, \SU(d)} \circ \Big(\frac{1}{|\mathcal{T}|}\sum_{\tau \in \mathcal{T}} T^\tau_2 \Big) \circ \mathcal{P}^{\Wick}_{2, \SU(d)}.  \label{eq: 2nd-channel-cqa-SUd}
\end{align}
The transposition $\tau$ respects both $\U(1)$ and $\SU(d)$ symmetries by definition, and $ T^\tau_2 = \mathbb{E}_\theta[ \exp(i\theta \tau)^{\otimes 2} \otimes \exp(-i \theta \tau)^{\otimes 2}]$. Physically, one can interpret the $T_2^\tau$'s as results of evenly drawing an index $j \in [n]$, a parameter $\theta \in [0, 2\pi]$, and implementing the $2$-local unitary $\exp(i\theta \tau)$ on the qudits. By Eqs.~\eqref{eq: wick-contraction-superoperator-U1} and~\eqref{eq: wick-contraction-superoperator-SUd}, the respective Wick projections can also be implemented in a similar way, which defines $\mathcal{E}_{\CQA,\times}$ (for more details, see Definition \ref{def:CQAEnsemble} and Appendix \ref{sec:sketches}).

%------------------------------------------------------------------------------------------------------------------------------------------------

We now sketch the strategy and essential ideas to rigorously upper bound the $2$-design convergence time of these ensembles. The proof details are left to Appendix~\ref{sec:details}. We first emphasize that when imposing a global symmetry including $\U(1)$ and $\SU(d)$, the dimension of unit eigenspace of $T^{\mathcal{U}_\times}_k$ defined in Eq.~\eqref{eq: k-fold-channel-symmetry-decomposition} scales with both $k$ and, crucially, the number of ways to decompose subspace along with their multiplicities. Consequently, we have to bound the spectral gaps for all possible cases. When $k=2$, the Wick projection reduces the formidable-looking procedure to three more accessible types: (i) $ S^{\lambda} \otimes S^{\mu} \otimes S^{\lambda} \otimes S^{\mu}$, (ii) $ S^{\lambda} \otimes S^{\mu} \otimes S^{\mu} \otimes S^{\lambda}$, and (iii) $ S^{\lambda} \otimes S^{\lambda} \otimes S^{\lambda} \otimes S^{\lambda}$, for $\lambda \neq \mu$. Note that, for simplicity, we ignore the multiplicities and $S^\lambda$ is used to denote the subspace in the decomposition of $\mathcal{H}$ with respect to either $\U(1)$ or $\SU(d)$ symmetry. We deal with these cases one by one.

As mentioned earlier, we bound the spectral gap by comparing with those of Cayley graphs over $S_n$ and their associated Cayley moment operators defined below. Let $\mathcal{T}$ be a generating set consisting of SWAPs in $S_n$. The problem of bounding the spectral gap of $\mathcal{G}(\mathcal{T})$ has been actively studied in spectral graph theory, and for some interesting cases, explicit bounds can be established~\cite{Aldous1992,CaputoProof2010}. In particular, after normalization, we have $\Delta(\mathcal{G}(\mathcal{T})) = \frac{2}{n-1}(1-\cos\frac{\pi}{n})$ for $\mathcal{T}$ consisting of nearest-neighbour SWAPs~\cite{Bacher1994}, and $\Delta(\mathcal{G}(\mathcal{T})) = \frac{2}{n-1}$ for all-to-all interactions~\cite{Flatto1985,Friedman2000} (also see Appendix~\ref{sec:Aldous} for more details). For simplicity, in the following, we only assume $\mathcal{T}$ including $n-1$ many nearest-neighbour SWAPs. Loosely speaking, type (i) and (ii) above are analogous to the product of $1$-designs, where the corresponding spectral gaps can be bounded easily with those of the Cayley graphs. This leads to $\Delta( T^{\mathcal{E}_{\CQA}}_{2, \times} \vert_{S^{\lambda} \otimes S^{\mu} \otimes S^{\lambda} \otimes S^{\mu}} ) = \Omega( \Delta(\mathcal{G}(\mathcal{T})) )$ and case (ii) follows similarly. The proofs can be found in Theorem \ref{thm:Case1&2}. 

To tackle type (iii), we first define $\mathcal{P}_{\ytableaushort{{} {}, {} {}}}$ for $(S^{\lambda})^{\otimes 4}$ and divide it into 4 invariant subspaces as we did earlier for a general Hilbert space $\mathcal{H}$. We define the \emph{Cayley moment operator}:
\begin{align}\label{eq: cayley-moment-operator}
		\begin{aligned}
			\hspace{-1mm} \Cay_{2, \times} & \hspace{-1mm} = \hspace{-1mm} \mathcal{P}^{\Wick}_{2, \times} \circ \frac{1}{8}(6 IIII +\tau \tau II +  II \tau \tau ) \circ \mathcal{P}^{\Wick}_{2, \times}. \hspace{-2mm} 
		\end{aligned}
\end{align}
where for brevity, we abuse the notation by denoting the restrictions of all super-operators to $(S^{\lambda})^{\otimes 4}$ by their original notations, and omit $\otimes$. For both $\SU(d)$ and $\U(1)$ symmetries, $\Cay_{2, \times}$ admits a unique unit eigenvalue within each $(S^{\lambda})^{\otimes 4}$. This is proved in Claim \ref{claim:Cayley-1} in Appendix~\ref{sec:detailsU(1)} using properties of Kronecker coefficients~\cite{Fulton1997,Sagan01} and Wick projections themselves. Furthermore, as a simple application of the Cauchy interlacing theorem~\cite{Horn2017}, we obtain $\Delta(\Cay_{2, \times}) = \Theta(\Delta(\mathcal{G}(\mathcal{T})))$ which legitimates the use of $\Cay_{2, \times}$ in comparing gaps. We sketch the proof strategy in the diagram on the right, where the arrows point towards larger spectral gaps and two-sided arrows means that both sides have the same scaling.

Recall that unit eigenvectors do not appear in the off-diagonal invariant subspaces $\Ima \mathcal{P}_{\ytableaushort{{} {*(black)}, {*(black)} {}}}$, so we compare by the largest eigenvalue $\lambda_1(T^{\mathcal{E}_{\CQA}}_{2, \times} {\ytableaushort{{} {*(black)}, {*(black)} {}}})$ in the top-left step and the comparison can be made by direct computation. The steps on the right leverage various techniques from the theory of Markov chains including: comparison via Dirichlet forms, defining induced Markov chains \cite{Levin2009} and comparing by congestion ratio of transition paths \cite{DiaconisComparison1993}. An extended proof outline can be found in Appendix~\ref{sec:sketches} with full details presented in Appendix~\ref{sec:details}. Combining these results for all invariant blocks, we can bound $\Delta(T^{\mathcal{E}_{\CQA}}_{2, \times})$ by those of the Cayley moment operators and the corresponding Cayley graphs. Using \eqref{eq:ConvergenceDepth}, we arrive at Theorem \ref{thm: CQA-second-largest-eigenvals}.
\[\begin{tikzcd}
	& {\Delta(\Cay_{2, \times})} \\
	{1-\lambda_1(T^{\mathcal{E}_{\CQA}}_{2, \times} {\ytableaushort{{} {*(black)}, {*(black)} {}}})} && {\Delta(T^{\mathcal{E}_{\CQA}}_{2, \times} {\sbA})} \\
	&& {\Delta(\widetilde{T}^{\mathcal{E}_{\CQA}}_{2, \times} {\sbA})} \\
	& {\Delta(T^{\mathcal{E}_{\CQA}}_{2, \times} {\sbD})}
	    \arrow["\text{\begin{tikzpicture}[baseline=(char.base)] \node[fill=gray!20, inner sep=1pt ] (char) {\text{direct comparison}}; \end{tikzpicture}}"{description}, from=1-2, to=2-1];
	\arrow["\text{\begin{tikzpicture}[baseline=(char.base)] \node[fill=gray!20, inner sep=1pt] (char) {\text{comparison via Dirichlet form}}; \end{tikzpicture}}"{description}, tail reversed, from=1-2, to=2-3]
	\arrow["\text{\begin{tikzpicture}[baseline=(char.base)] \node[fill=gray!20, inner sep=1pt] (char) {\text{induced Markov chain\vphantom{g}}}; \end{tikzpicture}}"', from=2-3, to=3-3]
	\arrow["\text{\begin{tikzpicture}[baseline=(char.base)] \node[fill=gray!20, inner sep=1pt] (char) {\text{path comparison method}}; \end{tikzpicture}}"{description}, tail reversed, from=4-2, to=3-3]
\end{tikzcd}\]

%------------------------------------------------------------------------------------------------------------------------------------------------

\section{Discussion}

By devising strategies radically different from those successful in proving convergence rates in the symmetry-free case, we resolve the long-standing open problem of efficient generation of pseudorandom ensembles under conservation laws by constructing explicit circuit models originated from the CQA ansatz which provably converge to symmetric $2$-designs in polynomial time. Our results pave the way for various physical and practical applications that rely on efficient constructions of unitary designs under conservation laws, such as covariant quantum error correction~\cite{preskill2021approxeastinknill,Hayden_2021,faist20,Woods2020continuousgroupsof,Zhou2021newperspectives,kong2022near,Yang22,Kubica21,liu2021approximate,liu2023quantum,SUd-k-Design2023Application}, geometric machine learning~\cite{Zheng2021SpeedingUL,ragone2022representation,LSKP23,Liu2022QNTK,Liu2023QNTK}, and understanding the role of symmetries in the physics of complex quantum systems, a critical topic in quantum many-body physics and gravity, from the perspectives of e.g.~thermodynamics~\cite{SUd-k-Design2023Application,majidy2023critical, Majidy2023review,chang2024deep} and scrambling~\cite{khemani2018operator, rakovszky2018diffusive, Huang2019OTOC,yoshida:soft,Nakata2023,Liu2024Mpemba}. 

We note that it is not evident how to extend our current analysis to designs of higher orders (extended discussions can be found in the Appendix). This extension would potentially be useful for understanding the complexity evolution of symmetric random circuits using arguments analogous to Ref.~\cite{brandao2021models}, which we deem an important avenue for future work. Furthermore, it is particularly worth noting that our constructions are friendly for experimental implementation, potentially enabling the aforementioned applications as well as the simulation of nonequilibrium many-body quantum dynamics and charged black holes, which is of great physical interest, in the near term.  

%------------------------------------------------------------------------------------------------------------------------------------------------

\section*{Acknowledgements}
ZL and HZ contribute equally in this work. ZL and ZWL are supported in part by a startup funding from YMSC, Tsinghua University, and NSFC under Grant No.~12475023.

\bibliography{circuit-symm}

\clearpage
\widetext
\appendix
\begingroup
\section*{Appendix}

\titleformat{\section}[block]{\large\bfseries\filcenter}{}{0pt}{}
\setcounter{tocdepth}{2} 

\startcontents[sections]
\titlecontents{section}[0pt]{\vspace{5mm}}{\thecontentslabel\hspace{1em}}{}{\titlerule*[1pc]{.}\contentspage}
\titlecontents{subsection}[1.5em]{}{\thecontentslabel\hspace{1em}}{}{\titlerule*[1pc]{.}\contentspage}
\printcontents[sections]{}{1}{\section*{}\vspace{-10mm}}

\endgroup

%---------------------------------------------------------------------------------------------------

\section{Summary of notations}

\renewcommand{\arraystretch}{1.5} 

\begin{table}[H]
	\centering
	\caption{} 
	\begin{tabular}{c l }
		Notation & Definition  \\ [0.5ex]
		%heading
		\hline
		$\lambda \vdash n$ & A partition of $n$ \\
		$S^\lambda$ & An $S_n$ irrep as a subspace of the Hilbert space of $n$ qudits \\ 
		$d_\lambda, m_\lambda$ & The dimension and multiplicity of $S^\lambda$ \\
		$p(n,d)$ & The number of inequivalent $S_n$ irreps of $n$ qudits \\ 
		$S^\mu$ & A $\U(1)$ charge sector \\ 
		$X_l$ & A YJM element  \\
		$\tau$ & An adjacent transposition/SWAP $(j,j+1)$ \\
		$T$ & A standard Young tableaux \\
		$\alpha_T$ & The content vector of the tableaux $T$ \\
		$r$ & The axial distance \\
		$\mathcal{U}_{\SU(d)}$ & The group of $\SU(d)$-symmetric unitaries \\
		$S\mathcal{U}_{\SU(d)}$ & The subgroup of $S\mathcal{U}_{\SU(d)}$ with trivial relative phases \\
		$\mathcal{V}_{r,\SU(d)}$ & The group generated by $r$-local $\SU(d)$-symmetric unitaries \\
		$\CQA_{\SU(d)}$ & The group generated by 2nd order YJM elements and $\tau_j$ \\
		$\CQA^{(k)}_{\SU(d)}$ & The group generated by $k$-th order YJM elements and $\tau_j$ \\
		$\mathcal{E}_{\CQA,\SU(d)}$ & The CQA random walk ensemble \\
		$T_{k,\SU(d)}^{\mathcal{U}_\times}$ & The $k$-th moment operator with respect to the group $\mathcal{U}_{\SU(d)}$ and its Haar measure \\
		$T_{k,\SU(d)}^{\CQA}$ & The $k$-th moment operator of the Haar measure of $\CQA_{\SU(d)}$ \\
		$T_{k,\SU(d)}^{\mathcal{E}_{\CQA}}$ & The $k$-th moment operator of the ensemble $\mathcal{E}_{\CQA,\SU(d)}$ \\
		$T_2^{\YJM}$ & The second moment operator defined by YJM elements \\
		$M_{2,\SU(d)}^{\mathcal{E}_{\CQA}}$ & The modified second moment operator of $T_{2,\SU(d)}^{\mathcal{E}_{\CQA}}$ \\
		$\Ima\mathcal{P}_{\sbA}$ & An invariant subspace defined by the projection $\mathcal{P}_{\sbA}$ \\
		$S_{\sbA}$ & The subspace in $\Ima\mathcal{P}_{\sbA}$ obtained by further projection through $T_2^{\YJM}$ \\
		$M^{\mathcal{E}_{\CQA}}_{2,\sbA}$ & The modified moment operator restricted to 	$S_{\sbA}$ \\
		$\Cay_{2,\SU(d)}$ & The Cayley moment operator \\
		$\Cay_{2,\SU(d),\Sym},\Cay_{2,\SU(d),\Asym}$ & Two invariant sub-blocks of the Cayley moment operator \\
		$\lambda_i(M)$ & The $i$-th eigenvalue of an operator $M$ \\
		$\Delta(M)$ & The spectral gap of an operator $M$ \\
		$(\mathcal{X},P,\pi)$ & A Markov chain with state space $\mathcal{X}$ and transition matrix $P$ \\
		$\pi_P$ & The stationary distribution of an irreducible reversible Markov chain \\
		$(\mathcal{S},\tilde{P})$ & An induced Markov chain of $P$ to a subspace $\mathcal{S} \subset \mathcal{X}$ \\
		$\mathcal{E}(P,f)$ & The Dirichlet form of a reversible $P$ and function $f$ defined on $\mathcal{X}$ \\
		$\gamma_{xy}$ & A directed path from state $x$ to $y$ in $\mathcal{X}$ \\
		$v_{ab}$ & A basis state from $S_{\sbA}$ \\
		$s_{ab}$ & A basis state from $S_{\sbD}$ \\
		\hline
	\end{tabular}
	\label{table:Notations}
\end{table}

We also have similar notations for the U$(1)$ case like U$(1)$-symmetric ensembles $\mathcal{E}_{\CQA, \U(1)}$ and moment operators $T_{2,\U(1)}^{\mathcal{E}_{\CQA}}$. When there is no ambiguity, these subscripts would be omitted or replaced by the symbol ``$\times$" for conciseness.

%------------------------------------------------------------------------------------

\section{Preliminaries}

Here we introduce various basic notions and facts about $S_n$ representation theory, $S_n$ Cayley graph theory, subspace decomposition and unitary $k$-designs under $\SU(d)$ and $\U(1)$ symmetries,  Markov process as well as our CQA model, to lay the foundation for later mathematical proofs. We also refer interested readers to Refs.~\cite{Fulton1997,Sagan01,Goodman2009,Tolli2009,Levin2009,Zheng2021SpeedingUL,SUd-k-Design2023} for more systematic presentations on these topics.

\subsection{Miscellaneous facts about $S_n$ representation theory}\label{sec:SnTheory}

Irreducible representations (irreps) of the symmetric group $S_n$  are in one-to-one correspondence with the so-called \emph{Young diagrams}. For instance, for $S_6$, the following two Young diagrams stand for the \emph{trivial representation} and the \emph{standard representation}, respectively:
\begin{align*}
	\ytableausetup{boxsize=1.25em} \ydiagram{6}, \qquad \ydiagram{5,1},
\end{align*}
the direct sum of which is more familiar as the 6-dimensional \emph{defining representation} under which each $\sigma \in S_6$ permutes the components of vectors from $\mathbb{R}^6$. 

\begin{definition}\label{def:YoungDiagram}
	Formally, let $\lambda = (\lambda_1,\ldots,\lambda_r)$ be a collection of positive integers such that $\lambda_i \geq \lambda_{i+1}$ and $\sum_i \lambda_i = n$. Then $\lambda$ is called a \emph{partition} of the integer $n$, denoted by $\lambda \vdash n$. Obviously, $\lambda$ defines a Young diagram abstractly and the $S_n$ irrep corresponding to this Young diagram is always denoted as $S^\lambda$. The dimension of this irrep is given by the \emph{hook length formula}:
	\begin{align}
		\dim S^\lambda = \frac{n!}{\prod_{(x,y) \in \lambda} h_{x,y} },
	\end{align}
	where $(x,y)$ specifies a box from $\lambda$ by its \emph{row} and \emph{column numbers}, and the \emph{hook length} $h(x,y)$ counts the number of all boxes to the right of or below $(x,y)$ plus itself.
\end{definition}

Given an arbitrary $S_n$ irrep $S^\lambda$, there is a canonical way to label a basis, called the \emph{Gelfand--Tsetlin (GZ) basis} or the \emph{Young--Yamanouchi basis}, on the representation space, using \emph{standard Young tableau} $T$, which are defined by filling into each box of $\lambda$ a positive integer from $1,2,\ldots,n$ in an increasing order from left to right and top to bottom. We simply call it the \emph{Young basis} hereafter. For instance, the standard representation of $S_6$ mentioned above is 5-dimensional with 5 basis vectors labeled as
\begin{align*}
	\ytableaushort{12345, 6}, \quad  \ytableaushort{12346, 5}, \quad \ytableaushort{12356, 4} \quad \ytableaushort{12456, 3}, \quad \ytableaushort{13456, 2}.
\end{align*}
When we present our mathematical result in Section \ref{sec:sketches} \& \ref{sec:details}, the Young basis is intensively used for computations.

\begin{definition}\label{def:YJM}
	For $1 < k \leq n$, the \textit{Young--Jucys--Murphy element}, or YJM element for short, is defined as a (formal) sum of \emph{transpositions} or SWAPs 
	\begin{align}\label{eq:A-YJM}
		X_i = (1,i) + (2,i) + \cdots + (i-1,i).
	\end{align}
	We set $X_1 = 0$ as convention.
\end{definition} 

The YJM element is a central concept used in this work, which was developed by Young~\cite{Young1977}, Jucys~\cite{Jucys1974} and Murphy~\cite{Murphy1981} and later used by Okounkov and Vershik~\cite{Okounkov1996}. Under any $S_n$ representation, it may be more comprehensible to treat $X_i = (1,i) + (2,i) + \cdots + (i-1,i)$ as the sum of matrix representations of these transpositions, or we can say that the representation is extended to the \emph{group algebra} 
\begin{align}
	\mathbb{C}[S_n] = \left\{ \sum_i c_i \sigma_i; \sigma_i \in S_n \right\}
\end{align}
consisting of formal finite linear combinations of $S_n$ group elements. As we will introduce in Section \ref{sec:SchurWeyl}, $\mathbb{C}[S_n]$ has a one-to-one correspondence between the collection of all $\SU(d)$-symmetric matrices acting on the entire Hilbert space of $n$ qudits. 

Let us consider coordinate differences $x-y$ of boxes from a Young diagram $\lambda$. Given any standard tableau $T$ of $\lambda$, its \emph{content vector} is defined by rearranging them with respect to the order of boxes determined by the tableau. For instance, the content vectors of the above five standard tableaux are listed as follows:
\begin{align}
	(0,1,2,3,4,-1), \quad (0,1,2,3,-1,4), \quad (0,1,2,-1,3,4), \quad (0,1,-1,2,3,4), \quad (0,-1,1,2,3,4).
\end{align} 
An important feature of YJM elements is their special actions under the Young basis as revealed by content vectors: 
\begin{enumerate}
	\item They are diagonal matrices under the Young basis (even each single transposition $(i,j)$ from Eq.~\eqref{eq:A-YJM} may not be diagonal).
	
	\item The diagonal entry of $X_i$ under the Young basis vector $\ket{\alpha_T}$ corresponding to standard tableau $T$ is just the $i$-th component of the content vector.
\end{enumerate}
On the irrep $S^{(5,1)}$, 
\begin{align}\label{eq:YJMExample}
	\begin{aligned}
		& X_1 = \begin{pmatrix} 0 & 0 & 0 & 0 & 0 \\ 0 & 0 & 0 & 0 & 0 \\ 0 & 0 & 0 & 0 & 0 \\  0 & 0 & 0 & 0 & 0 \\	0 & 0 & 0 & 0 & 0 \end{pmatrix}, \quad\ \ \
		X_2 = \begin{pmatrix} 1 & 0 & 0 & 0 & 0 \\ 0 & 1 & 0 & 0 & 0 \\ 0 & 0 & 1 & 0 & 0 \\  0 & 0 & 0 & 1 & 0 \\	0 & 0 & 0 & 0 & -1 \end{pmatrix}, \quad 
		X_3 = \begin{pmatrix} 2 & 0 & 0 & 0 & 0 \\ 0 & 2 & 0 & 0 & 0 \\ 0 & 0 & 2 & 0 & 0 \\  0 & 0 & 0 & -1 & 0 \\	0 & 0 & 0 & 0 & 1 \end{pmatrix}, \\
		& X_4 = \begin{pmatrix} 3 & 0 & 0 & 0 & 0 \\ 0 & 3 & 0 & 0 & 0 \\ 0 & 0 & -1 & 0 & 0 \\  0 & 0 & 0 & 2 & 0 \\	0 & 0 & 0 & 0 & 2 \end{pmatrix}, \quad
		X_5 = \begin{pmatrix} 4 & 0 & 0 & 0 & 0 \\ 0 & -1 & 0 & 0 & 0 \\ 0 & 0 & 3 & 0 & 0 \\  0 & 0 & 0 & 3 & 0 \\	0 & 0 & 0 & 0 & 3 \end{pmatrix}, \quad
		X_6 = \begin{pmatrix} -1 & 0 & 0 & 0 & 0 \\ 0 & 4 & 0 & 0 & 0 \\ 0 & 0 & 4 & 0 & 0 \\  0 & 0 & 0 & 4 & 0 \\	0 & 0 & 0 & 0 & 4 \end{pmatrix}.
	\end{aligned}
\end{align}
In summary, Young basis vectors $\ket{\alpha_T}$, standard tableaux $T$ and content vectors $\alpha_T$ are in one-to-one correspondence and uniquely determine the matrix representations of YJM elements. We will introduce and apply other remarkable properties of YJM elements in Sections \ref{sec:sketches} and \ref{sec:details} when deriving our new results.

The matrix representation of each \emph{adjacent transposition} $(i,i+1)$ can be explicitly read off in the Young basis by the \emph{Young orthogonal form}: Let $r = \alpha_T(i+1) - \alpha_T(i)$ be the \emph{axial distance} and let $(i,i+1) \cdot T$ denote the tableau defined by exchanging integers $i, i +1$ from $T$. It is easy to check that as long as $r \neq \pm 1$, $(i,i+1) \cdot T$ is still a standard Young tableau. Then
\begin{align}\label{eq:YoungOrthogonal}
	(i,i+1) \ket{\alpha_T} = \frac{1}{r} \ket{\alpha_T} + \sqrt{1 - \frac{1}{r^2}} \ket{\alpha_{(i,i+1) \cdot T}}, \quad  (i,i+1) \ket{\alpha_{(i,i+1) \cdot T}} = \sqrt{1 - \frac{1}{r^2}} \ket{\alpha_T} - \frac{1}{r} \ket{\alpha_{(i,i+1) \cdot T}}.  
\end{align} 
On the irrep $S^{(5,1)}$, 
\begin{align}
	\renewcommand\arraystretch{1.25}
	\begin{aligned}
		& (1,2) = \begin{pmatrix} 1 & 0 & 0 & 0 & 0 \\ 0 & 1 & 0 & 0 & 0 \\ 0 & 0 & 1 & 0 & 0 \\  0 & 0 & 0 & 1 & 0 \\	0 & 0 & 0 & 0 & -1 \end{pmatrix}, \quad\quad 
		(2,3) = \begin{pmatrix} 1 & 0 & 0 & 0 & 0 \\ 0 & 1 & 0 & 0 & 0 \\ 0 & 0 & 1 & 0 & 0 \\  0 & 0 & 0 & -\frac{1}{2} & \frac{\sqrt{3}}{2} \\	0 & 0 & 0 & \frac{\sqrt{3}}{2} & \frac{1}{2} \end{pmatrix}, \quad 
		(3,4) = \begin{pmatrix} 1 & 0 & 0 & 0 & 0 \\ 0 & 1 & 0 & 0 & 0 \\ 0 & 0 & -\frac{1}{3} & \frac{2\sqrt{2}}{3} & 0 \\  0 & 0 & \frac{2\sqrt{2}}{3} & \frac{1}{3} & 0 \\	0 & 0 & 0 & 0 & 1 \end{pmatrix}, \\
		& (4,5) = \begin{pmatrix} 1 & 0 & 0 & 0 & 0 \\ 0 & -\frac{1}{4} & \frac{\sqrt{15}}{4} & 0 & 0 \\ 0 & \frac{\sqrt{15}}{4} & \frac{1}{4}  & 0 & 0 \\  0 & 0 & 0 & 1 & 0 \\ 0 & 0 & 0 & 0 & 1 \end{pmatrix}, \
		(5,6) = \begin{pmatrix}  -\frac{1}{5} & \frac{2\sqrt{6}}{5} & 0 & 0 & 0 \\  \frac{2\sqrt{6}}{5} & \frac{1}{5} & 0 & 0 & 0 \\ 0 & 0 & 1 & 0 & 0 \\  0 & 0 & 0 & 1 & 0 \\	0 & 0 & 0 & 0 & 1 \end{pmatrix}.
	\end{aligned}
\end{align}
Eq.~\eqref{eq:YoungOrthogonal} is of great significance to facilitate our mathematical proofs in Section \ref{sec:detailsSU(d)}. Besides, Young orthogonal forms are also useful in numerical computations on the matrix representations of general permutations $\sigma \in S_n$. There are classical or quantum \emph{$S_n$-fast Fourier transform} methods designed for such tasks~\cite{Clausen_1993,Maslen_1998}.

%----------------------------------------------------------------------------------------------------------

\begin{definition}\label{def:CycleType}
	We say that a permutation $\sigma \in S_n$ is of \emph{cycle type} $\lambda = (\lambda_1,\ldots,\lambda_r)$ where $\lambda \vdash n$ corresponds to a partition/Young diagram, if it is decomposed into cycles of lengths $\lambda_1,\ldots,\lambda_r$. 
\end{definition}

For instance, $\sigma = (134)(56) \in S_6$ is of cycle type $\lambda = (3,2,1)$. The trivial permutation $\operatorname{id} \in S_n$ is of type $(1,\ldots,1)$. Transpositions or SWAPs $(i,j)$ are just 2-cycles, so products of transpositions like $(i,j)(k,l) \cdots (s,t)$ are of type $(2,2,\ldots,2,1,\ldots,1)$. 

\begin{definition}\label{def:PartitionFunction}
	Let $p(n)$ denote the {number of partitions} of $n$. It equals the number of all inequivalent $S_n$ irreps, as well as the number of possible cycle types in $S_n$. Analogously, we define $p(n,d)$ as the {number of partitions of $n$ with at most $d$ parts}, i.e., the number of all Young diagrams of $n$ boxes with at most $d$ rows. 
\end{definition}

In the study of $\SU(d)$-symmetric random circuits, the quantity $p(n,d)$ equals the number of inequivalent $S_n$ irreps that arise from the decomposition of the $n$-qudit space (see Ref.~\cite{SUd-k-Design2023} and Section \ref{sec:SchurWeyl} for more details). Due to the celebrated work of Ramanujan and Hardy~\cite{Ramanujan1918} and Uspensky~\cite{Uspensky1920}, we have
\begin{align}\label{eq:Ramanujan}
	p(n) \sim \frac{e^{\sqrt{\pi^2 2n /3}}}{4n\sqrt{3}}, n \to \infty.
\end{align}
There has been further study on this~\cite{Rademacher1938,Erdos1942}, and various useful bounds on $p(n)$ have been found later, such as~\cite{Maroti2003, Wladimir2009}
\begin{align}
	\frac{e^{2\sqrt{n}}}{an} < p(n) < e^{b\sqrt{n}}.
\end{align}
If $d = 2$, $p(n,2) = \lfloor \frac{n}{2} \rfloor + 1$. However, there are no closed-form formulas for these partition functions in general. 

%----------------------------------------------------------------------------------------------------------

\begin{proposition}\label{prop:CenterBasis}
	Let $c_\mu \in \mathbb{C}[S]$ be the sum of all $\sigma \in S_n$ with cycle type $\mu$. Considering all possible Young diagrams of size $n$, the collection $\{ c_\mu \}_{\mu \vdash n}$ forms a basis for the \emph{center} $Z(\mathbb{C}[S_n])$ consisting of all elements that commute with $\mathbb{C}[S_n]$. 
\end{proposition}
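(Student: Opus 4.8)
The plan is to reduce the statement to the classical fact that, for any finite group $G$, the conjugacy class sums form a basis of the center $Z(\mathbb{C}[G])$, and then to invoke the combinatorial description of conjugacy classes of $S_n$ in terms of cycle types.

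First I would recall the one-line argument that two permutations in $S_n$ are conjugate if and only if they have the same cycle type: conjugating a cycle $(a_1\, a_2\, \cdots\, a_\ell)$ by $g \in S_n$ produces $(g(a_1)\, g(a_2)\, \cdots\, g(a_\ell))$, so conjugation preserves cycle type, and conversely any two permutations of the same cycle type are intertwined by the permutation that matches up their cycles entry-by-entry. Hence the conjugacy classes of $S_n$ are indexed exactly by partitions $\mu \vdash n$, and $c_\mu$ is by construction the sum over the conjugacy class labelled by $\mu$.

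Next I would verify the three properties that make $\{c_\mu\}_{\mu \vdash n}$ a basis of the center. Centrality: for any $\sigma_0 \in S_n$ the map $\sigma \mapsto \sigma_0 \sigma \sigma_0^{-1}$ permutes the conjugacy class of type $\mu$ bijectively, so $\sigma_0 c_\mu \sigma_0^{-1} = c_\mu$, i.e. $\sigma_0 c_\mu = c_\mu \sigma_0$; extending linearly, $c_\mu$ commutes with all of $\mathbb{C}[S_n]$, hence $c_\mu \in Z(\mathbb{C}[S_n])$. Linear independence: distinct $c_\mu$ are supported on disjoint subsets of the group basis $\{\sigma : \sigma \in S_n\}$ of $\mathbb{C}[S_n]$, so no nontrivial linear combination of them can vanish. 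Spanning: given $z = \sum_{\sigma} a_\sigma \sigma \in Z(\mathbb{C}[S_n])$, the identity $\sigma_0 z \sigma_0^{-1} = z$ for every $\sigma_0$, compared coefficient-by-coefficient against the basis $\{\sigma\}$, forces $a_{\sigma_0 \sigma \sigma_0^{-1}} = a_\sigma$, so $\sigma \mapsto a_\sigma$ is constant on each conjugacy class; calling that common value $a_\mu$ gives $z = \sum_{\mu \vdash n} a_\mu c_\mu$.

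I do not expect a genuine obstacle here; the only step carrying any content is the spanning direction, where one must be slightly careful that "constant on conjugacy classes" is both necessary and sufficient for membership in the center, which follows at once from matching coefficients on the two sides of $\sigma_0 z \sigma_0^{-1} = z$. As a consistency check, the resulting count $\dim Z(\mathbb{C}[S_n]) = p(n)$ agrees, via Wedderburn's theorem, with the number of inequivalent $S_n$ irreps recorded in Definition \ref{def:PartitionFunction}.
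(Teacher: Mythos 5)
Your proof is correct: it is the standard classical argument that conjugacy class sums span and are independent in $Z(\mathbb{C}[S_n])$, combined with the identification of conjugacy classes of $S_n$ with cycle types. The paper does not spell out a proof at all—it treats Proposition \ref{prop:CenterBasis} as a classical fact (citing standard references and only remarking that $c_\mu$ commutes with every $\sigma \in S_n$ and acts as a scalar on each irrep by Wedderburn)—so your write-up simply supplies the omitted textbook details in the same spirit.
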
  

By definition, $c_\mu$ commutes with any $\sigma \in S_n$. By the Wedderburn theorem~\cite{Sagan01,Goodman2009}, its matrix representation, still denoted by $c_\mu$ for simplicity, under any $S_n$ irrep $S^\lambda$ is just a scalar. As a result, the representation of $Z(\mathbb{C}[S_n])$ consists of scalar matrices within any $S_n$ irrep. Being a basis of $Z(\mathbb{C}[S_n])$ means being a basis capable of spanning all scalar matrices, called \emph{relative phase factors} when we study $k$-design with the presence of $\SU(d)$ symmetry, respecting the direct sum $\bigoplus S^\lambda$ of all inequivalent $S_n$ irreps. A more comprehensive discussion can be found in Ref.~\cite{SUd-k-Design2023}. In Section \ref{sec:U(1)}, we would also discuss relative phase factors under $\U(1)$ symmetry.

Besides the basis $\{ c_\mu \}$ defined above, we still have the following two kinds of bases to span the center $Z(\mathbb{C}[S_n])$:

\begin{theorem}\label{thm:CenterBases}
	The following two collections also constitute bases for $Z(\mathbb{C}[S_n])$:
	\begin{enumerate}
		\item Consider the $S_n$ group character
		\begin{align}\label{eq:Character}
			\chi_\lambda (\sigma) = \tr_\lambda \sigma
		\end{align}
		defined by taking the trace of $\sigma \in S_n$ restricted to the irrep $S^\lambda$. Then
		\begin{align}\label{eq:OrthogonalProjection}
			\Pi_\mu \vcentcolon = \frac{\dim S^\mu}{n!} \sum_{\sigma \in S_n} \bar{\chi}_\mu(\sigma) \sigma
		\end{align}
		is a projection exclusively into the irrep $S^\mu$. The collection $\{\Pi_\mu \}$ is an orthonormal basis. 
		
		\item Consider the YJM elements $X_i$. For any $\mu = (\mu_1,\ldots,\mu_r) \vdash n$, we set
		\begin{align}
			X_\mu = \sum_{2 \leq i_1 \neq i_2 \neq \cdots \neq i_r \leq n} X_{i_1}^{\mu_1 - 1} X_{i_2}^{\mu_2 - 1} \cdots X_{i_r}^{\mu_r - 1}
		\end{align} 
		The collection $\{X_\mu\}$ is also a basis for $Z(\mathbb{C}[S_n])$~\cite{Jucys1974,Murphy1981,Tolli2009}.
	\end{enumerate}
\end{theorem}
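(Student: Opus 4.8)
\emph{Overall plan.} By Proposition~\ref{prop:CenterBasis} together with Definition~\ref{def:PartitionFunction}, $\dim_{\mathbb C} Z(\mathbb C[S_n]) = p(n)$, which is exactly the number of labels $\mu\vdash n$ indexing each of the two proposed families. So in both cases the plan is to show (i) that every element of the family lies in $Z(\mathbb C[S_n])$, and (ii) that the $p(n)$ elements are linearly independent. For (ii) I would use the Wedderburn isomorphism $\mathbb C[S_n]\cong\bigoplus_{\lambda\vdash n}\operatorname{End}(S^\lambda)$, under which $Z(\mathbb C[S_n])\cong\mathbb C^{\{\lambda\vdash n\}}$ by recording the scalar by which a central element acts on each irrep $S^\lambda$: linear independence of a family of $p(n)$ central elements is then equivalent to invertibility of the associated $p(n)\times p(n)$ matrix of eigenvalues.

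\emph{The family $\{\Pi_\mu\}$.} Since $\chi_\mu$ is a class function, $\Pi_\mu=\tfrac{\dim S^\mu}{n!}\sum_\sigma\bar\chi_\mu(\sigma)\,\sigma$ is a linear combination of the class sums $c_\nu$ and hence is central. The Schur orthogonality relations (equivalently, averaging $\sum_\sigma\rho_\lambda(\sigma)\,A\,\rho_\lambda(\sigma)^{-1}$ against $\bar\chi_\mu$ and invoking Schur's lemma) give $\tfrac1{n!}\sum_\sigma\bar\chi_\mu(\sigma)\,\rho_\lambda(\sigma)=\delta_{\mu\lambda}\,(\dim S^\mu)^{-1}\,\mathrm{id}_{S^\lambda}$, so $\Pi_\mu$ acts as the identity on $S^\mu$ and as zero on $S^\lambda$ for $\lambda\neq\mu$; i.e.\ $\Pi_\mu$ is the central primitive idempotent onto the $\mu$-block, with $\Pi_\mu\Pi_\nu=\delta_{\mu\nu}\Pi_\mu$ and $\sum_\mu\Pi_\mu=\mathrm{id}$. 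In the coordinates $\mathbb C^{\{\lambda\vdash n\}}$ above, the $\Pi_\mu$ are the standard basis vectors, hence a basis and ``orthonormal'' --- this is just a restatement of the orthogonality relations for the irreducible characters. (This part is essentially textbook.)

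\emph{The family $\{X_\mu\}$.} Using $X_1=0$ and that the YJM elements commute, $X_\mu=\sum_{2\le i_1\neq\cdots\neq i_r\le n}X_{i_1}^{\mu_1-1}\cdots X_{i_r}^{\mu_r-1}$ is, up to a nonzero combinatorial constant coming from the unit powers attached to the parts $\mu_j=1$, the monomial symmetric polynomial $m_{(\mu_1-1,\dots,\mu_r-1)}$ evaluated at $X_1,\dots,X_n$. By the Jucys--Murphy theorem~\cite{Jucys1974,Murphy1981,Tolli2009}, symmetric polynomials in $X_1,\dots,X_n$ generate $Z(\mathbb C[S_n])$; in particular $X_\mu\in Z(\mathbb C[S_n])$, settling (i). For (ii) I would expand $X_\mu$ in the class-sum basis $\{c_\nu\}_{\nu\vdash n}$ of Proposition~\ref{prop:CenterBasis} by substituting $X_i=\sum_{a<i}(a,i)$ and multiplying out: each resulting monomial is a word of $\sum_j(\mu_j-1)=n-\ell(\mu)$ transpositions, and the configurations in which the transpositions in each apex-block $X_{i_j}$ use distinct lower indices and the blocks have pairwise disjoint supports produce \emph{precisely} the permutations of cycle type $\mu$, each with coefficient $+1$; every other configuration either repeats a lower index (shorter cycle, hence cycle type $\nu$ with $\ell(\nu)>\ell(\mu)$) or overlaps two blocks (fusing two cycles, hence $\ell(\nu)=\ell(\mu)$ with a coarser part structure). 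Thus $X_\mu=a_\mu\,c_\mu+\sum_\nu a_{\mu\nu}\,c_\nu$ with $a_\mu>0$ and the corrections supported on $\nu$ strictly below $\mu$ in a suitable partial order refining the filtration by number of cycles; the transition matrix with $\{c_\nu\}$ is therefore triangular with nonzero diagonal, hence invertible, so $\{X_\mu\}$ is a basis. (Alternatively, one may take the eigenvalue matrix $\big(m_{(\mu_1-1,\dots)}(\text{contents of }\lambda)\big)_{\mu,\lambda}$, using that $X_i$ acts diagonally on the Young basis with eigenvalues given by the content vector of the tableau (Section~\ref{sec:SnTheory}) together with the fact that distinct partitions of $n$ have distinct content multisets, and run the same triangularity argument there.)

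\emph{Main obstacle.} The genuine content is the combinatorial leading-term lemma in part (ii): that expanding $X_{i_1}^{\mu_1-1}\cdots X_{i_r}^{\mu_r-1}$ into words in the transpositions $(a,i)$ produces the class $c_\mu$ with a strictly positive coefficient while all further contributions sit strictly lower in the relevant filtration, making the change of basis with $\{c_\nu\}$ triangular. Pinning down the precise partial order and the positivity of $a_\mu$ is where the work lies; this is exactly the classical Jucys/Murphy analysis of symmetric functions of the YJM elements, which I would either reproduce by induction on $n$ (appending a transposition $(a,i)$ with $i$ maximal either lengthens a cycle or is absorbed into shorter ones) or simply cite from~\cite{Jucys1974,Murphy1981,Tolli2009}.
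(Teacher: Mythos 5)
The paper never proves this theorem---it is stated as background and delegated to the literature (\cite{Jucys1974,Murphy1981,Tolli2009})---so there is no in-paper proof to compare against; judged on its own, your sketch is correct and is precisely the standard argument from those references. Part (1) is complete: Schur orthogonality shows the $\Pi_\mu$ are the central primitive idempotents, hence the standard basis vectors under the Wedderburn identification $Z(\mathbb{C}[S_n])\cong\mathbb{C}^{p(n)}$. For part (2), your reduction of $X_\mu$ to a positive multiple of a monomial symmetric polynomial in the YJM elements is right (the restriction of the sum to indices $\geq 2$ is harmless for exponents $\geq 1$ because $X_1=0$, and the exponent-zero parts only contribute a positive counting factor), centrality then follows from the cited Jucys--Murphy theorem, and the triangularity of the transition matrix against the class sums $\{c_\nu\}$---leading term $a_\mu c_\mu$ with $a_\mu>0$, corrections having either more cycles or the same number of cycles with a strictly coarser (dominance-higher) type---is exactly the classical leading-term analysis, which you correctly flag as the only real work and may legitimately cite. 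Two cosmetic cautions: expanding in the class-sum basis presupposes centrality, so that step must come after (not before) invoking Jucys--Murphy, as you in fact arrange; and ``each with coefficient $+1$'' should read ``with positive integer multiplicity'' once repeated parts and the unit parts are accounted for---neither affects the invertibility conclusion.
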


Different center bases have different theoretical implications. In Ref.~\cite{SUd-k-Design2023}, the bases built through permutations of the same cycle types and YJM elements are used to study the necessary locality of unitary ensembles in forming $\SU(d)$-symmetric unitary $k$-designs for large $k$. In this work, we also employ the orthogonal projections $\Pi_\mu$ to define invariant subspaces for the moment operators (see discussions and examples after Definition \ref{def:CQAEnsemble}). It facilitates the use of Markov chain theory in proving our results in Section \ref{sec:sketches} \& \ref{sec:details}.

%----------------------------------------------------------------------------------------------------------

We introduce in the following some results related to $S_n$ characters, dominance ordering and dimension evaluation of $S_n$ irreps. 

\begin{proposition}\label{prop:Integer}
	For any $\sigma \in S_n$, $\chi_\lambda(\sigma) \in \mathbb{Z}$. That is, $S_n$ characters are integer-valued.
\end{proposition}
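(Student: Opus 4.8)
\textit{Proof proposal for Proposition~\ref{prop:Integer}.}

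The plan is to exploit the fact that every element of $S_n$ is conjugate to every power of itself that generates the same cyclic subgroup. Concretely, fix $\sigma \in S_n$ of order $m$, and consider a primitive $m$-th root of unity $\zeta$. The character value $\chi_\lambda(\sigma)$ is a priori an algebraic integer lying in the cyclotomic field $\mathbb{Q}(\zeta)$, since it is a sum of eigenvalues of a matrix of finite order, each of which is an $m$-th root of unity. Thus it suffices to show that $\chi_\lambda(\sigma)$ is fixed by every element of the Galois group $\mathrm{Gal}(\mathbb{Q}(\zeta)/\mathbb{Q}) \cong (\mathbb{Z}/m\mathbb{Z})^\times$, whence $\chi_\lambda(\sigma) \in \mathbb{Q}$, and being an algebraic integer in $\mathbb{Q}$, it lies in $\mathbb{Z}$.

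The key step is the Galois-action computation. A Galois automorphism corresponding to $j \in (\mathbb{Z}/m\mathbb{Z})^\times$ sends $\zeta \mapsto \zeta^j$; applied to the eigenvalues of the matrix representing $\sigma$, this sends $\chi_\lambda(\sigma)$ to $\chi_\lambda(\sigma^j)$ (the eigenvalues of $\sigma$ raised to the $j$-th power are exactly the eigenvalues of $\sigma^j$). So I must show $\chi_\lambda(\sigma^j) = \chi_\lambda(\sigma)$ whenever $\gcd(j,m)=1$. Since characters are class functions, this reduces to the purely combinatorial claim that $\sigma$ and $\sigma^j$ have the same cycle type in $S_n$ for $\gcd(j,m)=1$. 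This is elementary: if $\sigma$ is a product of disjoint cycles, raising a single $\ell$-cycle to the $j$-th power (with $\gcd(j,m)=1$, hence $\gcd(j,\ell)=1$ since $\ell \mid m$) yields again a single $\ell$-cycle, so the multiset of cycle lengths is preserved. Combining this with the Galois-fixing argument gives $\chi_\lambda(\sigma) \in \mathbb{Z}$.

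I expect the only subtlety — not really an obstacle — to be making the identification "Galois conjugation of eigenvalues equals character of the power" clean: one should work with the representation matrix $\rho(\sigma)$ diagonalized over $\mathbb{Q}(\zeta)$ (possible since $\rho(\sigma)^m = I$), note its eigenvalues are $m$-th roots of unity $\omega_1,\dots,\omega_{d_\lambda}$, so $\chi_\lambda(\sigma)=\sum_t \omega_t$ and the Galois automorphism $\zeta\mapsto\zeta^j$ sends this sum to $\sum_t \omega_t^j = \mathrm{tr}\,\rho(\sigma)^j = \chi_\lambda(\sigma^j)$. Everything else is standard: algebraic integers closed under sums, $\overline{\mathbb{Z}}\cap\mathbb{Q}=\mathbb{Z}$, and the fixed field of the full Galois group being $\mathbb{Q}$. (An alternative, fully self-contained route avoiding Galois theory is the Murnaghan--Nakayama rule, which expresses $\chi_\lambda$ on any conjugacy class as an alternating sum of $1$'s over rim-hook tableaux and thus manifestly gives an integer; I would mention this as a remark but prefer the Galois argument as the main proof since it is shorter.)
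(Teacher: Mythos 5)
Your proof is correct, but it follows a different route than the paper. The paper's own argument is a one-liner: it invokes Young's \emph{natural} representation, a (non-unitary) realization of $S^\lambda$ in which every $\sigma\in S_n$ is represented by a matrix with integer entries, so $\chi_\lambda(\sigma)=\tr_\lambda\sigma\in\mathbb{Z}$ follows immediately from invariance of the trace under change of basis. Your argument is the standard rationality/Galois proof: $\chi_\lambda(\sigma)$ is an algebraic integer in $\mathbb{Q}(\zeta_m)$, the automorphism $\zeta\mapsto\zeta^j$ sends it to $\chi_\lambda(\sigma^j)$, and since $\sigma^j$ has the same cycle type as $\sigma$ whenever $\gcd(j,m)=1$ (each disjoint $\ell$-cycle, $\ell\mid m$, stays an $\ell$-cycle), the value is Galois-fixed, hence rational, hence in $\mathbb{Z}$; all the individual steps you sketch (diagonalizability over $\mathbb{Q}(\zeta)$, $\sum_t\omega_t^j=\tr\rho(\sigma^j)$, $\overline{\mathbb{Z}}\cap\mathbb{Q}=\mathbb{Z}$) are sound. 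The trade-off: the paper's route is shorter but leans on the specific combinatorial construction of Young's natural representation over $\mathbb{Z}$, which is particular to $S_n$; your route needs only standard character theory plus elementary Galois theory and in fact proves the stronger statement that $S_n$ is a rational group (all characters constant on ``power classes''), which is exactly the generalization the paper alludes to when it remarks that such facts can be proven ``using Galois theory for general finite groups.'' Your closing mention of the Murnaghan--Nakayama rule as a third, manifestly integral route is also apt.
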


One can prove a variety of similar facts using Galois theory for general finite groups. For our purpose, we simply note that there is the so-called \emph{Young's natural representation} which is a \emph{non-unitary} representation of $S^\lambda$ under which each $\sigma$ is expressed as matrices with integer entries~\cite{Sagan01}. As trace is invariant under matrix similarity, $\chi_\lambda (\sigma) = \tr_\lambda \sigma \in \mathbb{Z}$ in general. 

It is also well known that permutations $\sigma$ and $\sigma'$ with the same cycle type are conjugate to each other, so $\chi_\lambda (\sigma) = \chi_\lambda (\sigma')$ and hence we only care about the value of $\chi_\lambda$ for a given cycle type $\mu$. The so-called \emph{Frobenius character formula}~\cite{Goodman2009} expresses $S_n$ character values as coefficients of a power series. The coefficients can be formally computed by, e.g., contour integrals using the residue theorem. However, closed-form formulas only exist for very few simple cases~\cite{Ingram1950,Roichman1996}. For instance, the characters for 2-cycles (transpositions/SWAPs) are
\begin{align}\label{eq:CharacterValue}
	& \frac{\chi_\lambda(i,j)}{\dim S^\lambda} = \frac{2}{n(n-1)} \sum_i \Big( \binom{\lambda_i}{2} - \binom{\lambda_i'}{2} \Big) = \frac{1}{n(n-1)} \sum_i \big[ (\lambda_i - 1)(\lambda_i - 1 + i) - i(i-1) \big], 
\end{align}
where $\lambda'$ denotes the conjugate of $\lambda$, e.g.,
\begin{align*}
	\lambda = \ytableausetup{boxsize=1.25em,aligntableaux=center} \ydiagram{4,2,1} \qquad \lambda' = \ydiagram{3,2,1,1}
\end{align*}
If $\lambda_i < 2$, the corresponding binomial coefficient is set to zero.

%----------------------------------------------------------------------------------------------------------

Let us relate the above character formula to some techniques involving YJM elements. Restricted to any irrep $S^\lambda$, we can associate the following invariants: 
\begin{align}
	P_l = \left(\sum_i X_i\right)^l, l = 1,2,\ldots
\end{align}
where $\sum_i X_i$ is the summation of all YJM elements. Let us check its matrix form under the Young basis $\{\ket{\alpha_T}\}$:
\begin{align}
	\left(\sum_i X_i\right) \ket{\alpha_T} = \sum_i \alpha_T(i) \ket{\alpha_T}
\end{align}
where the $\alpha_T$ are the content vectors. Obviously, for any fixed Young diagram $\lambda$, the sum of all components of any of its content vector $\alpha_T$ is simply equal to the sum of all coordinate differences, and we denote it as $\alpha_\lambda$. Then
\begin{align}\label{eq:P_l}
	P_l \ket{\alpha_T} = (\alpha_\lambda)^l \ket{\alpha_T} 
\end{align}
for all standard tableaux/Young basis vectors of the Young diagram $\lambda$.  

Let $\tr_\lambda$ denote the trace within $S^\lambda$ (it is just the $S_n$ character in Eq.~\eqref{eq:Character}). When $l = 1$, we note that
\begin{align}\label{eq:P_l2}
	\alpha_\lambda = \frac{\tr_\lambda (P_l) }{\dim S^\lambda} = \frac{\tr_\lambda \sum_i X_i}{\dim S^\lambda} = \frac{n(n-1)}{2} \frac{\tr_\lambda (i,j)}{\dim S^\lambda} = \frac{n(n-1)}{2} \frac{\chi_\lambda(i,j)}{\dim S^\lambda},
\end{align}
which gives another way to compute the character value of 2-cycles by summing all components from the content vector. The method using YJM elements and content vectors to express general $S_n$ characters can be found in Ref.~\cite{Lassalle2008}.

%----------------------------------------------------------------------------------------------------------

\begin{definition}\label{def:Dominance}
	Given two partitions $\lambda = (\lambda_i), \mu = (\mu_i) \vdash n$. We say that $\lambda$ \emph{dominates} $\mu$, denoted by $\lambda \unrhd \mu$, if for all $j > 0$, $\sum^j_i \lambda_i \geq \sum^j_i \mu_i$. 
\end{definition}

For instance, we have
\begin{align}
	(6) \unrhd (5,1) \unrhd (4,2) \unrhd (4,1^2) 
\end{align}
where $(4,1^2)$ is the abbreviation of $(4,1,1)$. The dominance relation is not \emph{totally ordered}, e.g., we cannot compare $(4,1^2)$ and $(3,3)$. However, in the case of qubits ($d = 2$), where we focus on two-row Young diagrams by Schur--Weyl duality introduced in the following, partitions $\lambda = (\lambda_1, \lambda_2)$ with $\lambda_1 \geq \lambda_2$ clearly give rise to a total ordering. 

\begin{lemma}\label{lemma: total-ordering2}
	For any two unequal partitions  $\lambda, \mu \vdash n$, if $\lambda \unrhd \mu$, then $\alpha_\lambda > \alpha_\mu$.
\end{lemma}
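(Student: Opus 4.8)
The plan is to reduce Lemma~\ref{lemma: total-ordering2} to a strict majorization (Karamata) inequality, after rewriting the content sum $\alpha_\lambda=\sum_{(x,y)\in\lambda}(y-x)$ purely in terms of the part sizes. The first step: summing the column index over each row and the row index over each column gives $\sum_{(x,y)\in\lambda}y=\sum_i\binom{\lambda_i+1}{2}$ and $\sum_{(x,y)\in\lambda}x=\sum_j\binom{\lambda'_j+1}{2}$, where $\lambda'$ is the conjugate partition; using $\binom{t+1}{2}=\binom{t}{2}+t$ together with $\sum_i\lambda_i=\sum_j\lambda'_j=n$, this collapses to the bookkeeping identity
\[
\alpha_\lambda \;=\; \sum_i\binom{\lambda_i}{2}\;-\;\sum_j\binom{\lambda'_j}{2}.
\]
The point of writing it this way is that $t\mapsto\binom{t}{2}$ is \emph{strictly} convex on the nonnegative integers (its second finite difference equals $1$), so $\lambda\mapsto\sum_i\binom{\lambda_i}{2}$ is a strictly Schur-convex functional on partitions of $n$.

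The second step feeds in the dominance hypothesis. For partitions of a fixed $n$, the relation $\lambda\unrhd\mu$ is exactly the majorization relation $\lambda\succ\mu$, so the equality case of Karamata's inequality applied to the strictly convex function $\binom{\cdot}{2}$ gives $\sum_i\binom{\lambda_i}{2}>\sum_i\binom{\mu_i}{2}$ whenever $\lambda\neq\mu$. Applying the same statement to the conjugate partitions and using the classical equivalences $\lambda\unrhd\mu\iff\mu'\unrhd\lambda'$ and $\lambda=\mu\iff\lambda'=\mu'$, one also gets $\sum_j\binom{\mu'_j}{2}>\sum_j\binom{\lambda'_j}{2}$. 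Adding these two strict inequalities to the identity above yields $\alpha_\lambda-\alpha_\mu>0$, which is the claim.

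If one prefers to avoid citing the equality case of Karamata, the strictness can instead be extracted from the standard combinatorial fact that the dominance order is generated by elementary box transfers: whenever $\mu\lhd\lambda$ there is a partition $\nu$ obtained from $\mu$ by deleting a box from some row $j$ and appending a box to some row $i<j$ with $\mu\lhd\nu\unlhd\lambda$ (Brylawski), and under such a transfer $\alpha$ changes by $(\mu_i-\mu_j)+(j-i)+1\ge 2>0$ since $\mu_i\ge\mu_j$ and $j>i$; iterating over the finite dominance poset produces a chain $\mu\lhd\nu^{(1)}\lhd\cdots\lhd\lambda$ along which $\alpha$ strictly increases. In the qubit case $d=2$ relevant to most of our applications only two-row partitions occur and $\unrhd$ is the total order $\lambda_1\ge\mu_1$, so the lemma reduces to noting that $\lambda_1\mapsto\binom{\lambda_1}{2}+\binom{n-\lambda_1}{2}-(n-\lambda_1)$ has first difference $2\lambda_1-n+2\ge2$ and is thus strictly increasing for $\lambda_1\ge n/2$. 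The only ingredient that is not a one-line computation is the strictness itself---equivalently, the equality case of Karamata, or the existence of the transfer that moves toward $\lambda$ without overshooting it---but this is entirely standard in the combinatorics of the dominance lattice, so I do not expect a real obstacle here.
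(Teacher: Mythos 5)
Your proof is correct, but it takes a genuinely different route from the paper. The paper argues by induction on $n$: it locates a row where $\lambda_i > \mu_i$, deletes the rightmost box of that row (or of the last row of a block of equal rows) from both diagrams, invokes the induction hypothesis on the truncated diagrams, and adds the observation that the discarded box of $\lambda$ has strictly larger content than that of $\mu$. You instead work with the closed formula $\alpha_\lambda=\sum_i\binom{\lambda_i}{2}-\sum_j\binom{\lambda'_j}{2}$ (which is consistent with the paper's own character identity $\chi_\lambda(i,j)/\dim S^\lambda$ in Eq.~\eqref{eq:CharacterValue} combined with Eq.~\eqref{eq:P_l2}), identify dominance with majorization, and apply the strict case of Karamata to the strictly convex function $t\mapsto\binom{t}{2}$ on $\lambda$ versus $\mu$ and, via the order-reversal $\lambda\unrhd\mu\iff\mu'\unrhd\lambda'$, on the conjugates; adding the two strict inequalities gives $\alpha_\lambda>\alpha_\mu$. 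Your fallback via Brylawski's single-box covering moves is also sound: each upward transfer of a box from row $j$ to row $i<j$ changes $\alpha$ by $(\mu_i-\mu_j)+(j-i)+1\ge 2$, so $\alpha$ strictly increases along any chain from $\mu$ to $\lambda$. What your approach buys is a self-contained, non-inductive argument with a quantitative increment (at least $2$ per box move) that sidesteps the slightly delicate bookkeeping in the paper's induction (e.g., the case of equal consecutive rows, or the truncated diagrams coinciding, where the paper must fall back on the strict content comparison of the deleted boxes); what it costs is the importation of standard but external machinery (the equality case of Karamata, or Brylawski's description of the dominance covers), whereas the paper's induction uses nothing beyond the definition of content vectors.
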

\begin{proof}
	We prove this lemma by induction. Suppose that the statement holds for $n-1$. Given unequal $\lambda, \mu \vdash n$ with $\lambda \unrhd \mu$, there should be some $i$ such that $\lambda_i > \mu_i$, where $\lambda_i, \mu_i$ are the lengths of $i$-th rows of $\lambda$ and $\mu$ respectively. If $\lambda_i > \lambda_{i+1}$ and $\mu_i > \mu_{i+1}$, then we discard the RHS boxes on the $i$-th rows of $\lambda$ and $\mu$. The resultant Young diagrams, denoted by $\lambda'$ and $\mu'$, still satisfy the relation $\lambda' \unrhd \mu'$. Then by induction hypothesis, $\alpha_{\lambda'} > \alpha_{\mu'}$. On the other hand, the content of the discarded box from $\lambda$ is larger than that from $\mu$ by definition, hence we conclude that $\alpha_\lambda > \alpha_\mu$.
	
	Suppose that $\lambda_i = \lambda_{i+1} = \cdots = \lambda_r$ or $\mu_i = \mu_{i+1} = \cdots = \mu_s$. Then we are only allowed to discard the RHS boxes of $\lambda_r,\mu_s$ to ensure that $\lambda'$ and $\mu'$ are well-defined Young diagrams. Even when $r \neq s$, the dominance relation still holds because $\lambda_r > \mu_s$. By the same argument as above, we complete the proof. 
\end{proof}

By Eq.~\eqref{eq:P_l2}, the above lemma says that $\frac{\chi_{\lambda}(1,2)}{\dim S^\lambda}$ is strictly increasing with respect to the dominance order of $\lambda$. Lots of counterexamples occur when this order fails to hold: e.g., $(3,3)$ and $(4,1^2)$.

%----------------------------------------------------------------------------------------------------------

Let us end this subsection with some explicit analysis on the dimension of $S_n$ irreps of two-row Young diagrams. In this circumstance, the hook length formula in.\ref{def:YoungDiagram} can be further simplified as~\cite{Ingram1950,Sagan01}
\begin{align}\label{eq:two-row-dim}
	\dim S^\lambda = d_\lambda = \binom{n}{r} - \binom{n}{r-1} = \frac{n - 2r +1}{n - r + 1}\binom{n}{r}.
\end{align}
For binomial coefficients, we have another two useful bounds (assume $n = 2m$)~\cite{Gallier2011}:
\begin{align}
	\frac{2^n}{\sqrt{\pi(\frac{n}{2} + \frac{1}{3})}} \leq \binom{n}{n/2} \leq \frac{2^n}{\sqrt{\pi(\frac{n}{2} + \frac{1}{4})}}, \quad e^{-(\frac{n}{2} - r)^2/(n-r+1)} \leq \binom{n}{r} \Big/ \binom{n}{\frac{n}{2}} \leq e^{-(\frac{n}{2} - r)^2/(n-r)}.
\end{align}
Therefore, the ratio of $\dim S^\lambda$ to the dimension of the entire Hilbert space is
\begin{align}\label{eq:dim-Comparision}
	\frac{1}{\sqrt{\pi(\frac{n}{2} + \frac{1}{3})}} \frac{n - 2r +1}{n - r + 1}\binom{n}{r} \Big/ \binom{n}{\frac{n}{2}} \leq \frac{\dim S^\lambda}{2^n} = \frac{1}{2^n} \frac{n - 2r +1}{n - r + 1}\binom{n}{r} \leq \frac{1}{\sqrt{\pi(\frac{n}{2} + \frac{1}{4})}} \frac{n - 2r +1}{n - r + 1}\binom{n}{r} \Big/ \binom{n}{\frac{n}{2}}.
\end{align}

A lower bound for general $d$-row Young diagrams is also useful~\cite{Mishchenko1996,Giambruno2015}: suppose the first row $\lambda_1, \lambda_1'$ of $\lambda$ and its conjugate is upper bounded by $\frac{n}{\alpha}$ for some $\alpha > 1$. Then 
\begin{align}\label{eq:dim-Comparision2}
	\dim S^\lambda \geq \frac{\alpha^n}{n^{d(d+2)/2}}.
\end{align}

%----------------------------------------------------------------------------------------------------------

\subsection{Spectral gap of $S_n$ Cayley graphs and Aldous' conjecture}\label{sec:Aldous}

\begin{definition}
	Let $G$ be a finite group with a subset $\mathcal{T}$ satisfying the following properties:
	\begin{enumerate}
		\item The identity element $e \notin \mathcal{T}$.
		
		\item Any element $s \in \mathcal{T}$ implies $s^{-1} \in \mathcal{T}$.
		
		\item The subset $\mathcal{T}$ generates $G$.
	\end{enumerate}
	Then we consider the graph $\mathcal{G}(\mathcal{T})$ defined using elements of $G$ as vertices. Pair of vertices $x,y$ are connected if $xy^{-1} \in \mathcal{T}$. With the these properties, $\mathcal{G}(\mathcal{T})$ is a simple $\vert \mathcal{T} \vert$-regular graph, called the \emph{Cayley graph} over $G$ with connection set $\mathcal{T}$. 
\end{definition}

It is straightforward to see the following  by the definition of Cayley graphs:

\begin{proposition}
	The adjacency matrix of $\mathcal{G}(\mathcal{T})$ is similar to the sum of right regular representation $\text{R}$ of group elements from $\mathcal{T}$, which can be further decomposed into block matrices acting on $G$ irreps $\rho_i$ with multiplicities identical to their dimension $d_i$:
	\begin{align}
		A_{\mathcal{G}(\mathcal{T})}
		\sim \sum_{s \in \mathcal{T}} \text{R}(s) 
		\sim  \bigoplus_{i} \Big( \sum_{s \in \mathcal{T}} \rho_i(s)  \otimes I_{d_i} \Big).
	\end{align}
\end{proposition}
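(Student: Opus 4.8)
The statement to prove is standard: the adjacency matrix of a Cayley graph $\mathcal{G}(\mathcal{T})$ decomposes via the regular representation into blocks indexed by irreducibles, each appearing with multiplicity equal to its dimension.

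The plan is to invoke the standard theory of the regular representation. First I would recall that the adjacency matrix $A_{\mathcal{G}(\mathcal{T})}$ acts on $\mathbb{C}[G]$, the free vector space on the vertex set $G$, by $(A f)(x) = \sum_{y : xy^{-1} \in \mathcal{T}} f(y)$; rewriting $y = s^{-1}x$ for $s \in \mathcal{T}$ (using $xy^{-1} = s \iff y = s^{-1}x$) shows that $A_{\mathcal{G}(\mathcal{T})} = \sum_{s \in \mathcal{T}} L(s)$ where $L$ is the left regular representation — or, after the obvious relabeling $s \mapsto s^{-1}$ which fixes $\mathcal{T}$ setwise by the symmetry hypothesis, $A_{\mathcal{G}(\mathcal{T})} = \sum_{s \in \mathcal{T}} \mathrm{R}(s)$ with $\mathrm{R}$ the right regular representation. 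This is the content of the first ``$\sim$'' (here $\sim$ just records that the adjacency matrix, written in the vertex basis, \emph{is} this operator, up to the trivial identification of the vertex basis with the group basis of $\mathbb{C}[G]$).

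Second, I would apply the decomposition of the regular representation: by Maschke's theorem and the Wedderburn structure of $\mathbb{C}[G]$ (cf. the Wedderburn theorem already cited in the excerpt), $\mathrm{R} \cong \bigoplus_i \rho_i^{\oplus d_i}$ where the sum is over a complete set of inequivalent irreducibles $\rho_i$ of $G$ and $d_i = \dim \rho_i$. Under this isomorphism each group element $s$, hence each sum $\sum_{s \in \mathcal{T}} \mathrm{R}(s)$, is carried to the block-diagonal operator $\bigoplus_i \big( \sum_{s \in \mathcal{T}} \rho_i(s) \big)^{\oplus d_i} = \bigoplus_i \big( (\sum_{s \in \mathcal{T}} \rho_i(s)) \otimes I_{d_i}\big)$, which is precisely the claimed formula. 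The similarity ``$\sim$'' is exactly the change of basis implementing the Wedderburn isomorphism.

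There is no real obstacle here — the statement is essentially a restatement of the Peter–Weyl/Wedderburn decomposition specialized to the adjacency operator of a Cayley graph, and the only thing to be careful about is the bookkeeping with left versus right regular representation and the direction of the edge relation $xy^{-1} \in \mathcal{T}$, which is harmless precisely because $\mathcal{T}$ is symmetric ($s \in \mathcal{T} \Leftrightarrow s^{-1} \in \mathcal{T}$). I would therefore present it as a short proof: set up $A_{\mathcal{G}(\mathcal{T})} = \sum_{s\in\mathcal{T}} \mathrm{R}(s)$ by unwinding the definition of adjacency, then quote the decomposition of the regular representation.

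\begin{proof}
Identify the vertex set $G$ with the standard basis of the group algebra $\mathbb{C}[G]$, so that the adjacency matrix $A_{\mathcal{G}(\mathcal{T})}$ is the operator on $\mathbb{C}[G]$ sending a basis vector $x$ to $\sum_{y:\, xy^{-1}\in\mathcal{T}} y$. Writing the edge condition as $y = s^{-1}x$ with $s = xy^{-1}\in\mathcal{T}$, and using that $\mathcal{T} = \mathcal{T}^{-1}$, we get $A_{\mathcal{G}(\mathcal{T})}\, x = \sum_{s\in\mathcal{T}} s x = \sum_{s\in\mathcal{T}} \mathrm{R}(s)\, x$, i.e. $A_{\mathcal{G}(\mathcal{T})} = \sum_{s\in\mathcal{T}} \mathrm{R}(s)$ where $\mathrm{R}$ denotes the right regular representation of $G$ on $\mathbb{C}[G]$. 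This is the first claimed similarity, the change of basis being the identification of the vertex basis with the group basis.

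By Maschke's theorem and the Wedderburn decomposition of $\mathbb{C}[G]$, the regular representation decomposes as $\mathrm{R}\cong\bigoplus_i \rho_i^{\oplus d_i}$, the sum running over a complete list of pairwise inequivalent irreducible representations $\rho_i$ of $G$, each occurring with multiplicity equal to its dimension $d_i = \dim\rho_i$. Fix an intertwiner realizing this isomorphism. Under it every $\mathrm{R}(s)$, and hence $\sum_{s\in\mathcal{T}}\mathrm{R}(s)$, is carried to the block-diagonal operator
\begin{align}
	\bigoplus_i \Big( \sum_{s\in\mathcal{T}} \rho_i(s) \Big)^{\oplus d_i}
	= \bigoplus_i \Big( \sum_{s\in\mathcal{T}} \rho_i(s) \otimes I_{d_i} \Big),
\end{align}
which is precisely the asserted form. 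Combining the two similarities gives
\begin{align}
	A_{\mathcal{G}(\mathcal{T})} \sim \sum_{s\in\mathcal{T}} \mathrm{R}(s) \sim \bigoplus_i \Big( \sum_{s\in\mathcal{T}} \rho_i(s) \otimes I_{d_i} \Big),
\end{align}
as claimed.
\end{proof}
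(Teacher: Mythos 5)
Your proof is correct and is exactly the standard argument the paper implicitly relies on (the paper states this proposition without proof, as an immediate consequence of the definition of a Cayley graph plus the Wedderburn/Maschke decomposition of the regular representation). The only nitpick is the left-versus-right regular representation bookkeeping in the displayed step ($x \mapsto sx$ is left translation), but as you note this is harmless since $\mathcal{T}=\mathcal{T}^{-1}$ and the two regular representations are equivalent, so the asserted similarity is unaffected.
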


Since Cayley graphs are regular, their adjacency matrices $A$ differ merely by scalar matrices $D = \vert \mathcal{T} \vert I$ from their graph Laplacians $L = \vert \mathcal{T} \vert I - A$. In the following, we refer to the \emph{spectral gap} of a Cayley graph by either the difference between the smallest and second smallest eigenvalues of $A$ or the difference between the largest and second largest eigenvalues of $L$.

\begin{theorem}\label{thm:CayleyGap}~\cite{Bacher1994,Flatto1985,Friedman2000}
	The following facts about the spectral gap of standard $S_n$ Cayley graphs hold:
	\begin{enumerate}
		\item Let $\mathcal{T} = \{(1,2),(2,3),...,(n-1,n)\}$. Then 
		\begin{align}
			\lambda_1( A_{\mathcal{G}(\mathcal{T})} ) = n-1, \quad \lambda_2( A_{\mathcal{G}(\mathcal{T})} ) = n-3 + 2\cos\frac{\pi}{n}, \quad \Delta(A_{\mathcal{G}(\mathcal{T})}) = 2 \Big(1 - \cos\frac{\pi}{n} \Big).
		\end{align}
		
		\item Let $\mathcal{T} = \{(1,n),(2,n),...,(n-1,n)\}$. Then 
		\begin{align}
			\lambda_1(A_{\mathcal{G}(\mathcal{T})}) = n-1, \quad \lambda_2( A_{\mathcal{G}(\mathcal{T})} ) = n-2, \quad \Delta(A_{\mathcal{G}(\mathcal{T})}) = 1.
		\end{align}
		
		\item Let $\mathcal{T} = \{(i,j); i < j\}$. Then 
		\begin{align}
			\lambda_1( A_{\mathcal{G}(\mathcal{T})}) = \frac{n(n-1)}{2}, \quad \lambda_2( A_{\mathcal{G}(\mathcal{T})} ) = \frac{(n-1)(n-2)}{2} - 1, \quad \Delta(A_{\mathcal{G}(\mathcal{T})}) = n.
		\end{align}
	\end{enumerate}
\end{theorem}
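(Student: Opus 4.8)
\emph{Proof strategy.} The plan is to descend from the Cayley graph $\mathcal{G}(\mathcal{T})$ on $S_n$ (which has $n!$ vertices) to the tiny \emph{base graph} $H$ on the $n$ points permuted by $S_n$, whose edge set is exactly $\mathcal{T}$, by combining the irrep decomposition of the adjacency matrix with Aldous' spectral gap theorem. By the decomposition recalled just above, $A_{\mathcal{G}(\mathcal{T})}$ is orthogonally equivalent to $\bigoplus_{\lambda\vdash n} M_\lambda\otimes I_{d_\lambda}$ with $M_\lambda:=\sum_{s\in\mathcal{T}}\rho_\lambda(s)$, so $\operatorname{spec} A_{\mathcal{G}(\mathcal{T})}=\bigcup_{\lambda\vdash n}\operatorname{spec} M_\lambda$. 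Each $\rho_\lambda(s)$ is a real symmetric orthogonal matrix (in Young's orthogonal form \eqref{eq:YoungOrthogonal}), since $s$ is an involution; hence $M_\lambda$ is symmetric and $\lambda_{\max}(M_\lambda)\le|\mathcal{T}|$, with equality only if some unit vector is fixed by every $\rho_\lambda(s)$ and therefore (as $\mathcal{T}$ generates $S_n$) spans a trivial subrepresentation, which forces $\lambda=(n)$ by irreducibility of $S^\lambda$. Thus $\lambda_1(A_{\mathcal{G}(\mathcal{T})})=|\mathcal{T}|$ with multiplicity one (the Cayley graph is connected), and $\lambda_2(A_{\mathcal{G}(\mathcal{T})})=\max_{\lambda\neq(n)}\lambda_{\max}(M_\lambda)$.

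The decisive input is \textbf{Aldous' spectral gap conjecture}, proven in \cite{CaputoProof2010} (see also \cite{Aldous1992}): for \emph{any} generating set of transpositions $\mathcal{T}$, the maximum $\max_{\lambda\neq(n)}\lambda_{\max}(M_\lambda)$ is attained on the standard representation $\lambda=(n-1,1)$. For the case $\mathcal{T}=\{(i,j):i<j\}$ this heavy machinery is not needed: $M_\lambda$ is then the class sum of all transpositions, so by Schur's lemma it equals the scalar $\alpha_\lambda I_{d_\lambda}$ on $S^\lambda$ with $\alpha_\lambda$ the content sum appearing in \eqref{eq:P_l2}; since every $\mu\neq(n)$ satisfies $(n-1,1)\unrhd\mu$, Lemma \ref{lemma: total-ordering2} gives $\alpha_\mu<\alpha_{(n-1,1)}<\alpha_{(n)}=\binom{n}{2}$, so $(n-1,1)$ again wins. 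The path and star cases, however, genuinely rely on Aldous (equivalently, on the direct spectral analyses of \cite{Bacher1994,Flatto1985,Friedman2000}).

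It then remains to compute $\lambda_{\max}(M_{(n-1,1)})$, which I would do by realizing $S^{(n-1,1)}$ inside the permutation representation on $\mathbb{C}^n$ (which splits as trivial $\oplus$ standard). A one-line computation with matrix units shows that, for the base graph $H$ on $\{1,\dots,n\}$ whose edge set is $\mathcal{T}$,
\[ \sum_{s\in\mathcal{T}}\rho_{\mathrm{perm}}(s)=|\mathcal{T}|\,I_n-L_H, \]
where $L_H$ is the graph Laplacian of $H$. The constant vector (spanning the trivial subrep) is the $0$-eigenvector of $L_H$, and on the standard subrepresentation the eigenvalues of $\sum_s\rho_{\mathrm{perm}}(s)$ are $|\mathcal{T}|-\mu_i(H)$ for $i=2,\dots,n$, where $0=\mu_1(H)\le\mu_2(H)\le\cdots\le\mu_n(H)$ are the Laplacian eigenvalues. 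Hence $\lambda_{\max}(M_{(n-1,1)})=|\mathcal{T}|-\mu_2(H)$ and $\Delta(A_{\mathcal{G}(\mathcal{T})})=\mu_2(H)$, the algebraic connectivity of the base graph.

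Finally I would substitute the three base graphs. For nearest-neighbour transpositions $H$ is the path $P_n$, whose Laplacian spectrum is $\{2-2\cos(k\pi/n):k=0,\dots,n-1\}$ (verified via the explicit eigenvectors $v^{(k)}_j=\cos\!\big((2j-1)k\pi/(2n)\big)$), so $\mu_2(H)=2\big(1-\cos(\pi/n)\big)$ and $|\mathcal{T}|=n-1$. For the star $\mathcal{T}$, $H=K_{1,n-1}$ has Laplacian spectrum $\{0,\,1^{(n-2)},\,n\}$, so $\mu_2(H)=1$. For all transpositions, $H=K_n$ has Laplacian $nI_n-J$ with spectrum $\{0,\,n^{(n-1)}\}$ and $|\mathcal{T}|=\binom{n}{2}$, so $\mu_2(H)=n$. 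Plugging these into $\lambda_1=|\mathcal{T}|$, $\lambda_2=|\mathcal{T}|-\mu_2(H)$, and $\Delta=\mu_2(H)$ yields the three displayed formulas. The main obstacle is the reduction to the standard representation: outside the complete-graph case this is precisely the nontrivial content of Aldous' conjecture, whereas everything else is short representation-theoretic bookkeeping together with textbook Laplacian spectra.
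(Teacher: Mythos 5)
Your proposal is correct: the identity $\sum_{s\in\mathcal{T}}\rho_{\mathrm{perm}}(s)=|\mathcal{T}|\,I_n-L_H$, the multiplicity-one argument for $\lambda_1=|\mathcal{T}|$, the conclusion $\Delta(A_{\mathcal{G}(\mathcal{T})})=\mu_2(H)$ once the maximum over nontrivial irreps is known to sit in $S^{(n-1,1)}$, and all three Laplacian spectra (path, star, complete graph) check out and reproduce the displayed formulas. Note, though, that the paper does not prove this theorem itself: it imports the three computations from the cited references (Bacher for adjacent transpositions; Flatto--Odlyzko--Wales and Friedman for the star and complete graph), whose original arguments are direct spectral/representation-theoretic analyses predating the proof of Aldous' conjecture, and only afterwards states Aldous' theorem, remarking that these examples are consistent with it. Your route inverts that logic: you derive all three cases uniformly from the Caputo--Liggett--Richthammer theorem combined with the base-graph Laplacian picture (exactly the equivalence the paper points out via trivial $\oplus$ standard $\cong$ defining representation). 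This is sound and non-circular, since that proof does not rely on these special cases, but for the path and star it invokes a far deeper input than the self-contained original proofs. Your observation that the complete-graph case needs no such input---the class sum of all transpositions is central, acts on $S^\lambda$ by the content sum $\alpha_\lambda$ of Eq.~\eqref{eq:P_l2}, and $(n-1,1)$ dominates every $\mu\neq(n)$ so Lemma~\ref{lemma: total-ordering2} applies---is a nice elementary shortcut that stays within the paper's own toolkit. In short: same formulas, uniform derivation bought at the price of citing Aldous' theorem, versus the paper's reliance on the earlier case-by-case literature.
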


Given a generating set $\mathcal{T}$ of transpositions in $S_n$, the famous Aldous’ spectral gap conjecture, proposed in Ref.~\cite{Aldous1992} and proved in Ref.~\cite{CaputoProof2010}, says that the second largest eigenvalue of $A_{\mathcal{G}(\mathcal{T})}$ is achieved by the standard representation of $S_n$. Formally,

\begin{theorem}
	Let $\mathcal{T}$ be a generating set of transpositions in $S_n$. The following equivalent facts hold
	\begin{enumerate}
		\item $\lambda_2(A_{\mathcal{G}(\mathcal{T})})$ is achieved in $S^{(n-1,1)}$.
		
		\item Let $\mathcal{G}_\mathcal{T}$ be the graph with vertex set $[n]$ such that two vertices $i,j$ are joint if $(i,j) \in \mathcal{T}$. Then the spectral gap of $L_{\mathcal{G}_\mathcal{T}}$ equals that of $L_{\mathcal{G}(\mathcal{T})}$.
	\end{enumerate}
\end{theorem}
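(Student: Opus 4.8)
\noindent\emph{Proof proposal.} The plan is to prove statement (1) and read off statement (2) as an immediate reformulation. For the reformulation, recall from the preceding proposition that $A_{\mathcal{G}(\mathcal{T})}$ is similar to $\sum_{\tau\in\mathcal{T}}\mathrm{R}(\tau)$, hence to $\bigoplus_{\lambda\vdash n}\bigl(\sum_{\tau\in\mathcal{T}}\rho_\lambda(\tau)\otimes I_{d_\lambda}\bigr)$; equivalently $L_{\mathcal{G}(\mathcal{T})}$ block-diagonalizes as $\bigoplus_{\lambda\vdash n}\bigl(L^\lambda\otimes I_{d_\lambda}\bigr)$ with $L^\lambda:=\sum_{\tau\in\mathcal{T}}\bigl(I-\rho_\lambda(\tau)\bigr)\succeq 0$. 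The trivial block $\lambda=(n)$ gives $L^{(n)}=0$, while the defining representation $\mathbb{C}^n=S^{(n)}\oplus S^{(n-1,1)}$ carries the single-particle graph Laplacian $L_{\mathcal{G}_\mathcal{T}}=\sum_{\tau\in\mathcal{T}}\bigl(I-\rho_{\mathrm{def}}(\tau)\bigr)$, whose nonzero spectrum is exactly the spectrum of $L^{(n-1,1)}$ (both positive, since $\mathcal{T}$ generating $S_n$ forces $\mathcal{G}_\mathcal{T}$ connected). Hence the smallest positive eigenvalue of $L_{\mathcal{G}(\mathcal{T})}$ equals $\min_{\lambda\neq(n)}\lambda_{\min}(L^\lambda)$ and that of $L_{\mathcal{G}_\mathcal{T}}$ equals $\lambda_{\min}(L^{(n-1,1)})$; these coincide --- equivalently $\lambda_2(A_{\mathcal{G}(\mathcal{T})})=|\mathcal{T}|-\min_{\lambda\neq(n)}\lambda_{\min}(L^\lambda)$ is attained in $S^{(n-1,1)}$ --- exactly when the minimum over $\lambda$ is achieved at $\lambda=(n-1,1)$, which is the equivalence (1)$\Leftrightarrow$(2). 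Since $\lambda=(n-1,1)$ is always admissible, the bound $\min_{\lambda\neq(n)}\lambda_{\min}(L^\lambda)\le\lambda_{\min}(L^{(n-1,1)})$ is automatic, so the whole content is the reverse inequality $\lambda_{\min}(L^\lambda)\ge\lambda_{\min}(L^{(n-1,1)})$ for every $\lambda\vdash n$ with $\lambda\neq(n)$.

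I would prove this in the weighted generality that the induction demands. Attach a weight $w_{ij}\ge 0$ to each pair $\{i,j\}$ (here $w_{ij}=\mathbf{1}[(i,j)\in\mathcal{T}]$), form $\mathcal{L}_w:=\sum_{i<j}w_{ij}\bigl(I-(i,j)\bigr)\in\mathbb{R}[S_n]$ --- the generator of the \emph{interchange process} on the weighted graph --- and let $\mu(w)$ be the smallest positive eigenvalue of the single-particle operator $\sum_{i<j}w_{ij}\bigl(I-\rho_{\mathrm{def}}(i,j)\bigr)$ on $\mathbb{C}^n$. As above, $\mathcal{L}_w$ commutes with left translations, so $\lambda_{\min}\bigl(\mathcal{L}_w|_{\mathbf{1}^{\perp}}\bigr)=\min_{\lambda\neq(n)}\lambda_{\min}\bigl(\sum_{i<j}w_{ij}(I-\rho_\lambda(i,j))\bigr)\le\mu(w)$, and the claim is that equality holds whenever the weight graph is connected. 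I would prove this by induction on $n$, the cases $n\le 2$ being immediate.

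The engine of the induction is the \emph{octopus inequality}: for a distinguished vertex $0$, distinct vertices $1,\dots,k$, and positive weights $w_1,\dots,w_k$,
$$\left(\sum_{i=1}^{k}w_i\right)\sum_{i=1}^{k}w_i\left(I-(0\,i)\right)\ \succeq\ \sum_{1\le i<j\le k}w_i w_j\left(I-(i\,j)\right)\qquad\text{in }\mathbb{R}[S_n],$$
i.e. a weighted star of transpositions through $0$ dominates the clique it induces on its legs. Granting this, the inductive step singles out a vertex $v$ of the connected weight graph, uses the octopus inequality to convert the star at $v$ into weight on edges among the neighbours of $v$ --- leaving $v$ decoupled and the remaining graph on $n-1$ vertices connected --- applies the inductive hypothesis there, and transports the resulting bound back to the original graph by a comparison of Dirichlet forms on $\ell^2(S_n)$, using the matching Rayleigh (interlacing) monotonicity of $\mu(\cdot)$ on $\mathbb{C}^n$ under the same weight transfer. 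Tracking the total weight through these substitutions yields $\lambda_{\min}\bigl(\mathcal{L}_w|_{\mathbf{1}^{\perp}}\bigr)\ge\mu(w)$; specializing $w_{ij}=\mathbf{1}[(i,j)\in\mathcal{T}]$ recovers (2), and hence (1).

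The hard part is, without question, the octopus inequality. It is a statement purely about how transpositions sharing a common point act on $\ell^2(S_n)$, with no representation-theoretic shortcut: one reduces to the subgroup generated by those transpositions and is left with a self-contained but delicate estimate of the resulting quadratic form. Everything else --- the $S_n$-isotypic bookkeeping, the passage to weighted graphs, and the pruning induction --- is comparatively routine once that inequality is available.
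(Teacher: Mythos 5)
The paper does not actually prove this theorem: it is Aldous' spectral gap conjecture, stated here as a known result with the proof delegated to the cited literature (proposed in Ref.~\cite{Aldous1992}, proved in Ref.~\cite{CaputoProof2010}); the only argument the paper supplies is the one-line remark that (1) and (2) are equivalent because the defining representation is $S^{(n)}\oplus S^{(n-1,1)}$. Your reduction reproduces that equivalence correctly and in more detail: the block decomposition $L_{\mathcal{G}(\mathcal{T})}\sim\bigoplus_\lambda L^\lambda\otimes I_{d_\lambda}$, the identification of the nonzero spectrum of $L_{\mathcal{G}_\mathcal{T}}$ with that of $L^{(n-1,1)}$, and the observation that the whole content is the inequality $\lambda_{\min}(L^\lambda)\ge\lambda_{\min}(L^{(n-1,1)})$ for all $\lambda\neq(n)$. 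Beyond that, what you outline is precisely the Caputo--Liggett--Richthammer proof: pass to weighted graphs, induct on $n$ by pruning a vertex, and drive the induction with the octopus inequality. So your route is the correct (and, as far as is known, essentially the only) one, and it matches the cited source rather than anything proved in this paper.

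As a standalone proof, however, there is a genuine gap, and you have named it yourself: the octopus inequality is asserted, not proved, and it is where all of the difficulty lives. It is a positivity statement in $\mathbb{R}[S_n]$ with no short representation-theoretic or combinatorial proof; in Ref.~\cite{CaputoProof2010} its verification occupies the technical core of the article (an analysis of the quadratic form restricted to the subgroup generated by the star transpositions, with a delicate case analysis). The inductive step is also stated only schematically --- the weight transferred from the star at $v$ to the clique on its neighbours must be chosen as $w_{vi}w_{vj}/\sum_k w_{vk}$ for the Dirichlet-form comparison on $\ell^2(S_n)$ and the matching single-particle comparison on $\mathbb{C}^n$ to close simultaneously, and one must check the reduced weighted graph stays connected --- but this is bookkeeping of the sort you flag as routine. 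In short: correct strategy, accurate identification of the key lemma, but the proof of that lemma (and hence of the theorem) is not actually supplied, whereas for the purposes of this paper the theorem is simply imported from the literature.
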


The equivalence can be seen by the fact that the direct sum of standard representation and trivial representation is isomorphic with the defining representation of $S_n$, as introduced in the very beginning of this Appendix. Previous examples surely obey this statement. Generalizations of Aldous’ conjecture to other families of Cayley graphs, like those not merely containing transpositions, are intensively studied in recent researches like Refs.~\cite{Cesi2016,Huang2019,Parzanchevski2020,Li2023} 

The point related to our work when study $\U(1)$ and $\SU(d)$-symmetric random circuit models is that in Section \ref{sec:sketches} \& \ref{sec:details}, we need to construct operators like
\begin{align}
	A = \sum_{\tau \in \mathcal{T} \subset S_n} \tau \otimes \tau \ \text{ and } \ B = \sum_{\tau \in \mathcal{T} \subset S_n} \frac{1}{2} ( \tau \otimes \tau \otimes I \otimes I +  I \otimes I \otimes \tau \otimes \tau),
\end{align}
where $\tau$ denotes a transposition/SWAP acting on the $n$-qudit system $\mathcal{H} = (\mathbb{C}^d)^{\otimes n}$. After all, $\tau \otimes \tau$ is still a representation of $S_n$ on $\mathcal{H}^{\otimes 2}$. Therefore,
\begin{align}\label{eq:CayleyTensor}
	\sum_{\tau \in \mathcal{T} \subset S_n} \tau \otimes \tau \sim \bigoplus_\lambda \Big( \sum_{\tau \in \mathcal{T}} \rho_\lambda(\tau) \otimes I_{m_\lambda} \Big).
\end{align}
Although it is generally unclear which kinds of $S_n$ irreps $S^\lambda$ as well as multiplicities $m_\lambda$ we shall have from the above decomposition, Aldous’ spectral gap theorem guarantees that second largest possible ones are bounded by those in Theorem \ref{thm:CayleyGap} when $\mathcal{T}$ is taken to be the corresponding sets, i.e.,
\begin{align}
	\lambda_2(A) \leq \lambda_2( A_{\mathcal{G}(\mathcal{T})} ). 
\end{align}
By definition, the second largest eigenvalue of $B$ is just $\lambda_2(B) = \frac{1}{2} ( 1 + \lambda_2(A) )$, which can also be bounded by Theorem \ref{thm:CayleyGap}.

%----------------------------------------------------------------------------------------------------------

\subsection{Schur--Weyl duality, space decomposition under $\SU(d)$ symmetry and CQA architecture}\label{sec:SchurWeyl}

We now provide a brief review on Schur--Weyl duality, which would provide the right perspective to study unitary gates and states subject to $\SU(d)$ symmetry. Then we introduce the $S_n$-Convolutional Quantum Alternating (CQA) group proposed in Refs.~\cite{Zheng2021SpeedingUL,SUd-k-Design2023}, which form exact unitary $k$-designs with $\SU(d)$ symmetry. With these preparations, we prove in Section \ref{sec:detailsSU(d)} a polynomial convergence time with respect to the number $n$ of qudits for the induced CQA ensemble in the generation of approximate $\SU(d)$-symmetric unitary $2$-design. 

For quantum systems, there is a discrete set of translations corresponding to permuting the qudits as well as a continuous notion of translation corresponding to spatial rotations by elements of $\SU(d)$. To be precise, let $\mathbb{C}^d$ be a $d$-dimensional complex Hilbert space with orthonormal basis $\{e_1,\ldots,e_d\}$. The $n$-qudit Hilbert space $\mathcal{H} = (\mathbb{C}^d)^{\otimes n}$ admits two natural representations: the \textit{tensor product representation} $\rho_{\SU(d)}$ of $\SU(d)$ acting as
\begin{align}\label{eq:SUd}
	\rho_{\SU(d)}(g) (e_{i_1} \otimes \cdots \otimes e_{i_n}) \vcentcolon = g \cdot e_{i_1} \otimes \cdots \otimes g \cdot e_{i_n}, 
\end{align}
where $g \cdot e_{i_k}$ is given by the fundamental representation of $\SU(d)$, and the \textit{permutation representation} $\rho_{S_n}$ of $S_n$ acting as
\begin{align}\label{eq:Sn}
	\rho_{S_n}(\sigma) (e_{i_1} \otimes \cdots \otimes e_{i_n}) \vcentcolon =
	e_{i_{\sigma^{-1}(1)}} \otimes \cdots \otimes e_{i_{\sigma^{-1}(n)}}.
\end{align}
Schur--Weyl duality states that the action of $\SU(d)$ and $S_n$ on $V^{\otimes n}$ jointly decompose the space into irreducible representations of both groups in the form 
\begin{align}\label{eq:A-SchurWeyl}
	\mathcal{H} = \bigoplus_\lambda W_\lambda \otimes S^\lambda.
\end{align}
Again, $\lambda$ denotes a Young diagram. In this setting, it corresponds not only to a unique $S_n$ irrep $S^\lambda$, but also an $\SU(d)$ irrep $W_\lambda$ \cite{Goodman2009,Tolli2009}. It should be noted that within an $n$-qudit system, only irreps corresponding to $\lambda$ ranging over {Young diagrams of size $n$ with at most $d$ rows} can be found in the decomposition. 

We denote by $\operatorname{1}_{m_{\SU(d),\mu}} \cong S^\mu, \operatorname{1}_{ m_{S_n,\lambda}} \cong W_\lambda$ the multiplicity spaces of $\SU(d)$ and $S_n$ irreps, respectively. Then,
\begin{align}
	& \rho_{\SU(d)} \cong \bigoplus_\mu W_\mu \otimes \operatorname{1}_{m_{\SU(d),\mu}}, \quad  \rho_{S_n} \cong \bigoplus_\lambda \operatorname{1}_{ m_{S_n,\lambda}} \otimes S^\lambda,
\end{align}
where $m_{\SU(d),\mu} = \dim S^\mu$ and $m_{S_n,\lambda} = \dim W_\lambda$.

An operator $A$ acting on the system being $\SU(d)$-symmetric or invariant means that
\begin{align}
	\rho_{\SU(d)}(g) A = A \rho_{\SU(d)}(g) \text{ or } g^{\otimes n} A = A g^{\otimes n}.
\end{align}
One can check by Eqs.~\eqref{eq:SUd} and \eqref{eq:Sn} that these permutation actions clearly commute with $g^{\otimes n}$. Furthermore, Schur--Weyl duality as well as the double commutant theorem~\cite{Goodman2009,Tolli2009} confirms that $\SU(d)$-symmetric operators are exactly built from permutations in the symmetric group $S_n$. That is, they can be expressed as linear combinations, like $\sum c_i \sigma_i$, of permutations.

It is a conventional practice in physics to decompose the entire space into $\SU(d)$ irreps. Quantum states living in these subspaces are actually permutation-invariant or $S_n$-symmetric. Since our focus is on quantum circuits with $\SU(d)$ symmetry, we should decompose the entire Hilbert space with respect to $S_n$ irreps (see Refs.~\cite{Tolli2009,krovi,Zheng2021SpeedingUL,SUd-k-Design2023} for more details). As a reminder, although the entire Hilbert space is decomposed into smaller subspaces, one should not expect related problems, such as finding the ground state energy of $\SU(d)$-symmetric Hamiltonians and constructing a $\SU(d)$-symmetric random quantum circuits, to become easier. There are two general reasons: 
\begin{enumerate}
	\item There are various irreducible $S_n$ irreducible representations of the decomposition to deal with, and the total number is $p(n,d)$, which scales at most super-polynomially (see \eqref{eq:Ramanujan}) in $n$ and has no closed-form formula for evaluation. 
	
	\item Even for qubits with $d = 2$, using the hook length formula from Definition \ref{def:YoungDiagram}, we know that (cf.~Eq.~\eqref{eq:two-row-dim})
	\begin{align}
		\dim S^{(m,m)} = \frac{(2m)!}{(m+1)!\, m!} = \frac{2^m}{m+1} \prod_{k=1}^m \frac{2k-1}{k} > \frac{2^m}{m+1} 
	\end{align}
	for the $S_n$ irrep of Young diagram $\lambda = (m,m)$ on a $2m$-qubit system. One can find other examples with exponentially large subspaces respecting the $\SU(d)$ symmetry~\cite{Giambruno2015}, which still cause difficulties when approaching the problem.
\end{enumerate}

%----------------------------------------------------------------------------------------------------------

We now introduce the mathematical definition of $S_n$-CQA group and the induced ensemble:

\begin{definition}\label{def:CQA}
	Let us consider the following Hamiltonians
	\begin{align}\label{eq:CQAGenerators}
		H_S = \sum_{j = 1}^{n-1} (j,j+1), \quad H_{\YJM} = \sum_{k,l} \beta_{kl} X_k X_l,
	\end{align}
	where $H_S$ is just the summation of adjacent transpositions and $H_{\YJM}$ is a linear combination of YJM elements up to second-order products with real parameters $\beta_{kl}$. Then
	\begin{align}
		\CQA_{\SU(d)} \vcentcolon = \Big\langle \exp(-i \sum_{k,l} \beta_{kl} X_k X_l), \quad \exp(-i\gamma H_S) \Big\rangle.
	\end{align}
	The subscript $\SU(d)$ is necessary because later we have similar definitions for the case under $\U(1)$ symmetry.
\end{definition}

As a reminder, supposing we set $X_1$ to be the identity operator,  the collection $\{X_k X_l \}$ automatically includes both first and second-order products of YJM elements, which is what we require in the above definition. However, to be consistent with the fact that components of content vectors (see Section \ref{sec:SnTheory}) form the spectra of YJM elements, $X_1$ should be zero. Regardless of these details, for brevity, we slightly abuse the notation $X_k X_l$ for YJM elements up to second-order products. One can also set $k \leq l$ in the above definition because YJM elements are commutative with each other. 

Importantly, CQA is contained in the \emph{group of $\SU(d)$-symmetric unitaries}. To define this group, let $\U(S^\lambda)$ denote the unitary group acting on the representation space $S^\lambda$, i.e., $\U(S^\lambda) \cong \U(\dim S^\lambda)$. A typical element $g$ from the group of $\SU(d)$-symmetric unitaries is then a collection of unitaries:
\begin{align}\label{eq:GroupElements}
	g = \bigoplus_\lambda U_\lambda^{\oplus m_{S_n, \lambda}},
\end{align}
where $U_\lambda \in \U(S^\lambda)$ and $\lambda$ range over all Young diagrams of size $n$ with at most $d$ rows. For simplicity, we omit the multiplicities and denote this group by $\mathcal{U}_{\SU(d)}$. On the other hand, by restricting the phase factors to be 1 on each $S^\lambda$, we obtain the special unitary group $\SU(S^\lambda)$ as well as $S\mathcal{U}_{\SU(d)} = \SU(\bigoplus_\lambda S^\lambda)$ consisting of $\SU(d)$-symmetric unitaries with unit determinant on each $S_n$ irrep block, i.e., unitaries with trivial relative phase factors with respect to each irrep sector. 

Let $\mathcal{V}_{r,\SU(d)}$ be the group generated by $\SU(d)$-symmetric $r$-local unitaries. There are special algebraic properties of $S_n$ and $\SU(d)$ irreps that arise from the subspace decomposition of the $n$-qudit system with $d \geq 3$~\cite{Biedenharn1,Biedenharn2,Marin1,Marin2} which lead to the fact that $S\mathcal{U}_{\SU(d)} \nsubseteqq \mathcal{V}_{2,\SU(d)}$~\cite{MarvianNature,MarvianSU2}, namely, $2$-local $\SU(d)$-symmetric unitaries are unable to even generate $2$-designs on general qudits~\cite{MarvianSUd}. Then, it is demonstrated in Ref.~\cite{Zheng2021SpeedingUL} that
\begin{align}\label{eq:universalitySU(d)}
	S\mathcal{U}_{\SU(d)} \subsetneqq \CQA_{\SU(d)} \subsetneqq \mathcal{V}_{4,\SU(d)} \subsetneqq \mathcal{U}_{\SU(d)}.
\end{align}
This is followed by the work of Ref.~\cite{SUd-k-Design2023}, which proves that
\begin{theorem}
	For an $n$-qudit system with $n \geq 9$ and $d < n$, the group $\CQA_{\SU(d)}$, as well as $\mathcal{V}_{4,\SU(d)}$, forms exact $\SU(d)$-symmetric unitary $k$-designs for all $k < n(n-3)/2$. 
\end{theorem}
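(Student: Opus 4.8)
The plan is to show that the CQA group achieves the same second-moment (and higher-moment) twirl as the full symmetric-unitary group $\mathcal{U}_{\SU(d)}$, which by definition means it forms an exact $k$-design. The natural route is to combine the two structural facts already assembled in the excerpt: (i) Schur--Weyl duality decomposes $\mathcal{H} = \bigoplus_\lambda W_\lambda \otimes S^\lambda$, so that $\mathcal{U}_{\SU(d)}$ is exactly the set of block-wise unitaries $\bigoplus_\lambda U_\lambda^{\oplus m_{S_n,\lambda}}$ with $U_\lambda \in \U(S^\lambda)$; and (ii) an ensemble inside $\mathcal{U}_{\SU(d)}$ is a $k$-design iff it generates each $\U(S^\lambda)$ densely (Haar-uniformly) \emph{and} it randomizes the relative phases between distinct irrep blocks to order $k$. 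So the argument splits into two parts: a \emph{universality within each block} claim, and a \emph{relative-phase randomization} claim, with the cutoff $k < n(n-3)/2$ entering only through the latter.

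First I would handle the phase factors. The relative phases live in the center $Z(\mathbb{C}[S_n])$, whose matrix representation on $\bigoplus_\lambda S^\lambda$ is the diagonal of scalars $(\chi_\lambda(\cdot)/\dim S^\lambda)$-type data; by Theorem~\ref{thm:CenterBases} the YJM products $\{X_\mu\}_{\mu \vdash n}$ form a basis of this center. The Hamiltonian $H_{\YJM} = \sum_{k,l}\beta_{kl} X_k X_l$ generates (via $\exp(-iH_{\YJM})$ and closure under the group operation, using that the $X_k$ commute and that $\sum_i X_i$, $P_2 = (\sum_i X_i)^2$, etc.\ are central) a subgroup of the torus of relative phases. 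The key quantitative input is that for $k$-designs one only needs the $k$-th order mixed moments of these phases to match Haar; a dimension count shows the relevant phases are indexed by unordered $k$-subsets of the $p(n,d)$ irrep labels, and Ref.~\cite{SUd-k-Design2023}'s analysis (which I would cite for the precise combinatorial step) shows that the span of second-order YJM products already separates all such $k$-tuples of $\chi_\lambda/\dim S^\lambda$ values provided $k < n(n-3)/2$ — essentially because Lemma~\ref{lemma: total-ordering2} and Eq.~\eqref{eq:CharacterValue} show the content sums $\alpha_\lambda$ are all distinct (and spread out enough) for the Young diagrams in question, so the degree-$\le 2$ polynomials in the $\alpha_\lambda$ resolve any $k$ of them as long as $k$ is below that threshold. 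This is where the exact bound $n(n-3)/2$ comes from and is the delicate counting step.

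Second, I would show $\CQA_{\SU(d)}$ acts densely (hence with a full Haar twirl, for every $k$) on each factor $\U(S^\lambda)$. Here the generators are $\exp(-i\gamma H_S)$ with $H_S = \sum_j (j,j+1)$, together with the already-used YJM evolutions. Restricted to a fixed irrep $S^\lambda$, $H_S$ is a fixed Hermitian matrix and the $X_k$ act diagonally in the Young basis with the content-vector eigenvalues of Eq.~\eqref{eq:YJMExample}; because the content vectors of distinct standard tableaux of $\lambda$ are distinct, the diagonal YJM generators together with the off-diagonal couplings supplied by the Young orthogonal form Eq.~\eqref{eq:YoungOrthogonal} for the adjacent transpositions generate a Lie algebra that is irreducible and closes to all of $\mathfrak{su}(S^\lambda)$ — a standard ``bracket-generation'' argument (a connected graph of $2\times 2$ couplings plus distinct diagonal elements generates the full unitary Lie algebra). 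One must do this simultaneously for all $\lambda$ of size $n$ with $\le d$ rows, but the relative-phase freedom established in the previous paragraph decouples the blocks, so block-wise density suffices. This density/irreducibility claim is itself from Ref.~\cite{Zheng2021SpeedingUL} (the chain $S\mathcal{U}_{\SU(d)} \subsetneqq \CQA_{\SU(d)}$ in Eq.~\eqref{eq:universalitySU(d)}), which I would invoke rather than re-derive.

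\textbf{Main obstacle.} The genuinely hard part is the phase-separation count that yields exactly $k < n(n-3)/2$: one must prove that second-order YJM products, equivalently degree-$\le 2$ symmetric functions of the content data, can distinguish any $k$-element multiset of irrep labels, and that this fails once $k$ reaches $n(n-3)/2$. This requires controlling how many distinct values the functionals $\lambda \mapsto \chi_\lambda(c_\mu)/\dim S^\lambda$ (for $\mu$ a transposition or a double transposition) take and showing their joint image has enough dimension; the bound $n(n-3)/2$ presumably matches the number of partitions minus a small defect coming from collisions (e.g.\ the $(3,3)$ vs.\ $(4,1^2)$ phenomenon flagged after Lemma~\ref{lemma: total-ordering2}). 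The density-within-blocks part, by contrast, is routine Lie-theoretic bracket generation and should go through cleanly for all $n\ge 9$, $d<n$ using only Eqs.~\eqref{eq:YJMExample}--\eqref{eq:YoungOrthogonal}.
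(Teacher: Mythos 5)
Your high-level split — semi-universality inside the irrep blocks plus randomization of the relative phases, with the cutoff entering only through the phase part — is indeed the right skeleton, and it matches how this statement is actually handled: the paper itself defers the $\SU(d)$ theorem to Ref.~\cite{SUd-k-Design2023}, but proves the $\U(1)$ analogue (Theorem~\ref{thm:CQAU(1)-design}) in full, and that proof shows the intended mechanism: write $T_k^{\CQA}$ as a relative-phase integral composed with the Haar integral over $S\mathcal{U}_{\SU(d)}$ (using $S\mathcal{U}_{\SU(d)}\subsetneqq\CQA_{\SU(d)}$ from Ref.~\cite{Zheng2021SpeedingUL}, Eq.~\eqref{eq:universalitySU(d)}), then bound the discrepancy with $T_k^{\mathcal{U}_\times}$ sector by sector.

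However, your account of where $k<n(n-3)/2$ comes from is not the correct mechanism, and this is the step that carries the theorem. The bound is a \emph{dimension} threshold, not a phase-separation/counting threshold: by Lemma~\ref{lemma:LR}, a discrepancy between the $\SU(S^\lambda)$ and $\U(S^\lambda)$ integrals requires unbalanced exponents $r_\lambda\neq s_\lambda$ with $r_\lambda\equiv s_\lambda \pmod{\dim S^\lambda}$, which forces $k\geq \dim S^\lambda$. Every irrep in the Schur--Weyl decomposition other than $(n)$ and $(n-1,1)$ has dimension at least $\dim S^{(n-2,2)}=n(n-3)/2$ (this is what $n\geq 9$, $d<n$ secures), so for $k<n(n-3)/2$ a discrepancy can only arise in the trivial and standard sectors, and a finite linear-algebra check — the exact analogue of the system of $\Delta k_\mu$ equations solved in Theorem~\ref{thm:CQAU(1)-design} — shows the phases CQA does generate (first- and second-order YJM products) annihilate precisely those contributions. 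Your proposed route, that degree-$\le 2$ polynomials in the content data must "separate all $k$-tuples of irrep labels" with the threshold coming from collisions such as $(3,3)$ versus $(4,1^2)$, does not yield the bound; the separation property of second-order YJM products is a statement about the center of $\mathbb{C}[S_n]$ and is independent of $k$. Relatedly, your claim that $\CQA_{\SU(d)}$ is dense in each $\U(S^\lambda)$ "hence with a full Haar twirl, for every $k$" per block, with the phase freedom "decoupling the blocks," overstates what CQA can do: it is only semi-universal, its relative phases form a low-dimensional torus (cf.\ the remark after Theorem~\ref{thm:universalityU(1)-2} and Corollary~\ref{corollary:universalityU(1)}), and it is exactly the correlated cross-block phases that can spoil the $k$-fold twirl. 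If the blocks genuinely decoupled for every $k$, there would be no finite threshold at all, contradicting the bounded-locality obstruction (Theorem~\ref{thm:U(1)DesignLocality} and its $\SU(d)$ counterpart in Ref.~\cite{SUd-k-Design2023}).
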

A more recent work Ref.~\cite{Marvian3local} verifies that $S\mathcal{U}_{\SU(d)} \subsetneqq \mathcal{V}_{3,\SU(d)}$, which optimizes the smallest necessary locality to cover $S\mathcal{U}_{\SU(d)}$ to be $3$. 

%----------------------------------------------------------------------------------------------------------

\subsection{Permutation modules and space decomposition under $\U(1)$ symmetry}\label{sec:Tabloid}

We now discuss $\U(1)$ symmetry governed by the particle number conservation law on $n$ qubits $(d = 2)$. Mathematically, the symmetry requires
\begin{align}
	\Big[ U, \sum_i Z_i \Big] 0 \Leftrightarrow \Big[U, \sum_i \frac{Z_i + I}{2} \Big] = 0
\end{align}
for $U \in \U(2^n)$. We denote by $\mathcal{U}_{\U(1)}$ the group of all such $\U(1)$-symmetric unitaries.

It is well known that the decomposition of $n$-qubit space into the direct sum of $\U(1)$ charge sectors is given by 
\begin{align}\label{eq:SpaceDecompositionU(1)}
	(\mathbb{C}^2)^{\otimes n} \cong \bigoplus_{0 \leq r \leq n} \text{span}\{ \ket{i_1,...,i_n}; i = 0 \text{ or } 1, r \text{ many indices are selected to be } 0 \}  = \bigoplus_{\mu = (n-r,r)} S^\mu.
\end{align}
It should be noted that the superscript $\mu = (n-r, r)$ here may not refer to a well-defined Young diagram because $r$ can be larger than $\lfloor n/2 \rfloor$. This  causes no essential trouble, and we would explain in detail when any possible ambiguity occurs. As a reminder, we always denote by $S^\lambda$ an $S_n$ irrep when studying the $\SU(d)$ symmetry. In some cases, we may slightly abuse the notations $\mu$ and $\lambda$ for convenience. For example in Section \ref{sec:sketches}, many formulas and statements on random circuits under $\SU(d)$ and $\U(1)$ symmetries share similar mathematical forms. We may even omit super/subscripts containing $S^\lambda$ or $S^\mu$ for brevity when no confusion can arise.

Let $\U(S^\mu)$ denote the unitary group acting on the charge sector $S^\mu$. A typical element $g$ from the group of $\U(1)$-symmetric unitaries is then a collection of (multiplicity-free) unitaries:
\begin{align}
	g = \bigoplus_\mu U_\mu = \bigoplus_{0 \leq r \leq n} U_{(n-r, r)} \in \mathcal{U}_{\U(1)}.
\end{align}

The space decomposition under $\U(1)$ symmetry is more apparent and straightforward than the $\SU(d)$ case which demands a nontrivial transformation through Schur--Weyl duality. Even so, these $\U(1)$ charge sectors are still tightly related to the $S_n$ representation theory which we need to take into account in Section \ref{sec:U(1)}. We now provide a brief introduction. To begin with, let us recall the definition of standard Young tableaux in Section \ref{sec:SnTheory}. Suppose we discard the requirement that integers should be filled in the increasing order from left to right and from top to bottom in each row and column of the diagram. An arbitrary filling defines a \emph{Young tableau} $Y$.

\begin{definition}
	Two Young tableaux $Y_1,Y_2$ are \emph{row equivalent} if they are of the same shape $\lambda$ and the corresponding rows contain the same collection of integers. The row equivalence class $[Y]$ of a Young tableau $Y$ is called a \emph{Young tabloid}. 
\end{definition}

For any permutation $\sigma \in S_n$, let $\sigma \cdot Y$ be the Young tableau obtained by permuting integers filled in $Y$ through $\sigma$. Then, it is easy to check that $\sigma \cdot [Y] = [\sigma \cdot Y]$ is a well-defined group action. Consequently, given any Young diagram $\mu$, let $M^\mu$ be the vector space spanned by all tabloids of this shape. The action induces a representation of $S_n$ on $M^\mu$ called \emph{permutation module}.
 
\begin{theorem}
	The following facts hold for permutation modules:
	\begin{enumerate}
		\item Let $\mu = (n-r,r)$ with $r \leq \lfloor n/2 \rfloor$, the $\U(1)$ charge sector $S^\mu$ is isomorphic with the permutation module $M^\mu$~\cite{Goodman2009}.
		
		\item Any permutation module $M^\mu$ can be further decomposed into $S_n$ irreps $S^\lambda$ as long as $\lambda$ dominates $\mu$:
		\begin{align}
			M^\mu = \bigoplus_{\lambda \unrhd \mu} \mathrm{1}_{\kappa_{\mu\lambda}} \otimes S^\lambda,
		\end{align}
		where $\kappa_{\mu\lambda}$ is the \emph{Kostka number}~\cite{Sagan01}.
	\end{enumerate}
\end{theorem}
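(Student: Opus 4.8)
The plan is to establish both parts as the classical Young's rule for permutation modules of $S_n$ over $\mathbb{C}$, for which the cited references \cite{Sagan01,Goodman2009} give complete treatments; I outline the two routes I would use.

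\textbf{Part 1.} The charge sector $S^\mu$ with $\mu=(n-r,r)$ is spanned by the computational-basis vectors $\ket{i_1,\dots,i_n}$ having exactly $r$ zeros, and such a vector is completely determined by the $r$-element set $Z\subseteq[n]$ of positions carrying a zero. A tabloid of shape $(n-r,r)$, being a row-equivalence class of fillings, is likewise determined by the unordered set of $r$ integers occupying its second row. I would therefore define $\Phi\colon S^\mu\to M^\mu$ on basis vectors by sending $\ket{i_1,\dots,i_n}$ to the tabloid whose second row is $Z$ and first row is $[n]\setminus Z$, and extend linearly. It remains to check $S_n$-equivariance: unwinding \eqref{eq:Sn}, $\rho_{S_n}(\sigma)$ moves a zero sitting at position $j$ to position $\sigma(j)$, hence carries $Z$ to $\sigma(Z)$; on the other side $\sigma\cdot[Y]=[\sigma\cdot Y]$ replaces each entry $j$ of the second row by $\sigma(j)$, so it too carries $Z$ to $\sigma(Z)$. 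Thus $\Phi$ is a bijection of bases intertwining the actions, i.e.\ an isomorphism of $S_n$-modules. The identical argument realizes the weight-$\mu$ subspace of $(\mathbb{C}^d)^{\otimes n}$ as $M^\mu$ for any composition $\mu$ of $n$ with at most $d$ parts.

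\textbf{Part 2.} First I would identify $M^\mu$ with an induced representation. Taking the superstandard tableau $T^\mu$ whose rows are filled consecutively with $1,\dots,\mu_1$, then $\mu_1+1,\dots,\mu_1+\mu_2$, and so on, the $S_n$-stabilizer of the tabloid $[T^\mu]$ is exactly the Young subgroup $S_\mu=S_{\mu_1}\times S_{\mu_2}\times\cdots$; since $S_n$ acts transitively on the tabloids of shape $\mu$, the orbit--stabilizer identification gives $M^\mu\cong\mathbb{C}[S_n/S_\mu]\cong\mathrm{Ind}_{S_\mu}^{S_n}\mathbf{1}$. Now apply the Frobenius characteristic map $\operatorname{ch}$, an isometric ring isomorphism onto the ring of symmetric functions that sends each irreducible $S^\lambda$ to the Schur function $s_\lambda$ and sends $\mathrm{Ind}_{S_\mu}^{S_n}\mathbf{1}$ to the complete homogeneous symmetric function $h_\mu=h_{\mu_1}h_{\mu_2}\cdots$ \cite{Sagan01,Goodman2009}. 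The decomposition of $M^\mu$ is then read directly off the Schur expansion $h_\mu=\sum_\lambda\kappa_{\mu\lambda}\,s_\lambda$, in which $\kappa_{\mu\lambda}$ counts the semistandard Young tableaux of shape $\lambda$ and weight $\mu$; because $\operatorname{ch}$ is an isometry and each $S^\lambda$ is irreducible, this gives $M^\mu\cong\bigoplus_\lambda\mathrm{1}_{\kappa_{\mu\lambda}}\otimes S^\lambda$. Finally, a pigeonhole (Gale--Ryser-type) argument shows $\kappa_{\mu\lambda}=0$ unless $\lambda\unrhd\mu$: a semistandard filling of $\lambda$ with $\mu_1$ ones, $\mu_2$ twos, and so on, must confine the entries $1,\dots,j$ to the first $j$ rows, forcing $\sum_{i\le j}\lambda_i\ge\sum_{i\le j}\mu_i$; and clearly $\kappa_{\mu\mu}=1$. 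This restricts the sum to $\lambda\unrhd\mu$ and completes the proof.

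For readers preferring to remain inside the tabloid language of Section \ref{sec:Tabloid}, one may instead run the Specht-module argument: realize $S^\lambda\subseteq M^\lambda$ as the span of polytabloids, establish the ``sign lemma'' that applying the column antisymmetrizer $\sum_{\pi\in C_t}\mathrm{sgn}(\pi)\,\pi$ of a $\lambda$-tableau $t$ to a $\mu$-tabloid gives $0$ unless no column of $t$ meets a row of $\mu$ twice (which again forces $\lambda\unrhd\mu$), and then compute $\dim\mathrm{Hom}_{S_n}(S^\lambda,M^\mu)$ using the semistandard-tableau basis of $\mathrm{Hom}_{S_n}(M^\lambda,M^\mu)$ to recover $\kappa_{\mu\lambda}$; complete reducibility over $\mathbb{C}$ then turns the multiplicity count into the stated direct sum. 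On either route the only step that is not bookkeeping with induced representations and the characteristic map is the dominance-triangularity of the Kostka numbers (equivalently, the sign lemma), which is the genuine combinatorial content of Young's rule and where I expect the main work to lie.
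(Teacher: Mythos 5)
The paper does not actually prove this theorem---it is quoted as a classical fact with citations to the textbooks of Goodman--Wallach and Sagan---so there is no in-paper argument to compare against; what you have written is a correct reconstruction of the standard proof. Part 1 is exactly the expected equivariant bijection: a computational-basis state with $r$ zeros corresponds to the $(n-r,r)$-tabloid whose second row records the zero positions, and your check that both the action \eqref{eq:Sn} and the tabloid action $\sigma\cdot[Y]=[\sigma\cdot Y]$ send the zero-set $Z$ to $\sigma(Z)$ is what makes the bijection an $S_n$-isomorphism. Part 2 is Young's rule, and your route ($M^\mu\cong\mathrm{Ind}_{S_\mu}^{S_n}\mathbf{1}$, Frobenius characteristic, $h_\mu=\sum_\lambda \kappa_{\mu\lambda}s_\lambda$, plus the column-strictness argument giving $\kappa_{\mu\lambda}=0$ unless $\lambda\unrhd\mu$ and $\kappa_{\mu\mu}=1$) is precisely the argument in the cited references; the alternative Specht-module/sign-lemma route you mention is the other standard treatment. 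You also correctly isolate the one piece of genuine combinatorial content (dominance triangularity of the Kostka numbers), and your argument in fact proves the statement for general $\mu$, which is more than the two-row case $\mu=(n-r,r)$ the paper needs---there, as the paper notes afterwards, dominance forces $\lambda=(n-s,s)$ with $s\le r$ and all Kostka numbers equal $1$. No gaps.
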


When $\mu = (n-r,r)$, Definition \ref{def:Dominance} indicates that only when $\lambda = (n-s,s)$ with $s \leq r$, $\lambda \unrhd \mu$. In these cases, $\kappa_{\mu\lambda}$ trivially equals one. For our analysis in Section \ref{sec:detailsU(1)}, it is necessary to know the above decomposition. Let us consider \eqref{eq:CayleyTensor} by restricting the action of $\tau$ to a $\U(1)$ charge sector $M^\mu$ from the entire Hilbert space. Then the action of $\tau \otimes \tau$ is decomposed into
\begin{align}
	M^\mu \otimes M^\mu \cong \bigoplus_{\lambda, \lambda' \unrhd \mu}  S^\lambda \otimes S^{\lambda'}. 
\end{align}
Obviously, each of these tensor product $S^\lambda \otimes S^{\lambda'}$ can be further decomposed into $S_n$ irreps. According to basic character theory~\cite{Fulton1997,Sagan01}, only when $\lambda = \lambda'$ the decomposition admits trivial irrep. Together with our earlier discussion, we find exactly $r+1$ multiples of trivial representation and hence the multiplicity of unit eigenvalue of Eq.~\eqref{eq:CayleyTensor} is $r+1$. This is subtle point that need taking into account when presenting our result. More details is discussed in Section \ref{sec:detailsU(1)}. 

%----------------------------------------------------------------------------------------------------------

\subsection{Unitary $k$-designs under continuous symmetry and invariant subspaces of moment operators}\label{sec:kDesigns}

\ytableausetup{boxsize = 2pt}
We first provide basic definitions for unitary $k$-designs under continuous symmetry~\cite{SUd-k-Design2023} for self-containment. Then we introduce the unitary ensembles studied in this work and elucidate the concept of invariant subspaces, which facilitates our mathematical proofs later. 

\begin{definition}
	Given a compact group $G$ with Haar measure $\mu$ and a unitary representation $\rho$ on the concerned Hilbert space $\mathcal{H}$, for any operator $M \in \operatorname{End}(\mathcal{H}^{\otimes k})$, the \emph{$k$-fold channel} is given by twirling over the Haar measure $\mu$ over $G$ acting on $M$: 
	\begin{align}\label{eq: tpe}
		T_k^G(M) = \int_G d\mu(g) \rho^{\otimes k}(g) M (\rho^{\otimes k}(g))^\dagger = \int_G dU U^{\otimes k} M U^{\dagger \otimes k},
	\end{align}
	where we denote the matrix representations of group elements simply by $U$ on the RHS of the above equation. Despite its integral form, $T_k^G(\cdot)$ is merely a linear map acting on $\operatorname{End}(\mathcal{H}^{\otimes k})$ and can be reformulated as the \emph{$k$-th moment (super)-operator}:
	\begin{align}
		T_k^G = \int_G U^{\otimes k } \otimes \bar{U}^{ \otimes k} dU.
	\end{align}
	We can analogously define $T_k^{\mathcal{E}}$ for an arbitrary ensemble $\mathcal{E}$ by replacing  $G$ by  $\mathcal{E}$, which provides a basis for the study of (approximate) $k$-designs.
\end{definition}

The representation space of $G$ can always be written as a direct sum 
\begin{align}\label{eq:SpaceDecomposition}
	\mathcal{H} \cong \bigoplus_i V_i \otimes \mathrm{1}_{m_i}
\end{align}
of subspaces with multiplicities of the Hilbert space, which can characterize quantum systems obeying certain continuous symmetries or conservation laws. For $\SU(d)$ symmetry, $G = \mathcal{U}_{\SU(d)}$ is defined in Section \ref{sec:SchurWeyl} and the above decomposition is given by Schur-Weyl duality. For $\U(1)$ symmetry, $G = \mathcal{U}_{\U(1)}$ is defined in Section \ref{sec:Tabloid} and the above decomposition is obtained by counting Hamming weights of the computational basis.  

In later contexts, when we write $T_k^G$ for certain compact group $G$, the integral is automatically taken over the Haar measure. The bi-invariance of Haar measure implies that $T_k^G$ is a \emph{projector} from $\operatorname{End}(\mathcal{H}^{\otimes k})$ onto its the \emph{commutant algebra} 
\begin{align}
	\Comm_k(G) \vcentcolon = \{M \in \operatorname{End}(\mathcal{H}^{\otimes k}); U^{\otimes k} M = M U^{\otimes k} \}, 
\end{align}
i.e., the subspace of all operators that commute with the tensor product representation $\rho^{\otimes k}$ of $G$. Eigenvalues of $T_k^G$ are either 0 or 1. See also Refs.~\cite{Dankert2026PRA,Gross2006,HarrowTEP08,HarrowTEP09} for more details.

To formally discuss convergence properties, we also need notions of approximate designs. Here we use the standard strong notion based on $\epsilon$-approximation in terms of complete positivity:
\begin{definition}\label{def:AppDesign}
	Given a compact group $G$, an ensemble $\mathcal{E}$ of unitaries is called an \emph{$\epsilon$-approximate unitary $k$-design with respect to $G$} if the following matrix inequality holds in the sense of complete positivity (that is, $A \leq_{\mathrm{cp}} B$ means $B - A$ is completely positive):
	\begin{align}
		(1 - \epsilon) T_k^G \leq_{\mathrm{cp}} T_k^{\mathcal{E}} \leq_{\mathrm{cp}} (1 + \epsilon) T_k^G. 
	\end{align} 
	We denote by $c_{\mathrm{cp}}(\mathcal{E}, k)$ the smallest constant $\epsilon$ that satisfies the above inequalities.
\end{definition}

Let $N = d^n$ be the total dimension of the Hilbert space $\mathcal{H}$ of $n$ qudits. It turns out that
\begin{align}
	c_{\mathrm{cp}}(\mathcal{E}, k) \leq N^{2k} \Vert T_k^{\mathcal{E}} - T_k^G \Vert_{2 \to 2},
\end{align}
where $\Vert T_k^{\mathcal{E}} - T_k^G \Vert_{2 \to 2}$ is the induced 2-norm on the super-operator~\cite{harrow2016local,SUd-k-Design2023,Haferkamp2024linear}. Viewing super-operators $T_k^{\mathcal{E}}, T_k^G$ as ordinary operators, the induced $2$-norm is exactly the infinity norm. When the operator is Hermitian and positive semidefinite, it simply equals the largest eigenvalue of $T_k^{\mathcal{E}} - T_k^G$. From this perspective, one necessary condition for converging to  unitary $k$-designs with respect to $G$ is that $\Vert T_k^{\mathcal{E}} - T_k^G \Vert_{2 \to 2} < 1$ because this ensures that (note that $T_k^{\mathcal{E}}, T_k^G$ are commutative and simultaneously diagonalizable due to the bi-invariance of Haar measure) for arbitrarily small $\epsilon$, there exists a sufficiently large $p$, which can be interpreted as the depth of the random circuit or evolution time, such that
\begin{align}
	\Vert (T_k^{\mathcal{E}})^p - T_k^G \Vert_{2 \to 2} \leq \epsilon.
\end{align}

With the assumption that $T_k^{\mathcal{E}}$ is Hermitian and positive semidefinite, $\Vert T_k^{\mathcal{E}} - T_k^G \Vert_{2 \to 2} < 1$ further indicates that
\begin{enumerate}
	\item The unit eigenspace of $T_k^{\mathcal{E}}$ is identical to that of $T_k^G$, which is just $\Comm_k(G)$.
	
	\item The second largest eigenvalue of $T_k^{\mathcal{E}}$, $\lambda_2(T_k^{\mathcal{E}})$,  is strictly smaller than one. When the circuit depth satisfies
	\begin{align}
		 p\geq\frac{1}{1 -\lambda_2} \log \frac{N^{2k}}{\epsilon} = \frac{1}{1 -\lambda_2} \Big( 2kn\log d + \log \frac{1}{\epsilon} \Big),
	\end{align}
	the ensemble forms approximate $k$-designs within precision $\epsilon$~\cite{HarrowTEP08,HarrowTEP09,harrow2023approximate,Schuster2024lowDepth}.
\end{enumerate}
It is proved in Ref.~\cite{SUd-k-Design2023} that the CQA ensemble defined in the following satisfies the first condition. Our present work mainly focus on the second one to bound the second largest eigenvalue or, equivalently, the \emph{spectral gap} $1-\lambda_2$ and verify the polynomial efficiency of the ensemble in approximating $k$-designs under $\SU(d)$ and $\U(1)$ symmetries. 

%----------------------------------------------------------------------------------------------------------

Motivated from Definition \ref{def:CQA} of the CQA group, we define  CQA ensembles as follows:

\begin{definition}[CQA random walk ensemble]\label{def:CQAEnsemble}
	At each step of the random circuit, we uniformly sample an index $j \in [n]$ and parameters $t, \beta_{kl}, \beta_{kl}' \in [0, 2\pi]$.
	\begin{enumerate}
		\item Under $\SU(d)$ symmetry, we implement 
		\begin{align}
			& \exp(-i \sum_{k, l} \beta_{kl} X_k X_l) \exp(-it (j,j+1) ) \exp(-i \sum_{k, l} \beta_{kl}' X_k X_l) 
		\end{align}
		on the qudits. The induced ensemble is denoted by $\mathcal{E}_{\CQA,\SU(1)}$.
		
		\item Under $\U(1)$ symmetry, we implement
		\begin{align}
			\exp(-i \sum_{k, l} \beta_{kl} Z_k Z_l) \exp(-it (j,j+1) ) \exp(-i \sum_{k, l} \beta_{kl}' Z_k Z_l)
		\end{align}
		on the qubits with $Z_k Z_l$ being products of Pauli-$Z$ matrices at site $k$ and $l$. The induced ensemble is denoted by $\mathcal{E}_{\CQA,\U(1)}$.
	\end{enumerate}
\end{definition}

It should be noted that the YJM elements as well as Pauli-$Z$ matrices mutually commute, thus during any implementation
\begin{align}
	\exp(-i \sum_{k, l} \beta_{kl} X_k X_l) = \prod_{k,l} \exp(-i \beta_{kl} X_k X_l), \quad \exp(-i \sum_{k, l} \beta_{kl} Z_k Z_l) = \prod_{k,l} \exp(-i \beta_{kl} Z_k Z_l).
\end{align}
An analogous implementation in the situation without symmetry
\begin{align}
	\exp(-i \sum_{k, l} \beta_{kl} Z_k Z_l) \exp(-it X_i ) \exp(-i \sum_{k, l} \beta_{kl}' Z_k Z_l),
\end{align}
where $X_i$ here exclusively refers to the Pauli-$X$ matrix at site $i$ recovers the scheme of Quantum Approximate Optimization Algorithm (QAOA)~\cite{Farhi_2014,Lloyd_2018,Morales_2020,farhi2019Supremacy}.

%----------------------------------------------------------------------------------------------------------

In contrast to the case without symmetry, one notable challenge that arises when evaluating the second largest eigenvalue of the $k$-th moment operator under continuous symmetry is that we have to handle $2k$-fold tensor products of distinct subspaces from Eq.~\eqref{eq:SpaceDecomposition}. We employ helpful techniques from representation theory for analysis, which will be introduced in more detail in Section \ref{sec:detailsT2CQA}. As a warm-up, let $\mathcal{S}$ be any fixed subspace from the decomposition. We now take a close look at the restriction of $T_k^G, T_k^{\mathcal{E}}$, with $\mathcal{E}$ being any ensemble respecting the symmetry associated with $G$, to the homogeneous tensor product $\operatorname{End}(\mathcal{S}^{\otimes k}) \cong \mathcal{S}^{\otimes 2k}$. Note that the case without symmetry is trivially included by assuming $\mathcal{H} = \mathcal{S}$.
By Schur--Weyl duality, $\Comm_k(G)$ is spanned by permutations $S_k$ acting on $\mathcal{S}^{\otimes k}$. Formally, we have 
\begin{align}
	\mathcal{S}^{\otimes k} \cong S^\nu \otimes \mathrm{1}_{m_\nu}, \quad \Pi_\nu = I_{S^\nu} \otimes I_{m_{\nu}}, 
\end{align}
where $\Pi_\nu$ is the projection onto one $S_k$ irrep with multiplicity labeled by $\nu$. It is immediate to see by Theorem \ref{thm:CenterBases} that
\begin{align}
	\Pi_\nu = \frac{\dim S^\nu}{k!} \sum_{\sigma \in S_k} \bar{\chi}_\nu(\sigma) \sigma \in \Comm_k(G).
\end{align}
Let $\ket{\nu},\ket{\zeta}$ be arbitrary unit eigenstates of $\Pi_\nu,\Pi_\zeta$, the following identity holds when integrating over the ensemble $\mathcal{E}$:
\begin{align}\label{eq:Invariant}
	\int_{\mathcal{E}} U^{\otimes k} \ket{\nu} \bra{\zeta} U^{\dag \otimes k} dU = \int_{\mathcal{E}} U^{\otimes k} \Pi_{\nu} \ket{\nu} \bra{\zeta} \Pi_{\zeta} U^{\dag \otimes k} dU = \Pi_{\nu} \Big( \int_{\mathcal{E}} U^{\otimes k} \ket{\nu} \bra{\zeta} U^{\dag \otimes k} dU \Big) \Pi_{\zeta}.
\end{align}
This identity implies that $\ket{\nu} \bra{\zeta}$ are invariant under the action of $T_k^{\mathcal{E}} \vert_{\mathcal{S}^{\otimes 2k}}$. Equivalently, the image of the (super)-operator
\begin{align}
	\mathcal{P}_{\nu\zeta}(\rho) \vcentcolon = \Pi_\nu \rho \Pi_\zeta
\end{align}
defines an invariant subspace for $T_k^{\mathcal{E}} \vert_{\mathcal{S}^{\otimes 2k}}$. Suppose $k \leq \dim \mathcal{S}$, we have $p(k)^2$ (see the definition of partition function $p(k)$ in Section \ref{sec:SnTheory}) invariant subspaces for $T_k^{\mathcal{E}} \vert_{\mathcal{S}^{\otimes 2k}}$. 

As long as $\mathcal{E}$ converges to unitary $k$-design with respect to $G$, the unit eigenvectors of $\mathcal{E}$ are exclusively contained in $p(k)$ invariant subspaces given by $\mathcal{P}_{\nu\zeta}$ with $\nu = \zeta$. To be more concrete, we illustrate by examples and diagrams of $k = 2$ and $3$:

\begin{example}\label{example:InvariantSubK=2}
	When $k = 2$, let $\hat{e}, \hat{\sigma}$ be the matrix representations of the identity and SWAP from $S_2$ acting $\mathcal{S}^{\otimes 2}$. It is fairly straightforward to write down the projections onto the trivial and sign representation of $S_2$:
	\begin{align}\label{eq:S2projectors}
		\Pi_{+} = \frac{\hat{e} + \hat{\sigma}}{2}, \quad \Pi_{-} = \frac{\hat{e} - \hat{\sigma}}{2}.
	\end{align}
	Labeling basis elements of $\mathcal{S}$ by $a,b,c,d...$ under a certain order, unit eigenstates of $\Pi_{+}, \Pi_{-}$ can be written as
	\begin{align}\label{eq:S2eigenstates}
		\Big\{ \frac{1}{\sqrt{2}} (\ket{a,b} + \ket{b,a}), \ket{a,a}; a < b \Big\}, \quad \Big\{ \frac{1}{\sqrt{2}} (\ket{a,b} - \ket{b,a}); a < b \Big\},
	\end{align}
	respectively. Then we define
	\begin{align}\label{eq: projector-invariant-subspace}
	\begin{aligned}
		\mathcal{P}_{\sbA}(\rho) \vcentcolon = \Pi_+ \rho \Pi_+, \quad 
		\mathcal{P}_{\sbB}(\rho) \vcentcolon = \Pi_+ \rho \Pi_-, \quad 
		\mathcal{P}_{\sbC}(\rho) \vcentcolon = \Pi_- \rho \Pi_+, \quad
		\mathcal{P}_{\sbD}(\rho) \vcentcolon = \Pi_- \rho \Pi_-, 
	\end{aligned}
	\end{align}
	which yield four invariant subspaces $\Ima \mathcal{P}_{\sbA},\Ima \mathcal{P}_{\sbB},\Ima \mathcal{P}_{\sbC},\Ima \mathcal{P}_{\sbD}$ obtained by taking tensor products from \eqref{eq:S2eigenstates} like
	\begin{align}\label{eq:ExampleBasis}
	\begin{aligned}
		& \frac{1}{2} (\ket{a,b} + \ket{b,a})  (\bra{c,d} + \bra{c,d}), \quad 
		\frac{1}{2} (\ket{a,b} + \ket{b,a})  (\bra{c,d} - \bra{c,d}), \\
		& \frac{1}{2} (\ket{a,b} - \ket{b,a})  (\bra{c,d} + \bra{c,d}), \quad
		\frac{1}{2} (\ket{a,b} - \ket{b,a})  (\bra{c,d} - \bra{c,d}).
	\end{aligned}
	\end{align}
	Viewing $\operatorname{End}(\mathcal{S}^{\otimes 2})$ as a large block of matrices, the decomposition of four invariant subspaces can be sketched in the following diagram:
	\begin{center}
	\begin{tikzpicture}[x=0.75pt,y=0.75pt,yscale=-2,xscale=2]
		\draw  [draw opacity=0][dash pattern={on 0.84pt off 2.51pt}] (86.87,117.21) -- (167.23,117.21) -- (167.23,198.14) -- (86.87,198.14) -- cycle ; \draw  [dash pattern={on 0.84pt off 2.51pt}] (86.87,117.21) -- (86.87,198.14)(106.87,117.21) -- (106.87,198.14)(126.87,117.21) -- (126.87,198.14)(146.87,117.21) -- (146.87,198.14)(166.87,117.21) -- (166.87,198.14) ; \draw  [dash pattern={on 0.84pt off 2.51pt}] (86.87,117.21) -- (167.23,117.21)(86.87,137.21) -- (167.23,137.21)(86.87,157.21) -- (167.23,157.21)(86.87,177.21) -- (167.23,177.21)(86.87,197.21) -- (167.23,197.21) ; \draw  [dash pattern={on 0.84pt off 2.51pt}]  ;
		\draw   (86.87,117.21) -- (146.87,117.21) -- (146.87,177.21) -- (86.87,177.21) -- cycle ;
		\draw   (146.87,177.21) -- (166.87,177.21) -- (166.87,197.21) -- (146.87,197.21) -- cycle ;
		\draw   (86.87,177.21) -- (146.87,177.21) -- (146.87,197.21) -- (86.87,197.21) -- cycle ;
		\draw   (146.87,117.21) -- (166.87,117.21) -- (166.87,177.21) -- (146.87,177.21) -- cycle ;
		
		% Text Node
		\draw (120.72,147) node  [font=\small] [align=left] {\begin{minipage}[lt]{33.34pt}\setlength\topsep{0pt}
				Im$\mathcal{P}_{\sbA}$
		\end{minipage}};
		% Text Node
		\draw (160.72,147) node  [font=\small] [align=left] {\begin{minipage}[lt]{31.52pt}\setlength\topsep{0pt}
				Im$\mathcal{P}_{\sbB}$
		\end{minipage}};
		% Text Node
		\draw (120.72,187.21) node  [font=\small] [align=left] {\begin{minipage}[lt]{31.52pt}\setlength\topsep{0pt}
				Im$\mathcal{P}_{\sbC}$
		\end{minipage}};
		% Text Node
		\draw (160.72,187.21) node  [font=\small] [align=left] {\begin{minipage}[lt]{31.52pt}\setlength\topsep{0pt}
				Im$\mathcal{P}_{\sbD}$
		\end{minipage}};
	\end{tikzpicture}
	\end{center}
	It is easy to check that the two unit eigenvectors of both $T_2^G$ are maximally mixed state constructed from \eqref{eq:S2eigenstates} living in $\Ima\mathcal{P}_{\sbA}$ and $\Ima\mathcal{P}_{\sbD}$ respectively. Together with the action of YJM elements that we will study comprehensively in Section \ref{sec:sketches} and Section \ref{sec:detailsT2CQA}, studying $T_2^{\mathcal{E}_{\CQA}}$ within these four invariant subspaces is a crucial step to bound its second largest eigenvalue.
\end{example}

\begin{example}\label{example:InvariantSubK=3}
	When $k = 3$ on $\mathcal{S}^{\otimes 3}$, we have projections
	\begin{align}
		\Pi_{(3)}, \quad \Pi_{(2,1)}, \quad \Pi_{(1^3)}
	\end{align}
	corresponding to the trivial, standard and sign representations of $S_3$ respectively. Collections of projections yields $9$ invariant subspaces for $T_3^{\mathcal{E}}$. As a comparison to the previous example, by Schur-Weyl duality and the Wedderburn theorem~\cite{Fulton1997,Sagan01,Goodman2009}, $T_3^{\mathcal{E}}$ admits
	\begin{enumerate}
		\item[(a)] one unit eigenvector from $\mathcal{P}_{(3)(3)}$.
		
		\item[(b)] four unit eigenvectors from $\mathcal{P}_{(2,1)(2,1)}$.
		
		\item[(c)] one unit eigenvector from $\mathcal{P}_{(1^3)(1^3)}$.
	\end{enumerate}
	A diagrammatic representation of the space decomposition can be: 
	\begin{center}
	\resizebox{0.35\textwidth}{0.35\textwidth}{
	\begin{tikzpicture}[x=0.75pt,y=0.75pt,yscale=-1,xscale=1]
		\draw  [draw opacity=0][dash pattern={on 0.84pt off 2.51pt}] (402.95,0.32) -- (703.78,0.32) -- (703.78,300.91) -- (402.95,300.91) -- cycle ; \draw  [dash pattern={on 0.84pt off 2.51pt}] (402.95,0.32) -- (402.95,300.91)(422.95,0.32) -- (422.95,300.91)(442.95,0.32) -- (442.95,300.91)(462.95,0.32) -- (462.95,300.91)(482.95,0.32) -- (482.95,300.91)(502.95,0.32) -- (502.95,300.91)(522.95,0.32) -- (522.95,300.91)(542.95,0.32) -- (542.95,300.91)(562.95,0.32) -- (562.95,300.91)(582.95,0.32) -- (582.95,300.91)(602.95,0.32) -- (602.95,300.91)(622.95,0.32) -- (622.95,300.91)(642.95,0.32) -- (642.95,300.91)(662.95,0.32) -- (662.95,300.91)(682.95,0.32) -- (682.95,300.91)(702.95,0.32) -- (702.95,300.91) ; \draw  [dash pattern={on 0.84pt off 2.51pt}] (402.95,0.32) -- (703.78,0.32)(402.95,20.32) -- (703.78,20.32)(402.95,40.32) -- (703.78,40.32)(402.95,60.32) -- (703.78,60.32)(402.95,80.32) -- (703.78,80.32)(402.95,100.32) -- (703.78,100.32)(402.95,120.32) -- (703.78,120.32)(402.95,140.32) -- (703.78,140.32)(402.95,160.32) -- (703.78,160.32)(402.95,180.32) -- (703.78,180.32)(402.95,200.32) -- (703.78,200.32)(402.95,220.32) -- (703.78,220.32)(402.95,240.32) -- (703.78,240.32)(402.95,260.32) -- (703.78,260.32)(402.95,280.32) -- (703.78,280.32)(402.95,300.32) -- (703.78,300.32) ; \draw  [dash pattern={on 0.84pt off 2.51pt}]  ;
		\draw   (482.95,100.32) -- (642.95,100.32) -- (642.95,240.32) -- (482.95,240.32) -- cycle ;
		\draw   (642.95,240.32) -- (702.95,240.32) -- (702.95,300.32) -- (642.95,300.32) -- cycle ;
		\draw   (482.95,240.32) -- (642.95,240.32) -- (642.95,300.32) -- (482.95,300.32) -- cycle ;
		\draw   (402.95,240.32) -- (482.95,240.32) -- (482.95,300.32) -- (402.95,300.32) -- cycle ;
		\draw   (642.95,100.32) -- (702.95,100.32) -- (702.95,240.32) -- (642.95,240.32) -- cycle ;
		\draw   (402.95,100.32) -- (482.95,100.32) -- (482.95,240.32) -- (402.95,240.32) -- cycle ;
		\draw   (402.95,0.32) -- (482.95,0.32) -- (482.95,100.32) -- (402.95,100.32) -- cycle ;
		\draw   (482.95,0.32) -- (642.95,0.32) -- (642.95,100.32) -- (482.95,100.32) -- cycle ;
		\draw   (642.95,0.32) -- (702.95,0.32) -- (702.95,100.32) -- (642.95,100.32) -- cycle ;
		
		% Text Node
		\draw (562.95,172) node  [font=\small] [align=left] {\begin{minipage}[lt]{53.66pt}\setlength\topsep{0pt}
				{\small Im$\mathcal{P}_{( 2,1)( 2,1)}$}
		\end{minipage}};
		% Text Node
		\draw (442.95,50.32) node  [font=\small] [align=left] {\begin{minipage}[lt]{41.73pt}\setlength\topsep{0pt}
				Im$\mathcal{P}_{( 3)( 3)}$
		\end{minipage}};
		% Text Node
		\draw (672.95,270.32) node  [font=\small] [align=left] {\begin{minipage}[lt]{43pt}\setlength\topsep{0pt}
				Im$\mathcal{P}_{( 1^{3}) ( 1^{3})}$
		\end{minipage}};
		% Text Node
		\draw (562.95,50.32) node  [font=\small] [align=left] {\begin{minipage}[lt]{48.21pt}\setlength\topsep{0pt}
				Im$\mathcal{P}_{( 3)( 2,1)}$
		\end{minipage}};
		% Text Node
		\draw (672.95,50.32) node  [font=\small] [align=left] {\begin{minipage}[lt]{40pt}\setlength\topsep{0pt}
				{\small Im$\mathcal{P}_{( 3) ( 1^{3})}$}
		\end{minipage}};
		% Text Node
		\draw (682.95,172) node  [font=\small] [align=left] {\begin{minipage}[lt]{55pt}\setlength\topsep{0pt}
				Im$\mathcal{P}_{( 2,1) ( 1^{3})}$
		\end{minipage}};
		% Text Node
		\draw (442.95,172) node  [font=\small] [align=left] {\begin{minipage}[lt]{47.67pt}\setlength\topsep{0pt}
				Im$\mathcal{P}_{( 2,1)( 3)}$
		\end{minipage}};
		% Text Node
		\draw (442.95,270.32) node  [font=\small] [align=left] {\begin{minipage}[lt]{48.4pt}\setlength\topsep{0pt}
				Im$\mathcal{P}_{( 1^{3})( 3)}$
		\end{minipage}};
		% Text Node
		\draw (562.95,270.32) node  [font=\small] [align=left] {\begin{minipage}[lt]{54.4pt}\setlength\topsep{0pt}
				Im$\mathcal{P}_{( 1^{3})( 2,1)}$
		\end{minipage}};	
	\end{tikzpicture}
	}
	\end{center}
\end{example}

Unlike the case of $k = 2$, despite the insight gained from using invariant subspaces, there are still difficulties when analyzing the spectral gap of $T_k^{\mathcal{E}}$ for $k \geq 3$:
\begin{enumerate}
	\item It is not easy to write down the projections explicitly like \eqref{eq:S2projectors} due to the complexity to compute Eq.~\eqref{eq:OrthogonalProjection} for general symmetric group $S_k$, especially in computing $S_k$ characters~\cite{Ingram1950,Roichman1996,Lassalle2008,Giambruno2015}. Essentially, bases like those in~\eqref{eq:ExampleBasis} for the invariant subspaces should be built by Young basis of $S_k$ irreps and using Schur-Weyl duality on $\mathcal{S}^{\otimes k}$ (see Section \ref{sec:SnTheory} and Ref.~\cite{Zheng2021SpeedingUL} for more details). It is computationally feasible with explicit $n$ and $k$~\cite{bacon2005quantum,krovi}, but there is no closed-formformula to obtain the generic Young basis.
	
	\item  It is also infeasible to orthogonalize permutations to construct an orthonormal basis for $\Comm_k(G)$ when $k \geq 3$. Moreover, the subspaces $\mathcal{S}$ of interest are taken from decomposition of the $n$-qudit system under symmetries. There are always subspaces whose dimensions scale polynomially in $n$ rather than exponentially like the entire Hilbert space. This hinders the use of \emph{approximate orthogonality} of permutations for exponentially large spaces~\cite{harrow2016local,Haferkamp2021,Harrow2023orthogonality,Metger2024,Haferkamp2024linear}, which also makes the Hamiltonian spectral gap methods~\cite{Knabe1988,Nachtergaele1996} inapplicable.
\end{enumerate}
Lower-bounding the spectral gap for general $k$ is thus challenging, and we leave it as an open problem. We mainly focus on the case of $k=2$ in this work.

%----------------------------------------------------------------------------------------------------------

\subsection{Induced Markov chain and comparison theorems}\label{sec:Markov}

We borrow techniques from the Markov chain theory, especially methods for bounding the spectral gap of a reversible transition matrix by comparing various properties with another transition matrix with known spectral gap. More systematic descriptions of these methods can be found in Refs.~\cite{Diaconis1988,DiaconisCheeger1991,DiaconisComparison1993,chung1997spectral,Levin2009}.  

\begin{definition}\label{def:induced-markov-chain}
	Let $(\mathcal{X},P)$ be a Markov chain with state space $\mathcal{X}$ and a row-stochastic transition matrix $P$. Given a subspace $\mathcal{S} \subset \mathcal{X}$ and a path $\gamma$ starting from some state $x_0 \in \mathcal{S}$, the \emph{first returning time} $\tau_x^+(\gamma)$ is the minimum time/number of steps $t$ such that
	\begin{align}
		\gamma(1) \notin \mathcal{S},...,\gamma(t - 1) \notin \mathcal{S}, \gamma(t) \in \mathcal{S}.
	\end{align}
	An \emph{induced Markov chain} $(\mathcal{S}, \tilde{P})$ on the subspace $\mathcal{S}$ is defined by the following transition probability between any $x, y \in \mathcal{S}$:
	\begin{align}
		\tilde{P}(x,y) = P(x,y) + \sum_{n=1}^\infty \sum_{x_1,...,x_n \notin \mathcal{S}} P(x,x_1) P(x_1,x_2) \cdots P(x_{n-1},x_n) P(x_n,y).
	\end{align}
	The summation is thus taken over all paths starting from $x$ whose first returning point is exactly $y$.
\end{definition}

The notion of induced Markov chains appears when studying random walks on subgraphs with boundary conditions~\cite{chung1997spectral} as well as general Markov processes~\cite{Levin2009}. In order to be self-contained, we prove that $\tilde{P}(x,y)$ is well-defined and $\tilde{P}$ is still row-stochastic. The transition matrix $P$ is always assumed to be irreducible and reversible.

Let us truncate the above infinite series into partial sums and consider 
\begin{align}
	P_N = & \sum_{y \in \mathcal{S}} \sum_{n=1}^N  \sum_{x_1,...,x_n \notin \mathcal{S}} P(x,x_1) P(x_1,x_2) \cdots P(x_{n-1},x_n) P(x_n,y), \\
	S_N = & \sum_{x_1,...,x_N,x_{N+1} \notin \mathcal{S}} P(x,x_1) P(x_1,x_2) \cdots P(x_{N-1},x_N) P(x_N,x_{N+1}).
\end{align}
Let $Q_N = S_N + P_N$. It is rather straightforward to check by definition that
\begin{align}
	P_N \leq Q_N \equiv \sum_{x_1 \notin \mathcal{S}} P(x,x_1) = P(x, \mathcal{X} \setminus  \mathcal{S}) = P(x, \mathcal{S}^c),
\end{align}
which guarantees the convergence of the series. 

Since the transition matrix $P$ is irreducible, there is an integer $m$ for which $P^m$ has nonzero entries everywhere. Therefore, for any state $z \in \mathcal{S}^c$, $P^m(z,\mathcal{S}^c) < 1$ because there must be transitions to states in $\mathcal{S}$. Actually, as the state space is finite, there is a uniform upper bound $b$ such that 
\begin{align}
	P^m(z,\mathcal{S}^c) < b < 1
\end{align}
holds for any $z \in \mathcal{S}^c$. Consequently,
\begin{align}
	\begin{aligned}
		S_N = & \sum_{x_1,...,x_N,x_{N+1} \notin \mathcal{S}} P(x,x_1) \cdots P(x_{N-1},x_N) P(x_N,x_{N+1}) \\
		= & \sum_{x_1,...,x_N,x_{N+1} \in \mathcal{S}^c} P(x,x_1)  \cdots P(x_{N-m}, x_{N+1-m}) P(x_{N+1-m}, x_{N+2-m}) \cdots P(x_{N-1},x_N) P(x_N,x_{N+1}) \\
		< & \sum_{x_1,...,x_{N-m},x_{N+1-m} \in \mathcal{S}^c} P(x,x_1) \cdots P(x_{N-m}, x_{N+1-m}) P^m(x_{N+1-m},\mathcal{S}^c) \\
		< & b \sum_{x_1,...,x_{N-m},x_{N+1-m} \in \mathcal{S}^c} P(x,x_1) \cdots P(x_{N-m}, x_{N+1-m}) = bS_{N-m}.
	\end{aligned}
\end{align}
This is sufficient to imply that the sequence $\{S_N\}$ has a subsequence converging to zero. Since $S_{N-1} > S_N > 0$ by definition, $\{S_N\}$ also converges to zero. Recall that $S_N = Q_N - P_N$, we conclude that 
\begin{align}
	\lim_{N \to \infty} P_N = P(x,\mathcal{S}^c) \implies \sum_{y \in \mathcal{S}} \tilde{P}(x,y) = \sum_{y \in \mathcal{S}} P(x,y) + \lim_{N \to \infty} P_N = 1.
\end{align}

\begin{example}\label{example:Markov}
	Let $\mathcal{X} = \{s_1, s_2, s_3\}$ consist of three states with transition matrix
	\begin{align}
		P = \begin{pmatrix}
			P(1,1) & P(1,2) & P(1,3) \\
			P(2,1) & P(2,2) & P(2,3) \\
			P(3,1) & P(3,2) & P(3,3) 
		\end{pmatrix}.
	\end{align}
	We provide a simple example for constructing the induced transition matrix on the first 2 states from the original space, which is helpful for understanding the proofs in Section \ref{sec:detailsU(1)} \& \ref{sec:detailsSU(d)}. By definition,
	\begin{align}
		\tilde{P}(1,1) & = P(1,1) + P(1,3) \big( 1 + P(3,3) + P(3,3)^2 + \cdots \big) P(3,1) 
		= P(1,1) + P(1,3) \frac{P(3,1)}{1 +  P(3,3)},  \\
		\tilde{P}(1,2) & = P(1,2) + P(1,3) \big( 1 + P(3,3) + P(3,3)^2 + \cdots \big) P(3,2) 
		= P(1,2) + P(1,3) \frac{P(3,2)}{1 - P(3,3)}. 
	\end{align}
	Since $P(3,1) + P(3,2) +  P(3,3) = 1$, $\tilde{P}(1,1) + \tilde{P}(1,2) = 1$. The other two transition probabilities can be obtained accordingly and $\tilde{P}$ is well-defined.
\end{example}

\begin{theorem}[\cite{Levin2009}]\label{thm:induced-chain-spectral-gap} 
	Let $(\mathcal{X},P)$ be a irreducible and reversible Markov chain. Let $\mathcal{S} \subset \mathcal{X}$ be any subspace. Then the induced Markov chain $(\mathcal{S},\tilde{P})$ has a larger spectral gap. That is $\Delta(\tilde{P}) \geq \Delta(P)$.
\end{theorem}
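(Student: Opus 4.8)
The plan is to deduce this from the variational (Rayleigh--Ritz) characterization of the spectral gap of a reversible chain, combined with the classical Dirichlet principle for harmonic extensions. Recall that for any irreducible reversible chain $(\mathcal{X},P,\pi)$ one has $\Delta(P)=\min_f \mathcal{E}(P,f)/\mathrm{Var}_\pi(f)$, the minimum taken over non-constant $f\colon\mathcal{X}\to\mathbb{R}$, where $\mathcal{E}(P,f)=\langle f,(I-P)f\rangle_\pi=\tfrac12\sum_{x,y}\pi(x)P(x,y)(f(x)-f(y))^2$ and $\mathrm{Var}_\pi(f)=\min_{c\in\mathbb{R}}\sum_x\pi(x)(f(x)-c)^2$. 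First I would verify that $(\mathcal{S},\tilde P)$ is again irreducible and reversible, with stationary distribution $\tilde\pi(x)=\pi(x)/\pi(\mathcal{S})$: expanding the excursion sum defining $\tilde P(x,y)$ and applying $\pi(u)P(u,v)=\pi(v)P(v,u)$ termwise telescopes the $\pi$-weights along each path, so that after reversing the path one matches term by term the sum defining $\pi(y)\tilde P(y,x)$; hence $\pi(x)\tilde P(x,y)=\pi(y)\tilde P(y,x)$ and $\tilde\pi$ is stationary. Consequently $\Delta(\tilde P)$ admits the same variational formula with $(\mathcal{X},\pi)$ replaced by $(\mathcal{S},\tilde\pi)$.

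Next, given any non-constant test function $g\colon\mathcal{S}\to\mathbb{R}$, I would take $f\colon\mathcal{X}\to\mathbb{R}$ to be its \emph{harmonic extension}: $f|_{\mathcal{S}}=g$ and $(I-P)f=0$ on $\mathcal{S}^c$. Writing $P$ in block form with respect to $\mathcal{S}\sqcup\mathcal{S}^c$, harmonicity gives $f|_{\mathcal{S}^c}=(I-P_{\mathcal{S}^c\mathcal{S}^c})^{-1}P_{\mathcal{S}^c\mathcal{S}}\,g$, which is well defined since $I-P_{\mathcal{S}^c\mathcal{S}^c}$ is invertible (the sub-chain on $\mathcal{S}^c$ killed upon hitting $\mathcal{S}$ is transient by irreducibility of $P$); probabilistically $f(z)$ is the expected value of $g$ at the first entry of the chain into $\mathcal{S}$ started from $z$. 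Two comparisons then close the argument. (a) Since $f$ agrees with $g$ on $\mathcal{S}$ and variance is an infimum over constant shifts of a sum of non-negative terms, $\mathrm{Var}_\pi(f)=\min_c\sum_{x\in\mathcal{X}}\pi(x)(f(x)-c)^2\ge\min_c\sum_{x\in\mathcal{S}}\pi(x)(g(x)-c)^2=\pi(\mathcal{S})\,\mathrm{Var}_{\tilde\pi}(g)$. (b) The exact identity $\mathcal{E}(P,f)=\pi(\mathcal{S})\,\mathcal{E}(\tilde P,g)$ holds.

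The identity (b) is the step I expect to cost the most care, though it is classical (it is the ``Dirichlet principle'', equivalently the electrical-network fact that Gaussian elimination of interior nodes preserves energy). I would prove it linear-algebraically: since $(I-P)f$ vanishes on $\mathcal{S}^c$ by harmonicity, $\mathcal{E}(P,f)=\langle f,(I-P)f\rangle_\pi=\sum_{x\in\mathcal{S}}\pi(x)g(x)\big[(I-P)f\big](x)$, and the Schur-complement computation gives $\big[(I-P)f\big]|_{\mathcal{S}}=\big(I-P_{\mathcal{S}\mathcal{S}}-P_{\mathcal{S}\mathcal{S}^c}(I-P_{\mathcal{S}^c\mathcal{S}^c})^{-1}P_{\mathcal{S}^c\mathcal{S}}\big)g=(I-\tilde P)g$, where the middle equality is precisely the geometric-series expansion $(I-P_{\mathcal{S}^c\mathcal{S}^c})^{-1}=\sum_{m\ge0}(P_{\mathcal{S}^c\mathcal{S}^c})^m$ matched against the excursion-sum definition of $\tilde P$. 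Hence $\mathcal{E}(P,f)=\sum_{x\in\mathcal{S}}\pi(x)g(x)[(I-\tilde P)g](x)=\pi(\mathcal{S})\langle g,(I-\tilde P)g\rangle_{\tilde\pi}=\pi(\mathcal{S})\,\mathcal{E}(\tilde P,g)$, the last equality using reversibility of $\tilde P$ to symmetrize the quadratic form.

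Combining (a) and (b), for every non-constant $g$ one gets $\Delta(P)\le\mathcal{E}(P,f)/\mathrm{Var}_\pi(f)\le\pi(\mathcal{S})\mathcal{E}(\tilde P,g)\big/\big(\pi(\mathcal{S})\mathrm{Var}_{\tilde\pi}(g)\big)=\mathcal{E}(\tilde P,g)/\mathrm{Var}_{\tilde\pi}(g)$; minimizing the right-hand side over $g$ yields $\Delta(P)\le\Delta(\tilde P)$, which is the claim. The only genuine difficulty beyond this outline is bookkeeping: one must keep the $\pi$-weights consistent between the probabilistic excursion description of $\tilde P$ and its Schur-complement form. Once the dictionary ``$\pi(x)\tilde P(x,y)=$ reduced conductance between $x$ and $y$ after eliminating $\mathcal{S}^c$'' is pinned down, the remaining manipulations are routine.
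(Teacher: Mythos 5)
The paper does not supply its own proof of this theorem---it imports it from \cite{Levin2009}---and your argument is exactly the standard proof given there: check that the trace chain $(\mathcal{S},\tilde P,\tilde\pi)$ with $\tilde\pi=\pi|_{\mathcal{S}}/\pi(\mathcal{S})$ is reversible (termwise path reversal of the excursion sum), harmonically extend a test function $g$ to $f$, use the Schur-complement/geometric-series identification $[(I-P)f]|_{\mathcal{S}}=(I-\tilde P)g$ to get the exact identity $\mathcal{E}(P,f)=\pi(\mathcal{S})\,\mathcal{E}(\tilde P,g)$ together with the trivial bound $\mathrm{Var}_\pi(f)\ge\pi(\mathcal{S})\,\mathrm{Var}_{\tilde\pi}(g)$, and conclude via the variational characterization of the gap. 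Your write-up is correct and complete, matching the cited reference's approach.
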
 

\begin{definition} \label{def:dirichlet-form}
	Let $(\mathcal{X},P)$ be a reversible Markov chain. The \emph{Dirichlet form}, as a binary form, is defined for functions $f,h$ on the state space $\mathcal{X}$ by
	\begin{align}
		\langle (I-P) f, g \rangle_{\pi_P} = \sum_{x \in \mathcal{X}} ((I-P)f )(x) g(x) \pi_P(x),
	\end{align}
	where $\pi_P$ is the stationary distribution of $P$. Especially,
	\begin{align}
		\mathcal{E}(P,f) \vcentcolon = \langle (I-P) f, f \rangle_{\pi_P} = \frac{1}{2} \sum_{x,y \in \mathcal{X}} (f(x) - f(y))^2 \pi_P(x) P(x,y).
	\end{align}
\end{definition}

\begin{theorem}[\cite{Levin2009}]\label{thm:comparison-dirichlet}
	Let $P_1,P_2$ be irreducible and reversible Markov chains on the same state space. If there is some $\alpha > 0$ such that $\mathcal{E}(P_1,f) \leq \alpha \mathcal{E}(P_2,f)$ for any $f$, then
	\begin{align}
		\Delta(P_1) \leq \alpha \max_{x \in \mathcal{X}} \frac{\pi_{P_1}(x)}{\pi_{P_2}(x)} \Delta(P_2).
	\end{align}
\end{theorem}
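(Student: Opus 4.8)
The plan is to run the classical Dirichlet-form comparison argument: express each $\Delta(P_i)$ as the minimum of a Rayleigh quotient, evaluate the quotient for $P_1$ on the second eigenfunction of $P_2$, apply the hypothesis $\mathcal{E}(P_1,\cdot)\le\alpha\,\mathcal{E}(P_2,\cdot)$, and finally absorb the discrepancy between the two stationary measures into the factor $\max_x \pi_{P_1}(x)/\pi_{P_2}(x)$. I would first record the variational identity that makes this work. For an irreducible reversible chain $(\mathcal{X},P,\pi_P)$, the operator $I-P$ is self-adjoint (by reversibility) and positive semidefinite on $L^2(\pi_P)$, its kernel is exactly the constant functions (by irreducibility), and $\mathcal{E}(P,f)=\langle(I-P)f,f\rangle_{\pi_P}$ is invariant under $f\mapsto f+c$. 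Hence, writing $\operatorname{Var}_{\pi_P}(f)=\min_{c\in\mathbb{R}}\sum_{x\in\mathcal{X}}(f(x)-c)^2\pi_P(x)$,
\begin{align*}
\Delta(P)\;=\;\min_{f:\ \operatorname{Var}_{\pi_P}(f)\neq 0}\ \frac{\mathcal{E}(P,f)}{\operatorname{Var}_{\pi_P}(f)},
\end{align*}
with the minimum attained by any eigenfunction of $P$ for the eigenvalue $\lambda_2(P)$.

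Next I would fix $g$ to be such a $\lambda_2$-eigenfunction of $P_2$, normalized so that $\mathbb{E}_{\pi_{P_2}}[g]=0$, so that $\mathcal{E}(P_2,g)=\Delta(P_2)\,\operatorname{Var}_{\pi_{P_2}}(g)$. Testing the Rayleigh quotient of $P_1$ against $g$ and then invoking the hypothesis gives the chain
\begin{align*}
\Delta(P_1)\;\le\;\frac{\mathcal{E}(P_1,g)}{\operatorname{Var}_{\pi_{P_1}}(g)}\;\le\;\alpha\,\frac{\mathcal{E}(P_2,g)}{\operatorname{Var}_{\pi_{P_1}}(g)}\;=\;\alpha\,\Delta(P_2)\,\frac{\operatorname{Var}_{\pi_{P_2}}(g)}{\operatorname{Var}_{\pi_{P_1}}(g)},
\end{align*}
so the theorem reduces to comparing the two variances of the single fixed function $g$.

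Finally I would carry out the variance comparison by the elementary change of stationary measure: writing both $\operatorname{Var}_{\pi_{P_1}}(g)$ and $\operatorname{Var}_{\pi_{P_2}}(g)$ as infima over recentering constants and bounding the two stationary distributions against each other pointwise, the two variances of $g$ differ by no more than the extreme density ratio, which contributes exactly the factor $\max_x \pi_{P_1}(x)/\pi_{P_2}(x)$ appearing in the statement; inserting this into the displayed inequality yields $\Delta(P_1)\le\alpha\,(\max_x \pi_{P_1}(x)/\pi_{P_2}(x))\,\Delta(P_2)$. I expect this last step — the change of measure between $\operatorname{Var}_{\pi_{P_1}}$ and $\operatorname{Var}_{\pi_{P_2}}$ — to be the delicate part of the argument: one must keep the optimal recentering constant separate for the two measures and work with the infimum form of the variance (rather than ``second moment minus squared mean''), so that swapping one recentering constant for the other can only enlarge a sum, and one must track which direction of the pointwise density bound produces the ratio as written. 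Everything else — the self-adjointness and positive semidefiniteness of $I-P_i$ on $L^2(\pi_{P_i})$, the one-dimensional kernel coming from irreducibility, and additive-constant invariance of the Dirichlet form — is standard and can be quoted from the references already cited.
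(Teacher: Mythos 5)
Your route is the standard proof of this lemma (the paper does not prove it; it quotes Levin--Peres), and everything up to your displayed chain of inequalities is fine: the variational identity, self-adjointness, testing the Rayleigh quotient of $P_1$ on the second eigenfunction $g$ of $P_2$, reducing to the ratio $\operatorname{Var}_{\pi_{P_2}}(g)/\operatorname{Var}_{\pi_{P_1}}(g)$. The gap is exactly in the step you flagged as delicate, and it does not resolve the way you claim. To finish you must bound $\operatorname{Var}_{\pi_{P_2}}(g)$ \emph{above} by $\operatorname{Var}_{\pi_{P_1}}(g)$, and the only pointwise bound that does this is $\pi_{P_2}(x)\le\bigl(\max_y \pi_{P_2}(y)/\pi_{P_1}(y)\bigr)\pi_{P_1}(x)$, applied with the $\pi_{P_1}$-optimal recentering constant; this yields
\begin{align*}
\operatorname{Var}_{\pi_{P_2}}(g)\;\le\;\Bigl(\max_x \tfrac{\pi_{P_2}(x)}{\pi_{P_1}(x)}\Bigr)\operatorname{Var}_{\pi_{P_1}}(g),
\qquad\text{hence}\qquad
\Delta(P_1)\;\le\;\alpha\Bigl(\max_x \tfrac{\pi_{P_2}(x)}{\pi_{P_1}(x)}\Bigr)\Delta(P_2),
\end{align*}
i.e.\ the \emph{opposite} orientation of the density ratio from the one printed in the statement. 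The inequality you would need, $\operatorname{Var}_{\pi_{P_2}}(g)\le\bigl(\max_x \pi_{P_1}(x)/\pi_{P_2}(x)\bigr)\operatorname{Var}_{\pi_{P_1}}(g)$, is false in general: on two states take $\pi_{P_2}$ uniform, $\pi_{P_1}=(1-\epsilon,\epsilon)$ and $g=\mathbf{1}_{\{2\}}$; the left side is $1/4$ while the right side is at most $2\epsilon(1-\epsilon)$. Nor can a cleverer argument rescue the printed form: the theorem with the ratio $\max_x\pi_{P_1}(x)/\pi_{P_2}(x)$ is itself false. For instance, on two states let $P_2$ be symmetric with $P_2(1,2)=P_2(2,1)=\epsilon$ (uniform $\pi_{P_2}$, $\Delta(P_2)=2\epsilon$) and let $P_1(1,2)=\delta$, $P_1(2,1)=1/2$; then one may take $\alpha\le 2\delta/\epsilon$, $\max_x\pi_{P_1}(x)/\pi_{P_2}(x)\le 2$, so the printed bound would give $\Delta(P_1)\le 8\delta$, whereas $\Delta(P_1)=\delta+1/2$.

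So what your argument actually establishes is the Levin--Peres statement with the factor $\max_x \pi_{P_2}(x)/\pi_{P_1}(x)$; the version transcribed in the paper has the ratio inverted, and your final sentence ("contributes exactly the factor \dots appearing in the statement") is the step that fails. This is worth flagging but is harmless for the paper's use of the lemma: in the comparisons between the Cayley chain and the CQA chain the two stationary measures agree up to a factor of $2$ in both orientations (e.g.\ in the $\U(1)$ case $\max_x \pi_{\Cay}(x)/\pi_{\sbA}(x)=(d_\mu+1)/d_\mu$ while $\max_x \pi_{\sbA}(x)/\pi_{\Cay}(x)=2d_\mu/(d_\mu+1)<2$), so the conclusion $\Delta(\Cay_{2,\times,\Sym})=\Theta(\Delta(M^{\mathcal{E}_{\CQA}}_{2,\times,\sbA}))$ survives with only slightly different constants. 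To repair your write-up, either prove and apply the correctly oriented ratio, or note the misprint in the statement before proving it.
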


\begin{definition}\label{def:Congestion}
	Let $P_1,P_2$ be irreducible and reversible Markov chains on the same state space. For any pair of distinct states $x,y \in \mathcal{X}$, as long as $P_1(x,y) > 0$, we select a path $\gamma_{xy} = (x = x_0,x_1,...,x_{N-1},y = x_N)$ satisfying $P_2(x_i,x_{i+1}) > 0$. Note that $\gamma_{xy}$ can be different from $\gamma_{yx}$. The \emph{congestion ratio} $A$ defined through this collection of paths is 
	\begin{align}
		A = \max_{(p,q), P_2(p,q) > 0} \frac{1}{\pi_{P_2}(p) P_2(p,q)} \sum_{\stackrel{x,y\in \mathcal{X}}{(p,q) \in \gamma_{xy}}} \pi_{P_1}(x) P_1(x,y) \vert \gamma_{xy} \vert
	\end{align} 
	with $\vert \gamma_{xy} \vert$ the length of the path.
\end{definition}

\begin{theorem}[\cite{DiaconisComparison1993}]\label{thm:path-comparison-theorem} 
	Let $P_1,P_2$ be irreducible and reversible Markov chains on the same state space. Given any collection of paths with the corresponding congestion ratio $A$,
	\begin{align}
		\Delta(P_1) \leq A \max_{x \in \mathcal{X}} \frac{\pi_{P_1}(x)}{\pi_{P_2}(x)} \Delta(P_2).
	\end{align}
\end{theorem}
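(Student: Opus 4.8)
The plan is to reduce the statement to the Dirichlet-form comparison already available as Theorem~\ref{thm:comparison-dirichlet}: the entire content of the proof is to establish that, for the given collection of paths, $\mathcal{E}(P_1,f) \le A\,\mathcal{E}(P_2,f)$ holds for every function $f$ on $\mathcal{X}$, with $A$ the congestion ratio of Definition~\ref{def:Congestion}. Once this inequality is in hand, one simply invokes Theorem~\ref{thm:comparison-dirichlet} with $\alpha = A$ to obtain $\Delta(P_1) \le A\,\max_{x}\frac{\pi_{P_1}(x)}{\pi_{P_2}(x)}\,\Delta(P_2)$, which is the claim. Note that irreducibility of $P_2$ is exactly what guarantees that a path $\gamma_{xy}$ built from $P_2$-edges exists for every pair $x,y$ with $P_1(x,y)>0$, so that the congestion ratio is well defined; this is the classical comparison argument of Ref.~\cite{DiaconisComparison1993}.

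To prove the Dirichlet-form inequality I would start from $\mathcal{E}(P_1,f) = \tfrac12\sum_{x,y}(f(x)-f(y))^2\,\pi_{P_1}(x)P_1(x,y)$ and, for each pair $(x,y)$ with $P_1(x,y)>0$, write the increment $f(x)-f(y)$ as a telescoping sum of single-step increments along the chosen path $\gamma_{xy}=(x=x_0,x_1,\dots,x_N=y)$, where $N=|\gamma_{xy}|$. Cauchy--Schwarz then gives $(f(x)-f(y))^2 \le |\gamma_{xy}| \sum_{(p,q)\in\gamma_{xy}}(f(p)-f(q))^2$. Substituting this bound into the Dirichlet form and interchanging the order of summation --- grouping the terms by the directed edge $(p,q)$, necessarily with $P_2(p,q)>0$, rather than by the pair $(x,y)$ --- yields
\[
\mathcal{E}(P_1,f) \;\le\; \tfrac12 \sum_{(p,q):\,P_2(p,q)>0} (f(p)-f(q))^2 \sum_{\substack{x,y\in\mathcal{X}\\ (p,q)\in\gamma_{xy}}} \pi_{P_1}(x)P_1(x,y)\,|\gamma_{xy}|.
\]
Multiplying and dividing the inner sum by $\pi_{P_2}(p)P_2(p,q)$ and bounding it by its maximum over directed edges --- which is exactly $A$ by Definition~\ref{def:Congestion} --- gives $\mathcal{E}(P_1,f) \le \tfrac{A}{2}\sum_{(p,q)}(f(p)-f(q))^2\,\pi_{P_2}(p)P_2(p,q) = A\,\mathcal{E}(P_2,f)$, as needed. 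Reversibility of both chains is used here so that $\pi_{P_i}(x)P_i(x,y)=\pi_{P_i}(y)P_i(y,x)$ and the Dirichlet forms are genuinely symmetric in $x,y$, making the directed-edge bookkeeping consistent with Definition~\ref{def:dirichlet-form}.

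The only delicate point --- and the place where one must simply be careful rather than where any real difficulty lies --- is the interchange of summations and matching the resulting combinatorial weight to Definition~\ref{def:Congestion}: one must verify that a directed edge $(p,q)$ accumulates weight $|\gamma_{xy}|$ for each pair $(x,y)$ whose path passes through it, and that $\tfrac{1}{\pi_{P_2}(p)P_2(p,q)}$ is the correct normalization. After that, the estimate is a one-line application of Cauchy--Schwarz followed by the definition of $A$, and the spectral-gap conclusion is an immediate appeal to Theorem~\ref{thm:comparison-dirichlet}.
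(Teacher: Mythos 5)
Your proposal is correct: the telescoping-plus-Cauchy--Schwarz argument, followed by interchanging the summation over pairs $(x,y)$ and directed $P_2$-edges $(p,q)$ and bounding by the congestion ratio $A$ of Definition~\ref{def:Congestion}, yields $\mathcal{E}(P_1,f)\le A\,\mathcal{E}(P_2,f)$, and invoking Theorem~\ref{thm:comparison-dirichlet} with $\alpha=A$ gives exactly the stated bound. The paper does not prove this theorem itself (it is imported from Ref.~\cite{DiaconisComparison1993}), and your argument is precisely the standard canonical-path proof from that reference, so there is nothing to add beyond noting that it matches the cited source.
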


%----------------------------------------------------------------------------------------------------------

\newpage 
\section{$\U(1)$-symmetric universality and unitary $k$-designs from CQA}\label{sec:U(1)}

In this section, we specifically discuss the problem of generating $\U(1)$-symmetric unitaries and $k$-designs with respect to $\mathcal{U}_{\times,\U(1)}$. Many results can also be found from systematic studies in Refs.~\cite{MarvianNature,marvian2023nonuniversality,U(1)Design2023,PRXQuantum.4.040331,MarvianDesign,mitsuhashi2024Designs,mitsuhashi2024Designs2}. Here we first revisit some key results by techniques developed for $\SU(d)$- and permutation-symmetric circuits in Refs.~\cite{Zheng2021SpeedingUL,SUd-k-Design2023} as a warm-up, before presenting the main results on convergence speed.

\subsection{Universality under $\U(1)$ symmetry}\label{sec:U(1)Univeristy}

In analogy to Definition \ref{def:CQA}, we now define $\CQA_{\U(1)}$ using unitary time evolutions of SWAPs and Pauli-$Z$ matrices. We are going to prove that this group turns out to be generated by $2$-local $\U(1)$-symmetric unitaries and contains all $\U(1)$-symmetric unitaries with trivial relative phases (cf.~\eqref{eq:universalitySU(d)}).

\begin{definition}\label{def:CQAU(1)}
	Consider the  Hamiltonians
	\begin{align}
		H_S = \sum_{j = 1}^{n-1} (j,j+1), \quad H_{Z} = \sum_{k,l} \beta_{kl} Z_k Z_l,
	\end{align}
	where $H_S$ is still the summation of adjacent transpositions and $H_{Z}$ is a linear combination of Pauli-$Z$ matrices up to second-order tensor products with real parameters $\beta_{kl}$. Then
	\begin{align}
		\CQA_{\U(1)} \vcentcolon = \Big\langle \exp(-i \sum_{k,l} \beta_{kl} Z_k Z_l), \quad \exp(-i\gamma H_S) \Big\rangle.
	\end{align}
\end{definition}

Let $\mathfrak{gl}_{\times,\U(1)}(\mathbb{C})$ denote the collection of $\U(1)$-symmetric complex matrices acting on the Hilbert space respecting the following direct sum decomposition
\begin{align}
	(\mathbb{C}^2)^{\otimes n} \cong \bigoplus_{0 \leq r \leq n} \text{span}\{ \ket{i_1,...,i_n}; i = 0 \text{ or } 1, r \text{ many indices are selected to be } 0 \}  = \bigoplus_{\mu = (n-r,r)} S^\mu, \tag{B42$^\ast$}
\end{align}
where we simply work with the computational basis as mentioned in Section \ref{sec:Tabloid}. Suppose $\mathfrak{d}_{\times,\U(1)}(\mathbb{C})$ encompasses all diagonal matrices, which is formally the Cartan subalgebra $\mathfrak{gl}_{\times,\U(1)}(\mathbb{C})$. Note that since U$(1)$ charge sectors are multiplicity-free, arbitrary diagonal matrix under computational basis is $\U(1)$-symmetric. As a comparison, only diagonal matrices, under Young basis, having repeated diagonal entries in equivalent $S_n$ irrep sectors can be SU$(d)$-symmetric (see Section \ref{sec:SnTheory} \& \ref{sec:SchurWeyl}).

Any diagonal matrix under the computational basis $\ket{e_i}$ can be spanned by tensor products $Z_{r_1} \otimes \cdots \otimes Z_{r_j}$ of Pauli-$Z$ matrices. Here we would like to retain the $2$-locality. As a result, our argument uses at most second-order tensor product of Pauli-$Z$'s as Definition \ref{def:CQAU(1)}. To begin with, we denote by $\alpha_i(r)$ the eigenvalue of $\ket{e_i}$ measured by $Z_r$. Let $E_{kl} = \ket{e_k}\bra{e_l}$ be a matrix unit. The so-called \emph{root} $a_{kl}$ of $E_{kl}$ is defined by Lie brackets (adjoint representations):
\begin{align}\label{eq:RootEigenvalue}
	a_{kl}(r) E_{kl} \vcentcolon = [Z_r, E_{kl}] = (\alpha_k(r) - \alpha_l(r)) E_{kl}, \quad a_{kl}(r,s) E_{kl} = [Z_r Z_s, E_{kl}] \vcentcolon = (\alpha_k(r) \alpha_k(s) - \alpha_l(r) \alpha_k(s)) E_{kl}.
\end{align}
A complete root vector takes $a_{kl}(r), a_{kl}(r,s)$ and these obtained from all higher order tensor products of Pauli-$Z$'s as components. As a basic result from the theory of Lie algebra, different off-diagonal matrix units cannot share the same root vector. While for practice we only consider the \emph{partial} root vector, denoted by $a_{kl}^{(2)}$, consisting of $a_{kl}(r), a_{kl}(r,s)$ determined by at most $2$-local operators. Writing $a_{kl}^{(2)} = \alpha_k^{(2)} - \alpha_l^{(2)}$ according to Eq.~\eqref{eq:RootEigenvalue}, we have the following lemma (a similar phenomenon in the case of $\SU(d)$ symmetry is verified using YJM elements and content vectors in Ref.~\cite{Zheng2021SpeedingUL}):

\begin{lemma}\label{lemma:PartialRoot}
	Different off-diagonal matrix units cannot have the same partial root vector. That is, for computational basis elements $u_1,u_2,v_1,v_2$ with (a) $u_1 \neq u_2, v_1 \neq v_2$ or (b) $u_1 = u_2$, $v_1 \neq v_2$ or (c) $u_1 \neq u_2$, $v_1 = v_2$, 
	\begin{align}
		a_{u_1 v_1}^{(2)} = \alpha_{u_1}^{(2)} - \alpha_{v_1}^{(2)} \neq \alpha_{u_2}^{(2)} - \alpha_{v_2}^{(2)} = a_{u_2 v_2}^{(2)}. 
	\end{align}
\end{lemma}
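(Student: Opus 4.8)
\emph{Proof proposal.} The plan is to argue by contradiction: assume the two off-diagonal matrix units $E_{u_1 v_1}$ and $E_{u_2 v_2}$ have the same partial root vector, i.e.\ $\alpha_{u_1}^{(2)} - \alpha_{v_1}^{(2)} = \alpha_{u_2}^{(2)} - \alpha_{v_2}^{(2)}$, and deduce $(u_1,v_1) = (u_2,v_2)$, which is forbidden by each of (a), (b), (c). The key observation is that the \emph{single-site} components of $\alpha_k^{(2)}$ already determine the whole computational-basis string of $\ket{e_k}$: if we write $\ket{e_k}$ as the bit string $(k_1,\dots,k_n) \in \{0,1\}^n$, then $\alpha_k(r) = (-1)^{k_r}$, so the $r$-component of the partial root vector is $(-1)^{u_{1,r}} - (-1)^{v_{1,r}} \in \{-2,0,2\}$, and on the sites where it is $\pm 2$ it pins down both $u_{1,r}$ and $v_{1,r}$.

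First I would use off-diagonality of $E_{u_1 v_1}$: since $u_1 \neq v_1$ the strings differ somewhere, so there is a site $s$ with $\alpha_{u_1}(s) \neq \alpha_{v_1}(s)$, hence $\alpha_{u_1}(s) - \alpha_{v_1}(s) = \pm 2$. Matching the $s$-component of the two root vectors forces $\alpha_{u_2}(s) = \alpha_{u_1}(s)$ and $\alpha_{v_2}(s) = \alpha_{v_1}(s)$, since that is the only way two differences of $\{+1,-1\}$-valued quantities can both equal the same value $\pm 2$. Next, for an arbitrary site $r$: if $\alpha_{u_1}(r) \neq \alpha_{v_1}(r)$ the same reasoning gives $\alpha_{u_2}(r) = \alpha_{u_1}(r)$ directly; if instead $\alpha_{u_1}(r) = \alpha_{v_1}(r)$ (so also $\alpha_{u_2}(r) = \alpha_{v_2}(r)$, because the $r$-differences agree), I would examine the two-local component indexed by the pair $(r,s)$, which equals $\alpha_{u_1}(r)\bigl(\alpha_{u_1}(s) - \alpha_{v_1}(s)\bigr)$ on the left and $\alpha_{u_2}(r)\bigl(\alpha_{u_1}(s) - \alpha_{v_1}(s)\bigr)$ on the right after substituting the equalities already obtained; cancelling the nonzero factor $\alpha_{u_1}(s) - \alpha_{v_1}(s)$ yields $\alpha_{u_1}(r) = \alpha_{u_2}(r)$. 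Either way $\alpha_{u_1}(r) = \alpha_{u_2}(r)$ for every $r$, so $u_1 = u_2$; then $\alpha_{v_1}(r) = \alpha_{u_1}(r) - (\alpha_{u_1}(r) - \alpha_{v_1}(r)) = \alpha_{u_2}(r) - (\alpha_{u_2}(r) - \alpha_{v_2}(r)) = \alpha_{v_2}(r)$ for every $r$, so $v_1 = v_2$, contradicting the hypothesis that the two matrix units are distinct.

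I do not anticipate a serious obstacle here; the argument is elementary once one notices that a \emph{single} well-chosen distinguishing site $s$ (guaranteed by off-diagonality) suffices, so only the two-local components $\{(r,s): r \neq s\}$ for this fixed $s$ need be inspected rather than the whole family of pairs. The only points that need mild care are the normalization in Eq.~\eqref{eq:RootEigenvalue} (the two-local root value should read $\alpha_k(r)\alpha_k(s) - \alpha_l(r)\alpha_l(s)$, i.e.\ the eigenvalue difference of $Z_r Z_s$ on $\ket{e_k}$ and $\ket{e_l}$) and the bookkeeping that off-diagonality of \emph{both} units together with $(u_1,v_1) \neq (u_2,v_2)$ is exactly what is excluded in each of (a), (b), (c). Structurally this mirrors the $\SU(d)$-symmetric statement proved via YJM elements and content vectors in Ref.~\cite{Zheng2021SpeedingUL}, with Pauli-$Z$ strings playing the role of YJM operators and computational-basis bit strings playing the role of content vectors.
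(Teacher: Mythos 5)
Your proof is correct, but it takes a genuinely different route from the paper. The paper argues by induction on the number of qubits: it tabulates the sixteen possible spin patterns of $u_1,v_1,u_2,v_2$ on the last qubit, dispatches the ten patterns where the $Z_n$-differences already disagree, strips the last qubit and invokes the inductive hypothesis when the reduced strings still satisfy one of (a)--(c), and handles the residual case $\hat u_1=\hat u_2,\ \hat v_1=\hat v_2$ (with opposite last-qubit spins for the two pairs) by exhibiting a two-local component $(r,n)$ on which the roots have opposite nonzero values. Your argument is non-inductive: off-diagonality of $E_{u_1v_1}$ supplies a single distinguishing site $s$ whose $\pm 2$ single-site component pins down the bits of all four strings at $s$; then each remaining site $r$ is resolved either by its single-site component (when $u_1,v_1$ differ at $r$) or by the two-local component $(r,s)$ after cancelling the nonzero factor $\alpha_{u_1}(s)-\alpha_{v_1}(s)$ (when they agree at $r$), forcing $u_1=u_2$ and $v_1=v_2$, which contradicts each of (a), (b), (c). What each approach buys: the paper's induction runs parallel to the $\SU(d)$-symmetric argument via YJM elements and content vectors in the earlier work, so it emphasizes that structural analogy; your direct argument avoids the last-qubit case enumeration entirely, only ever inspects the single-site components together with the pairs $\{(r,s):r\neq s\}$ for one fixed $s$, and quietly sidesteps a delicate point in the inductive route, namely that after erasing the last qubit one can have $\hat u_1=\hat v_1$ (strings differing only there), in which case the lemma for $n-1$ qubits, being stated for off-diagonal units, does not literally apply and a separate short argument is needed. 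You are also right that the two-local root in Eq.~\eqref{eq:RootEigenvalue} should read $\alpha_k(r)\alpha_k(s)-\alpha_l(r)\alpha_l(s)$; your proof uses the corrected value, as does the paper's own induction step.
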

\begin{proof}
	Suppose the statement is true for $n-1$ qubits. We denote by $0,1$ the spin basis of each qubit for brevity. Then possible choices of the spins of basis elements $u_1,v_1, u_2,v_2$ (not eigenvalues) at the last qubit are listed as:
	\begin{align}
		& (0,0, \quad 0,0), \quad (0,0, \quad 1,1), \quad (1,1, \quad 0,0), \quad (1,1, \quad 1,1), \\
		& (0,1, \quad 0,1), \quad (1,0, \quad 1,0), \quad (0,1, \quad 1,0), \quad (1,0, \quad 0,1), \\
		& (0,0, \quad 0,1), \quad (0,0, \quad 1,0), \quad (0,1, \quad 0,0), \quad (1,0, \quad 0,0), \\
		& (0,1, \quad 1,1), \quad (1,0, \quad 1,1), \quad (1,1, \quad 0,1), \quad (1,1, \quad 1,0).
	\end{align} 
	The statement can be confirmed immediately for the last ten cases because the difference between eigenvalues measured by $Z_n$ on the last qubit are different. That is, 
	\begin{align}
		\alpha_{u_1}^{(2)}(n) - \alpha_{v_1}^{(2)}(n) \neq \alpha_{u_2}^{(2)}(n) - \alpha_{v_2}^{(2)}(n).
	\end{align}
	
	We still need to analyze the first six cases. Erasing the last qubits from $u_1,v_1,u_2,v_2$, if the resultant vectors $\hat{u}_1,\hat{v}_1,\hat{u}_2,\hat{v}_2$ still satisfy one of the three assumptions of the statement, we get the proof by induction. Otherwise, we should have $\hat{u}_1 = \hat{u}_2$ and $\hat{v}_1 = \hat{v}_2$ as binary strings of $0, 1$ without the last qubit. However, the original vectors $u_1,v_1,u_2,v_2$ are defined under one assumption of the statement. The spins on their last qubits must be selected obeying the second or third cases from the above list. Together with the fact that we are considering off-diagonal matrix units, $u_1 \neq v_1, u_2 \neq v_2$ and there exists some $r < n$ such that $\alpha_{u_1}(r) - \alpha_{v_1}(r) = \alpha_{u_2}(r) - \alpha_{v_2}(r) \neq 0$. Using the second-order product with $Z_n$, we have
	 \begin{align}
	 	\alpha_{u_1}^{(2)}(r,n) - \alpha_{v_1}^{(2)}(r,n) = \alpha_{u_1}(r) - \alpha_{v_1}(r) \neq - \alpha_{u_2}(r) + \alpha_{v_2}(r) = - \big( \alpha_{u_2}^{(2)}(r,n) - \alpha_{v_2}^{(2)}(r,n) \big)
	 \end{align}
	 which completes the proof.
\end{proof}

The following lemma is given in our previous paper Ref.~\cite{Zheng2021SpeedingUL}, and we reproduce here with proof details for further use. For $\U(1)$ symmetry, one can simply replace $\mathfrak{gl}(D,\mathbb{C}), \mathfrak{d}(D)$ by $\mathfrak{gl}_{\times,\U(1)}(\mathbb{C}), \mathfrak{d}_{\times,\U(1)}(\mathbb{C})$ respectively. The matrix $M$ defined below also needs to respect the $\U(1)$ symmetry, i.e., having nonzero off-diagonal entries within $\U(1)$ charge sectors in Eq.~\eqref{eq:SpaceDecompositionU(1)}.

\begin{lemma}
	Given any $D \times D$ matrix $M$, let $\mathcal{I} \subset \{1,...,D\} \times \{1,...,D\}$ be the index set corresponding to nonzero off-diagonal entries $c_{ij}$ of $M$. Let $\mathfrak{d}(D)$ be the collection of all $D \times D$ complex diagonal matrix. Then 
	\begin{align}
		\langle \mathfrak{d}(D), M \rangle = \mathfrak{d}(D) \oplus \Big( \bigoplus_{(i,j) \in \mathcal{I}} \text{span}(E_{ij}) \Big).
	\end{align}
	That is, the generated Lie subalgebra contains all $D \times D$ complex diagonal matrix and those spanned by off-diagonal matrix units $E_{ij}$ as long as $c_{ij} \neq 0$.
\end{lemma}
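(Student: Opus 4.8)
The plan is to establish the two inclusions of the claimed identity separately, the substance lying in ``$\supseteq$''. For ``$\subseteq$'', I would observe that the candidate right-hand side $\mathfrak{R}:=\mathfrak{d}(D)\oplus\bigoplus_{(i,j)\in\mathcal{I}}\operatorname{span}(E_{ij})$ is stable under $\operatorname{ad}_H$ for every diagonal $H$, since $[\mathfrak{d}(D),\mathfrak{d}(D)]=0$ and $[H,E_{ij}]=(h_i-h_j)E_{ij}\in\operatorname{span}(E_{ij})$; as $\mathfrak{R}$ also contains the two generators $\mathfrak{d}(D)$ and $M$, it contains whatever is produced from them by repeated bracketing against $\mathfrak{d}(D)$, so $\langle\mathfrak{d}(D),M\rangle\subseteq\mathfrak{R}$.

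For ``$\supseteq$'' I would use a Vandermonde-type extraction. First strip off the diagonal part of $M$: writing $M=M_0+\sum_{(i,j)\in\mathcal{I}}c_{ij}E_{ij}$ with $M_0\in\mathfrak{d}(D)$ and each $c_{ij}\neq0$, the off-diagonal piece $M':=M-M_0$ already lies in $\langle\mathfrak{d}(D),M\rangle$. Then for a diagonal $H=\operatorname{diag}(h_1,\dots,h_D)$ an easy induction on $k$ gives
\begin{align}
	\operatorname{ad}_H^{k}(M')=\sum_{(i,j)\in\mathcal{I}}c_{ij}\,(h_i-h_j)^{k}\,E_{ij},\qquad k=0,1,2,\dots,
\end{align}
and each of these also lies in $\langle\mathfrak{d}(D),M\rangle$. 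If the ``roots'' $\{h_i-h_j:(i,j)\in\mathcal{I}\}$ are pairwise distinct (they are automatically nonzero since $i\neq j$), then with $m:=|\mathcal{I}|$ the $m\times m$ matrix $\big((h_i-h_j)^{k}\big)_{0\le k\le m-1,\;(i,j)\in\mathcal{I}}$ is a Vandermonde matrix with distinct nodes, hence invertible; inverting it expresses each $c_{ij}E_{ij}$ as a linear combination of $\operatorname{ad}_H^{0}(M'),\dots,\operatorname{ad}_H^{m-1}(M')$, and dividing through by $c_{ij}\neq0$ places every $E_{ij}$ with $(i,j)\in\mathcal{I}$ in $\langle\mathfrak{d}(D),M\rangle$. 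Together with $\mathfrak{d}(D)\subseteq\langle\mathfrak{d}(D),M\rangle$ and the triviality of the intersection between a space of diagonal matrices and a span of off-diagonal matrix units, this gives $\mathfrak{R}\subseteq\langle\mathfrak{d}(D),M\rangle$.

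The step I expect to be the main obstacle is securing a diagonal $H$---taken from the \emph{available} diagonal algebra, which in the symmetric settings is a proper subspace of all diagonal matrices---whose roots separate every off-diagonal matrix unit indexed by $\mathcal{I}$. When $\mathfrak{d}(D)$ is the full diagonal algebra this is free: the functionals $\epsilon_i-\epsilon_j$ are pairwise distinct for distinct off-diagonal index pairs, so a generic choice such as $h_i=t^{i}$ works. In the constrained cases relevant here---the $\SU(d)$-symmetric diagonal operators assembled from YJM elements, or the $\le 2$-local diagonal algebra spanned by $Z_r$ and $Z_rZ_s$ under $\U(1)$---the required separation by these restricted diagonals is precisely the content of Lemma \ref{lemma:PartialRoot} on the distinctness of partial root vectors, together with its $\SU(d)$ counterpart phrased via content vectors of YJM elements. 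Once root separation is in hand, the remainder is the routine linear algebra above, and one may even repackage it as the induction-on-qubit-number used for Lemma \ref{lemma:PartialRoot} if an explicit $H$ is desired.
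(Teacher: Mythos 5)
Your ``$\supseteq$'' argument is essentially the paper's own proof: strip the diagonal part, take iterated brackets $\operatorname{ad}_h^k$ of the off-diagonal remainder, and invert a Vandermonde matrix, with distinctness of the nodes secured by a genericity argument (the paper chooses $\lambda$ off the finitely many hyperplanes $a_{k_sl_s}\cdot\lambda=a_{k_tl_t}\cdot\lambda$; your explicit $h_i=t^i$ does the same job). Your closing observation---that for the full diagonal algebra $\mathfrak{d}(D)$ the separation is automatic, while the restricted $\le 2$-local Pauli-$Z$ (or YJM) diagonals require exactly the partial-root-separation statement of Lemma~\ref{lemma:PartialRoot}---also matches how the paper deploys this lemma together with the remark that follows it.

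The ``$\subseteq$'' half, however, contains a genuine error, and in fact cannot be repaired. The space $\mathfrak{R}=\mathfrak{d}(D)\oplus\bigoplus_{(i,j)\in\mathcal{I}}\operatorname{span}(E_{ij})$ is indeed stable under $\operatorname{ad}_H$ for every diagonal $H$, but that only shows it absorbs brackets against one of the two generators; the generated Lie algebra $\langle\mathfrak{d}(D),M\rangle$ is closed under brackets of \emph{all} of its elements, and brackets among the extracted off-diagonal units can leave $\mathfrak{R}$. Concretely, for $D=3$ and $M=E_{12}+E_{23}$ one has $\mathcal{I}=\{(1,2),(2,3)\}$, the Vandermonde step puts $E_{12}$ and $E_{23}$ into the algebra, and then $[E_{12},E_{23}]=E_{13}\in\langle\mathfrak{d}(D),M\rangle$ although $(1,3)\notin\mathcal{I}$, so $\mathfrak{R}$ is not bracket-closed and the displayed equality fails. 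What is true---and what the paper's proof actually establishes, in agreement with its verbal gloss ``contains''---is only the inclusion $\langle\mathfrak{d}(D),M\rangle\supseteq\mathfrak{R}$, and that is also all that is used downstream: Theorem~\ref{thm:universalityU(1)-1} explicitly takes further brackets such as $[i(E_{ij}+E_{ji}),i(E_{jk}+E_{kj})]=-(E_{ik}-E_{ki})$ to reach matrix units beyond those indexed by $\mathcal{I}$, which presupposes that the generated algebra is strictly larger than $\mathfrak{R}$ whenever $M$ is path-connected but not complete. So you should delete the ``$\subseteq$'' argument and read (and prove) the lemma as the containment statement; with that correction your proof coincides with the paper's.
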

\begin{proof}
	For any diagonal matrix $h = \operatorname{diag}(h_1,...,h_D) \in \mathfrak{d}(D)$, the Lie bracket $[h, E_{kl}] = (h_k - h_l) E_{kl}$ gives the root $a_{kl}(h) \vcentcolon = h_k - h_l$ of $E_{kl}$ under $h$. Let $h \vcentcolon = \sum_i \lambda_i E_{ii}$ with $\lambda_i$ being determined later. We define 
	\begin{align}
		M_1 \vcentcolon = [h, M] = [\sum_i \lambda_i E_{ii}, \sum_{(k,l) \in \mathcal{I}} c_{kl} E_{kl} ] = \sum_i \sum_{(k,l) \in \mathcal{I}} a_{kl} (E_{ii}) \lambda_i E_{kl} = \sum_{(k,l) \in \mathcal{I}} (a_{kl} \cdot \lambda) c_{kl} E_{kl},
	\end{align}
	where in the second step of the above computation, we omit all possible diagonal elements of $A$. This is legitimate because $h$ commutes with any diagonal matrix. Besides, $a_{kl} \cdot \lambda$ is thought of as the inner product of $a_{kl} = ( a_{kl}(E_{ii}) )$ and $\lambda = (\lambda_i)$. Although most components of $a_{kl}$ equal zero by the definition of roots, it turns out that the notation $a_{kl} \cdot \lambda$ is convenient for the following proof. With $M_1$ defined, we continue to set 
\begin{align}\label{eq:VandermondeBasisChange}
		\begin{aligned}
			M_2 \vcentcolon & = [h,[h,A]] =  \sum_{(k,l) \in \mathcal{I}} (a_{kl} \cdot \lambda)^2 c_{kl} E_{kl}, \\
			& \vdots \\
			M_{\vert \mathcal{I} -1\vert} \vcentcolon & = [h,...,[h,A] \cdots ] = \sum_{(k,l) \in \mathcal{I}} (a_{kl} \cdot \lambda)^{\vert \mathcal{I} -1\vert} c_{kl} E_{kl}.
		\end{aligned}
	\end{align}
	Recall that $\vert \mathcal{I} \vert$ is the number of nonzero off-diagonal elements of $M$. Then let us consider the following \emph{Vandermonde matrix}:
	\begin{align}\label{eq:Vandermonde}
		V (a_{kl} \cdot \lambda) = \begin{pmatrix}
			1 & 1 & \cdots & 1 \\ a_{k_1 l_1} \cdot \lambda & a_{k_2 l_2} \cdot \lambda & \cdots & a_{k_{\vert \mathcal{I} \vert} l_{\vert \mathcal{I} \vert}} \cdot \lambda \\ \vdots & \vdots & \ddots & \vdots \\ (a_{k_1 l_1} \cdot \lambda)^{\vert \mathcal{I} \vert - 1} & (a_{k_2 l_2} \cdot \lambda)^{\vert \mathcal{I} \vert - 1} & \cdots & (a_{k_{\vert \mathcal{I} \vert} l_{\vert \mathcal{I} \vert}}\cdot \lambda)^{\vert \mathcal{I} \vert - 1} 
		\end{pmatrix}.
	\end{align}
	Viewing $\mathfrak{gl}(D,\mathbb{C})$ as a $D \times D$-dimensional vector space with $E_{ij}$ as the standard basis, we note that $V(a_{kl} \cdot \lambda)$ transforms \emph{nonzero} vectors 
	\begin{align}
		\{c_{k_1 l_1} E_{k_1 l_1},...,c_{k_{\vert \mathcal{I} \vert} l_{\vert \mathcal{I} \vert}} E_{k_{\vert \mathcal{I} \vert} l_{\vert \mathcal{I} \vert}}\} \text{ to }  \{M,M_1,...,M_{\vert \mathcal{I} \vert - 1}\}.
	\end{align}
	If $\det V (a_{kl} \cdot \lambda) = (-1)^{\vert \mathcal{I} \vert (\vert \mathcal{I} \vert -1)/2} \prod_{s < t} (a_{k_s l_s} \cdot \lambda - a_{k_t l_t} \cdot \lambda) \neq 0$, then the transformation has an inverse and thus the linear span of $\{M_r\}$ equals that of $\{c_{kl} E_{kl} \}$ which turns out to be $\bigoplus_{(i,j) \in \mathcal{I}} \text{span} (E_{ij})$ by definition. 
	
	To show that $\det V (a_{kl} \cdot \lambda) \neq 0$, we simply note that $a_{k_s l_s} - a_{k_t l_t} \neq 0$ as  different roots. A basic statement from linear algebra tells us that the union of hyperplanes of finitely-many nonzero vectors, here are $a_{k_s l_s} - a_{k_t l_t}$, cannot cover the whole vector space. Therefore, we can always find some nonzero $\lambda$ which does not belong to any of these hyperplanes: i.e., $(a_{k_s l_s} \cdot \lambda - a_{k_t l_t} \cdot \lambda) \neq 0$ to fulfill the requirement. On the other hand, the Lie subalgebra by $\mathfrak{d}(D)$ and $M$ contains all possible linear combinations and Lie brackets of these elements and the proof follows. 
\end{proof}

\begin{remark}
	As mentioned before, different matrix units correspond to different root vectors determined by a complete basis of diagonal matrices, which is the essential ingredient to find inverse of the Vandermonde matrix used in the above lemma. On the other hand, Lemma \ref{lemma:PartialRoot} tells us that partial root vectors of second-order Pauli-$Z$ matrices are sufficient to distinguish off-diagonal matrix units and can be used instead, like:
	\begin{align}\label{eq:VandermondeBasisChange2}
	\begin{aligned}
		[\sum_{r,s} \lambda_{rs} Z_r Z_s, M] = [\sum_{r,s} \lambda_{rs} Z_r Z_s, \sum_{(k,l) \in \mathcal{I}} c_{kl} E_{kl}] & = \sum_{r,s} \sum_{(k,l) \in \mathcal{I}} (\alpha^{(2)}_k(r,s) - \alpha^{(2)}_l(r,s) ) \lambda_{rs} c_{kl} E_{kl} \\
		& = \sum_{(k,l) \in \mathcal{I}} \big( (\alpha^{(2)}_k - \alpha^{(2)}_l) \cdot \lambda \big) c_{kl} E_{kl}.
	\end{aligned}
	\end{align} 
	Accordingly, we are still able to find a nonzero solution to $\lambda$ such that iterated Lie brackets of $h \vcentcolon = \sum_{r,s} \lambda_{rs} Z_r Z_s$ and $M$ generate 1-dimensional subspaces corresponding to nonzero entries of $M$. 
\end{remark}

A proper choice of $M$ finally enables us to generate all off-diagonal matrices over $\mathbb{C}$. In the $\U(1)$ case, there are several candidates like $H_S = \sum_{j=1}^{n-1} (j,j+1)$. One common feature of them is that their matrix representations under computational basis within any $\U(1)$ charge sector are \emph{irreducible} or \emph{path-connected}. That is, the graph represented by indices (vertices) and nonzero entries (edges) of these matrices restricted to arbitrary $\U(1)$ charge sector is connected. More generally, let $\mathcal{T}$ be a collection of SWAPs that generates $S_n$. Given two computational basis states $\ket{\psi_1}, \ket{\psi_2}$ from the same charge sector with the same number of spin-up/down qubits, there must be a general permutation $\sigma$ on qubits mapping $\ket{\psi_1}$ to $\ket{\psi_2}$. The permutation can be written as products of SWAPs from $\mathcal{T}$, thus $H_{\mathcal{T}} = \sum_{\tau \mathcal{T}} \tau$ is path-connected. Especially, we can set $\mathcal{T} = \{(1,2),...,(n-1,n)\}$. Then $H_{\mathcal{T}} = H_S$. As mentioned, a similar fact in the case of $\SU(d)$ symmetry is verified using the Young orthogonal form and $S_n$ irrep in Ref.~\cite{Zheng2021SpeedingUL}. 

\begin{theorem}\label{thm:universalityU(1)-1}
	Let $H_P$ be an irreducible/path-connected Hamiltonian described above. Let $\mathfrak{d}_{Z_k Z_l}(\mathbb{R})$ be the collection of real diagonal matrices spanned by at most second-order Pauli matrices. Then
	\begin{align}
		\bigoplus_{\mu = (n-r,r)} \mathfrak{su}(S^\mu) \subset \langle i\mathfrak{d}_{Z_kZ_l}(\mathbb{R}), iH_P \rangle_\mathbb{R}. 
	\end{align}
	As a result, the Lie group generated the unitary evolution of Hamiltonians $H_S = \sum_{j=1}^{n-1} (j,j+1), H_Z = \sum \beta_{kl} Z_k Z_l$ is universal when restricted to the semisimple Lie group $S\mathcal{U}_{\U(1)}$ of $\U(1)$-symmetric unitaries with trivial relative phase.
\end{theorem}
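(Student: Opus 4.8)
The plan is to combine the two lemmas immediately preceding the statement with the path‑connectivity of $H_P$ inside each $\U(1)$ charge sector, working throughout inside the real Lie algebra of skew‑Hermitian operators. Set $\mathfrak{g} \vcentcolon= \langle i\mathfrak{d}_{Z_kZ_l}(\mathbb{R}), iH_P\rangle_\mathbb{R}$. First I would apply the generation lemma stated just above with $M = H_P$, but using the strengthening recorded in the Remark: by Lemma \ref{lemma:PartialRoot} the partial root vectors $a^{(2)}_{kl}$ built from at‑most‑second‑order products $Z_rZ_s$ already separate all off‑diagonal matrix units $E_{kl}$ sitting in a common $\U(1)$ sector, so the Vandermonde argument of \eqref{eq:VandermondeBasisChange2}–\eqref{eq:Vandermonde} can be run with $h = \sum_{r,s}\lambda_{rs}Z_rZ_s$ for a generic $\lambda$ (the bad $\lambda$ forming a finite union of hyperplanes, exactly as in the lemma's proof). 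This shows that for every ordered pair $(k,l)$ with $c_{kl} \vcentcolon= \langle e_k| H_P |e_l\rangle \neq 0$, the algebra $\mathfrak{g}$ contains the skew‑Hermitian directions $c_{kl}E_{kl} - \bar c_{kl}E_{lk}$ and $i(c_{kl}E_{kl}+\bar c_{kl}E_{lk})$, i.e.\ the two off‑diagonal generators of the standard $\mathfrak{su}(2)$ triple on $\mathrm{span}(\ket{e_k},\ket{e_l})$, together with the Cartan element $i(E_{kk}-E_{ll})$ obtained by one further bracket of these. All of these live in a single charge sector, since $H_P$ preserves the Hamming weight.

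Second, I would upgrade these ``edge'' generators to the full special unitary algebra of each sector using connectivity. Fix a sector $S^\mu$ with computational basis $\{\ket{e_k}\}_{k\in B_\mu}$. As noted in the paragraph preceding the theorem, if $\mathcal{T}$ generates $S_n$ then any two basis states of the same Hamming weight are joined by a product of SWAPs from $\mathcal{T}$, so the graph on $B_\mu$ whose edges are the pairs $\{k,l\}$ with nonzero matrix element of $H_P$ is connected; for $H_P = H_S$ this is the adjacency graph of the path on $[n]$ restricted to weight‑$r$ strings. Using the standard fact that $\mathfrak{su}(2)$ triples attached to the edges of a connected graph on $m$ vertices generate $\mathfrak{su}(m)$ (iterating $[\,i(E_{kl}+E_{lk}),\,i(E_{lp}+E_{lp})\,]$‑type brackets to reach every pair), the edge generators from the previous step produce $E_{pq}-E_{qp}$, $i(E_{pq}+E_{qp})$ for all $p,q\in B_\mu$ and $i(E_{pp}-E_{qq})$, a spanning set of $\mathfrak{su}(S^\mu)$. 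Since $\mathfrak{g}$ is a Lie subalgebra closed under direct sums of its block components, $\bigoplus_{\mu = (n-r,r)}\mathfrak{su}(S^\mu) \subseteq \mathfrak{g}$, which is the first assertion. Exponentiating, the connected Lie group generated by $\{e^{-itH_P}\}$ and $\{e^{-i\sum\beta_{kl}Z_kZ_l}\}$ — which is exactly $\CQA_{\U(1)}$ of Definition \ref{def:CQAU(1)} when $H_P = H_S$ — contains the connected group with Lie algebra $\bigoplus_\mu\mathfrak{su}(S^\mu)$, namely $\prod_\mu\SU(S^\mu) = S\mathcal{U}_{\U(1)}$ by the definition of the latter in Section \ref{sec:Tabloid}, giving the claimed universality.

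I expect the main obstacle to be the careful bookkeeping of the real‑form structure in the first step: the generation lemma is stated in $\mathfrak{gl}(D,\mathbb{C})$, and one must verify that running its iterated‑bracket construction with the skew‑Hermitian $iH_P$ and the real diagonal generators keeps every intermediate bracket inside $\mathfrak{u}$, so that each unordered edge contributes two real off‑diagonal dimensions directly while the Cartan direction is recovered by a further bracket — and that Lemma \ref{lemma:PartialRoot} really supplies pairwise‑distinct partial roots $a^{(2)}_{kl}\cdot\lambda$ over the whole edge set $\mathcal{I}$, keeping the relevant Vandermonde determinant nonzero for generic $\lambda$. The connectivity step is comparatively routine once the graph‑connectedness of $H_P$ on each sector is in hand.
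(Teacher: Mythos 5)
Your proposal follows essentially the same route as the paper's proof: use the partial root vectors of second-order $Z_rZ_s$ products together with the Vandermonde argument to extract, for each nonzero entry of $H_P$, the skew-Hermitian edge generators $(E_{kl}-E_{lk})$ and $i(E_{kl}+E_{lk})$, then invoke path-connectivity of $H_P$ within each charge sector and the standard $\mathfrak{su}(2)$-triple bracket identities to span $\bigoplus_\mu \mathfrak{su}(S^\mu)$, and finally exponentiate to $S\mathcal{U}_{\U(1)}$. The real-form bookkeeping you flag as the main obstacle is exactly the point the paper resolves: over $\mathbb{R}$ the Vandermonde matrix built from $iZ_rZ_s$ has purely imaginary odd rows and real even rows, so it is not invertible, and the fix is to split the iterated brackets into odd and even degrees, each parity class giving one of the two real off-diagonal directions per unordered edge — precisely the outcome you anticipated.
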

\begin{proof}
	We first check the Lie brackets from Eq.~\eqref{eq:VandermondeBasisChange} and \eqref{eq:VandermondeBasisChange2} with $M = iH_P$. Note that the components of (partial) root vectors with respect to $iZ_r Z_s$ are now pure imaginary. On the other hand, generating over $\mathbb{R}$, the vector $\lambda$ selected to define $h = \sum_{r,s} \lambda_{rs} iZ_r Z_s$ can only take real components. As a result, odd rows of the Vandermonde matrix from Eq.~\eqref{eq:Vandermonde} are all pure imaginary, but even rows are all real. This further implies that it is cannot be invertible over $\mathbb{R}$. However, if we separate terms from Eq.~\eqref{eq:VandermondeBasisChange} into odd and even degrees, using similar methods we retrieve basis elements like
	\begin{align}
		(E_{ij} - E_{ji}), \quad i(E_{ij} + E_{ji})
	\end{align}
	as long as $H_P$ has nonzero $(i,j)$ and $(j,i)$ entries. Since $H_P$ is path-connected and since, for instance
	\begin{align}
		\begin{aligned}
			& [i(E_{ij} + E_{ji}), i(E_{jk} + E_{kj})] = - (E_{ik} - E_{ki}), \\
			& [i(E_{ij} + E_{ji}), (E_{ij} - E_{ji})] = 2i(E_{jj} - E_{ii}),
		\end{aligned}
	\end{align}
	we are able to find all skew-Hermitian off-diagonal as well as traceless diagonal matrices with respect to U$(1)$ charge sectors, which spans $\bigoplus_\mu \mathfrak{su}(S^\mu)$.
\end{proof}

Evidently, if we incorporate all higher order tensor products of Pauli-$Z$ matrices in proving the above theorem, we will reach the complete universality. To explore more details with concern about relative phases and locality. We define 
\begin{align}
	C_{Z_r} = \bigoplus_\mu \frac{\tr_\mu(Z_r)}{\dim S^\mu} \Pi_\mu,
\end{align}
the scalar matrix obeying U$(1)$ symmetry given by scalars $\tr_\mu(Z_r)/\dim S^\mu$ with projections $\Pi_\mu$ into each $\U(1)$ charge sector. It is also important to note that
\begin{align}\label{eq:U(1)phase-1}
	C_{Z_r} = C_{Z_s} = \frac{1}{n} \sum_i Z_i \equiv C_1 \text{ for any } s \neq r,
\end{align}
because $Z_r$ is similar to other single Pauli-$Z$'s. Similarly we define
\begin{align}\label{eq:U(1)phase-2}
	C_{Z_k Z_l} \vcentcolon = \bigoplus_\mu \frac{\tr_\mu(Z_k Z_l)}{\dim S^\mu} \Pi_\mu = \frac{1}{n(n-1)} \Big[ \Big( \sum_i Z_i \Big)^2 - nI \Big] = \frac{2}{n(n-1)} \sum_{i < j} Z_i Z_j \equiv C_2.
\end{align}
as well as $C_3,...,C_n$ in general. Plus the identity matrix, $\{I = C_0,C_1,...,C_n\}$ is a collection of orthogonal operators because $\tr(C_i C_j) = 0$ for $i \neq j$. It spans all relative phase factors with respect to the $\U(1)$ symmetry.

On the other hand, for SWAPs $(k,l)$, we consider
\begin{align}
	C_{(k,l)} \vcentcolon = \bigoplus_\mu \frac{\text{Tr}_\mu (k,l)}{\dim S^\mu} \Pi_\mu.
\end{align}
The definition is also independent to the choice of $k, l$. Expanding SWAPs by Pauli basis, we note that since
\begin{align}
	(k,l) = \frac{1}{2} \sum_{P  = I,X,Y,Z} P_k \otimes P_l^\dagger = \frac{1}{2} \sum_{P = I,X,Y,Z} P_k \otimes P_l
\end{align}
and since $\bra{\psi} (X_i X_j + Y_i Y_j) \ket{\psi} = 0$ for any computational basis element $\ket{\psi}$,
\begin{align}
	C_{(k,l)} = \frac{1}{2}(I + C_{Z_k Z_l}) \equiv \frac{1}{2}(C_0 + C_2).
\end{align}

Let $\mathcal{V}_{2, \U(1)}$ denote the group generated by any $2$-local $\U(1)$-symmetric unitaries. This is a compact subgroup of $\mathcal{U}_{ \U(1)}$ whose Lie algebra $\mathfrak{v}_{2, \U(1)}$ is generated by $\U(1)$-symmetric Hamiltonians supported on arbitrary two locations~\cite{MarvianNature,marvian2023nonuniversality,U(1)Design2023}. Suppose $H_{k_0 l_0}$ is an arbitrary $2$-local Hamiltonian acting on the $k_0$-th and $l_0$-th sites. By Theorem \ref{thm:universalityU(1)-1} and the fact that a 2-qubit system has exactly three independent $\U(1)$ relative phases, $H_{k_0 l_0}$ can thus be written as a linear combination of $(k_0,l_0), Z_{k_0}, Z_{l_0}, Z_{k_0} Z_{l_0}$ as well as their Lie brackets. Let $L$ denote these brackets, then
\begin{align}\label{eq:2localHamiltonian}
	H_{k_0 l_0} = c_0 I + c_1 (k_0,l_0) + c_2 Z_{k_0} + c_3 Z_{l_0} + c_4 Z_{k_0} Z_{l_0} + L.
\end{align} 
Since Lie brackets are traceless, $(\tr_\mu (H_{k_0 l_0})/\dim S^\mu)$ is totally determined by $I, C_1, C_2$, which helps to justify:

\begin{theorem}\label{thm:universalityU(1)-2}
	The Lie algebra 
	\begin{align}
		\langle i\mathfrak{d}_{Z_kZ_l}(\mathbb{R}), iH_S \rangle_\mathbb{R} = \bigoplus_{\mu = (n-r,r)} \mathfrak{su}(S^\mu) \oplus \text{span}\{iC_0, iC_1, iC_2\}_{\mathbb{R}} = \mathfrak{v}_{2, \U(1)}.
	\end{align}
	Moreover, since $\{C_0,C_1,C_2,...,C_n\}$ spans all scalar matrices/relative phase with respect to the $\U(1)$ symmetry, 
	\begin{align}\label{eq:eq:universalityU(1)}
		S\mathcal{U}_{\U(1)} \subsetneqq \CQA_{\U(1)} = \mathcal{V}_{2,\U(1)} \subsetneqq \mathcal{U}_{\U(1)}.
	\end{align}
\end{theorem}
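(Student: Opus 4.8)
The plan is to prove the two displayed Lie-algebra identities and then pass to the groups by exponentiation. Write $\mathfrak{g}:=\langle i\mathfrak{d}_{Z_kZ_l}(\mathbb{R}), iH_S\rangle_{\mathbb{R}}$ for the Lie algebra in question. The one structural fact I will use repeatedly is that every $\U(1)$-symmetric skew-Hermitian operator $X$ splits uniquely as $X = X_0 + X_{\mathrm{ph}}$, where $X_{\mathrm{ph}}=\bigoplus_\mu \frac{\tr_\mu X}{\dim S^\mu}\Pi_\mu$ is its relative-phase part and $X_0\in\bigoplus_\mu\mathfrak{su}(S^\mu)$ is traceless on every charge sector; since all generators $iZ_r$, $iZ_rZ_s$, $iI$, $i(j,j+1)$ of $\mathfrak{g}$ are symmetric and skew-Hermitian, so is every element of $\mathfrak{g}$. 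The genuinely delicate point is keeping track of which relative phases do and do not survive inside $\mathfrak{g}$.

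For $\mathfrak{g}\subseteq\bigoplus_\mu\mathfrak{su}(S^\mu)\oplus\text{span}_{\mathbb R}\{iC_0,iC_1,iC_2\}$: by \eqref{eq:U(1)phase-1}, \eqref{eq:U(1)phase-2} and the identity $C_{(k,l)}=\tfrac12(C_0+C_2)$, the phase parts of the generators are $(iZ_r)_{\mathrm{ph}}=iC_1$, $(iZ_rZ_s)_{\mathrm{ph}}=iC_2$, $(iI)_{\mathrm{ph}}=iC_0$ and $(i(j,j+1))_{\mathrm{ph}}=\tfrac12 iC_0+\tfrac12 iC_2$, all of which lie in $\text{span}_{\mathbb R}\{iC_0,iC_1,iC_2\}$; since $\tr_\mu[A,B]=0$ for all $A,B$, commutators have vanishing phase part, and the phase part of a real linear combination is the corresponding combination of phase parts. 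Hence the phase part of any element of $\mathfrak{g}$ stays in $\text{span}_{\mathbb R}\{iC_0,iC_1,iC_2\}$ and its traceless part lies in $\bigoplus_\mu\mathfrak{su}(S^\mu)$. For the reverse inclusion, Theorem~\ref{thm:universalityU(1)-1} already yields $\bigoplus_\mu\mathfrak{su}(S^\mu)\subseteq\mathfrak{g}$, and the three phase directions are recovered as $iC_0=iI$, $iC_1=iZ_1-(iZ_1)_0$ and $iC_2=iZ_1Z_2-(iZ_1Z_2)_0$, each a difference of an element of $i\mathfrak{d}_{Z_kZ_l}(\mathbb{R})$ and an element of $\bigoplus_\mu\mathfrak{su}(S^\mu)$, hence in $\mathfrak{g}$. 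This proves the first equality.

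Next I would identify $\mathfrak{g}$ with $\mathfrak{v}_{2,\U(1)}$. The inclusion $\mathfrak{g}\subseteq\mathfrak{v}_{2,\U(1)}$ is immediate since each generator ($Z_r$, $Z_rZ_s$, $I$, every adjacent SWAP $(j,j+1)$) is a $2$-local $\U(1)$-symmetric Hamiltonian and $\mathfrak{v}_{2,\U(1)}$ is closed under real combinations and brackets. Conversely, by \eqref{eq:2localHamiltonian} an arbitrary $2$-local symmetric Hamiltonian $H_{k_0l_0}$ on sites $k_0,l_0$ is a real combination of $I$, $(k_0,l_0)$, $Z_{k_0}$, $Z_{l_0}$, $Z_{k_0}Z_{l_0}$ and iterated Lie brackets of these; the bracket terms are traceless on each sector, hence in $\bigoplus_\mu\mathfrak{su}(S^\mu)\subseteq\mathfrak{g}$; the $Z$-terms and $I$ lie in $i\mathfrak{d}_{Z_kZ_l}(\mathbb{R})\subseteq\mathfrak{g}$; and $i(k_0,l_0)=(i(k_0,l_0))_0+\tfrac12 iC_0+\tfrac12 iC_2\in\mathfrak{g}$ by the previous paragraph. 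Thus $iH_{k_0l_0}\in\mathfrak{g}$ for every such Hamiltonian, so $\mathfrak{v}_{2,\U(1)}\subseteq\mathfrak{g}$, completing the displayed identities.

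Finally, the group statement follows by exponentiation. Both $\CQA_{\U(1)}$ and $\mathcal{V}_{2,\U(1)}$ are connected Lie subgroups of $\mathcal{U}_{\U(1)}$, generated by one-parameter subgroups, with Lie algebras $\mathfrak{g}$ and $\mathfrak{v}_{2,\U(1)}$ respectively, so $\mathfrak{g}=\mathfrak{v}_{2,\U(1)}$ forces $\CQA_{\U(1)}=\mathcal{V}_{2,\U(1)}$. The Lie algebra $\bigoplus_\mu\mathfrak{su}(S^\mu)$ of $S\mathcal{U}_{\U(1)}$ sits strictly inside $\mathfrak{g}$ (e.g. $iC_0=iI$ is not traceless on any $S^\mu$), and the Lie algebra $\bigoplus_\mu\mathfrak{u}(S^\mu)=\bigoplus_\mu\mathfrak{su}(S^\mu)\oplus\text{span}_{\mathbb R}\{iC_0,\dots,iC_n\}$ of $\mathcal{U}_{\U(1)}$ strictly contains $\mathfrak{g}$, since the $n+1$ mutually orthogonal operators $C_0,\dots,C_n$ span strictly more than $\text{span}\{C_0,C_1,C_2\}$ as soon as $n\ge 3$; strict inclusions of the Lie algebras of connected subgroups of a fixed Lie group yield strict inclusions of the subgroups, giving $S\mathcal{U}_{\U(1)}\subsetneqq\CQA_{\U(1)}=\mathcal{V}_{2,\U(1)}\subsetneqq\mathcal{U}_{\U(1)}$. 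I expect the main obstacle to be precisely this relative-phase accounting — establishing $C_{(k,l)}=\tfrac12(C_0+C_2)$ and the vanishing of $\tr_\mu$ on commutators, which together pin down $\text{span}\{C_0,C_1,C_2\}$ as exactly the phases realizable at locality $2$; the skew-Hermitian/traceless splitting and the group-theoretic passage are routine once Theorem~\ref{thm:universalityU(1)-1} is in hand.
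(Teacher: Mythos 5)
Your proposal is correct and takes essentially the same route as the paper: it combines Theorem~\ref{thm:universalityU(1)-1} with the relative-phase bookkeeping via $C_0,C_1,C_2$ (using $C_{(k,l)}=\tfrac12(C_0+C_2)$ and the tracelessness of commutators on each charge sector) and the $2$-local decomposition \eqref{eq:2localHamiltonian}, then passes to the groups by connectedness. You merely spell out in more detail the inclusion arguments and the strictness of \eqref{eq:eq:universalityU(1)} that the paper leaves implicit.
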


\begin{remark}
	As a comparison to $\SU(d)$ case, Eq.~\eqref{eq:U(1)phase-1}, \eqref{eq:U(1)phase-2} and all other orthogonal scalar matrices used to create $\U(1)$ relative phases are replaced by those in Proposition \ref{prop:CenterBasis} and Theorem \ref{thm:CenterBases}, which use (a) permutations of the same cycle type, (b) summing over $S_n$ weighted by characters and (c) products of YJM elements. The technique involving partial root vectors does not hold on the first choice and the second choice loses locality immediately. Therefore, we consider products of the YJM element. As a reminder, YJM elements do not produce orthogonal diagonal matrices in general and more delicate analyses using $S_n$ representation theory are required in proving the counterpart of Theorem \ref{thm:universalityU(1)-2} for $\SU(d)$ symmetry~\cite{Zheng2021SpeedingUL,SUd-k-Design2023}. Moreover, $\CQA_{\U(1)} = \mathcal{V}_{2,\U(1)}$ here, but $\CQA_{\SU(d)}$ is a proper subgroup of $\mathcal{V}_{4,\SU(d)}$ in \eqref{eq:universalitySU(d)} because there are relative phases corresponding to 4-local permutations which cannot be generated by $\CQA_{\SU(d)}$ (see~\cite{Zheng2021SpeedingUL} for more details). 
\end{remark}

Let $\mathcal{V}_{k,\U(1)}$ be the group generated by $k$-local $\U(1)$-symmetric unitaries. Using a similar argument like expanding Eq.~\eqref{eq:2localHamiltonian}, we obtain the following corollary, which recovers the $\U(1)$ semi-universality proved in Refs.~\cite{MarvianNature,marvian2023nonuniversality}.

\begin{corollary}\label{corollary:universalityU(1)}
	Let $\mathfrak{d}_k(\mathbb{R})$ denote the collection of all real diagonal matrices spanned by up to $k$-th Pauli-$Z$ matrices. The Lie algebra of $\mathcal{V}_{k,\U(1)}$ satisfies
	\begin{align}
		\mathfrak{v}_{k, \U(1)} = \langle i\mathfrak{d}_{k}(\mathbb{R}), iH_S \rangle_\mathbb{R} = \bigoplus_\mu \mathfrak{su}(S^\mu) \oplus \text{span}\{iC_0, iC_1,...,iC_k\}_{\mathbb{R}}.
	\end{align} 
	Therefore, an arbitrary ensemble of $\U(1)$-symmetric unitaries can never be completely universal if it has a bounded locality. 
\end{corollary}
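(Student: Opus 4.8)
The plan is to follow the proof of Theorem~\ref{thm:universalityU(1)-2} essentially verbatim, upgrading ``$2$-local'' to ``$k$-local'' while tracking which relative phase factors can survive. The statement to reach is the chain of equalities $\mathfrak{v}_{k,\U(1)} = \langle i\mathfrak{d}_k(\mathbb{R}), iH_S\rangle_\mathbb{R} = \bigoplus_\mu\mathfrak{su}(S^\mu)\oplus\operatorname{span}_\mathbb{R}\{iC_0,\dots,iC_k\}$, from which the non-universality claim falls out.

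First I would pin down $\langle i\mathfrak{d}_k(\mathbb{R}), iH_S\rangle_\mathbb{R}$ by two inclusions. For ``$\supseteq$'': since $\mathfrak{d}_{Z_kZ_l}(\mathbb{R}) \subseteq \mathfrak{d}_k(\mathbb{R})$ for $k\ge 2$, Theorem~\ref{thm:universalityU(1)-1} already gives $\bigoplus_\mu\mathfrak{su}(S^\mu) \subseteq \langle i\mathfrak{d}_k(\mathbb{R}), iH_S\rangle_\mathbb{R}$, and each $C_j$ is by construction a real diagonal matrix built from products of at most $j$ Pauli-$Z$'s, so $iC_j \in i\mathfrak{d}_j(\mathbb{R}) \subseteq i\mathfrak{d}_k(\mathbb{R})$ for $j\le k$. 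For ``$\subseteq$'': $\bigoplus_\mu\mathfrak{su}(S^\mu) \oplus \operatorname{span}_\mathbb{R}\{iC_0,\dots,iC_k\}$ is a Lie subalgebra (the $C_j$'s are central, so the only nontrivial bracket closes inside $\bigoplus_\mu\mathfrak{su}(S^\mu)$), and it contains each generator once one decomposes a $\U(1)$-symmetric skew-Hermitian operator into its per-sector traceless part, which lies in $\bigoplus_\mu\mathfrak{su}(S^\mu)$, plus its per-sector scalar (relative-phase) part $\bigoplus_\mu \tfrac{\tr_\mu(\,\cdot\,)}{\dim S^\mu}\Pi_\mu$; the split applies to $H_S$ using that its relative phase is $(n-1)C_{(1,2)} = \tfrac{n-1}{2}(C_0+C_2)$.

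The one genuinely new step, which I expect to be the crux, is showing that for an arbitrary $k$-local $\U(1)$-symmetric operator $h$ this relative-phase part lies in $\operatorname{span}_\mathbb{R}\{C_0,\dots,C_k\}$. I would prove it by expanding $h$ in the Pauli basis on its $k$-site support: restricted to a charge sector $\mu=(n-r,r)$ and then traced, every Pauli string containing an $X$ or a $Y$ contributes zero (single-qubit $\langle x|X|x\rangle = \langle x|Y|x\rangle = 0$), so $\tr_\mu(h)$ is a real combination of the $\tr_\mu(Z_{i_1}\cdots Z_{i_j})$ with $0\le j\le k$; by permutation symmetry these are independent of the chosen sites, and after dividing by $\dim S^\mu$ they are exactly the scalars defining $C_0,\dots,C_k$. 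Feeding this into each generator $iH_{S_0}$ of $\mathfrak{v}_{k,\U(1)}$ (as in the expansion~\eqref{eq:2localHamiltonian}) and invoking Theorem~\ref{thm:universalityU(1)-1} for the traceless part yields $\mathfrak{v}_{k,\U(1)} \subseteq \bigoplus_\mu\mathfrak{su}(S^\mu)\oplus\operatorname{span}_\mathbb{R}\{iC_0,\dots,iC_k\} = \langle i\mathfrak{d}_k(\mathbb{R}),iH_S\rangle_\mathbb{R}$; the reverse inclusion is immediate since $i\mathfrak{d}_k(\mathbb{R})$ and $iH_S$ are themselves $k$-local $\U(1)$-symmetric Hamiltonians, so all three sets coincide.

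Finally, for the non-universality conclusion I would note that $\{C_0,\dots,C_n\}$ are mutually orthogonal and nonzero, hence linearly independent, so $\operatorname{span}_\mathbb{R}\{iC_0,\dots,iC_k\} \subsetneq \operatorname{span}_\mathbb{R}\{iC_0,\dots,iC_n\}$ whenever $k<n$; therefore $\mathfrak{v}_{k,\U(1)}$ is a proper subalgebra of $\bigoplus_\mu\mathfrak{u}(S^\mu)$, the Lie algebra of $\mathcal{U}_{\U(1)}$, and any ensemble of $\U(1)$-symmetric unitaries with a fixed locality bound $k$ generates a group whose Lie algebra sits inside $\mathfrak{v}_{k,\U(1)}$ and thus cannot be all of $\mathcal{U}_{\U(1)}$. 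The only mildly delicate points are the vanishing of $X,Y$-strings under the sector-restricted trace and dismissing the trivial edge case $n\le k$ (where ``$k$-local'' is vacuously global); neither poses a real obstacle, as everything is a direct adaptation of the $k=2$ argument.
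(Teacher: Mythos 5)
Your proposal is correct and takes essentially the same route as the paper, whose own (very brief) proof likewise expands a general $k$-local $\U(1)$-symmetric generator in analogy with Eq.~\eqref{eq:2localHamiltonian}, observes that its per-sector scalar (relative-phase) content lies in $\operatorname{span}_\mathbb{R}\{C_0,\dots,C_k\}$, and invokes Theorem~\ref{thm:universalityU(1)-1} for the $\bigoplus_\mu \mathfrak{su}(S^\mu)$ part before concluding non-universality from the independence of $C_0,\dots,C_n$. Your Pauli-basis trace computation (strings containing $X$ or $Y$ are traceless within each charge sector, and $Z$-strings of weight $j\leq k$ reproduce the $C_j$ scalars) is simply a more explicit justification of the phase-part step that the paper dispatches as ``a similar argument.''
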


%------------------------------------------------------------------------------------------------------------------------------------------------

\subsection{$\U(1)$-symmetric unitary $k$-designs}\label{sec:U(1)designs}

We have proved in our earlier work Ref.~\cite{SUd-k-Design2023} that $\CQA_{\SU(d)}$ forms $\SU(d)$-symmetric exact $k$-designs for all $k < n(n-3)/2$. Here, we will show an analogous result that $\CQA_{\U(1)} = \mathcal{V}_{2,\U(1)}$ forms $\U(1)$-symmetric exact $k$-designs for all $k < 2n-2$. This result is known and generalized to circuits with higher locality in Refs.~\cite{MarvianDesign,mitsuhashi2024Designs,mitsuhashi2024Designs2}. Later, we demonstrate that it is impossible for ensembles of $\U(1)$-symmetric unitaries with any bounded locality to form $\U(1)$-symmetric $k$-designs for arbitrarily large $k$, let alone converge to $\U(1)$-symmetric Haar measure.

Most of the proof details here are similar and even simpler than the $\SU(d)$ case presented in~\cite{SUd-k-Design2023} because we are working with the computational basis rather than the Young basis defined by $S_n$ representation theory and $\U(1)$ charge sectors are multiplicity-free on qubits. With these considerations, we only elaborate on the crucial steps.

By Theorem \ref{thm:universalityU(1)-2}, we have
\begin{align}\label{eq:A-TkCQAU(1)}
	T_{k,\U(1)}^{\CQA} & = \int_{\gamma_j} (e^{-i (\gamma_0 C_0 + \gamma_2 C_2 + \gamma_2 C_2) })^{\otimes k} \otimes (e^{i (\gamma_0 C_0 + \gamma_2 C_2 + \gamma_2 C_2)} )^{\otimes k} d\gamma \int_{S\mathcal{U}_{\U(1)}} V^{\otimes k} \otimes \bar{V}^{\otimes k} dV.
\end{align}
because integrals on the RHS are commutative and the intersection of their unit eigenspaces is identical to that of $\CQA_{\U(1)}$ (see Section \ref{sec:kDesigns}). To prove that $\CQA_{\U(1)}$ forms $\U(1)$-symmetric unitary $k$-designs for some $k$, we need to check whether $T_k^{\CQA_{\U(1)}} = T_k^{\mathcal{U}_{\U(1)}}$ or not. To this end, we first study the integral of phases $C_i$ and that of $S\mathcal{U}_{\U(1)}$ separately. 

We expand the integrand from
\begin{align}
	T_{k,\U(1)}^{S\mathcal{U}_\times} = \int_{S\mathcal{U}_{\U(1)}} V^{\otimes k} \otimes \bar{V}^{\otimes k} dV = \int_{S\mathcal{U}_{\U(1)}} \Big( \bigoplus_\mu V_\mu \Big)^{\otimes k} \otimes \Big( \bigoplus_\nu \bar{V}_\nu \Big)^{\otimes k} dV
\end{align}
into the direct sum of tensor products on several charge sectors. We denote by $S^{(-n)}$ the 1-dimensional sector corresponding to the polarized down state. As the simplest example, when $k = 1$ we have
\begin{align}\label{eq:TrivialExample}
	\int_{\SU(S^{(n)})} V_{(n)} dV_{(n)} \otimes \int_{\SU(S^{(-n)})} \bar{V}_{(-n)} dV_{(-n)} 
	& \neq 0 = \int_{\U(S^{(n)})} U_{(n)} dU_{(n)} \otimes \int_{\U(S^{(-n)})} \bar{U}_{(-n)} dU_{(-n)} \\
	\implies T_{1,\U(1)}^{S\mathcal{U}_\times} & \neq T_{1,\U(1)}^{\mathcal{U}_\times}.
\end{align}
That is, $S\mathcal{U}_{\U(1)}$ cannot even be unitary $1$-design with respect to $\mathcal{U}_{\U(1)}$. The reasons is: both $S^{(n)}$ and $S^{(-n)}$ are $1$-dimensional charge sectors, but they are inequivalent. Consequently, we have two independent nonzero integrals over $\SU(S^{(n)})$ and $\SU(S^{(-n)})$ $\cong \SU(1) = \{1\}$ on the LHS from above. A similar situation happens on the RHS, but the integrals over $\U(S^{(n)})$ and $\U(S^{(-n)})$ $\cong \U(1)$ is now zero. Multiplying with the phases integrals of $C_i$ from Eq.~\eqref{eq:A-TkCQAU(1)} would alleviate such problems, which we summarizes in Theorem \ref{thm:CQAU(1)-design} after the following lemma:

\begin{lemma}\label{lemma:LR}\cite{SUd-k-Design2023}
	Let $0 \leq r \neq s < d$. The tensor product $V^{\otimes r} \otimes \bar{V}^{\otimes s}$ for $V \in \SU(d)$ cannot be decomposed into the trivial representation. In this case, $\int_{\SU(d)} V^{\otimes r} \otimes \bar{V}^{\otimes s} dV = 0$. It is nonzero when $r + (d-1)s$ can be divided by $d$. 
\end{lemma}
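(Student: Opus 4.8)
The plan is to translate the statement about decompositions into a statement about Haar integrals via the standard fact that, for a compact group $G$ with normalized Haar measure and a finite-dimensional representation $\rho$ on $W$, the operator $\int_G \rho(g)\,dg$ is the orthogonal projector of $W$ onto the subspace $W^G$ of $G$-invariant vectors; its rank equals the multiplicity of the trivial representation in $\rho$. Hence $\int_{\SU(d)} V^{\otimes r}\otimes\bar V^{\otimes s}\,dV \neq 0$ if and only if $V^{\otimes r}\otimes\bar V^{\otimes s}$ contains at least one copy of the trivial representation, and the whole lemma reduces to deciding exactly when this happens.

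For the necessary condition (the vanishing claim), I would use the center of $\SU(d)$. By Schur--Weyl duality $V^{\otimes r}$ decomposes, as a $\U(d)$-representation, into the Weyl modules $S^\lambda(\mathbb{C}^d)$ with $\lambda\vdash r$ and at most $d$ rows; restricted to $\SU(d)$ two such modules are isomorphic precisely when their Young diagrams differ by adding or deleting columns of height exactly $d$, so the surviving invariant of an $\SU(d)$-irrep is the number of boxes modulo $d$ --- equivalently the character by which the generator $e^{2\pi i/d}I$ of $Z(\SU(d))\cong\mathbb{Z}/d$ acts (the ``$N$-ality''). The fundamental representation $V$ has $N$-ality $1$, and the conjugate fundamental satisfies $\bar V\cong (\mathbb{C}^d)^*\cong\Lambda^{d-1}(\mathbb{C}^d)$, so it has $N$-ality $d-1\equiv -1$. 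Since $N$-ality is additive under tensor products and the trivial representation has $N$-ality $0$, a necessary condition for a trivial summand in $V^{\otimes r}\otimes\bar V^{\otimes s}$ is
\[
 r + (d-1)s \equiv 0 \pmod d, \qquad\text{i.e.}\qquad r\equiv s\pmod d .
\]
For $0\le r\neq s<d$ this fails, so no trivial summand exists and the integral is $0$.

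For sufficiency I would exhibit an explicit invariant tensor. Assuming without loss of generality $r\le s$ (the opposite case is symmetric under exchanging $V$ and $\bar V$), the divisibility condition gives $s-r=md$ with $m\ge 0$; then pair each of the $r$ copies of $\mathbb{C}^d$ with a copy of $(\mathbb{C}^d)^*$ using the $\SU(d)$-invariant evaluation tensor $\sum_i e_i\otimes e^i$, and group the remaining $md$ dual factors into $m$ blocks of $d$, inserting in each block the Levi--Civita tensor, which spans the one-dimensional $\SU(d)$-invariants of $\big((\mathbb{C}^d)^*\big)^{\otimes d}$. The resulting tensor is a nonzero $\SU(d)$-invariant vector in $(\mathbb{C}^d)^{\otimes r}\otimes\big((\mathbb{C}^d)^*\big)^{\otimes s}$, so the integral is nonzero. (Alternatively, the first fundamental theorem of invariant theory for $\SL(d,\mathbb{C})$ states that all invariants are combinations of such contractions and $\epsilon$-tensors, which yields both directions simultaneously: an invariant exists iff $a+bd=r$ and $a+cd=s$ have a nonnegative-integer solution $(a,b,c)$, i.e.\ iff $d\mid r-s$.)

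The only genuinely delicate point is the distinction between $\SU(d)$ and $\U(d)$: over $\U(d)$ the trivial representation occurs in $V^{\otimes r}\otimes\bar V^{\otimes s}$ exactly when $r=s$, whereas over $\SU(d)$ the determinant representation is trivial, which is precisely what relaxes the condition to the congruence $r\equiv s\pmod d$ and makes the $N$-ality computation (rather than a naive box-counting) the crux of the argument; the rest is bookkeeping with Schur--Weyl duality and the $\epsilon$-tensor.
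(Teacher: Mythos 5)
Your proof is correct. Note that the paper itself does not prove Lemma \ref{lemma:LR}; it is imported wholesale from Ref.~\cite{SUd-k-Design2023}, where the argument is run through Schur--Weyl duality and Littlewood--Richardson/Young-diagram combinatorics: the tensor power is decomposed into irreps labelled by diagrams, columns of height $d$ are discarded as $\SU(d)$-trivial, and the trivial representation is located by counting boxes modulo $d$. Your route reaches the same conclusion by more elementary invariant-theoretic means: the vanishing direction follows from the central character ($N$-ality) of $Z(\SU(d))\cong\mathbb{Z}/d$, which is additive under tensor products and immediately forces $r+(d-1)s\equiv r-s\equiv 0\pmod d$, and the nonvanishing direction is settled by explicitly exhibiting an invariant built from evaluation pairings $\sum_i e_i\otimes e^i$ and Levi--Civita tensors (or, equivalently, by the first fundamental theorem for $\SL(d,\mathbb{C})$). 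What your approach buys is a self-contained, basis-explicit proof with a concrete invariant witness and no need for plethysm bookkeeping; what the diagrammatic approach buys is that it sits inside the same Schur--Weyl machinery used elsewhere in the paper and generalizes directly to counting the multiplicity of the trivial representation, not just detecting its presence. Two small points to keep tidy if this were written out in full: the box-count modulo $d$ is an invariant of the $\SU(d)$-isomorphism class but not a complete one (you only use it as the central character, which is all that is needed), and your explicit invariant lives in a reordered tensor factorization, so one should remark that permuting factors is an intertwiner and hence does not affect the existence of invariants.
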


In contrast, $\int_{\U(d)} U^{\otimes r} \otimes \bar{U}^{\otimes s} dU = 0$ unless $r = s$ and there is no further conditions.

\begin{theorem}\label{thm:CQAU(1)-design}
	For an $n$-qubit system, $\CQA_{\U(1)} = \mathcal{V}_{2,\U(1)}$ forms an exact $\U(1)$-symmetric $k$-design with respect to $\mathcal{U}_{\mathrm{U}(1)}$ for $k$ being at most $2n-3$.
\end{theorem}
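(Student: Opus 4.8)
The plan is to compare $T_{k,\U(1)}^{\CQA}$ with $T_{k,\U(1)}^{\mathcal{U}_\times}$ blockwise, using the factorization in~\eqref{eq:A-TkCQAU(1)}, which we rewrite as $T_{k,\U(1)}^{\CQA} = \Phi_k \circ T_{k,\U(1)}^{S\mathcal{U}_\times}$ with
\begin{align}
	\Phi_k := \int_\gamma \big(e^{-i(\gamma_0 C_0 + \gamma_1 C_1 + \gamma_2 C_2)}\big)^{\otimes k} \otimes \big(e^{i(\gamma_0 C_0 + \gamma_1 C_1 + \gamma_2 C_2)}\big)^{\otimes k}\, d\gamma
\end{align}
the twirl over the relative-phase torus generated by $C_0 = I$, $C_1$, $C_2$; Theorem~\ref{thm:universalityU(1)-2} supplies the Lie algebra of $\CQA_{\U(1)}$, and since this torus is central in $\CQA_{\U(1)}$ the two factors commute. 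Both $T_{k,\U(1)}^{\CQA}$ and $T_{k,\U(1)}^{\mathcal{U}_\times}$ are the orthogonal projections onto their respective commutant algebras, so it suffices to show $T_{k,\U(1)}^{\CQA} = T_{k,\U(1)}^{\mathcal{U}_\times}$. Since the charge sectors $S^{(n-r,r)}$ are multiplicity-free, all of $\Phi_k$, $T_{k,\U(1)}^{S\mathcal{U}_\times}$ and $T_{k,\U(1)}^{\mathcal{U}_\times}$ are block-diagonal with respect to pairs $(\vec a;\vec b)$, where $a_\nu$ (resp.\ $b_\nu$) counts how many of the $k$ ket (resp.\ bra) tensor slots lie in $S^\nu$, with $\sum_\nu a_\nu = \sum_\nu b_\nu = k$; I aim to match the two operators on every block once $k\le 2n-3$.

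First I would dispose of the \emph{balanced} blocks $\vec a = \vec b$: there the phase exponent of $\Phi_k$ vanishes identically, so $\Phi_k$ acts as the identity, while the block of $T_{k,\U(1)}^{S\mathcal{U}_\times}$ factorizes over sectors into the projections onto $\operatorname{End}_{\SU(S^\nu)}\!\big((S^\nu)^{\otimes a_\nu}\big)$; since any $\SU(D)$-intertwiner of a tensor power of the defining representation automatically commutes with the scalars, $\operatorname{End}_{\SU(D)}\!\big((\mathbb{C}^D)^{\otimes m}\big) = \operatorname{End}_{\U(D)}\!\big((\mathbb{C}^D)^{\otimes m}\big)$, and the block coincides with that of $T_{k,\U(1)}^{\mathcal{U}_\times}$ with no restriction on $k$. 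The content is in the \emph{unbalanced} blocks, where $T_{k,\U(1)}^{\mathcal{U}_\times}$ vanishes and I must argue $T_{k,\U(1)}^{\CQA}$ does too. Writing $\delta_\nu := a_\nu - b_\nu \not\equiv 0$ (so $\sum_\nu\delta_\nu = 0$): if $\dim S^\nu \nmid \delta_\nu$ for some $\nu$, then by Lemma~\ref{lemma:LR} applied sectorwise $\int_{\SU(S^\nu)} V^{\otimes a_\nu}\otimes\bar V^{\otimes b_\nu}\, dV = 0$ and $T_{k,\U(1)}^{S\mathcal{U}_\times}$ already annihilates the block; otherwise $\dim S^\nu \mid \delta_\nu$ for all $\nu$ and it remains to show that $\Phi_k$ annihilates the block.

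Now $\Phi_k$ acts on this block by the scalar $\int_\gamma e^{i\sum_\nu \delta_\nu(\gamma_0 + \gamma_1 c_1^\nu + \gamma_2 c_2^\nu)}\, d\gamma$, where $c_1^\nu = y_\nu/n$ and $c_2^\nu = (y_\nu^2 - n)/(n(n-1))$, with $y_\nu := 2r_\nu - n$, are the scalars by which $C_1 = \tfrac1n\sum_i Z_i$ and $C_2 = \tfrac2{n(n-1)}\sum_{i<j} Z_iZ_j$ act on $S^\nu$; this scalar is nonzero exactly when $\sum_\nu\delta_\nu c_j^\nu = 0$ for $j=0,1,2$, and since $\sum_\nu\delta_\nu = 0$ holds automatically, the crux reduces to showing that no nonzero integer-valued $\delta$ on the $n+1$ sectors can satisfy $\dim S^\nu \mid \delta_\nu$, $\sum_\nu\delta_\nu y_\nu = \sum_\nu\delta_\nu y_\nu^2 = 0$, and $\tfrac12\sum_\nu|\delta_\nu| \le k \le 2n-3$ (note $\tfrac12\sum_\nu|\delta_\nu| = \sum_{\delta_\nu>0}\delta_\nu \le \sum_\nu a_\nu = k$). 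Since $1,y,y^2$ are linearly independent on any three distinct nodes, such a $\delta$ is supported on at least four sectors; as the only sectors with $\binom nr \le n$ are those with $r\in\{0,1,n-1,n\}$ (dimensions $1,n,n,1$), a support different from $\{0,1,n-1,n\}$ contains a sector of dimension at least $\binom n2$, so $\tfrac12\sum_\nu|\delta_\nu| \ge \tfrac12\big(\binom n2 + n + 2\big) > 2n-3$ (the last inequality being $n^2 - 7n + 16 > 0$; for $n\le 3$ there are too few sectors to support such a $\delta$ other than on $\{0,1,n-1,n\}$). On the support $\{0,1,n-1,n\}$, whose nodes are $\pm n$ and $\pm(n-2)$, solving the three moment equations together with $n \mid \delta_1,\delta_{n-1}$ pins $\delta$ to $s\big({-}(n-2),\, n,\, {-}n,\, n-2\big)$ at $r=0,1,n-1,n$ for a nonzero integer $s$, so $\tfrac12\sum_\nu|\delta_\nu| = |s|(2n-2) \ge 2n-2$. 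In either case $k\ge 2n-2$, contradicting $k\le 2n-3$, so $\Phi_k$ annihilates every unbalanced block; together with the balanced case this yields $T_{k,\U(1)}^{\CQA} = T_{k,\U(1)}^{\mathcal{U}_\times}$ for all $k\le 2n-3$, i.e.\ $\CQA_{\U(1)}$ is an exact $\U(1)$-symmetric $k$-design. (For $n\ge 3$, a $k=2n-2$ block supported on $r\in\{0,1,n-1,n\}$ survives both $T_{k,\U(1)}^{S\mathcal{U}_\times}$ and $\Phi_k$ while $T_{k,\U(1)}^{\mathcal{U}_\times}$ kills it, so the bound $2n-3$ is sharp.)

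I expect the combinatorial estimate $\tfrac12\sum_\nu|\delta_\nu| \ge 2n-2$ to be the only genuinely delicate step: the rest is bookkeeping with mixed-tensor invariant theory (Lemma~\ref{lemma:LR}) and the explicit phase actions of $C_1, C_2$ around Theorem~\ref{thm:universalityU(1)-2}, whereas isolating the minimal-weight signed measure with vanishing second moment under the divisibility constraints $\binom nr \mid \delta_{(n-r,r)}$ forces the small case analysis above.
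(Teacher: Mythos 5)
Your proposal is correct and follows essentially the same route as the paper's proof: the same factorization of $T^{\CQA}_{k,\U(1)}$ into the relative-phase twirl times the $S\mathcal{U}_{\U(1)}$ twirl, the same use of Lemma~\ref{lemma:LR} together with the three phase-vanishing conditions for $C_0,C_1,C_2$, and the same linear system on the four extremal sectors pinning any violating block to weight $2n-2$, hence exactness for $k\le 2n-3$ with sharpness at $2n-2$. The only cosmetic difference is that you rule out supports touching higher-dimensional sectors via the explicit estimate $\tfrac12\sum_\nu|\delta_\nu|\ge\tfrac12\big(\binom{n}{2}+n+2\big)>2n-3$, whereas the paper simply observes that those sectors' dimensions exceed the tensor orders so Lemma~\ref{lemma:LR} already forces agreement there.
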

\begin{proof}	
	It is well-known that, except $S^{(n)},S^{(-n)},S^{(n-1,1)},S^{(1,n-1)}$, the lowest dimension of $\U(1)$ charge sector is $\binom{n}{2} = n(n-1)/2$ and by Lemma \ref{lemma:LR} 
	\begin{align}
		\int_{\SU(S^\mu)} V_\mu^{\otimes r} \otimes \bar{V}_\mu^{\otimes s} dV_\mu = \int_{\U(S^\mu)} U_\mu^{\otimes r} \otimes \bar{U}_\mu^{\otimes s} dU_\mu
	\end{align}
	if $r,s < n(n-1)/2$ and $\dim S^\mu \geq n(n-1)/2$. Inconsistency like \eqref{eq:TrivialExample} arises when we integrate over sectors with even lower dimension, only including $S^{(n)},S^{(-n)},S^{(n-1,1)},S^{(1,n-1)}$. Then we need to verify that the phase integral as in Eq.~\ref{eq:A-TkCQAU(1)} would remedy the problem when $k < 2n - 2$.
	
	With respect to these four sectors, let us denote by $C_i^{\mu}, i = 0,1,2$ the phase factors of given by $C_i$ in CQA respectively. Direct computations show that
	\begin{align}\label{eq:U(1)phase-examples}
		& C_0^{\mu} \equiv 1, \\ 
		& C_1^{(n)} = 1, \quad C_1^{(-n)} = -1, \quad C_1^{(n-1,1)} = \frac{n-2}{n}, \quad C_1^{(1,n-1)} = -\frac{n-2}{n}, \\
		& C_2^{(n)} = 1, \quad C_2^{(-n)} = \mathmakebox[6mm]{1,} \quad C_2^{(n-1,1)} = \frac{n-4}{n}, \quad C_2^{(1,n-1)} = \frac{n-4}{n},
	\end{align}
	Let 
	\begin{align}
		(r_{(n)}, r_{(-n)}, r_{(n-1,1)}, r_{(1,n-1)}, \quad s_{(n)}, s_{(-n)}, s_{(n-1,1)}, s_{(1,n-1)} )
	\end{align}
	denote the orders of tensor product of unitaries $V_\mu$ and $\bar{V}_\mu$ on the aforementioned four sectors expanded from $V^{\otimes k} \otimes \bar{V}^{\otimes k}$, so they are non-negative and $\sum_\mu r_\mu = \sum_\mu s_\mu = k$.
	
	Let $\Delta k_\mu = r_\mu - s_\mu$. We now move to verify that when $k < 2n - 2$, there is \emph{no} $r_\mu, s_\mu$ defined above such that the following conditions hold simultaneously:
	\begin{align}
		& \text{at least one } \Delta k_\mu \neq 0, \\
		& \Delta k_\mu \equiv 0, \mod \dim S^\mu \text{ for all } \mu, \\
		& \Delta k_{(n)} - \Delta k_{(-n)} + \frac{n-2}{n} \Delta k_{(n-1,1)} - \frac{n-2}{n} \Delta k_{(1,n-1)} = 0, \\
		& \Delta k_{(n)} + \Delta k_{(-n)} + \frac{n-4}{n} \Delta k_{(n-1,1)} + \frac{n-4}{n} \Delta k_{(1,n-1)} = 0, \\
		& \Delta k_{(n)} + \Delta k_{(-n)} + \Delta k_{(n-1,1)} + \Delta k_{(1,n-1)} = 0.
	\end{align} 
	To be precise, the first two conditions indicate that $T_{1,\U(1)}^{S\mathcal{U}_\times} \neq T_{1,\U(1)}^{\mathcal{U}_\times}$ by Lemma \ref{lemma:LR}. The last three says that the phase integral is nonzero. All of them indicate that the moment operator of CQA, when restricted to these sectors, cannot be identical to that of $\mathcal{U}_{ \U(1)}$, forbidding the formation of unitary $k$-designs.  
	
	Assume the first two conditions hold, otherwise the proof is complete. Then according to Lemma \ref{lemma:LR}, we have 
	\begin{align}
		\bigotimes_{\mu} \int_{\SU(S^\mu)} V_\mu^{\otimes r_\mu} \otimes \bar{V}_\mu^{\otimes s_\mu} dV_\mu \neq 0 = \bigotimes_{\mu} \int_{\U(S^\mu)} U_\mu^{\otimes r_\mu} \otimes \bar{U}_\mu^{\otimes s_\mu} dU_\mu,
	\end{align}
    where $S^\mu$ is taken to be  $S^{(n)},S^{(-n)},S^{(n-1,1)},S^{(1,n-1)}$. Assume the last thee conditions also hold. Solving them, we obtain
    \begin{align}
    	\Delta k_{(n)} = - \Delta k_{(-n)} = \frac{\Delta k_{(1,n-1)} (n-2) }{n}, \quad \Delta k_{(n-1,1)} = - \Delta k_{(1,n-1)}.
    \end{align}
    Suppose $\Delta k_{(1,n-1)} > 0$. Note the second condition implies that $\Delta k_{(1,n-1)} = cn$ for a certain integer $c > 0$. Then
    \begin{align}
    	k = r_{(n)} + r_{(-n)} + r_{(n-1,1)} + r_{(1,n-1)} = (s_{(n)} + c(n-2)) + r_{(-n)} + r_{(n-1,1)} + (s_{(1,n-1)} + cn)
    \end{align}
	achieves its minimum when $s_{(n)} = r_{(-n)} = r_{(n-1,1)} = 0$ and $c = 1$, which is $2n - 2$. With respect to which,
	\begin{align}
		\hspace{-1em} \int_{\CQA_{\U(1)}} W_{(n)}^{\otimes n-2} \otimes W_{(1,n-1)}^{\otimes n} \otimes \bar{W}_{(-n)}^{\otimes n-2} \otimes \bar{W}_{(n-1,1)}^{\otimes n} dW \neq 0 = 
		\int_{\mathcal{U}_{\U(1)}} U_{(n)}^{\otimes n-2} \otimes U_{(1,n-1)}^{\otimes n} \otimes \bar{U}_{(-n)}^{\otimes n-2} \otimes \bar{U}_{(n-1,1)}^{\otimes n} dU.
	\end{align}
	This verifies $k$ can be at most $2n-3$.
\end{proof}

The following corollary is proved by comparing the commutant of the moment operator $T^{\CQA}_{k,\U(1)}$ and the unit eigenspace of $T_{k,\U(1)}^{\mathcal{E}_{\CQA}}$ of the ensemble $\mathcal{E}_{\CQA,\U(1)}$ defined in Definition \ref{def:CQAEnsemble}. The proof details are similar to those for the $\SU(d)$ case which can be found in Ref.~\cite{SUd-k-Design2023}, so we simply state the result:

\begin{corollary}
	For an $n$-qubit system, the ensemble $\mathcal{E}_{\CQA,\U(1)}$ can converge to $k$-design to with respect to $\mathcal{U}_{ \U(1)}$ for $k$ being at most $2n - 3$.
\end{corollary}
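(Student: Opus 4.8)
\emph{Proof plan.} The strategy is to reduce the convergence statement to the exact-design property of the group $\CQA_{\U(1)}$ recorded in Theorem~\ref{thm:CQAU(1)-design}, following the same template used for the $\SU(d)$ case in Ref.~\cite{SUd-k-Design2023}. First I would check that the closed group generated by the support of the ensemble $\mathcal{E}_{\CQA,\U(1)}$ (Definition~\ref{def:CQAEnsemble}) is exactly $\CQA_{\U(1)}=\mathcal{V}_{2,\U(1)}$: each sampled gate $\exp(-i\sum_{k,l}\beta_{kl}Z_kZ_l)\exp(-it(j,j+1))\exp(-i\sum_{k,l}\beta_{kl}'Z_kZ_l)$ is a $2$-local $\U(1)$-symmetric unitary and hence lies in $\mathcal{V}_{2,\U(1)}$, while conversely, letting the continuous parameters range over $[0,2\pi]$ recovers all one-parameter subgroups $e^{-it(j,j+1)}$ and $e^{-i\beta_{kl}Z_kZ_l}$ (using that the $Z_kZ_l$ mutually commute), whose generated group is $\CQA_{\U(1)}$ by Definition~\ref{def:CQAU(1)} and Theorem~\ref{thm:universalityU(1)-2}.

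Next I would invoke the general relationship between an ensemble and the compact group it generates, exactly as set up in Section~\ref{sec:kDesigns} and carried out for $\SU(d)$ in Ref.~\cite{SUd-k-Design2023}: the moment operator $T_{k,\U(1)}^{\mathcal{E}_{\CQA}}$ is Hermitian and positive semidefinite (positivity holding because it is a projection-sandwiched average of unitary-conjugation channels with symmetrized phases, which after phase averaging are nonnegative combinations of orthogonal projections), its unit eigenspace equals the commutant $\Comm_k(\CQA_{\U(1)})$, and therefore $\lambda_2(T_{k,\U(1)}^{\mathcal{E}_{\CQA}})<1$. By Theorem~\ref{thm:CQAU(1)-design}, for every $k\le 2n-3$ the group $\CQA_{\U(1)}$ is an exact $\U(1)$-symmetric $k$-design with respect to $\mathcal{U}_{\U(1)}$, so $\Comm_k(\CQA_{\U(1)})=\Comm_k(\mathcal{U}_{\U(1)})$ and the unit eigenspace of $T_{k,\U(1)}^{\mathcal{E}_{\CQA}}$ coincides with that of $T_{k,\U(1)}^{\mathcal{U}_\times}$. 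Since the two moment operators are simultaneously diagonalizable (bi-invariance of the Haar measure) and $\Vert T_{k,\U(1)}^{\mathcal{E}_{\CQA}}-T_{k,\U(1)}^{\mathcal{U}_\times}\Vert_{2\to2}=\lambda_2(T_{k,\U(1)}^{\mathcal{E}_{\CQA}})<1$, iterating $p$ steps contracts the difference geometrically, so the $p$-fold walk forms an $\epsilon$-approximate $k$-design once $p\gtrsim(1-\lambda_2)^{-1}(2kn\log 2+\log(1/\epsilon))$, as in the discussion following Definition~\ref{def:AppDesign}. This yields the corollary for all $k\le 2n-3$.

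The main obstacle is not in this outline but in the clause glossed over in the second paragraph: showing that the unit eigenspace of $T_{k,\U(1)}^{\mathcal{E}_{\CQA}}$ is \emph{exactly} $\Comm_k(\CQA_{\U(1)})$, i.e.\ that the discrete, alternating structure of the sampled circuits does not leave invariant any operator beyond the commutant of the full group. The clean resolution is that a generic finite product of the sampled one-parameter elements generates a dense subgroup of the compact connected Lie group $\CQA_{\U(1)}$, so averaging over the continuous parameters annihilates every operator outside $\Comm_k(\CQA_{\U(1)})$; equivalently one cites the corresponding lemma of Ref.~\cite{SUd-k-Design2023}, which establishes precisely this reduction for alternating symmetric ensembles. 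Once this is granted, the remaining steps are bookkeeping identical to the $\SU(d)$ analysis, which is why the main text only states the result.
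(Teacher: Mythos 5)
Your proposal is correct and follows essentially the same route as the paper: the paper also proves this corollary by comparing the unit eigenspace of $T_{k,\U(1)}^{\mathcal{E}_{\CQA}}$ with the commutant of $\CQA_{\U(1)}$ (deferring that technical step to the analogous lemma of Ref.~\cite{SUd-k-Design2023}), then invoking Theorem~\ref{thm:CQAU(1)-design} to identify it with $\Comm_k(\mathcal{U}_{\U(1)})$ for $k \leq 2n-3$ and concluding convergence from $\lambda_2 < 1$. The step you flag as the main obstacle is precisely the one the paper itself outsources to the $\SU(d)$ argument in that reference, so your treatment matches theirs.
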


%----------------------------------------------------------------------------------------------------------

\begin{theorem}\label{thm:U(1)DesignLocality} 
	Given a finite or infinite ensemble $\mathcal{E}$ of $\U(1)$-symmetric unitary gates, in order to generate $k$-designs under $\U(1)$ symmetry for $k \to \infty$ in either exact or approximate senses, $\mathcal{E}$ must contain unitaries acting on the whole $n$-qubit system. 
\end{theorem}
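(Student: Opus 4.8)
\emph{Proof proposal.} The plan is to turn the question into a comparison of commutant algebras and then to exhibit an explicit operator that every low‑locality symmetric ensemble must leave invariant but that the full symmetric group does not. Suppose, towards the contrapositive, that no gate of $\mathcal{E}$ acts on all $n$ qubits, so every gate is $\ell$‑local for some $\ell\le n-1$; then $\mathcal{E}$, and every product of its elements, lies in the closed group $\mathcal{V}_{\ell,\U(1)}$. By Corollary~\ref{corollary:universalityU(1)}, $\mathfrak{v}_{\ell,\U(1)}=\bigoplus_\mu\mathfrak{su}(S^\mu)\oplus\mathrm{span}_{\mathbb{R}}\{iC_0,\dots,iC_\ell\}$, and since $C_0,\dots,C_n$ are linearly independent this is a proper subalgebra of $\mathfrak{u}_{\U(1)}$ whenever $\ell<n$; consequently $\mathcal{V}_{\ell,\U(1)}=S\mathcal{U}_{\U(1)}\cdot T_\ell$ is a proper closed subgroup of $\mathcal{U}_{\U(1)}$, with $T_\ell$ the torus of relative phases generated by $C_0,\dots,C_\ell$ (these are central since each $C_j$ is scalar on every charge sector). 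I would then reduce the theorem to the claim that for every $\ell<n$ there is a (possibly large) $k$ with $\Comm_k(\mathcal{V}_{\ell,\U(1)})\supsetneq\Comm_k(\mathcal{U}_{\U(1)})$. Granting this, pick a positive $M\in\Comm_k(\mathcal{V}_{\ell,\U(1)})\setminus\Comm_k(\mathcal{U}_{\U(1)})$ (one exists because a $*$‑algebra is spanned by its positive elements). Then $T_k^{\mathcal{E}}$ and all its powers fix $M$, while $T_k^{\mathcal{U}_{\U(1)}}(M)\ne M$; iterating the complete‑positivity sandwich $(1-\epsilon)T_k^{\mathcal{U}_{\U(1)}}\le_{\mathrm{cp}}(T_k^{\mathcal{E}})^p\le_{\mathrm{cp}}(1+\epsilon)T_k^{\mathcal{U}_{\U(1)}}$ on the fixed operator $M$ forces $T_k^{\mathcal{U}_{\U(1)}}(M)=M$ in the limit of small $\epsilon$, a contradiction. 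This rules out an exact or $\epsilon$‑approximate ($\epsilon<1$) $k$‑design at any circuit depth, hence designs for $k\to\infty$ require a global gate.

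To produce the extra invariant I would decompose $\mathcal{H}^{\otimes k}=\bigoplus_{\vec\mu}S^{\mu_1}\otimes\cdots\otimes S^{\mu_k}$ over tuples of charge labels and argue block by block. Because $S\mathcal{U}_{\U(1)}\subseteq\mathcal{V}_{\ell,\U(1)}$, a block $(\vec\mu,\vec\nu)$ can be nonzero in $\Comm_k(\mathcal{V}_{\ell,\U(1)})$ only if there is a nonzero $S\mathcal{U}_{\U(1)}$‑intertwiner between $\bigotimes_b S^{\nu_b}$ and $\bigotimes_a S^{\mu_a}$, which by Lemma~\ref{lemma:LR} happens precisely when, for each charge sector, the multiplicities with which it occurs in $\vec\mu$ and in $\vec\nu$ are congruent modulo $\dim S^\mu$; and $T_\ell$‑invariance of that block amounts to $\sum_a p(r_{\mu_a})=\sum_b p(r_{\nu_b})$ for every polynomial $p$ of degree $\le\ell$ (because $C_0,\dots,C_\ell$ act on the sector $(n-r,r)$ as polynomials of degree $0,\dots,\ell$ in $r$), i.e.\ the two multisets of charges have equal power sums up to order $\ell$. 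Full $\mathcal{U}_{\U(1)}$‑invariance instead demands multiset equality, a strictly stronger condition once $k$ is large. The construction is therefore: choose $\ell+2$ distinct charges, e.g.\ $r\in\{0,1,\dots,\ell,n\}$ (available since $\ell<n$), take the one‑dimensional kernel of the $(\ell+1)\times(\ell+2)$ Vandermonde matrix built from their values, clear denominators and multiply by $\prod_i\dim S^{(n-r_i,r_i)}$ to obtain an integer vector $w$ with zero entry sum and all entries divisible by the corresponding sector dimensions, and set the two charge‑multiplicity vectors to $N\mathbf{1}\pm w$ with $N\ge\max_i|w_i|$. These describe multisets $A\ne B$ of size $k=(\ell+2)N$ with matching power sums up to order $\ell$ and per‑sector multiplicity differences divisible by the sector dimensions; tensoring a nonzero $\SU(S^\mu)$‑intertwiner on each sector yields an operator $M$ supported on the resulting blocks that lies in $\Comm_k(\mathcal{V}_{\ell,\U(1)})\setminus\Comm_k(\mathcal{U}_{\U(1)})$.

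The hard part is exactly this last, number‑theoretic step: one must simultaneously match power sums up to order $\ell$, keep all charge values inside $\{0,\dots,n\}$, and meet the divisibility congruences imposed by the $S\mathcal{U}_{\U(1)}$‑intertwiner condition; the Vandermonde‑kernel‑plus‑scaling argument achieves all three at the price of a possibly huge $k_0(\ell,n)$, which is harmless since we only need designs to fail for large $k$. A secondary technical point is to confirm that $\mathcal{V}_{\ell,\U(1)}$ genuinely factorizes as $S\mathcal{U}_{\U(1)}\cdot T_\ell$, so that membership in its $k$‑th commutant splits into the intertwiner condition and $T_\ell$‑invariance; this follows from Corollary~\ref{corollary:universalityU(1)} and the centrality of the $C_j$. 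As an alternative route, essentially the same conclusion can be reached by generalizing the proof of Theorem~\ref{thm:CQAU(1)-design} from locality $2$ to locality $\ell$: the discrepancy between the $S\mathcal{U}_{\U(1)}$‑ and $\mathcal{U}_{\U(1)}$‑averages localizes to the charge sectors of dimension at most $k$, and the identical linear‑algebra‑plus‑divisibility count shows that a witnessing solution exists once $k$ is large enough.
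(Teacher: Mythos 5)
Your proposal is correct and takes essentially the same route as the paper's proof: reduce to $\mathcal{V}_{\ell,\U(1)}$ via Corollary~\ref{corollary:universalityU(1)}, split it into $S\mathcal{U}_{\U(1)}$ times the phase torus spanned by $C_0,\dots,C_\ell$, invoke Lemma~\ref{lemma:LR} for the $\SU$-versus-$\U$ discrepancy, and produce an integer solution to the phase-cancellation linear system scaled by the sector dimensions to meet the divisibility constraints (your Vandermonde-node construction is a concrete instance of the paper's rank-plus-rationality argument in Eq.~\eqref{eq:LinearSystem}). The only difference is packaging—you phrase the witness as a fixed positive element of the commutant and conclude via the unit-eigenspace/CP argument, while the paper compares the moment operators directly; your stated ``alternative route'' of generalizing Theorem~\ref{thm:CQAU(1)-design} to locality $\ell$ is in fact exactly what the paper does.
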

\begin{proof}
	We first assume that $\mathcal{E} = \mathcal{V}_{\gamma,\U(1)}$ with locality $\gamma \geq 2$ and decompose its Lie algebra into $\bigoplus_\mu \mathfrak{su}(S^\mu) \oplus \text{span}\{iC_0, iC_1,...,iC_\gamma\}_{\mathbb{R}}$ by Corollary \ref{corollary:universalityU(1)}. Then we study the general case. Similar to Eq.~\eqref{eq:A-TkCQAU(1)},
	\begin{align}
		\begin{aligned}
			T_{k,\U(1)}^{\mathcal{V}_\gamma} = & \int_{\gamma_j} (e^{-i \sum_j \gamma_j C_j})^{\otimes k} \otimes (e^{i \sum_j \gamma_j C_j} )^{\otimes k} d\gamma \int_{S\mathcal{U}_{\U(1)}} V^{\otimes k} \otimes \bar{V}^{\otimes k} dV,
		\end{aligned}
	\end{align}
	where $C_i$ are defined by trace of tensor products of Pauli-$Z$'s like Eq.~\eqref{eq:U(1)phase-1} \& \eqref{eq:U(1)phase-2}.
	
	We use a similar strategy as in Theorem \ref{thm:CQAU(1)-design}. Let $S^{\mu_i}$ with $\mu_i = (n-i,i)$ and $i = 0,1,...,n$ denote all $\U(1)$ charge sectors and let
	\begin{align}
		(r_0,\ldots,r_n, s_0,\ldots,s_n)
	\end{align}
	denote the orders of tensor product of unitaries $V_i$ and $\bar{V}_i$ on the block $S^{\mu_i}$ expanded from $V^{\otimes k} \otimes \bar{V}^{\otimes k}$. Then $r_i, s_i$ are non-negative and $\sum_i r_i = \sum_i s_i = k$. By Lemma \ref{lemma:LR}, if
	\begin{align}\label{eq:LRCondition}
		r_i \equiv s_i \mod d_i = \dim S^{\mu_i} = \binom{n}{i} \text{ for all } i, \text{ and } r_{i_0} \neq s_{i_0} 
	\end{align}
	for at least one $i_0$, then the integral is nonzero over $S\mathcal{U}_{\U(1)}$ but still vanishes over $\mathcal{U}_{\U(1)}$. 
	
	Then we show that the phase integral involving $e^{-i \sum_j \gamma_j C_j}$ is also nonzero for some large $k$, which implies that $T_{k,\U(1)}^{\mathcal{V}_\gamma} \neq T_{k,\U(1)}^{\mathcal{U}_{\times}}$. By definition, multiplying with the orders $r_i, s_i$ of tensor products, these phases are combined as the following matrix product
	\begin{align}\label{eq:LinearSystem}
		\begin{pmatrix} r_0 - s_0, & r_1 - s_1, & \cdots, & r_n - s_n \end{pmatrix} \begin{pmatrix}
			C^{\mu_0}_0 & \cdots & C^{\mu_0}_\gamma \\ C^{\mu_1}_0 & \cdots & C^{\mu_1}_\gamma \\ \vdots & & \vdots \\ C^{\mu_n}_0 & \cdots & C^{\mu_n}_\gamma
		\end{pmatrix},
	\end{align} 
	where $C^{\mu_i}_j$ is the trace (relative phase) of $C_j$ when restricted to $S^{\mu_i}$ as we defined in Eq.~\eqref{eq:U(1)phase-examples}. If each combination coefficient is zero, taking the exponential only yields the identity matrix and the phase integral ends up being nonzero. This situation happens if we can find a {nontrivial integral solution} $\Delta k_i = r_i - s_i$ such that \eqref{eq:LinearSystem} equals $0$. As a reminder, the condition $\sum_i r_i - \sum_i s_i = 0$ above Eq.~\eqref{eq:LRCondition} in defining $(r_i,s_i)$ is already included as the first equation since $C_0 = I$. 
	
	As we discussed in the vicinity of Eq.~\eqref{eq:U(1)phase-2}, $C^{\mu_i}_j$ are rational numbers and if $\gamma = n$, the above matrix is orthonormal. Therefore, nontrivial solutions $(\Delta k_i)$ always exist when $\gamma < n$. Let
	\begin{align}
		r_i = \begin{cases} d_0 \cdots d_n \cdot \Delta k_i, & \Delta k_i > 0 \\ 0, & \Delta k_i \leq 0 \end{cases},   \quad  
		s_i = \begin{cases} 0, & \Delta k_i > 0 \\ -d_0 \cdots d_n \cdot \Delta k_i, & \Delta k_i \leq 0		\end{cases}.
	\end{align}
	This satisfies all the above conditions including \eqref{eq:LRCondition} with 
	\begin{align}
		d_0 \cdots d_n \leq k \leq \big( \sum_i \vert \Delta k_i \vert \big) d_0 \cdots d_n \text{ and } T_{k,\U(1)}^{\mathcal{V}_\gamma} \neq T_{k,\U(1)}^{\mathcal{U}_{\times}}.
	\end{align}
	
	Given a generic $\U(1)$-symmetric ensemble $\mathcal{E}$ of $\gamma$-local unitaries, it must be contained in $\mathcal{V}_{\gamma,\U(1)}$ by definition. According to our earlier proof, the locality of $\mathcal{V}_{\gamma,\U(1)}$, and hence of $\mathcal{E}$ must be $n$ in order to form a $\U(1)$-symmetric unitary $k$-design for arbitrarily large $k$.
\end{proof}

%----------------------------------------------------------------------------------------------------------

\newpage
\section{Spectral gap of unitary $2$-designs under $\SU(d)$ and $\U(1)$ symmetries}\label{sec:sketches}

We now introduce the proof strategy for evaluating the spectral gap of the second moment operator of CQA ensembles under both $\U(1)$ and $\SU(d)$ symmetry, which is used to bound their convergence time to a symmetric unitary $2$-design. Detailed proofs are postponed to Section \ref{sec:details}.

By Definition \ref{def:CQAEnsemble}, the second moment operators of interest are defined as
\begin{align}
	T^{\mathcal{E}_{\CQA}}_{2,\SU(d)} & = \frac{1}{\vert \mathcal{T} \vert} \sum_{\tau \in \mathcal{T}} T^{\YJM}_2 T^\tau_2 T^{\YJM}_2, \label{eq:TkCQASU(d)} \\
	T^{\mathcal{E}_{\CQA}}_{2,\mathrm{U(1)}} & = \frac{1}{\vert \mathcal{T} \vert} \sum_{\tau \in \mathcal{T}} T^{Z}_2 T^\tau_2 T^{Z}_2, \label{eq:TkCQAU(1)}
\end{align}
where $\tau$ are selected from a generating set $\mathcal{T} \subset S_n$ of transpositions/SWAPs. To study the spectral gap, more detailed expansions of $T^{\YJM}_2, T^{Z}_2 T^\tau_2$ are necessary and can be found in Section \ref{sec:detailsT2CQA}.

As discussed in Section \ref{sec:kDesigns}, decompositions of the Hilbert space under certain symmetries is always fraught with inequivalent irrep/charge sectors. The most straightforward way to evaluate the spectral gap of  $T^{\mathcal{E}_{\CQA}}_{2,\SU(d)},T^{\mathcal{E}_{\CQA}}_{2,\U(1)}$  would be to exhaust all possible tensor products of the sectors (with multiplicities):
\begin{align}
	(S^{\lambda_1} \otimes \mathrm{1}_{m_{\lambda_1}}) \otimes (S^{\lambda_2} \otimes \mathrm{1}_{m_{\lambda_2}}) \otimes (S^{\lambda_3} \otimes \mathrm{1}_{m_{\lambda_3}}) \otimes (S^{\lambda_4} \otimes \mathrm{1}_{m_{\lambda_4}})
\end{align}
for $2$-designs, expanding integrands of Eq.~\eqref{eq: tpe} for $T^{\YJM}_2, T^{Z}_2 T^\tau_2$ within these subspaces and evaluating the spectral gaps one by one. The procedure looks formidable, but fortunately we prove in Lemma \ref{lemma:CQAk} in Section \ref{sec:detailsT2CQA} that, under $\SU(d)$ symmetry, 
\begin{align}\label{eq:YJMPorjection}
	T^{\YJM}_2 E\indices{^a_b}E\indices{^c_d} = T^{\YJM}_2 \ket{a,c}\bra{b,d} = 0 
\end{align}
unless $a = c, b = d$ denoting basis vectors from $S_n$ irrep sectors up to multiplicities. More explicitly, $T^{\YJM}_2$ rules out all fuzzy cases, allowing us to focus on the expansion of $T^{\mathcal{E}_{\CQA}}_{2,\SU(d)}$ within  
\begin{align}
	& (S^{\lambda} \otimes \mathrm{1}_{m_{\lambda}}) \otimes (S^{\mu} \otimes \mathrm{1}_{m_{\mu}}) \otimes (S^{\lambda} \otimes \mathrm{1}_{m_{\lambda}}) \otimes (S^{\mu} \otimes \mathrm{1}_{m_{\mu}}), \tag{Type 1} \label{eq:Case1} \\
	& (S^{\lambda} \otimes \mathrm{1}_{m_{\lambda}}) \otimes (S^{\mu} \otimes \mathrm{1}_{m_{\mu}}) \otimes (S^{\mu} \otimes \mathrm{1}_{m_{\mu}}) \otimes (S^{\lambda} \otimes \mathrm{1}_{m_{\lambda}}), \tag{Type 2} \label{eq:Case2} \\
	& (S^{\lambda} \otimes \mathrm{1}_{m_{\lambda}}) \otimes (S^{\lambda} \otimes \mathrm{1}_{m_{\lambda}}) \otimes (S^{\lambda} \otimes \mathrm{1}_{m_{\lambda}}) \otimes (S^{\lambda} \otimes \mathrm{1}_{m_{\lambda}}). \tag{Type 3} \label{eq:Case3}
\end{align}
It is not difficult to deal with the first two cases, which we present in Section \ref{sec:detailsC1C2}. We now outline the method to tackle the last case. A similar method also works for the $\U(1)$ case, thus we will not explicitly differentiate these two symmetries when no confusion can arise. Notations like $T^{\mathcal{E}_{\CQA}}_{2,\SU(d)},T^{\mathcal{E}_{\CQA}}_{2,\mathrm{U(1)}}$ are all abbreviated as $T^{\mathcal{E}_{\CQA}}_2$ for simplicity. Sectors are all denoted by $S^\lambda, S^\mu$ and etc. A clarification of the differences in the proofs can be found in Section \ref{sec:detailsSU(d)}.

Taking one sector $S^\lambda$ from its equivalent copies, we just consider $T_2^{\mathcal{E}_{\CQA}} \vert_{(S^{\lambda})^{\otimes 4}}$ for the operator should have the same matrix representation on other copies of $S^\lambda$. As introduced in Example \ref{example:InvariantSubK=2}, we can decompose $(S^{\lambda})^{\otimes 4}$ into four subspaces invariant under the action of $T_2^{\mathcal{E}_{\CQA}}$ by projectors defined in Eq.~\eqref{eq: projector-invariant-subspace}. It is proved in Section \ref{sec:detailsT2CQA} that the moment operator $T_2^{\YJM}$ (or $T_2^Z$) further projects and filters basis elements with homogeneous indices in each subspace as listed below (cf. Eq.~\eqref{eq:ExampleBasis}):
\begin{align}
	& S_{\sbA} = \operatorname{span} \left\{ \frac{1}{2} \Big( E\indices{^a_a} E\indices{^b_b} + E\indices{^b_b} E\indices{^a_a}) +  (E\indices{^a_b} E\indices{^b_a} + E\indices{^b_a} E\indices{^a_b} \Big), E\indices{^a_a} E\indices{^a_a}; a < b  \right\} \subset \Ima\mathcal{P}_{\sbA} \subset (S^{\lambda})^{\otimes 4}, \label{eq:InvariantSubapce-states1} \\
	& S_{\sbD} = \operatorname{span} \left\{ \frac{1}{2} \Big( E\indices{^a_a} E\indices{^b_b} + E\indices{^b_b} E\indices{^a_a}) -  (E\indices{^a_b} E\indices{^b_a} + E\indices{^b_a} E\indices{^a_b} \Big); a < b \right\} \subset \Ima\mathcal{P}_{\sbD} \subset (S^{\lambda})^{\otimes 4}, \label{eq:InvariantSubapce-states2}
\end{align}
and
\begin{align}
	& S_{\sbB} = \operatorname{span} \left\{ \frac{1}{2} \Big( E\indices{^a_a} E\indices{^b_b} - E\indices{^b_b} E\indices{^a_a}) - (E\indices{^a_b} E\indices{^b_a} - E\indices{^b_a} E\indices{^a_b} \Big); a < b \right\} \subset \Ima\mathcal{P}_{\sbB} \subset (S^{\lambda})^{\otimes 4}, \label{eq:InvariantSubapce-states3} \\
	& S_{\sbC} = \operatorname{span} \left\{ \frac{1}{2} \Big( E\indices{^a_a} E\indices{^b_b} - E\indices{^b_b} E\indices{^a_a}) + (E\indices{^a_b} E\indices{^b_a} - E\indices{^b_a} E\indices{^a_b} \Big); a < b \right\} \subset \Ima\mathcal{P}_{\sbC} \subset (S^{\lambda})^{\otimes 4}. \label{eq:InvariantSubapce-states4}
\end{align}
Here $a,b,c,d...$ denote the Young basis elements of $S^\lambda$ with a prescribed ordering. % we do NOT use the \emph{Einstein summation convention} here
In other words, $T_2^{\mathcal{E}_{\CQA}} \vert_{(S^{\lambda})^{\otimes 4}}$ is nontrivially supported over these four smaller subspaces. We denote the restrictions by  
\begin{align}
	T_{2,\sbA}^{\mathcal{E}_{\CQA}} = T_{2}^{\mathcal{E}_{\CQA}} \vert_{S_{\sbA}}, \quad
	T_{2,\sbB}^{\mathcal{E}_{\CQA}} = T_{2}^{\mathcal{E}_{\CQA}} \vert_{S_{\sbB}}, \quad
	T_{2,\sbC}^{\mathcal{E}_{\CQA}} = T_{2}^{\mathcal{E}_{\CQA}} \vert_{S_{\sbC}}, \quad
	T_{2,\sbD}^{\mathcal{E}_{\CQA}} = T_{2}^{\mathcal{E}_{\CQA}} \vert_{S_{\sbD}}.
\end{align}
This setup facilitates the application of most techniques performed in Section \ref{sec:details}.

%------------------------------------------------------------------------------------

\begin{definition}\label{def:Modified}
	In what follows, we mainly work with a \emph{modified} version of the CQA moment operator
	\begin{align}
		& M^{\mathcal{E}_{\CQA}}_{2,\SU(d)} \vcentcolon = \frac{1}{\vert \mathcal{T} \vert} \sum_{\tau \in \mathcal{T}} T^{\YJM}_2 M^{\tau}_2 T^{\YJM}_2, \\
		& M^{\mathcal{E}_{\CQA}}_{2,\U(1)} \vcentcolon = \frac{1}{\vert \mathcal{T} \vert} \sum_{\tau \in \mathcal{T}} T^{Z}_2 M^{\tau}_2 T^{Z}_2,
	\end{align}
	where the \emph{modified moment operators} $M^{\tau}_2$ of SWAPs are defined as
	\begin{align}
		M^{\tau}_2 = \frac{1}{8}(6IIII + I \tau I \tau + \tau I \tau I + \tau I I \tau + I \tau \tau I - II \tau \tau - \tau \tau II)
	\end{align}
	Here the tensor product notation, like $I \otimes I \otimes I \otimes I$, is omitted for brevity (cf. the expansion of $T_2^\tau$ in Eq.~\eqref{eq: Ttau-k2-integral}). 
\end{definition}

We replace the subscripts of $\SU(d)$ and $\U(1)$ by ``$\times$'' this section for conciseness.

\begin{lemma}\label{lemma:Modified}
	The modified second moment operator $M_2^\tau$ is a square root of $T_2^\tau$. That is $(M_2^\tau)^2 = T_2^\tau$. Moreover, the following properties hold:
	\begin{enumerate}
		\item Both $M^\tau_2$ and $ T^\tau_2$ share the same unit eigenspace. Let $\mathcal{T}$ be any generating set of SWAPs defined in Section \ref{sec:Aldous}, both $M_{2,\times}^{\mathcal{E}_{\CQA}}$ and $T_{2,\times}^{\mathcal{E}_{\CQA}}$ share the same unit eigenspace.  
		
		\item $M_{2,\times}^{\mathcal{E}_{\CQA}} \geq T_{2,\times}^{\mathcal{E}_{\CQA}}$ and $2 (I - M_{2,\times}^{\mathcal{E}_{\CQA}}) \geq I - T_{2,\times}^{\mathcal{E}_{\CQA}}$, hence
		\begin{align}
			2\Delta(M_{2,\times}^{\mathcal{E}_{\CQA}} ) \geq \Delta( T_{2,\times}^{\mathcal{E}_{\CQA}} ) \geq \Delta(M_{2,\times}^{\mathcal{E}_{\CQA}} ).
		\end{align}
		
		\item Within any fixed $(S^{\lambda})^{\otimes 4}$, the operator is still invariant under $S_{\sbA},S_{\sbB},S_{\sbC}$ and $S_{\sbD}$.
	\end{enumerate}
\end{lemma}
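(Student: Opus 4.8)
The plan is to establish the square-root identity $(M_2^\tau)^2 = T_2^\tau$ by a direct algebraic computation in the group algebra of the relevant permutation group, and then to read off the three listed properties as essentially formal consequences. First I would recall from \eqref{eq: Ttau-k2-integral} (referenced in Section \ref{sec:detailsT2CQA}) that $T_2^\tau = \mathbb{E}_\theta[e^{i\theta\tau}^{\otimes 2}\otimes e^{-i\theta\tau}^{\otimes 2}]$ expands, after performing the $\theta$-average, into a fixed linear combination of the sixteen operators $\tau^{\epsilon_1}\otimes\tau^{\epsilon_2}\otimes\tau^{\epsilon_3}\otimes\tau^{\epsilon_4}$ with $\epsilon_i\in\{0,1\}$; since $\tau^2=I$, these operators close under multiplication and form (a quotient of) the group algebra $\mathbb{C}[\mathbb{Z}_2^{\times 4}]$ acting on $\mathcal{H}^{\otimes 4}$. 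In this commutative algebra both $M_2^\tau$ and $T_2^\tau$ are diagonalized by the characters $\chi_{\vec s}$, $\vec s\in\{\pm1\}^4$, evaluating $\tau$ on the $\pm1$ eigenspaces. The computation then reduces to checking, character by character, that $\chi_{\vec s}(M_2^\tau)^2 = \chi_{\vec s}(T_2^\tau)$. Plugging in: $\chi_{\vec s}(M_2^\tau) = \tfrac18(6 + s_1 s_3 + s_2 s_4 + s_1 s_4 + s_2 s_3 - s_3 s_4 - s_1 s_2)$ while $\chi_{\vec s}(T_2^\tau)$ is the analogous average of $e^{i\theta(s_1+s_2-s_3-s_4)}$ over $\theta$, which equals $1$ when $s_1+s_2 = s_3+s_4$ and has the appropriate smaller value otherwise. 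One verifies these match on each of the (few) orbits of sign patterns; this is the routine part I would not grind through in the paper but which must be checked.

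Next, property 1: since $M_2^\tau$ is PSD (it is a square of a Hermitian operator, or one checks all character values are in $[0,1]$), $M_2^\tau \psi = \psi \iff (M_2^\tau)^2\psi = \psi \iff T_2^\tau\psi = \psi$, giving the equality of unit eigenspaces for each single $\tau$. For the averaged operators $M_{2,\times}^{\mathcal{E}_{\CQA}}$ and $T_{2,\times}^{\mathcal{E}_{\CQA}}$, I would use that a vector is fixed by a convex combination of PSD contractions iff it is fixed by every summand — combined with the fact, already established in Ref.~\cite{SUd-k-Design2023}, that the Wick projections $\mathcal{P}^{\Wick}_{2,\times}$ conjugating each $T_2^\tau$ do not enlarge the fixed space — so that the common unit eigenspace of $\{M_2^\tau\}_{\tau\in\mathcal{T}}$ (equivalently $\{T_2^\tau\}_{\tau\in\mathcal{T}}$) together with the Wick constraint is the same for both. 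Property 3 (invariance of the $S_{\sbA},\ldots,S_{\sbD}$ blocks) follows because $M_2^\tau$, like $T_2^\tau$, commutes with the $S_2$-action that defines the projectors $\mathcal{P}_{\sbA},\ldots,\mathcal{P}_{\sbD}$ in \eqref{eq: projector-invariant-subspace}: each of the seven terms of $M_2^\tau$ is built from $\tau^{\otimes 2}$-type operators which commute with the swap of the two tensor-copy pairs, so the whole block-decomposition from Example \ref{example:InvariantSubK=2} is respected, and conjugating by $\mathcal{P}^{\Wick}_{2,\times}$ (which also respects this decomposition, as noted after \eqref{eq: wick-contraction-superoperator-SUd}) preserves it.

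Property 2 is where a little more care is needed. From $(M_2^\tau)^2 = T_2^\tau$ and $0\le M_2^\tau\le I$ (all character values lie in $[0,1]$, to be checked alongside the square-root identity), one gets $M_2^\tau \ge (M_2^\tau)^2 = T_2^\tau$ operatorwise, and averaging over $\tau\in\mathcal T$ and conjugating by the (completely positive, hence order-preserving) Wick projection gives $M_{2,\times}^{\mathcal{E}_{\CQA}}\ge T_{2,\times}^{\mathcal{E}_{\CQA}}$. For the reverse-type bound, the elementary scalar inequality $2(1-x)\ge 1-x^2$ for $x\in[0,1]$ applied spectrally to $M_2^\tau$ yields $2(I-M_2^\tau)\ge I-T_2^\tau$; again average and conjugate by $\mathcal{P}^{\Wick}_{2,\times}$. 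The two operator inequalities, restricted to the orthogonal complement of the common unit eigenspace, translate directly into $2\Delta(M_{2,\times}^{\mathcal{E}_{\CQA}})\ge\Delta(T_{2,\times}^{\mathcal{E}_{\CQA}})\ge\Delta(M_{2,\times}^{\mathcal{E}_{\CQA}})$ since the spectral gap is $1$ minus the top eigenvalue on that complement. The main obstacle, such as it is, is purely bookkeeping: confirming that conjugation by $\mathcal{P}^{\Wick}_{2,\times}$ genuinely preserves the operator ordering and does not create spurious unit eigenvectors — this is where I would lean on the commutation relation $\mathcal{P}_{\ytableaushort{{} {}, {} {}}}\circ\mathcal{P}^{\Wick}_2 = \mathcal{P}^{\Wick}_2\circ\mathcal{P}_{\ytableaushort{{} {}, {} {}}}$ and the complete positivity of $\mathcal{P}^{\Wick}_{2,\times}$ already recorded in the excerpt.
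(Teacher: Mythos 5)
Your plan follows the paper's proof in outline (a direct algebraic computation of $M_2^\tau$ versus $T_2^\tau$, a spectral argument for items 1--2, and a commutation argument for item 3), but the step you set aside as routine is precisely where the argument breaks, and this is a genuine gap. Carrying out your own character computation on the sign pattern $(s_1,s_2,s_3,s_4)=(+,+,+,-)$ gives $\chi(M_2^\tau)=\frac18(6-1+1-1+1+1-1)=\frac34$, while $\chi(T_2^\tau)=0$ (the $\theta$-average of $e^{\pm 2i\theta}$), so $\chi(M)^2\neq\chi(T)$ there. In the group algebra one finds exactly $(M_2^\tau)^2-T_2^\tau=\frac{9}{32}\,(I-\tau^{\otimes 4})$ and $M_2^\tau-T_2^\tau=\frac{3}{8}\,(I-\tau^{\otimes 4})$, so the square-root identity holds only on the $+1$ eigenspace of $\tau^{\otimes 4}$, not on all of $\mathcal{H}^{\otimes 4}$. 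A structural way to see that it cannot hold globally: $T_2^\tau$ is the Haar average over the circle group generated by $\tau$, hence an orthogonal projection (your ``appropriate smaller value'' is exactly $0$), and a positive semidefinite operator whose square is a projection must equal that projection, whereas $M_2^\tau\neq T_2^\tau$. Consequently the derivations of items 1 and 2 that route through $M^2=T$ do not go through; in particular $2(I-M_2^\tau)\ge I-T_2^\tau$ is false on the odd sectors ($2\cdot\frac14=\frac12<1$), so it cannot simply be averaged over $\mathcal{T}$ and conjugated by the Wick/YJM projection.

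What your character table does deliver, once computed honestly, is the part that the downstream argument actually needs: $M_2^\tau\ge T_2^\tau\ge 0$, both are contractions, and their unit eigenspaces coincide (character $1$ exactly when all four $s_i$ are equal or when $s_1=-s_2$ and $s_3=-s_4$). Combined with your ``fixed by a convex combination of contractions iff fixed by each summand'' argument and conjugation by the projection $T_2^{\YJM}$ (or $T_2^Z$), this gives item 1 and the one-sided bound $\Delta(T_{2,\times}^{\mathcal{E}_{\CQA}})\ge\Delta(M_{2,\times}^{\mathcal{E}_{\CQA}})$, which is the direction used to transfer the Markov-chain lower bounds; but the reverse factor-$2$ inequality in item 2 is not established by your chain and would have to be justified on the subspace where $\tau^{\otimes 4}$ acts as $+I$, or argued separately. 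Finally, your justification of item 3 is wrong at the term level: individual summands such as $I\tau I\tau$ do not commute with the copy swaps (conjugating by the swap of the first two tensor factors exchanges $I\tau I\tau\leftrightarrow\tau II\tau$ and $\tau I\tau I\leftrightarrow I\tau\tau I$); only the full symmetrized sums $M_2^\tau$ and $T_2^\tau$ commute with left and right multiplication by $\hat{\sigma}$, and it is this symmetry of the sums, together with the analogous property of $T_2^{\YJM}$ and $T_2^Z$, that makes $\Ima\mathcal{P}_{\sbA},\dots,\Ima\mathcal{P}_{\sbD}$ invariant.
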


Based on this lemma, we turn to evaluate the spectral gap of $M^{\mathcal{E}_{\CQA}}_2$ and we can analyze situations separately in different invariant subspaces defined in Eq.~\eqref{eq:InvariantSubapce-states1} to \eqref{eq:InvariantSubapce-states4}. The conveniences will be revealed in Section \ref{sec:details}. A key observation is that within the first and last subspaces, the action of $M_2^{\mathcal{E}_{\CQA}}$ furnishes two stochastic matrices:

\begin{lemma}\label{lemma:CQA-stochastic-matrix}
	Within any fixed $(S^{\lambda})^{\otimes 4}$, the matrices representing the modified CQA second-moment operator $M_{2,\times}^{\mathcal{E}_{\CQA}}$, under bases given in Eq.~\eqref{eq:InvariantSubapce-states1} \& \eqref{eq:InvariantSubapce-states2}, are irreducible, doubly stochastic and symmetric within $S_{\sbA}$ and $S_{\sbD}$ respectively.
\end{lemma}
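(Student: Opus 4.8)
The plan is to make the two bases fully explicit and then read everything off a single formula. Writing $|\mathrm{sym}_{ab}\rangle$ ($a\le b$) and $|\mathrm{asym}_{ab}\rangle$ ($a<b$) for the symmetrized and antisymmetrized Young basis vectors of $(S^{\lambda})^{\otimes 2}$ (the computational basis of the charge sector in the $\U(1)$ case), one checks directly that the spanning elements of $S_{\sbA}$ in \eqref{eq:InvariantSubapce-states1} are exactly the rank-one projectors $v_{ab}:=|\mathrm{sym}_{ab}\rangle\langle\mathrm{sym}_{ab}|$, and those of $S_{\sbD}$ in \eqref{eq:InvariantSubapce-states2} are $s_{ab}:=|\mathrm{asym}_{ab}\rangle\langle\mathrm{asym}_{ab}|$, so both are orthonormal (in the Hilbert--Schmidt inner product) sets of Hermitian operators. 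Since $S_{\sbA},S_{\sbD}\subset\Ima T^{\YJM}_2$ (resp.\ $\Ima T^{Z}_2$) by construction, and the Wick projection is Hilbert--Schmidt self-adjoint (it is a composition of commuting conjugation twirls over tori generated by the mutually commuting $X_rX_s$, resp.\ $Z_rZ_s$), the matrix entries collapse to $\langle v_{cd},M^{\mathcal{E}_{\CQA}}_{2,\times}(v_{ab})\rangle=\tfrac{1}{|\mathcal{T}|}\sum_{\tau\in\mathcal{T}}\langle v_{cd},M^{\tau}_2(v_{ab})\rangle$, and likewise on $S_{\sbD}$; it therefore suffices to understand a single $M^{\tau}_2$ acting on pure states of $\mathrm{Sym}^2(S^{\lambda})$ and of $\wedge^2(S^{\lambda})$.

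With the bra/ket bookkeeping fixed, $M^{\tau}_2(X)=\tfrac18\big(6X+aXa+bXb+aXb+bXa-abX-Xab\big)$ for $X\in\operatorname{End}((S^{\lambda})^{\otimes 2})$, where $a=\tau\otimes I$ and $b=I\otimes\tau$ are self-adjoint unitaries with $ab=ba=\tau^{\otimes 2}$. On $\mathrm{Sym}^2$ the flip $\hat{\sigma}$ acts as $+I$, so $\langle\mathrm{sym}_{cd}|a|\mathrm{sym}_{ab}\rangle=\langle\mathrm{sym}_{cd}|b|\mathrm{sym}_{ab}\rangle=:t^{(\tau)}_{cd,ab}$ is a real symmetric matrix, and substituting gives
\begin{align}
\langle v_{cd},M^{\tau}_2(v_{ab})\rangle=\tfrac18\Big(6\,\delta_{(cd),(ab)}+4\big(t^{(\tau)}_{cd,ab}\big)^{2}-2\,\delta_{(cd),(ab)}\langle\mathrm{sym}_{ab}|\tau^{\otimes 2}|\mathrm{sym}_{ab}\rangle\Big).
\end{align}
Symmetry of the restricted operator is then immediate; the off-diagonal entries are $\tfrac12\big(t^{(\tau)}_{cd,ab}\big)^{2}\ge 0$ and the diagonal entries are $\ge\tfrac18(6-2)=\tfrac12>0$ (using $|\langle\mathrm{sym}_{ab}|\tau^{\otimes 2}|\mathrm{sym}_{ab}\rangle|\le 1$), so all entries are non-negative; and each row sums to $1$ because $\sum_{(cd)}\big(t^{(\tau)}_{cd,ab}\big)^{2}=\langle\mathrm{sym}_{ab}|a\,\Pi_{+}\,a|\mathrm{sym}_{ab}\rangle=\tfrac12+\tfrac12\langle\mathrm{sym}_{ab}|\tau^{\otimes 2}|\mathrm{sym}_{ab}\rangle$, using $\Pi_{+}=\tfrac12(I+\hat{\sigma})$ and the identity $a\hat{\sigma}a=\tau^{\otimes 2}\hat{\sigma}$. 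Column sums equal $1$ by symmetry, and $M^{\mathcal{E}_{\CQA}}_{2,\times}$ maps $S_{\sbA}$ into itself by Lemma~\ref{lemma:Modified}(3), so its restriction to $S_{\sbA}$ is symmetric and doubly stochastic. The identical computation works for $S_{\sbD}$ with $\hat{\sigma}=-I$ on $\wedge^2$: the sign of $\hat{\sigma}$ enters an even number of times and the minus signs cancel, yielding the same formula with $\mathrm{sym}\mapsto\mathrm{asym}$ and hence the same conclusions.

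The remaining point is irreducibility, i.e.\ connectedness of the graph on the index set with an edge $(ab)\sim(cd)$ whenever $t^{(\tau)}_{cd,ab}\ne 0$ for some $\tau\in\mathcal{T}$. Let $\mathcal{G}_{\lambda}$ be the graph on the basis of $S^{\lambda}$ with $|a\rangle\sim|c\rangle$ iff $\langle c|\tau|a\rangle\ne 0$ for some $\tau\in\mathcal{T}$. For $\SU(d)$, $\mathcal{G}_{\lambda}$ is connected because the span of any connected component is $\mathcal{T}$-invariant, hence $S_{n}$-invariant (as $\mathcal{T}$ generates $S_{n}$), hence all of $S^{\lambda}$ by irreducibility; for $\U(1)$, $\mathcal{T}$ permutes the computational basis of the charge sector transitively among strings of fixed Hamming weight, so $\mathcal{G}_{\lambda}$ is connected there as well. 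A short matrix-element computation then gives $t^{(\tau)}_{(a^{*}b),(ab)}=\tfrac12\langle a^{*}|\tau|a\rangle\ne 0$ whenever $a\ne b$, $a^{*}\ne b$, and $(a,a^{*})$ is an edge of $\mathcal{G}_{\lambda}$ realized by $\tau$ (and the analogue on $\wedge^2$), so in the index graph one may move one coordinate at a time along any edge of $\mathcal{G}_{\lambda}$; connectedness of the index graph is thus the connectedness of the two-element ``token'' graph of the connected graph $\mathcal{G}_{\lambda}$ (elements allowed to coincide for $S_{\sbA}$, forced distinct for $S_{\sbD}$), which is standard, with the low-dimensional exceptions ($\dim S^{\lambda}\le 2$, where the matrix is $[1]$ or empty) and coordinate-collision routing handled directly. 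I expect this last combinatorial step, together with pinning down the bra/ket conventions in $M^{\tau}_2$ so that the three minus signs land exactly where the row-sum and positivity computations need them, to be the only places requiring genuine care; the rest is bookkeeping on the displayed formula above.
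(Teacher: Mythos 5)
Your proposal is correct, and it reaches the lemma by a more unified route than the paper. The paper's own proof (Claims \ref{claim:U(1)-1}, \ref{claim:U(1)-2}, \ref{claim:SU(d)-1}, \ref{claim:SU(d)-2}) fixes one $\tau$ at a time and enumerates the small invariant blocks ($1\times1$ up to $4\times 4$), writing the stochastic matrices explicitly—once for the computational basis under $\U(1)$ and once again via Young orthogonal forms under $\SU(d)$—and then gets irreducibility by chaining transpositions on the indices. You instead note that the basis elements of $S_{\sbA}$ and $S_{\sbD}$ are precisely the Hilbert--Schmidt-orthonormal rank-one projectors onto the symmetrized/antisymmetrized two-particle vectors, use self-adjointness of the Wick projection together with Lemma \ref{lemma:Modified}(3) to reduce the restricted matrix entries to $\frac{1}{|\mathcal{T}|}\sum_\tau\langle v_{cd},M_2^\tau(v_{ab})\rangle$, and evaluate these in one closed formula, $\frac18\bigl(6\delta_{(ab),(cd)}+4\bigl(t^{(\tau)}_{cd,ab}\bigr)^2-2\delta_{(ab),(cd)}\langle\tau^{\otimes2}\rangle\bigr)$, valid simultaneously for both symmetries and all index patterns; non-negativity, symmetry and the row sums then follow from the identity $a\Pi_\pm a=\frac12\bigl(I\pm\tau^{\otimes2}\hat{\sigma}\bigr)$ rather than from case inspection, and the sign bookkeeping you invoke on the antisymmetric side does cancel as claimed. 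I verified that your formula reproduces the paper's explicit blocks (e.g.\ the $\U(1)$ $3\times3$ block and the $\SU(d)$ entries $6+\tfrac{2}{r^2}$ and $2(1-\tfrac{1}{r^2})$), so the two computations agree. Your irreducibility argument is also sound and, for $\SU(d)$, arguably cleaner than the paper's: connectivity of the single-particle graph $\mathcal{G}_\lambda$ follows from irreducibility of $S^\lambda$ (a connected component spans a $\mathcal{T}$-invariant, hence $S_n$-invariant, subspace), the amplitude $t^{(\tau)}_{(a^*b),(ab)}=\frac12\langle a^*|\tau|a\rangle$ shows the index graph contains the $2$-token graph of $\mathcal{G}_\lambda$, and the collision/low-dimension cases you defer are genuinely routine (e.g.\ $v_{aa}\to v_{ab}$ has amplitude $\tfrac{1}{\sqrt2}\langle b|\tau|a\rangle$), at the same level of detail the paper itself leaves implicit. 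The one thing the paper's enumeration buys that your streamlined derivation does not hand over for free is the explicit per-$\tau$ block entries, which are reused verbatim in the later Dirichlet-form and congestion-ratio estimates (Claims \ref{claim:U(1)-3} and \ref{claim:SU(d)-3}); if your proof replaced the paper's, those later steps would extract the needed values from your formula instead, which it does supply on demand.
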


Based on this fact, we call basis elements from Eq.~\eqref{eq:InvariantSubapce-states1} \& \eqref{eq:InvariantSubapce-states2} \emph{basis states}. A casual change of basis may not guarantee that the resultant matrix representations are still stochastic, so we shall focus on these two bases. There are other basis states defined in Section \ref{sec:details} for technical purpose.

It is obvious to see by Schur--Weyl duality that $T_{2,\times}^{\mathcal{E}_{\CQA}}$, as well as the modified operator $M_{2,\times}^{\mathcal{E}_{\CQA}}$, admits 2 unit eigenvalues within $(S^{\lambda})^{\otimes 4}$. Each of which actually corresponds to the stationary distribution when we further restrict to $S_{\sbA}$ and $S_{\sbD}$. As a result, our proof procedure includes bounding the second largest eigenvalues from $S_{\sbA}$ and $S_{\sbD}$, and then the largest eigenvalues from $S_{\sbB}$ and $S_{\sbC}$. One crucial step to begin with is considering the Cayley moment operators defined as follows:

\begin{definition}
	The \emph{Cayley moment operators} under $\SU(d)$ and $\U(1)$ symmetries are: 
	\begin{align}\label{eq:Cayley-moment-operator}
		& \Cay_{2,\SU(d)} = \frac{1}{\vert \mathcal{T} \vert} \sum_{\tau \in \mathcal{T}} T_2^{\YJM} \frac{1}{8}( 6IIII + \tau I \tau I + I \tau I \tau) T_2^{\YJM}, \\
		& \Cay_{2,\U(1)} = \frac{1}{\vert \mathcal{T} \vert} \sum_{\tau \in \mathcal{T}} T_2^Z \frac{1}{8}( 6IIII + \tau I \tau I + I \tau I \tau) T_2^Z.
	\end{align}
	We also omit the tensor product notation (cf. the expansion of $T_2^\tau$ in Eq.~\eqref{eq: Ttau-k2-integral}). 
\end{definition}

In either the $\SU(d)$ or $\U(1)$ case, within any fixed $(S^{\lambda})^{\otimes 4}$, the Cayley moment operator admits a \emph{unique} unit eigenvector and its spectral gap can be inferred from that of the Cayley graph whose generators are given by $\mathcal{T}$ (see Theorem \ref{thm:CayleyGap} and discussions in Section \ref{sec:Aldous}). We note that $\Cay_{2, \times}$ can be transformed into a block-diagonal form with only one nontrivial invariant block, denoted by $\Cay_{2, \times, \Sym}$, containing its unit eigenvector. Significantly, $\Cay_{2, \times, \Sym}$ furnishes a reversible irreducible Markov transition operator \cite{Diaconis1988} whose state space is isomorphic to $\Ima \mathcal{P}_{\sbA}$. With all these preparation, we now apply the techniques from Section \ref{sec:Markov} to bound the eigenvalues. 

\begin{lemma}\label{lemma:CayletBound1}
	With respect to the system size $n$ and for some fixed $(S^{\lambda})^{\otimes 4}$, the spectral gap of the modified moment operator restricted to $S_{\sbA}$ has the same scaling with that of the Cayley moment operator:
	\begin{align}
		\Delta(M^{\mathcal{E}_{\CQA}}_{2, \times, \sbA}) = \Theta(\Delta(\Cay_{2,\times})).
	\end{align}
    \begin{proofsketch}
        By direct computations, we can show that $M^{\mathcal{E}_{\CQA}}_{2, \times, \sbA}$ is a doubly stochastic irreducible Markov transition operator. We denote Dirichlet forms of $M^{\mathcal{E}_{\CQA}}_{2, \times, \sbA}$ and $\Cay_{2, \times, \Sym}$ by $\mathcal{E}_{\sbA}(\cdot)$ and $\mathcal{E}_{\Cay}(\cdot)$, respectively. Then we prove in either $\U(1)$ or $\SU(d)$ cases, for $f \in \Ima \mathcal{P}_{\sbA}$:
        \begin{align}
        	\frac{1}{4}\mathcal{E}_{\sbA}(f) \leq \mathcal{E}_{\operatorname{Cay}}(f) \leq  \frac{d_\lambda + 1}{2d_\lambda} \mathcal{E}_{\sbA}(f) 
        \end{align}
        where $d_\lambda = \dim S^\lambda$. Comparisons between Dirichlet forms yields comparisons between spectral gaps \cite{chung1997spectral,Levin2009}, which allows us to conclude that $\Delta(M^{\mathcal{E}_{\CQA}}_{2, \times} {\sbA}) = \Theta(\Delta(\Cay_{2, \times}))$. 
    \end{proofsketch}
\end{lemma}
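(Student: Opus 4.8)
The plan is to realize both $M^{\mathcal{E}_{\CQA}}_{2,\times,\sbA}$ and the distinguished block $\Cay_{2,\times,\Sym}$ of the Cayley moment operator as reversible Markov chains on one and the same state space, and then squeeze their spectral gaps between one another by a two-sided Dirichlet-form comparison (Theorem~\ref{thm:comparison-dirichlet}). First I would fix the Markov structures. By Lemma~\ref{lemma:CQA-stochastic-matrix}, in the basis $\{v_{ab}:a\le b\}$ of $S_{\sbA}$ from Eq.~\eqref{eq:InvariantSubapce-states1} the matrix of $M^{\mathcal{E}_{\CQA}}_{2,\times,\sbA}$ (Definition~\ref{def:Modified}) is symmetric and doubly stochastic; hence it is a reversible irreducible chain on the set $\mathcal{X}$ of these labels, with uniform stationary measure $\pi_{\sbA}$. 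In parallel, $\Cay_{2,\times,\Sym}$---the unique nontrivial block of $\Cay_{2,\times}|_{(S^{\lambda})^{\otimes 4}}$ carrying the unit eigenvector, which exists and is unique by Claim~\ref{claim:Cayley-1}---is also a reversible irreducible transition operator on (an index set identified with) $\mathcal{X}$: reversibility holds because $\tfrac18(6\,IIII+\tau I\tau I+I\tau I\tau)$ is self-adjoint, $\tau=\tau^{-1}$, and the sandwiching projection $T_2^{\YJM}$ (resp.\ $T_2^{Z}$) is self-adjoint. I would also verify that its stationary measure $\pi_{\Cay}$ departs from uniform only through the relative weight of ``diagonal'' labels $\{a,a\}$ versus ``off-diagonal'' labels $\{a,b\}$, so that $\max_x\pi_{\sbA}(x)/\pi_{\Cay}(x)$ and $\max_x\pi_{\Cay}(x)/\pi_{\sbA}(x)$ are bounded by an absolute constant, independent of $n$ and of $d_\lambda$.

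\noindent The core step is to obtain the off-diagonal matrix entries of both operators and compare them. I would expand the actions of $M_2^\tau$ and of $\tfrac18(6\,IIII+\tau I\tau I+I\tau I\tau)$ on the basis states $v_{ab}$ using the Young orthogonal form~\eqref{eq:YoungOrthogonal} for the action of an adjacent transposition on the Young basis, together with the fact (Lemma~\ref{lemma:CQAk}, cf.~Eq.~\eqref{eq:YJMPorjection}) that $T_2^{\YJM}$ annihilates every term whose ket and bra index multisets disagree. One finds that the negative pieces $-\,II\tau\tau-\tau\tau II$ of $M_2^\tau$ survive the sandwiching only as contributions to self-loops, while the remaining pieces give the two chains the \emph{same} edge set---two labels being joined precisely when one Young-basis index is carried to the other by some $\tau\in\mathcal{T}$---with transition probabilities along each edge differing by at most the constants below. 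Feeding $\mathcal{E}(P,f)=\tfrac12\sum_{x,y}(f(x)-f(y))^2\pi(x)P(x,y)$ then yields, for every $f\in\Ima\mathcal{P}_{\sbA}$,
\begin{align}
\tfrac14\,\mathcal{E}_{\sbA}(f)\ \le\ \mathcal{E}_{\operatorname{Cay}}(f)\ \le\ \tfrac{d_\lambda+1}{2d_\lambda}\,\mathcal{E}_{\sbA}(f),
\end{align}
the factor $\tfrac{d_\lambda+1}{2d_\lambda}\in[\tfrac12,1]$ arising only from the self-loop normalization and thus immaterial for the scaling.

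\noindent Then I would convert these Dirichlet inequalities into gap inequalities via Theorem~\ref{thm:comparison-dirichlet}: applied to the left inequality (with $P_1=M^{\mathcal{E}_{\CQA}}_{2,\times,\sbA}$, $P_2=\Cay_{2,\times,\Sym}$, $\alpha=4$) it gives $\Delta(M^{\mathcal{E}_{\CQA}}_{2,\times,\sbA})\le 4\max_x\tfrac{\pi_{\sbA}(x)}{\pi_{\Cay}(x)}\,\Delta(\Cay_{2,\times,\Sym})$, and applied to the right inequality with the roles of $P_1,P_2$ reversed it gives $\Delta(\Cay_{2,\times,\Sym})\le\tfrac{d_\lambda+1}{2d_\lambda}\max_x\tfrac{\pi_{\Cay}(x)}{\pi_{\sbA}(x)}\,\Delta(M^{\mathcal{E}_{\CQA}}_{2,\times,\sbA})$. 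With the bounded measure ratios from the first step these combine to $\Delta(M^{\mathcal{E}_{\CQA}}_{2,\times,\sbA})=\Theta(\Delta(\Cay_{2,\times,\Sym}))$. Finally, since $\Cay_{2,\times,\Sym}$ is the block of $\Cay_{2,\times}$ that carries the spectral gap---the remaining blocks contribute only strictly smaller, comparably scaled eigenvalues, and $\Delta(\Cay_{2,\times})=\Theta(\Delta(\mathcal{G}(\mathcal{T})))$ by Cauchy interlacing against the $S_n$ Cayley graph (Theorem~\ref{thm:CayleyGap})---we have $\Delta(\Cay_{2,\times,\Sym})=\Theta(\Delta(\Cay_{2,\times}))$, and the claim $\Delta(M^{\mathcal{E}_{\CQA}}_{2,\times,\sbA})=\Theta(\Delta(\Cay_{2,\times}))$ follows.

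\noindent\textbf{Main obstacle.} The delicate part is the explicit computation and the clean comparison: one must track all the axial-distance factors $1/r$ and $\sqrt{1-1/r^2}$ produced by~\eqref{eq:YoungOrthogonal}, square and recombine them under $T_2^{\YJM}$, and verify that after restriction to $S_{\sbA}$ in the $\{v_{ab}\}$ basis the chain stays irreducible, the negative terms of $M_2^\tau$ never create negative or cancelling off-diagonal weight, and the edge-weight ratios are genuinely bounded by $4$ and $\tfrac{d_\lambda+1}{2d_\lambda}$ rather than by a quantity growing with $n$. A secondary point requiring care is pinning down $\pi_{\Cay}$ precisely enough to bound the measure ratios, and confirming---again through the Aldous/Cauchy-interlacing comparison---that the second-largest eigenvalue of $\Cay_{2,\times}$ really is attained in the $\Sym$ block.
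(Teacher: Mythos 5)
Your plan is correct and follows essentially the same route as the paper: identify the label sets of $S_{\sbA}$ and $\Sym$, establish the stochastic/reversible Markov structure of $M^{\mathcal{E}_{\CQA}}_{2,\times,\sbA}$ and $\Cay_{2,\times,\Sym}$ by explicit per-$\tau$ computations (computational/Young-orthogonal-form entries filtered by the Wick projection), compare Dirichlet forms in both directions with the bounded ratio between the uniform $\pi_{\sbA}$ and the near-uniform $\pi_{\Cay}$, and then pass from the $\Sym$ block to the full $\Cay_{2,\times}$ and the Cayley graph. The only ingredient worth spelling out beyond your sketch is the paper's Claim~\ref{claim:Cayley-2}: under $\U(1)$ the unprojected operator $\frac{1}{|\mathcal{T}|}\sum_\tau\frac18(6IIII+\tau I\tau I+I\tau I\tau)$ has multiple unit eigenvectors in a charge sector, and one must check that the $T_2^{Z}$ (resp.\ $T_2^{\YJM}$) sandwich removes all but one so that $\lambda_2(\Cay_{2,\times})$ is genuinely controlled by $\lambda_2(\mathcal{G}(\mathcal{T}))$ — precisely the point you flag as needing care.
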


\begin{lemma}\label{lemma:CayletBound2}
	 For some fixed $(S^{\lambda})^{\otimes 4}$, the largest eigenvalue of $M^{\mathcal{E}_{\CQA}}_2$ restricted to $S_{\sbB} \oplus S_{\sbC}$ is no larger than the second largest eigenvalue of the Cayley moment operator:
	\begin{align}\label{eq:bound-vanishing-subspaces}
		\lambda_1(M^{\mathcal{E}_{\CQA}}_{2,\times, \ytableaushort{{} {*(black)}, {*(black)} {}}}) \leq \lambda_2(\Cay_{2,\times}).
	\end{align}
    \begin{proofsketch}
        We first consider $M^{\mathcal{E}_{\CQA}}_{2, \times}$ restricted to the off-diagonal invariant subspaces $\Ima \mathcal{P}_{\ytableaushort{{} {*(black)}, {*(black)} {}}}$. In Section \ref{sec:detailsU(1)} and \ref{sec:detailsSU(d)} we show that unit eigenvectors do not appear here, so we check its largest eigenvalue. It is easy to show that the unique unit eigenvector of $\Cay_{2, \times}$ within $S^{\lambda \otimes 4}$ is orthogonal to $\Ima \mathcal{P} {\ytableaushort{{} {*(black)}, {*(black)} {}}}$. Furthermore by direct computation, one can see that $M^{\mathcal{E}_{\CQA}}_{2, \times}$ and $\Cay_{2, \times}$ are identical when restricted to $\Ima \mathcal{P} {\ytableaushort{{} {*(black)}, {*(black)} {}}}$. It turns out that $\lambda_1(M^{\mathcal{E}_{\CQA}}_{2, \times, {\ytableaushort{{} {*(black)}, {*(black)} {}}}}) \leq \lambda_2(\Cay_{2, \times})$. 
    \end{proofsketch}
\end{lemma}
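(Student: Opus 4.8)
The plan is to exploit the common block structure of $M^{\mathcal{E}_{\CQA}}_{2,\times}$ and $\Cay_{2,\times}$ and reduce the statement to a term-by-term comparison on a single off-diagonal block. Fix one sector $(S^\lambda)^{\otimes 4}$. Both operators have the same shape---a symmetric average of words in a transposition $\tau$ and the identity, conjugated by the Wick projection $\mathcal{P}^{\Wick}_{2,\times}$---so, using $\mathcal{P}_{\nu\zeta}\circ T^{\mathcal{E}}_k=T^{\mathcal{E}}_k\circ\mathcal{P}_{\nu\zeta}$, the Wick-projection commutation established in Appendix~\ref{sec:kDesigns}, and Lemma~\ref{lemma:Modified}(3), both commute with $\mathcal{P}_{\sbA},\mathcal{P}_{\sbB},\mathcal{P}_{\sbC},\mathcal{P}_{\sbD}$ and are simultaneously block-diagonalized, with one block living on $\Ima\mathcal{P}_{\ytableaushort{{} {*(black)}, {*(black)} {}}}=\Ima\mathcal{P}_{\sbB}\oplus\Ima\mathcal{P}_{\sbC}$ (equivalently, after the Wick projection, on $S_{\sbB}\oplus S_{\sbC}$). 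By the analyses carried out in Sections~\ref{sec:detailsU(1)} and~\ref{sec:detailsSU(d)}, the two unit eigenvectors of $T^{\mathcal{E}_{\CQA}}_{2,\times}$---hence of $M^{\mathcal{E}_{\CQA}}_{2,\times}$ by Lemma~\ref{lemma:Modified}(1)---lie in $\Ima\mathcal{P}_{\sbA}$ and $\Ima\mathcal{P}_{\sbD}$, so on the off-diagonal block it suffices to control the largest eigenvalue rather than a full spectral gap. Moreover, by Claim~\ref{claim:Cayley-1} the unique unit eigenvector of $\Cay_{2,\times}$ on $(S^\lambda)^{\otimes 4}$ is the stationary distribution of the irreducible reversible chain $\Cay_{2,\times,\Sym}$, which sits in $S_{\sbA}\subset\Ima\mathcal{P}_{\sbA}$; it is therefore orthogonal to $\Ima\mathcal{P}_{\ytableaushort{{} {*(black)}, {*(black)} {}}}$, the unit eigenvalue appears only in the $\mathcal{P}_{\sbA}$ block, and consequently every eigenvalue of $\Cay_{2,\times}$ restricted to $\Ima\mathcal{P}_{\ytableaushort{{} {*(black)}, {*(black)} {}}}$ is at most $\lambda_2(\Cay_{2,\times})$.

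The heart of the proof is to show that $M^{\mathcal{E}_{\CQA}}_{2,\times}$ and $\Cay_{2,\times}$ act identically on $\Ima\mathcal{P}_{\ytableaushort{{} {*(black)}, {*(black)} {}}}$. Subtracting the averaging operator $\tfrac18(6\,IIII+\tau I\tau I+I\tau I\tau)$ inside $\Cay_{2,\times}$ from the modified $M^\tau_2$ of Definition~\ref{def:Modified} leaves only $\tfrac18(\tau II\tau+I\tau\tau I-II\tau\tau-\tau\tau II)$, so the task is to prove that $T_2^{\YJM}\bigl(\tau II\tau+I\tau\tau I-II\tau\tau-\tau\tau II\bigr)T_2^{\YJM}$---and its $\U(1)$ analogue with $T_2^Z$ in place of $T_2^{\YJM}$---annihilates $\Ima\mathcal{P}_{\ytableaushort{{} {*(black)}, {*(black)} {}}}$. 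The mechanism I would use is that on $\Ima\mathcal{P}_{\sbB}$ one has $\hat\sigma\rho=\rho$ and $\rho\hat\sigma=-\rho$ for the $2$-copy swap $\hat\sigma$ (and the opposite signs on $\Ima\mathcal{P}_{\sbC}$), so that $\hat\sigma$-conjugation exchanges the ``cross'' words $\tau II\tau,\,I\tau\tau I$ with the conjugation and one-sided words; after applying the Young-orthogonal-form action of $\tau$ to the index-balanced, Wick-filtered bases of $S_{\sbB}$ and $S_{\sbC}$ in Eqs.~\eqref{eq:InvariantSubapce-states3}--\eqref{eq:InvariantSubapce-states4} (whose elements are built only from $E\indices{^a_a}E\indices{^b_b}$, $E\indices{^b_b}E\indices{^a_a}$, $E\indices{^a_b}E\indices{^b_a}$ and $E\indices{^b_a}E\indices{^a_b}$), the images of $\tau II\tau+I\tau\tau I$ exactly cancel those of $II\tau\tau+\tau\tau II$ for every pair $a<b$. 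The $\U(1)$ case is the easier instance of this calculation, since its charge sectors are multiplicity-free and one may work directly in the computational basis. This explicit cancellation---propagating the $2\times 2$ Young-orthogonal blocks of $\tau$ through the four-index basis states and verifying it uniformly in $(a,b)$ and in $\lambda$---is the step I expect to be the main obstacle; everything else is bookkeeping.

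Putting the three observations together, $\lambda_1\bigl(M^{\mathcal{E}_{\CQA}}_{2,\times,\ytableaushort{{} {*(black)}, {*(black)} {}}}\bigr)=\lambda_1\bigl(\Cay_{2,\times}|_{\Ima\mathcal{P}_{\ytableaushort{{} {*(black)}, {*(black)} {}}}}\bigr)\le\lambda_2(\Cay_{2,\times})$, which is the asserted bound; in particular this largest eigenvalue is strictly below $1$ (since $\lambda_2(\Cay_{2,\times})<1$ by uniqueness of the unit eigenvalue of $\Cay_{2,\times}$), so the off-diagonal block contributes nothing that could spoil the overall $2$-design spectral gap.
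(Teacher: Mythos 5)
Your proposal is correct and takes essentially the same route as the paper: one shows by direct computation (with the Wick projection, and Young orthogonal forms in the $\SU(d)$ case) that the extra words $\tau II\tau+I\tau\tau I-II\tau\tau-\tau\tau II$ are annihilated on $S_{\sbB}\oplus S_{\sbC}$, so $M^{\mathcal{E}_{\CQA}}_{2,\times}$ coincides there with $\Cay_{2,\times}$, and then uses that the unique unit eigenvector of $\Cay_{2,\times}$ lies in $S_{\sbA}$ and is orthogonal to this block to conclude $\lambda_1\le\lambda_2(\Cay_{2,\times})$. Your swap-symmetry heuristic ($\hat\sigma\rho=\rho$, $\rho\hat\sigma=-\rho$) relates the cross words to the conjugation words times $\hat\sigma$ rather than directly to the one-sided words, but the cancellation you ultimately verify basis-by-basis is exactly the computation the paper performs.
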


\begin{lemma}\label{lemma:induced-markov-process}
	With respect to the system size $n$ and for some fixed $(S^{\lambda})^{\otimes 4}$, the spectral gap of the modified moment operator restricted to $S_{\sbD}$ has a scaling no smaller than that of  the Cayley moment operator:
	\begin{align}
		\Delta(M^{\mathcal{E}_{\CQA}}_{2,\times,\sbD}) = \Omega(\Delta(\Cay_{2,\times})).
	\end{align}
    \begin{proofsketch}
    We note $\Ima \mathcal{P}_{\sbD} \ncong \Sym$ and it is no longer possible to compare the Cayley moment operator and $M^{\mathcal{E}_{\CQA}}_{2, \times, \sbD}$ directly. Fortunately, it turns out that  $\Delta(M^{\mathcal{E}_{\CQA}}_{2, \times, \sbD})$ can be lower-bounded by comparing with $M^{\mathcal{E}_{\CQA}}_{2, \times, \sbA}$ via the induced Markov chain defined in Section \ref{sec:Markov}. Intuitively, we can shrink $\Ima \mathcal{P}_{\sbA}$ to a subspace $\mathcal{X} \cong \Ima \mathcal{P}_{\sbD}$ by deleting certain states to build the induced chain $\widetilde{M}^{\mathcal{E}_{\CQA}}_{2, \times, \sbA}$. By classical results in Markov chain theory \cite{Levin2009}, $\Delta(\widetilde{M}^{\mathcal{E}_{\CQA}}_{2, \times, \sbA}) \geq \Delta(M^{\mathcal{E}_{\CQA}}_{2, \times, \sbA})$
    
    Then, we denote these transition operators with stationary distributions by $(\widetilde{P}_{\sbA},\widetilde{\pi})$ and $(P_{\sbD},\pi_{\sbD})$ acting on the state space $\mathcal{X}  \cong \Ima \mathcal{P}_{\sbD}$. For any pair of states $x, y \in \mathcal{X}$, it can be directly observed that $\widetilde{P}_{\sbA}(x,y) = \mathcal{P}_{\sbD}(x,y)$ or $\widetilde{P}_{\sbA}(x,y) > \mathcal{P}_{\sbD}(x,y) = 0$, which implies that $\Delta(\widetilde{M}^{\mathcal{E}_{\CQA}}_{2, \times, \sbA}) \geq \Delta(M^{\mathcal{E}_{\CQA}}_{2, \times, \sbD})$ by using Dirichlet forms similarly as the before. However, for the converse direction, there exist states $x, y \in \mathcal{X}$ such that $\widetilde{P}_{\sbA}(x, y) > 0$ while $P_{\sbD}(x, y) = 0$ so that the direct comparison with Dirichlet form is inconclusive. This motivates us to utilize the path-comparison theorem in spectral graph theory \cite{DiaconisComparison1993,Levin2009,Oliveira2design2007a,Oliveira2design2007b,Harrow2design2009}. In particular, whenever $\widetilde{P}_{\sbA}(x, y) > 0$ while $P_{\sbD}(x, y) = 0$, there exists a unique length-2 path $\gamma_{xy} = \{ (x, z), (z, y)\}$ such that $P_{\sbD}(x, y)|_{\gamma_{xy}} > 0$. However, if $\widetilde{P}_{\sbA}(x, y) > 0$ and $P_{\sbD}(x, y) > 0$ then we simply take the length-$1$ path $\gamma_{xy} = \{(x, y)\}$. Finally, we consider the congestion ratio (Definition \ref{def:Congestion})
    \begin{align}
    	A = \max_{(p, q), P_{\sbD}(p, q) > 0} \frac{1}{\pi_{\sbD}(p)P_{\sbD}(p, q)} \sum_{\substack{x, y \in \mathcal{X} \\ (p, q) \in \gamma_{xy}}} \widetilde{\pi}(x)\widetilde{P}_{\sbA}(x, y) |\gamma_{xy}|.
    \end{align}
    Since the length $|\gamma_{x, y}| \leq 2$ for all $x, y \in \mathcal{X}$, we can prove that $A = \Theta(1)$ under both $\SU(d)$ and $\U(1)$ symmetries. The path-comparison theorem indicates that $\Delta(\widetilde{M}^{\mathcal{E}_{\CQA}}_{2, \times, \sbA}) \leq A \Delta(M^{\mathcal{E}_{\CQA}}_{2, \times, \sbD})$. Therefore, we conclude the argument by combining the bounds from Lemma~\ref{lemma:CayletBound1}.
    \end{proofsketch}
\end{lemma}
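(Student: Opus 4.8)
My plan is to lower-bound $\Delta(M^{\mathcal{E}_{\CQA}}_{2,\times,\sbD})$ by chaining two comparison arguments through the already-analyzed operator on $S_{\sbA}$, whose gap scaling we control via Lemma~\ref{lemma:CayletBound1}. The first step is to construct an \emph{induced Markov chain} from $M^{\mathcal{E}_{\CQA}}_{2,\times,\sbA}$ on a subset $\mathcal{X}\subset \Ima\mathcal{P}_{\sbA}$ chosen so that $\mathcal{X}\cong\Ima\mathcal{P}_{\sbD}$: concretely, the basis states of $S_{\sbA}$ of the form $\tfrac12(E\indices{^a_a}E\indices{^b_b}+E\indices{^b_b}E\indices{^a_a})+(E\indices{^a_b}E\indices{^b_a}+E\indices{^b_a}E\indices{^a_b})$ with $a<b$ are in bijection with the basis states $v_{ab}\leftrightarrow s_{ab}$ of $S_{\sbD}$, while the "diagonal" states $E\indices{^a_a}E\indices{^a_a}$ get deleted. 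By Theorem~\ref{thm:induced-chain-spectral-gap}, passing to the induced chain $\widetilde{M}^{\mathcal{E}_{\CQA}}_{2,\times,\sbA}$ on $\mathcal{X}$ only increases the spectral gap, so $\Delta(\widetilde{M}^{\mathcal{E}_{\CQA}}_{2,\times,\sbA})\geq \Delta(M^{\mathcal{E}_{\CQA}}_{2,\times,\sbA})=\Theta(\Delta(\Cay_{2,\times}))$.

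The second step is to compare the induced chain $(\widetilde{P}_{\sbA},\widetilde\pi)$ on $\mathcal{X}$ with the target chain $(P_{\sbD},\pi_{\sbD})$ on the isomorphic space. I would first record that by direct computation of the action of $M_2^\tau$ (Definition~\ref{def:Modified}) on the $S_{\sbD}$ basis states, $P_{\sbD}$ is again a reversible irreducible Markov transition operator, and that its nonzero off-diagonal transitions are a \emph{subset} of those of $\widetilde P_{\sbA}$—indeed whenever $P_{\sbD}(x,y)>0$ we have $\widetilde P_{\sbA}(x,y)\geq P_{\sbD}(x,y)$, because the extra paths in the induced chain that route through deleted diagonal states can only add weight between the same pairs, but there are also pairs $x,y$ with $\widetilde P_{\sbA}(x,y)>0$ while $P_{\sbD}(x,y)=0$ (these are exactly the transitions that, in $S_{\sbA}$, pass through a diagonal state $E\indices{^a_a}E\indices{^a_a}$). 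For the "easy" direction this immediately gives $\Delta(\widetilde{M}^{\mathcal{E}_{\CQA}}_{2,\times,\sbA})\geq \Delta(M^{\mathcal{E}_{\CQA}}_{2,\times,\sbD})$ via Dirichlet-form monotonicity (Theorem~\ref{thm:comparison-dirichlet} with $\alpha=1$ and the stationary-distribution ratio bounded), but I need the opposite inequality. So I invoke the path-comparison theorem (Theorem~\ref{thm:path-comparison-theorem}): for each ordered pair $(x,y)$ with $\widetilde P_{\sbA}(x,y)>0$ I choose a path in the $P_{\sbD}$-graph—if $P_{\sbD}(x,y)>0$ take the length-$1$ path $\{(x,y)\}$; otherwise there is a (unique, after checking) intermediate state $z\in\mathcal{X}$ with $P_{\sbD}(x,z)>0$ and $P_{\sbD}(z,y)>0$, giving a length-$2$ path $\gamma_{xy}=\{(x,z),(z,y)\}$. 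Since every $|\gamma_{xy}|\le 2$, the congestion ratio
\[
A=\max_{(p,q):P_{\sbD}(p,q)>0}\ \frac{1}{\pi_{\sbD}(p)P_{\sbD}(p,q)}\sum_{\substack{x,y\in\mathcal{X}\\(p,q)\in\gamma_{xy}}}\widetilde\pi(x)\widetilde P_{\sbA}(x,y)\,|\gamma_{xy}|
\]
is a bounded sum over $O(1)$ terms, and one checks $\widetilde\pi,\pi_{\sbD}$ are both (near-)uniform and that $\widetilde P_{\sbA}(x,y)=\Theta(P_{\sbD}(p,q))$ on the relevant transitions, so $A=\Theta(1)$; the stationary-ratio factor $\max_x \widetilde\pi(x)/\pi_{\sbD}(x)$ is likewise $\Theta(1)$. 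Theorem~\ref{thm:path-comparison-theorem} then yields $\Delta(\widetilde{M}^{\mathcal{E}_{\CQA}}_{2,\times,\sbA})\le A\cdot\Theta(1)\cdot\Delta(M^{\mathcal{E}_{\CQA}}_{2,\times,\sbD})$, i.e. $\Delta(M^{\mathcal{E}_{\CQA}}_{2,\times,\sbD})=\Omega(\Delta(\widetilde{M}^{\mathcal{E}_{\CQA}}_{2,\times,\sbA}))=\Omega(\Delta(M^{\mathcal{E}_{\CQA}}_{2,\times,\sbA}))=\Omega(\Delta(\Cay_{2,\times}))$, as claimed.

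The main obstacle I anticipate is the explicit bookkeeping in the second step: verifying that the induced chain $\widetilde P_{\sbA}$ on $\mathcal{X}$ really is, up to $\Theta(1)$ factors on every matrix entry, comparable to $P_{\sbD}$, and in particular identifying for each "new" transition $(x,y)$ of $\widetilde P_{\sbA}$ the correct intermediate state $z$ and checking there is essentially only one short detour. This requires computing $\widetilde P_{\sbA}(v_{ab},v_{cd})$ from Definition~\ref{def:induced-markov-chain}—summing the geometric series of excursions into the deleted diagonal states $E\indices{^a_a}E\indices{^a_a}$—and matching it against $P_{\sbD}(s_{ab},s_{cd})$, both of which depend on the Young-orthogonal-form matrix elements $\tfrac1r,\sqrt{1-1/r^2}$ of the adjacent transpositions $\tau\in\mathcal{T}$ acting on $S^\lambda$. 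Because the axial distances $r$ can be as small as $2$, one must check no entry degenerates in a way that spoils the $\Theta(1)$ congestion bound; this is routine but delicate, and it is where the $\U(1)$ versus $\SU(d)$ cases genuinely differ (in the $\U(1)$ case the computational basis replaces the Young basis and the matrix elements simplify), so I would treat the two symmetries separately in the full proof in Section~\ref{sec:detailsSU(d)} and Section~\ref{sec:detailsU(1)}.
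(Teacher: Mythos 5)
Your proposal follows essentially the same route as the paper's proof: pass to the induced chain on the states of $S_{\sbA}$ in bijection with $S_{\sbD}$ (Theorem \ref{thm:induced-chain-spectral-gap}), get the easy direction by Dirichlet-form monotonicity, and obtain the converse via the path-comparison theorem with length-$\leq 2$ paths routed through the unique intermediate state, yielding an $O(1)$ congestion ratio (the paper gets explicit constants $5$ for $\U(1)$ and $7$ for $\SU(d)$, with the small-axial-distance issue you flag handled by bounding $1-1/r^2 \geq 3/4$). The argument is correct and the anticipated bookkeeping is exactly what the paper carries out in Claims \ref{claim:U(1)-3} and \ref{claim:SU(d)-3}.
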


Evidently, the spectral gap of $M_{2,\U(1)}^{\mathcal{E}_{\CQA}}$ is
\begin{align}
	\min_{\lambda} \Big\{ \Delta(M_{2,\times,\sbA}^{\mathcal{E}_{\CQA}}), \ \Delta(M_{2,\times,\sbD}^{\mathcal{E}_{\CQA}}), \ 1 - \lambda_1( M_{2,\times,\ytableaushort{ {} {*(black)} , {*(black)} {} }}^{\mathcal{E}_{\CQA}})   \Big\},
\end{align}
and $\Delta( \Cay_2 )$ can be solved by Theorem \ref{thm:CayleyGap} and our discussions in Section \ref{sec:Aldous}. Based on which we conclude that:

\begin{theorem}\label{thm:CQAGap}
	The spectral gap of the (modified) second-moment operator of CQA under both $\U(1)$ and $\SU(d)$ symmetries is $\Theta (\Delta(\Cay_{2,\times}))$. In particular, with $\U(1)$ symmetry:
	\begin{enumerate}
		\item If $\mathcal{T}$ is given by the nearest-neighbour SWAPs (1D chain), then $\Delta = \Omega(1/n^3)$.
		
		\item If $\mathcal{T}$ is given by SWAPs centred at some qudit (star), then $\Delta = \Omega(1/n)$.
		
		\item If $\mathcal{T}$ is given by the all-to-all interactions (complete graph), then $\Delta = \Omega(1/n)$.
	\end{enumerate}
	With $\SU(d)$ symmetry and nearest-neighbour SWAPs,
	$\Delta = \Omega(1/n^3)$.
\end{theorem}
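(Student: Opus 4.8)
The plan is to glue together the block-wise estimates already in hand and then read off the concrete rates from Theorem~\ref{thm:CayleyGap}. First I would set up the global picture. Through the Wick projection $\mathcal{P}^{\Wick}_{2,\times}$, both $T^{\mathcal{E}_{\CQA}}_{2,\times}$ and its square-root surrogate $M^{\mathcal{E}_{\CQA}}_{2,\times}$ are block-diagonal, with blocks of the three kinds \eqref{eq:Case1}, \eqref{eq:Case2}, \eqref{eq:Case3}; within each Type-3 block $(S^{\lambda})^{\otimes 4}$ the projectors $\mathcal{P}_{\sbA},\mathcal{P}_{\sbB},\mathcal{P}_{\sbC},\mathcal{P}_{\sbD}$ split the action further, and by Schur--Weyl duality (the relation $\mathcal{P}_{\nu\zeta}(\hat\sigma)=\hat\sigma\,\delta_{\nu\zeta}$) together with Lemma~\ref{lemma:CQA-stochastic-matrix} the two unit eigenvectors of that block lie one in $S_{\sbA}$ and one in $S_{\sbD}$, while $\Ima\mathcal{P}_{\sbB}\oplus\Ima\mathcal{P}_{\sbC}$ carries none. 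Hence, on a fixed Type-3 block,
\begin{align}
\Delta\big(M^{\mathcal{E}_{\CQA}}_{2,\times}\vert_{(S^{\lambda})^{\otimes 4}}\big)=\min\Big\{\Delta\big(M^{\mathcal{E}_{\CQA}}_{2,\times,\sbA}\big),\ \Delta\big(M^{\mathcal{E}_{\CQA}}_{2,\times,\sbD}\big),\ 1-\lambda_1\big(M^{\mathcal{E}_{\CQA}}_{2,\times,\ytableaushort{{} {*(black)}, {*(black)} {}}}\big)\Big\},
\end{align}
and $\Delta(M^{\mathcal{E}_{\CQA}}_{2,\times})$ is the minimum of these quantities over all $\lambda$ together with the relevant gaps $\Delta_1,\Delta_2$ of the Type-1 and Type-2 blocks. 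Finally, by Lemma~\ref{lemma:Modified}(2), $\frac{1}{2}\Delta(T^{\mathcal{E}_{\CQA}}_{2,\times})\le\Delta(M^{\mathcal{E}_{\CQA}}_{2,\times})\le\Delta(T^{\mathcal{E}_{\CQA}}_{2,\times})$, so it is enough to control $M^{\mathcal{E}_{\CQA}}_{2,\times}$.

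For the lower bound $\Delta(M^{\mathcal{E}_{\CQA}}_{2,\times})=\Omega(\Delta(\Cay_{2,\times}))$ I would estimate each term: Lemma~\ref{lemma:CayletBound1} gives $\Delta(M^{\mathcal{E}_{\CQA}}_{2,\times,\sbA})=\Theta(\Delta(\Cay_{2,\times}))$; Lemma~\ref{lemma:induced-markov-process} gives $\Delta(M^{\mathcal{E}_{\CQA}}_{2,\times,\sbD})=\Omega(\Delta(\Cay_{2,\times}))$; Lemma~\ref{lemma:CayletBound2} with $\lambda_2(\Cay_{2,\times})<1$ gives $1-\lambda_1(M^{\mathcal{E}_{\CQA}}_{2,\times,\ytableaushort{{} {*(black)}, {*(black)} {}}})\ge 1-\lambda_2(\Cay_{2,\times})=\Delta(\Cay_{2,\times})$; and Theorem~\ref{thm:Case1&2} gives $\Delta_1,\Delta_2=\Omega(\Delta(\mathcal{G}(\mathcal{T})))$. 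The last term is reconciled with $\Delta(\Cay_{2,\times})$ via Claim~\ref{claim:Cayley-1} (a unique unit eigenvector in every Type-3 block) and the Cauchy interlacing theorem, which together give $\Delta(\Cay_{2,\times}\vert_{(S^{\lambda})^{\otimes 4}})=\Theta(\Delta(\mathcal{G}(\mathcal{T})))$. The matching upper bound is immediate: for any fixed $\lambda$, $S_{\sbA}$ is an invariant subspace carrying exactly one unit eigenvector, so $\Delta(M^{\mathcal{E}_{\CQA}}_{2,\times})\le\Delta(M^{\mathcal{E}_{\CQA}}_{2,\times,\sbA})=O(\Delta(\Cay_{2,\times}))$. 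Combining yields $\Delta(M^{\mathcal{E}_{\CQA}}_{2,\times})=\Theta(\Delta(\Cay_{2,\times}))$, hence also $\Delta(T^{\mathcal{E}_{\CQA}}_{2,\times})=\Theta(\Delta(\Cay_{2,\times}))$.

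To turn this into the stated rates I would use $\Delta(\Cay_{2,\times})=\Theta(\Delta(\mathcal{G}(\mathcal{T})))$ with constants uniform in $\lambda$, and then normalize the unnormalized Cayley-graph gaps of Theorem~\ref{thm:CayleyGap} by $|\mathcal{T}|$: for nearest-neighbour SWAPs ($|\mathcal{T}|=n-1$), $\Delta(\mathcal{G}(\mathcal{T}))=\frac{2}{n-1}\big(1-\cos\frac{\pi}{n}\big)=\Theta(n^{-3})$; for a star graph ($|\mathcal{T}|=n-1$), $\Delta(\mathcal{G}(\mathcal{T}))=\frac{1}{n-1}=\Theta(n^{-1})$; for the complete graph ($|\mathcal{T}|=\binom{n}{2}$), $\Delta(\mathcal{G}(\mathcal{T}))=\frac{n}{\binom{n}{2}}=\frac{2}{n-1}=\Theta(n^{-1})$. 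These give the three $\U(1)$ cases. The $\SU(d)$ argument is verbatim the same with $T^{Z}_2$ replaced by $T^{\YJM}_2$ and $\mathcal{P}^{\Wick}_{2,\U(1)}$ by $\mathcal{P}^{\Wick}_{2,\SU(d)}$; with nearest-neighbour SWAPs it gives $\Delta=\Omega(n^{-3})$. (Inserting $\Delta=\Delta(T^{\mathcal{E}_{\CQA}}_{2,\times})$ and $k=2$ into \eqref{eq:ConvergenceDepth} then recovers the convergence times of Theorem~\ref{thm: CQA-second-largest-eigenvals}.)

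The main obstacle I anticipate is \emph{uniformity of all the implied constants across the $p(n,d)$ inequivalent blocks}. The dimensions $d_\lambda$ span the full range from polynomial (e.g.~$S^{(n-1,1)}$, and the analogous low-dimensional charge sectors for $\U(1)$) to exponential in $n$, and several of the comparison estimates carry $d_\lambda$-dependent prefactors: for instance the Dirichlet-form ratio $\frac{d_\lambda+1}{2d_\lambda}$ appearing in Lemma~\ref{lemma:CayletBound1} and, more delicately, the congestion ratio $A$ of the path-comparison step in Lemma~\ref{lemma:induced-markov-process}. One must check that these stay pinched between absolute positive constants uniformly in $\lambda$ (the former lies in $[\frac{1}{2},1]$, which is harmless; the latter requires showing the explicit transition paths used there, all of length $\le 2$, yield $A=\Theta(1)$). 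A secondary point to nail down is that Types~1 and~2 genuinely collapse to a product-of-$1$-designs structure, so that the relevant operators $\sum_{\tau\in\mathcal{T}}\rho_\lambda(\tau)\otimes\rho_\mu(\tau)$ are governed by Aldous' spectral-gap theorem (Section~\ref{sec:Aldous}) and therefore never undercut $\Delta(\mathcal{G}(\mathcal{T}))$.
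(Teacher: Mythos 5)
Your proposal follows essentially the same route as the paper: block-diagonalize via the Wick/YJM projection into Types 1--3, handle Types 1--2 by Theorem~\ref{thm:Case1&2}, treat each Type-3 block through the four invariant subspaces using Lemmas~\ref{lemma:CayletBound1}, \ref{lemma:CayletBound2} and \ref{lemma:induced-markov-process}, pass between $M^{\mathcal{E}_{\CQA}}_{2,\times}$ and $T^{\mathcal{E}_{\CQA}}_{2,\times}$ via Lemma~\ref{lemma:Modified}, and read off the rates from Theorem~\ref{thm:CayleyGap}; the uniformity issues you flag are exactly what the paper settles with the explicit constants (Dirichlet ratio in $[\tfrac12,1]$ and congestion ratios $A\le 5$ for $\U(1)$, $A\le 7$ for $\SU(d)$). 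The argument and the concrete $\Omega(1/n^3)$ and $\Omega(1/n)$ scalings are correct as stated.
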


The proof for $\SU(d)$ symmetry utilizes the notion of Young orthogonal forms introduced in Section \ref{sec:SnTheory}, which are defined only for nearest-neighbour SWAPs. The YJM elements are defined by summation of SWAPs, e.g., $X_n = (1,n) + \cdots + (n-1,n)$. To bound the spectral gap of general local random circuits, a simple application of Trotter-Lie formula \cite{Suzuki1991Trotter,Lloyd1996Trotter,Berry2006Trotter} can be applied 
\begin{align}
	\Big\Vert \Big(\exp i \frac{t}{m} (1,n) \cdots \exp i \frac{t}{m} (n-1,n) \Big)^m - \exp it X_n \Big\Vert \leq \frac{(n-1)^2 t^2}{m},
\end{align}
where the largest value of $t$ is $2 \pi$ (see Section \ref{sec:detailsT2CQA}). Let $m = \text{poly}(n)$, we can use operator norms to bound $T^{\YJM}_2$ and hence  $T^{\mathcal{E}_{\CQA}}_{2,\SU(d)}$ in Eq.~\eqref{eq:TkCQASU(d)} by moment operators defined using 4-local unitaries. Since there are only $n(n-1)/2$ many second-order products of YJM elements, the spectral gap associated with, e.g., the ensemble $\mathcal{E}_{\mathcal{V}_4}$ \cite{SUd-k-Design2023} defined by sampling arbitrary 4-local $\SU(d)$ symmetric unitaries should still be $O(1/\text{poly}(n))$.

Let $\mathcal{E}_{\mathcal{V}_r}$ be the ensemble defined by drawing arbitrary $r$-local $\SU(d)$ symmetric gates with fixed $r \geq 3$. With respect to system size $n$, the scaling of spectral gaps associated with $\mathcal{E}_{\mathcal{V}_4}$ and $\mathcal{E}_{\mathcal{V}_r}$ can only differ by constants. For instance, let $r = 3$ (see also Ref.~\cite{Marvian3local}). We can explicitly write down the 4-local Haar projector under $\SU(d)$ symmetry for any 4 sites $(i_1,i_2,i_3,i_4)$ of the system:
\begin{align}
	T_{2,\SU(d)}^{(i_1,i_2,i_3,i_4)} = \int U_{i_1,i_2,i_3,i_4}^{\otimes 2} \otimes \bar{U}_{i_1,i_2,i_3,i_4}^{\otimes 2} dU_{i_1,i_2,i_3,i_4}.
\end{align}
Within this subsystem, we can define four different 3-local Haar projectors under $\SU(d)$ symmetry, e.g.,
\begin{align}
	T_{2,\SU(d)}^{(i_1,i_2,i_3)} = \int U_{i_1,i_2,i_3}^{\otimes 2} \otimes \bar{U}_{i_1,i_2,i_3}^{\otimes 2} dU_{i_1,i_2,i_3}.
\end{align}
Since the localities are fixed, in the sense of positive semidefiniteness, there must be constants $C_1,C_2$ (independent of $n$, but possibly depending on $k$ if we consider general $k$-designs) such that
\begin{align}
	C_1 (I - T_{2,\SU(d)}^{(i_1,i_2,i_3,i_4)}) \leq I - \frac{1}{4}( T_{2,\SU(d)}^{(i_1,i_2,i_3)} + T_{2,\SU(d)}^{(i_1,i_2,i_4)} + T_{2,\SU(d)}^{(i_1,i_3,i_4)} + T_{2,\SU(d)}^{(i_2,i_3,i_4)} ) \leq C_2 (I - T_{2,\SU(d)}^{(i_1,i_2,i_3,i_4)} ).
\end{align}
As a caveat, if the ensembles do not form $k$-designs, e.g., $2$-local Haar projectors on qudits with $d \geq 3$ (see Section \ref{sec:SchurWeyl}), the inequality on the right can never hold. Summing over indices, we have
\begin{align}
	& C_1 \binom{n}{4} (I - T_{2,\SU(d)}^{\mathcal{E}_{\mathcal{V}_4}}) \leq \binom{n}{3}\frac{n-3}{4} (I - T_{2,\SU(d)}^{\mathcal{E}_{\mathcal{V}_3}}) \leq C_2 \binom{n}{4} (I - T_{2,\SU(d)}^{\mathcal{E}_{\mathcal{V}_4}}) \\
	\implies & \Delta( T_{2,\SU(d)}^{\mathcal{E}_{\mathcal{V}_3}}) ) = \Theta( \Delta(T_{2,\SU(d)}^{\mathcal{E}_{\mathcal{V}_4}})).
\end{align}
We leave the problem of finding more tight bounds on the spectral gaps of $\U(1)$- and $\SU(d)$-symmetric designs based on general circuit graphs as a future research opportunity.

\newpage 

%--------------------------------------------------------------------------------------------------------

\section{Proof of lower bounds on spectral gaps}\label{sec:details}

We now present the full proofs of the lower bounds on the spectral gaps of the CQA moment operators for both $\U(1)$ and $\SU(d)$ symmetries. For the subsequent proofs, we treat $\U(1)$ and $\SU(d)$ symmetries on the same footing by utilizing the similarity in their subspace decompositions, which will make the proof more concise for two different symmetries with different basis sets. As mentioned in Section \ref{sec:Tabloid}, we now unify the notation for the decompositions of the Hilbert space as 
\begin{align}
    \mathcal{H} = \bigoplus_\lambda S^\lambda \otimes \mathrm{1}_{m_\lambda}, 
\end{align}
where $\lambda$ specifies a sector would either be given by an $S_n$ irrep in Eq.~\eqref{eq:A-SchurWeyl} or may be reducible but given by the $\U(1)$ conservation law in Eq.~\eqref{eq:SpaceDecompositionU(1)}). The number $m_\lambda$ denote dimension of the multiplicity and we set $d_\lambda$ the dimension of the irreps. For the Abelian $\U(1)$ symmetry, the computational basis still respects the space decomposition in Eq.~\eqref{eq:SpaceDecompositionU(1)}. While for  $\SU(d)$, we need to change to the Young basis introduced in Section \ref{sec:SnTheory}. For the $\U(1)$ symmetry, $m_\lambda \equiv 1$ for all $\lambda$, which  greatly simplifies calculations in many applications such as the study of charge spreading~\cite{SUd-k-Design2023Application}. However, in evaluating eigenvalues instead of frame potentials (see Ref.~\cite{SUd-k-Design2023} for more details), nontrivial multiplicities in the $\SU(d)$ case do not bother us either because the moment operator defined on tensor products of different copies of given irreps always yields similar matrix representations with identical spectra.   

%--------------------------------------------------------------------------------------------------

\subsection{Expansion of $T_{2,\times}^{\mathcal{E}_{\CQA}}$ and poof of Eq.~\eqref{eq:YJMPorjection}}\label{sec:detailsT2CQA}

Since any transposition/SWAP $\tau = (i,j) \in S_n$ satisfies $\tau^2 = I$, its unitary time evolution can be expanded by the Euler identity:
\begin{align}
	e^{-i\theta \tau} = \cos\theta I - i\sin\theta \tau. 
\end{align} 
Then
\begin{align}\label{eq:EulerIntegral}
	T_{k = 1}^\tau & = \frac{1}{2\pi} \int_0^{2\pi} e^{-i\theta \tau} \otimes e^{i\theta \tau} d\theta = \frac{1}{2\pi}  \int_0^{2\pi} (\cos\theta I - i\sin\theta \tau) \otimes (\cos\theta I + i\sin\theta \tau) d\theta = \frac{1}{2} (I \otimes I + \tau \otimes \tau),
\end{align} 
and
\begin{align}\label{eq: Ttau-k2-integral}
	\begin{aligned}
		T_{k = 2}^\tau = & \frac{1}{2\pi} \int_0^{2\pi} (e^{-i\theta \tau})^{\otimes 2} \otimes (e^{i\theta \tau})^{\otimes 2} d\theta = \frac{1}{2\pi} \int_0^{2\pi} (\cos\theta I + i\sin\theta \tau)^{\otimes 2} \otimes (\cos\theta I - i\sin\theta \tau)^{\otimes 2} d\theta  \\
		= & \frac{1}{2\pi} \int_0^{2\pi} \Big(\cos^4\theta I \otimes I \otimes I \otimes I + \sin^4\theta \tau \otimes \tau \otimes \tau \otimes \tau + \cos^2\theta \sin^2\theta \big(I \otimes \tau \otimes I \otimes \tau + I \otimes \tau \otimes \tau \otimes I \\
		& + \tau \otimes I \otimes I \otimes \tau + \tau \otimes I \otimes \tau \otimes I - I \otimes I \otimes \tau \otimes \tau - \tau \otimes \tau \otimes I \otimes I \big) \Big) d\theta \\
		= & \frac{1}{8} (3 IIII + 3\tau\tau\tau\tau + I\tau I\tau + I\tau\tau I + \tau I I\tau + \tau I\tau I - I I \tau\tau - \tau\tau I I ), 
	\end{aligned}
\end{align} 
where we omitted the tensor product notations in the last line for conciseness. The expansions of $T_k^\tau$ for general $k$ can be derived using $\sin^2\theta + \cos^2\theta = 1$ and
\begin{align}
	& \frac{1}{2\pi} \int_0^{2\pi} \cos^{2k}\theta d\theta = \frac{1}{2\pi} \int_0^{2\pi} \sin^{2k}\theta d\theta = \frac{1}{2^{2k}} \binom{2k}{k}, \\
	& \frac{1}{2\pi} \int_0^{2\pi} \cos^{2k-1}\theta \sin\theta d\theta = \frac{1}{2\pi} \int_0^{2\pi} \cos\theta \sin^{2k-1}\theta d\theta = 0.
\end{align} 

%--------------------------------------------------------------------------------------------------------

The moment operators of the time evolution of second-order YJM elements $X_r X_s$ and Pauli $Z$-matrices $Z_r Z_s$ are formally expressed as: 
\begin{align}
	& T^{\YJM}_2 = \int \Big( \exp(-i \sum_{r \leq s} \beta_{rs} X_r X_s) \Big)^{\otimes 2} \otimes \Big( \exp(i \sum_{r' \leq s'} \beta_{r's'} X_{r'} X_{s'}) \Big)^{\otimes 2} d\beta, \label{eq:TkYJM} \\
	& T^{Z}_2 = \frac{1}{(2\pi)^{n(n+1)/2}} \int \Big( \exp(-i \sum_{r \leq s} \beta_{rs} Z_r Z_s) \Big)^{\otimes 2} \otimes \Big( \exp(i \sum_{r' \leq s'} \beta_{r's'} Z_{r'} Z_{s'}) \Big)^{\otimes 2}d\beta, \label{eq:TkZ}
\end{align}
where the parameters are integrated over the uniform distribution. Since the eigenvalues of these operators are all integers (see Section \ref{sec:SnTheory} for more details), we only need to integrate over $[0,2\pi]$ for parameters $\beta_{rs}$ and $t$. Since $[X_r, X_s] = [Z_r, Z_s] = 0$, we can set $r \leq s$. For the same reason, 
\begin{align}
	\exp(-i \sum_{r \leq s} \beta_{rs} X_r X_s) = \prod_{r \leq s} \exp(-i \beta_{rs} X_r X_s) , \quad 
	\exp(-i \sum_{r \leq s} \beta_{rs} Z_r Z_s) = \prod_{r \leq s} \exp(-i \beta_{rs} Z_r Z_s), \tag{B56$^\ast$}
\end{align}
and the products can be taken in arbitrary orders.

Nonetheless, it is still difficult to expand $T_2^{\YJM}$ (Eq.~\eqref{eq:TkYJM}) and $T_2^Z$ (Eq.~\eqref{eq:TkZ}) directly. Fortunately, YJM elements and Pauli matrices are diagonalizable with integer eigenvalues (see Section \ref{sec:SnTheory} \& \ref{sec:U(1)Univeristy}). Like integrating over a compact group with respect to its Haar measure, these moment operators are defined to be orthogonal projections with only unit and zero eigenvalues. Actually, we have a  clear understanding on their unit eigenspaces, which suffices to tell us about their properties:
\begin{align}
	& \Comm_2(\YJM) = \Big\{ M \in \operatorname{End}(\mathcal{H}^{\otimes 2}); \forall \beta_{rs}, \ \Big[M, \Big( \exp(-i \sum_{r \leq s} \beta_{rs} X_r X_s) \Big)^{\otimes 2} \Big] = 0 \Big \}, \\
	& \Comm_2(Z) = \Big\{ M \in \operatorname{End}(\mathcal{H}^{\otimes 2}); \forall \beta_{rs}, \ \Big[M, \Big( \exp(-i \sum_{r \leq s} \beta_{rs} Z_r Z_s) \Big)^{\otimes 2} \Big] = 0 \Big \}.
\end{align}
Let 
\begin{align}\label{eq:LieAlgebraRep}
	\rho(\mathmakebox[8mm]{X_r X_s}) &= (\mathmakebox[8mm]{X_r X_s}) \otimes I + I \otimes (\mathmakebox[8mm]{X_r X_s}), \\
	\rho(\mathmakebox[8mm]{Z_r Z_s}) &= (\mathmakebox[8mm]{Z_r Z_s}) \otimes I  + I \otimes (\mathmakebox[8mm]{Z_r Z_s})
\end{align} 
denote the Lie algebra representation of $iX_r X_s, iZ_r Z_s$ from $\mathfrak{u}(\mathcal{H})$ on the $k$-fold tensor product. It is straightforward to check that
\begin{align}
	& \forall \beta_{rs}, \ \Big[M, \Big( \exp(-i \sum_{r \leq s} \beta_{rs} \mathmakebox[8mm]{X_r X_s}) \Big)^{\otimes 2} \Big] = 0 \quad \Leftrightarrow \quad \forall r,s, \ \Big[M, \rho(\mathmakebox[8mm]{X_r X_s}) \Big] = 0, \\
	& \forall \beta_{rs}, \ \Big[M, \Big( \exp(-i \sum_{r \leq s} \beta_{rs} \mathmakebox[8mm]{Z_r Z_s}) \Big)^{\otimes 2} \Big] = 0 \quad \Leftrightarrow \quad \forall r,s, \ \Big[M, \rho(\mathmakebox[8mm]{Z_r Z_s}) \Big] = 0.
\end{align}
We can also deal with the case of larger $k$ at the price of employing $k$-th order products of YJM elements. Then we employ the following lemma in our earlier work~\cite{SUd-k-Design2023}:

\begin{lemma}\label{lemma:CQAk}
	Consider the tensor product $\mathcal{H}^{\otimes k}$ of an $n$-qudit system with $\mathcal{H} = (\mathbb{C}^d)^{\otimes n}$. Taking representations $\rho(X_{i_1} \cdots X_{i_s})$, defined like Eq.~\eqref{eq:LieAlgebraRep} on $k$-fold tensor product, for all $1 \leq i_1 \leq \cdots \leq i_s \leq n$ and $s \leq k$ is sufficient to generate 
	\begin{align}
		\rho(D) = D \otimes I \otimes \cdots \otimes I + I \otimes D \otimes \cdots \otimes I + I \otimes I \otimes \cdots \otimes D
	\end{align}
	for an arbitrary diagonal matrix 
	\begin{align}\label{eq:SU(d)Diagonal}
		D = \bigoplus_\lambda D_{S^\lambda} \otimes I_{m_\lambda}  
	\end{align}
	under the Young basis respecting the $\SU(d)$-symmetric space decomposition of $\mathcal{H}$.
\end{lemma}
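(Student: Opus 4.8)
The plan is to show that the associative algebra $\mathcal{A}\subseteq\operatorname{End}(\mathcal{H}^{\otimes k})$ generated by the lifted operators $\{\rho(X_{i_1}\cdots X_{i_s}):s\le k\}$ contains $\rho(D)$ for every $D$ of the form \eqref{eq:SU(d)Diagonal}; this is precisely what is needed to pass from the commutant of the YJM generators to the commutant of all such $\rho(D)$. First I would record that $\rho$, defined on the $k$-fold tensor product by $\rho(A)=\sum_{\ell=1}^{k}I^{\otimes(\ell-1)}\otimes A\otimes I^{\otimes(k-\ell)}$, obeys $[\rho(A),\rho(B)]=\rho([A,B])$ (cross terms act on disjoint tensor slots), so since all YJM elements commute with one another, the generating family $\mathcal{F}=\{\rho(X_{i_1}\cdots X_{i_s}):1\le i_1\le\cdots\le i_s\le n,\ s\le k\}$ is a commuting family of Hermitian operators, simultaneously diagonalised in the basis $\ket{\alpha_{T_1}}\otimes\cdots\otimes\ket{\alpha_{T_k}}$ (multiplicity labels suppressed), with $T_1,\dots,T_k$ ranging over standard Young tableaux of size $n$ with at most $d$ rows. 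On such a vector $\rho(X_{i_1}\cdots X_{i_s})$ has eigenvalue $\sum_{\ell}\alpha_{T_\ell}(i_1)\cdots\alpha_{T_\ell}(i_s)$, while every $\rho(D)$ as in \eqref{eq:SU(d)Diagonal} is diagonal in the same basis with eigenvalue $\sum_{\ell}d_{T_\ell}$, where $d_T$ is the diagonal entry of $D$ on $\ket{\alpha_T}$. Since a finite commuting family of diagonalisable operators generates, as an associative algebra, exactly the operators that are diagonal in the common eigenbasis and constant on each joint eigenspace (Lagrange interpolation over the finitely many eigenvalue tuples), it remains only to show that whenever $(T_1,\dots,T_k)$ and $(T_1',\dots,T_k')$ give the same eigenvalues under all of $\mathcal{F}$, the multisets $\{T_1,\dots,T_k\}$ and $\{T_1',\dots,T_k'\}$ coincide; equivalently, that $\{T_1,\dots,T_k\}$ is determined by the numbers $\sum_\ell\alpha_{T_\ell}(i_1)\cdots\alpha_{T_\ell}(i_s)$ with $s\le k$.

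By the one-to-one correspondence between standard tableaux and content vectors recalled in Section~\ref{sec:SnTheory}, the multiset $\{T_\ell\}$ is pinned down as soon as the multiset of content vectors $\{v_\ell:=\alpha_{T_\ell}\}\subset\mathbb{Z}^n$ is, and the data $\sum_\ell\alpha_{T_\ell}(i_1)\cdots\alpha_{T_\ell}(i_s)$ ranged over all multi-indices of a fixed length $s$ is exactly the collection of degree-$s$ multivariate power sums $\sum_\ell\prod_i(v_\ell^{(i)})^{e_i}$. So the lemma reduces to the elementary fact that a multiset of $k$ points in $\mathbb{C}^n$ is recovered from its multivariate power sums of degree $\le k$. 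I would prove this from the formal identity $\prod_{\ell=1}^{k}\bigl(1-t\langle x,v_\ell\rangle\bigr)=\exp\!\bigl(-\sum_{s\ge1}\tfrac{t^s}{s}\textstyle\sum_\ell\langle x,v_\ell\rangle^{s}\bigr)$ in $\mathbb{C}[x_1,\dots,x_n][[t]]$: for each $s$ the inner sum $\sum_\ell\langle x,v_\ell\rangle^s$ is a homogeneous degree-$s$ polynomial in $x$ whose coefficients are exactly the degree-$s$ power sums, so knowing the power sums through degree $k$ fixes the right-hand side modulo $t^{k+1}$, hence the left-hand side -- already a polynomial of $t$-degree $k$ -- completely. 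The left-hand side is a product of linear forms $1-t\langle x,v_\ell\rangle$, each irreducible and normalised to have constant term $1$ in the UFD $\mathbb{C}[x_1,\dots,x_n,t]$; unique factorisation then returns the multiset $\{1-t\langle x,v_\ell\rangle\}$, and reading off the coefficient of $tx_i$ recovers each $v_\ell$. Taking $v_\ell=\alpha_{T_\ell}$ completes the reduction, so $\rho(D)\in\mathcal{A}$ for all $D$ as in \eqref{eq:SU(d)Diagonal}.

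The only real work is the last step, the power-sum-to-multiset recovery, and the crucial bookkeeping point is that the degree bound there is $\le k$, which is precisely the order of YJM products made available by the hypothesis $s\le k$; everything else is formal manipulation of $\rho$ and of the structure of a commutative matrix algebra. (An equivalent route to that step, if one prefers to bypass unique factorisation, is to project generically onto one coordinate, apply Newton's identities to $k$ scalars, and then let the projection vary; I would use whichever is cleaner in the write-up.) A minor subtlety to dispatch is the presence of the multiplicity spaces $\mathrm{1}_{m_\lambda}$: they merely inflate every joint eigenspace by a common factor and never enter any eigenvalue, so they are carried through the argument verbatim.
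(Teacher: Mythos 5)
Your proposal is correct, and it supplies in full the argument that the paper only sketches: for this lemma the paper defers to Ref.~\cite{SUd-k-Design2023}, stating merely that the proof uses commutativity of the YJM elements, ``tensor analysis,'' and the Okounkov--Vershik fact (the $k=1$ case) that the YJM elements generate every Young-basis-diagonal matrix. Your route is compatible with that sketch but self-contained, and the reduction is sound: the lifted monomials $\rho(X_{i_1}\cdots X_{i_s})$, $s\le k$, form a commuting family, diagonal in the product Young basis, whose joint eigenvalues are precisely the degree-$\le k$ monomial moments of the multiset of content vectors $\{\alpha_{T_1},\dots,\alpha_{T_k}\}$; every $\rho(D)$ with $D=\bigoplus_\lambda D_{S^\lambda}\otimes I_{m_\lambda}$ is diagonal in the same basis with eigenvalue $\sum_\ell d_{T_\ell}$, a symmetric function of that multiset; and your generating-function/unique-factorization step (equivalently Newton's identities after a generic projection) shows that moments of degree $\le k$ determine a $k$-element multiset of vectors, so $\rho(D)$ is constant on every joint eigenspace and hence, by Lagrange interpolation over the finitely many eigenvalue tuples, lies in the algebra generated. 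This is exactly the quantitative content that makes order-$k$ products sufficient, and it is the form of the statement actually used downstream (anything commuting with the generators then commutes with every such $\rho(D)$). Two minor points to make explicit in a final write-up: ``generate'' should be read as generating the unital associative algebra, which is the right notion for the commutant application (and harmless, since scalars commute with everything; for $n\ge 2$ no content vector is zero, so the all-zero eigenvalue tuple never occurs and the identity is recovered even without assuming unitality); and the passage from the recovered multiset of content vectors to the multiset of tableaux uses the injectivity of $T\mapsto\alpha_T$ across all shapes with at most $d$ rows, recalled in the paper's $S_n$-theory section, while the multiplicity registers $1_{m_\lambda}$ never enter any eigenvalue, as you note.
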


The underlying proof uses tensor analysis, commutativity of YJM elements and, more importantly, the fact proved by Okounkov and Vershik in Ref.~\cite{Okounkov1996}, that all ordinary YJM elements are able to generate any diagonal matrix under the Young basis, which can be treated as the case when $k = 1$ from above. Based on the same spirit, the above lemma also works for Pauli-$Z$ matrices under computational basis because they are also commutative and capable of generating arbitrary diagonal matrices under that basis:
\begin{align}\label{eq:U(1)Diagonal}
	D = \bigoplus_\mu D \vert_{S^\mu}.
\end{align}
As mentioned in Section \ref{sec:U(1)Univeristy}, $\U(1)$ charge sectors are multiplicity-free on the $n$-qubit system. The matrix $D$ is thus an arbitrary diagonal matrix acting on the entire Hilbert space.

%--------------------------------------------------------------------------------------------------------

For the present case of $k = 2$, these results indicate that the unit eigenspace of $T_2^{\YJM}$ is spanned by these who commute with $D \otimes I + I \otimes D$, with $D$ from Eq.~\eqref{eq:SU(d)Diagonal}, on the $2$-fold tensor product $\mathcal{H}^{\otimes 2}$. Like in Example \ref{example:InvariantSubK=2}, we label Young basis elements by $a,b,c,d,...$ with $m_\lambda$ emphasizing the type of the irrep and corresponding multiplicities. One can easily check that the following operators form an orthonormal basis for the concerned unit eigenspace:
\begin{align}
	& E\indices{^{a, m_\lambda^1}_{a, m_\lambda^2} } \otimes  E\indices{^{c, m_\mu^3}_{c, m_\mu^4} } = \Lket{ ( a, m_\lambda^1 ), ( c, m_\mu^3 )} \Lbra{( a, m_\lambda^2 ), ( c, m_\mu^4 ) }, \label{eq:YJMBases1} \\
	& E\indices{^{a, m_\lambda^1}_{c, m_\mu^4} } \otimes  E\indices{^{c, m_\mu^3}_{a, m_\lambda^2} } = \Lket{ ( a, m_\lambda^1 ), ( c, m_\mu^3 )} \Lbra{( c, m_\mu^4 ), ( a, m_\lambda^2 ) }, \label{eq:YJMBases2} \\
	& E\indices{^{a, m_\lambda^1}_{a, m_\lambda^2} } \otimes  E\indices{^{b, m_\lambda^3}_{b, m_\lambda^4} } = \Lket{ ( a, m_\lambda^1 ), ( b, m_\lambda^3 )} \Lbra{( a, m_\lambda^2 ), ( b, m_\lambda^4 ) }, \label{eq:YJMBases3-1} \\
	& E\indices{^{a, m_\lambda^1}_{b, m_\lambda^4} } \otimes  E\indices{^{b, m_\lambda^3}_{a, m_\lambda^2} } = \Lket{ ( a, m_\lambda^1 ), ( b, m_\lambda^3 )} \Lbra{( b, m_\lambda^4 ), ( a, m_\lambda^2 ) }. \label{eq:YJMBases3-2}
\end{align}
Covariant and contravariant basis indices $a,b,c$ need to appear in pairs and there are no restrictions when selecting multiplicities. It should be clear now that the first two kinds of basis elements are contained in subspaces from \eqref{eq:Case1} and \eqref{eq:Case2} respectively. The last two kinds collectively stay in subspaces from \eqref{eq:Case3}. By Eq.~\eqref{eq:TkCQASU(d)}, $T_{2,\SU(d)}^{\mathcal{E}_{\CQA}}$ is defined by the conjugate action of $T_{2}^{\YJM}$. It can \emph{only} have nontrivial action over these basis elements, which correspond subspaces from above cases. 

As a basic application of Schur--Weyl duality, when the moment operator is restricted to one subspace of either \eqref{eq:Case1} or \eqref{eq:Case2}, it admits a unique unit eigenvector up to scaling and multiplicity:
\begin{align}
	& \sum_{a,c} E\indices{^{a, m_\lambda^1}_{a, m_\lambda^2} } \otimes  E\indices{^{c, m_\mu^3}_{b, m_\mu^4} } =  \sum_{a,c} \Lket{ ( a, m_\lambda^1 ), ( c, m_\mu^3 )} \Lbra{( a, m_\lambda^2 ), ( c, m_\mu^4 ) }, \label{eq:Case1-Eigenvector} \\
	& \sum_{a,c} E\indices{^{a, m_\lambda^1}_{c, m_\mu^4} } \otimes  E\indices{^{c, m_\mu^3}_{a, m_\lambda^2} } =  \sum_{a,c} \Lket{ ( a, m_\lambda^1 ), ( c, m_\mu^3 )} \Lbra{( c, m_\mu^4 ), ( a, m_\lambda^2 ) }. \label{eq:Case2-Eigenvector}
\end{align}
However, when restricted to \eqref{eq:Case3}, it has two independent unit eigenvectors up to scaling and multiplicity:
\begin{align}\label{eq:Case3-Eigenvectors}
	&  \sum_{a,b} E\indices{^{a, m_\lambda^1}_{a, m_\lambda^2} } \otimes  E\indices{^{b, m_\lambda^3}_{b, m_\lambda^4} } =  \sum_{a,b} \Lket{ ( a, m_\lambda^1 ), ( b, m_\lambda^3 )} \Lbra{( a, m_\lambda^2 ), ( b, m_\lambda^4 ) }, \\
	&  \sum_{a,b} E\indices{^{a, m_\lambda^1}_{b, m_\lambda^4} } \otimes  E\indices{^{b, m_\lambda^3}_{a, m_\lambda^2} } =  \sum_{a,b} \Lket{ ( a, m_\lambda^1 ), ( b, m_\lambda^3 )} \Lbra{( b, m_\lambda^4 ), ( a, m_\lambda^2 ) }.
\end{align}

%--------------------------------------------------------------------------------------------------------

As mentioned in the beginning of the section, there is no need to consider equivalent irrep sectors when we evaluate the eigenvalues because the sub-matrices representing $T_{2,\SU(d)}^{\mathcal{E}_{\CQA}}$ on tensor products of different multiples of sectors are similar to each other. We bound the spectral gap of $T_{2,\SU(d)}^{\mathcal{E}_{\CQA}}$ restricted to \eqref{eq:Case1} or \eqref{eq:Case2} in the next subsection. For \eqref{eq:Case3}, we take an arbitrary $S_n$ irrep $S^\lambda$, neglects its multiplicities and study $T_{2,\SU(d)}^{\mathcal{E}_{\CQA}} \vert_{(S^{\lambda})^{\otimes 4}}$. Recall by Example \ref{example:InvariantSubK=2}, $(S^{\lambda})^{\otimes 4}$ can be decomposed into four invariant subspaces spanned by basis vectors like Eq.~\eqref{eq:ExampleBasis}. To preserve this decomposition, we adopt a change of basis from Eq.~\eqref{eq:YJMBases3-1} and Eq.~\eqref{eq:YJMBases3-2} (neglecting multiplicities) and define the following subspaces:
\begin{align}
	& S_{\sbA} = \operatorname{span} \left\{ \frac{1}{2} \Big( E\indices{^a_a} E\indices{^b_b} + E\indices{^b_b} E\indices{^a_a}) +  (E\indices{^a_b} E\indices{^b_a} + E\indices{^b_a} E\indices{^a_b} \Big), E\indices{^a_a} E\indices{^a_a}; a < b  \right\} \subset \Ima\mathcal{P}_{\sbA} \subset (S^{\lambda})^{\otimes 4} \tag{D5$^\ast$}  \\
	& S_{\sbD} = \operatorname{span} \left\{ \frac{1}{2} \Big( E\indices{^a_a} E\indices{^b_b} + E\indices{^b_b} E\indices{^a_a}) -  (E\indices{^a_b} E\indices{^b_a} + E\indices{^b_a} E\indices{^a_b} \Big); a < b \right\} \subset \Ima\mathcal{P}_{\sbD} \subset (S^{\lambda})^{\otimes 4} \tag{D6$^\ast$},
\end{align}
and
\begin{align}
	& S_{\sbB} = \operatorname{span} \left\{ \frac{1}{2} \Big( E\indices{^a_a} E\indices{^b_b} - E\indices{^b_b} E\indices{^a_a}) - (E\indices{^a_b} E\indices{^b_a} - E\indices{^b_a} E\indices{^a_b} \Big); a < b \right\} \subset \Ima\mathcal{P}_{\sbB} \subset (S^{\lambda})^{\otimes 4} \tag{D7$^\ast$}, \\
	& S_{\sbC} = \operatorname{span} \left\{ \frac{1}{2} \Big( E\indices{^a_a} E\indices{^b_b} - E\indices{^b_b} E\indices{^a_a}) + (E\indices{^a_b} E\indices{^b_a} - E\indices{^b_a} E\indices{^a_b} \Big); a < b \right\} \subset \Ima\mathcal{P}_{\sbC} \subset (S^{\lambda})^{\otimes 4} \tag{D8$^\ast$}.
\end{align}
In Section \ref{sec:detailsSU(d)}, we will apply Markov chain theory to study 
\begin{align}
	T_{2,\SU(d),\sbA}^{\mathcal{E}_{\CQA}} = T_{2,\SU(d)}^{\mathcal{E}_{\CQA}} \vert_{S_{\sbA}}, \quad
	T_{2,\SU(d),\sbB}^{\mathcal{E}_{\CQA}} = T_{2,\SU(d)}^{\mathcal{E}_{\CQA}} \vert_{S_{\sbB}}, \quad
	T_{2,\SU(d),\sbC}^{\mathcal{E}_{\CQA}} = T_{2,\SU(d)}^{\mathcal{E}_{\CQA}} \vert_{S_{\sbC}}, \quad
	T_{2,\SU(d),\sbD}^{\mathcal{E}_{\CQA}} = T_{2,\SU(d)}^{\mathcal{E}_{\CQA}} \vert_{S_{\sbD}} \tag{D9$^\ast$}
\end{align}
Without restricting to $S_{\sbA},S_{\sbB},S_{\sbC},S_{\sbD}$, one can check, by methods presented in Claim \ref{claim:U(1)-1} \& \ref{claim:SU(d)-1}, that the matrix representing $T_{2,\SU(d)}^{\mathcal{E}_{\CQA}}$ acting directly on Eq.~\eqref{eq:YJMBases3-1} and Eq.~\eqref{eq:YJMBases3-2} even admits negative entries, invalidating most techniques from Markov chain theory.

The situation for $T_{2,\U(1)}^{\mathcal{E}_{\CQA}}$ is almost the same, we just change to view $a,b,c,d,...$ from above as the indices of computational basis from the corresponding $\U(1)$ charge sectors, which do not even have equivalent copies.

%--------------------------------------------------------------------------------------------------------

\subsection{Proof of \eqref{eq:Case1} \& \eqref{eq:Case2} }\label{sec:detailsC1C2}

\begin{theorem}\label{thm:Case1&2}
	Let $\mathcal{T}$ be a generating set of transpositions from $S_n$ defined in Section \ref{sec:Aldous} and let $\mathcal{E}_{\CQA}$ be defined by transpositions from $\mathcal{T}$. The second largest eigenvalue of  $T_{2,\SU(d)}^{\mathcal{E}_{\CQA}}$ or $T_{2,\U(1)}^{\mathcal{E}_{\CQA}}$, when restricted to subspaces like
	\begin{align}
		& (S^{\lambda} \otimes \mathrm{1}_{m_{\lambda}}) \otimes S^{\mu} \otimes \mathrm{1}_{m_{\mu}} \otimes (S^{\lambda} \otimes \mathrm{1}_{m_{\lambda}}) \otimes (S^{\mu} \otimes \mathrm{1}_{m_{\mu}}), \\
		& (S^{\lambda} \otimes \mathrm{1}_{m_{\lambda}}) \otimes S^{\mu} \otimes \mathrm{1}_{m_{\mu}} \otimes (S^{\mu} \otimes \mathrm{1}_{m_{\mu}}) \otimes (S^{\lambda} \otimes \mathrm{1}_{m_{\lambda}}) 
	\end{align}
	given by \eqref{eq:Case1} and \eqref{eq:Case2}, can be uniformly bounded by the second largest eigenvalue of the $S_n$ Cayley graph determined by $\mathcal{T}$. For instance, taking nearest-neighbour transpositions,
	\begin{align}
		1 - \frac{1}{4(n-1)}  \Big(1- \cos\frac{\pi}{n}\Big) \geq \lambda_2
	\end{align}
	for both $\SU(d)$ and $\U(1)$ cases and arbitrary tensor product of subspaces $S^\lambda \ncong S^\mu$ from above.
\end{theorem}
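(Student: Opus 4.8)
The plan is to compute the restriction of $T_{2,\times}^{\mathcal E_{\CQA}}$ to the subspaces \eqref{eq:Case1} and \eqref{eq:Case2} explicitly, recognize it as a symmetric doubly‑stochastic (hence reversible Markov) matrix on a product state space, and then bound its spectral gap from below by that of the $S_n$ Cayley graph $\mathcal G(\mathcal T)$ via a short positive‑semidefiniteness estimate followed by Aldous' theorem. Write $\rho_\kappa$ for the action of $S_n$ on the sector $S^\kappa$ (an $S_n$‑irrep under $\SU(d)$, a permutation module $M^\kappa$ under $\U(1)$), $d_\kappa=\dim S^\kappa$, and ignore multiplicities since the relevant sub‑blocks are similar. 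First I would pin down the surviving subspace: since $\lambda\neq\mu$, Lemma~\ref{lemma:CQAk} forces $\mathcal P^{\Wick}_{2,\times}$ to cut \eqref{eq:Case1} down to the $d_\lambda d_\mu$‑dimensional span of $\{E\indices{^a_a}\otimes E\indices{^c_c}:a\in S^\lambda,\ c\in S^\mu\}$ (cf.~\eqref{eq:YJMPorjection}) and \eqref{eq:Case2} down to the span of $\{E\indices{^a_c}\otimes E\indices{^c_a}\}$, with unique unit eigenvector the image of $I_{S^\lambda\otimes S^\mu}$, respectively of the swap (cf.~\eqref{eq:Case1-Eigenvector}--\eqref{eq:Case2-Eigenvector}); identifying either span with $\mathbb C^{d_\lambda}\otimes\mathbb C^{d_\mu}$ makes this eigenvector $\mathbf 1\otimes\mathbf 1$.

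Next I would compute the restricted operator. Every SWAP $\tau$ preserves each sector $S^\lambda,S^\mu$, so $T_2^\tau$ preserves \eqref{eq:Case1},\eqref{eq:Case2}; expanding $e^{i\theta\tau}=\cos\theta\,I+i\sin\theta\,\tau$ in \eqref{eq: Ttau-k2-integral}, integrating over $\theta$, and sandwiching with $\mathcal P^{\Wick}_{2,\times}$, a direct calculation — in which the terms proportional to $[\rho_\lambda(\tau)]_{aa}[\rho_\mu(\tau)]_{cc}$ cancel in pairs — should give, on $\mathbb C^{d_\lambda}\otimes\mathbb C^{d_\mu}$,
\begin{align}
T_{2,\times}^{\mathcal E_{\CQA}}\big|_{\eqref{eq:Case1}}=\frac1{|\mathcal T|}\sum_{\tau\in\mathcal T}R_\tau,\qquad R_\tau=\tfrac38\,I\otimes I+\tfrac38\,Q_\lambda^\tau\otimes Q_\mu^\tau+\tfrac18\,Q_\lambda^\tau\otimes I+\tfrac18\,I\otimes Q_\mu^\tau ,
\end{align}
with $Q_\kappa^\tau$ the symmetric doubly‑stochastic matrix $[Q_\kappa^\tau]_{a'a}=\lvert[\rho_\kappa(\tau)]_{a'a}\rvert^2$; the restriction to \eqref{eq:Case2} yields an operator of the same shape. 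Hence $\Pi:=T_{2,\times}^{\mathcal E_{\CQA}}|_{\eqref{eq:Case1}}$ is a convex combination of doubly‑stochastic symmetric matrices, i.e.\ a reversible Markov chain with uniform stationary distribution.

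For the gap I would argue as follows. Each $Q_\kappa^\tau$ has spectrum in $[-1,1]$, so $R_\tau=a_\lambda^\tau\otimes a_\mu^\tau+\tfrac12\,b_\lambda^\tau\otimes b_\mu^\tau$ with $a_\kappa^\tau=\tfrac12(I+Q_\kappa^\tau)\succeq0$ and $b_\kappa^\tau=\tfrac12(I-Q_\kappa^\tau)\succeq0$, from which $I\otimes I-R_\tau\succeq\tfrac14(I-Q_\lambda^\tau)\otimes I$ and, symmetrically, $\succeq\tfrac14\,I\otimes(I-Q_\mu^\tau)$; averaging the two and then over $\tau$,
\begin{align}
I\otimes I-\Pi\ \succeq\ \tfrac18\big[(I-Q_\lambda)\otimes I+I\otimes(I-Q_\mu)\big],\qquad Q_\kappa:=\tfrac1{|\mathcal T|}\textstyle\sum_{\tau}Q_\kappa^\tau .
\end{align}
Since $\mathcal T$ generates $S_n$, the graph $\mathcal G_{\mathcal T}$ is connected and each $Q_\kappa$ has a simple top eigenvalue $1$, so restricting the last inequality to $(\mathbf 1\otimes\mathbf 1)^\perp$ gives $\Delta(\Pi)\geq\tfrac18\min\{\Delta(Q_\lambda),\Delta(Q_\mu)\}$ (the degenerate case $d_\kappa=1$, where $\Pi=\tfrac12(I+Q_{\mu})$, is even better). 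It then remains to compare $Q_\kappa$ with $\mathcal G(\mathcal T)$. For $\U(1)$, $\rho_\kappa$ permutes the computational basis, so $Q_\kappa^\tau=\rho_{M^\kappa}(\tau)$ and $Q_\kappa=\tfrac1{|\mathcal T|}\sum_\tau\rho_{M^\kappa}(\tau)$ with $M^\kappa=\bigoplus_{\nu\unrhd\kappa}S^\nu$ containing the trivial $S_n$‑irrep exactly once; Aldous' spectral gap theorem~\cite{Aldous1992,CaputoProof2010} then gives $\Delta(Q_\kappa)\geq\Delta(A_{\mathcal G(\mathcal T)})/|\mathcal T|$. For $\SU(d)$, $Q_\kappa$ is the compression of $\tfrac1{|\mathcal T|}\sum_\tau\rho_\kappa(\tau)\otimes\overline{\rho_\kappa(\tau)}$ to the diagonal subspace of $S^\kappa\otimes\overline{S^\kappa}$, so Cauchy interlacing~\cite{Horn2017} bounds $\lambda_2(Q_\kappa)$ by $\lambda_2\big(\tfrac1{|\mathcal T|}\sum_\tau\rho_\kappa(\tau)\otimes\overline{\rho_\kappa(\tau)}\big)$; since $\rho_\kappa\otimes\overline{\rho_\kappa}\cong\rho_\kappa\otimes\rho_\kappa$ contains the trivial irrep exactly once (Kronecker coefficient $g_{\kappa\kappa,(n)}=1$), Aldous again gives $\Delta(Q_\kappa)\geq\Delta(A_{\mathcal G(\mathcal T)})/|\mathcal T|$.

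Combining these, $\Delta\big(T_{2,\times}^{\mathcal E_{\CQA}}|_{\eqref{eq:Case1}}\big)\geq\tfrac1{8|\mathcal T|}\Delta(A_{\mathcal G(\mathcal T)})$, and likewise for \eqref{eq:Case2}; specializing to nearest‑neighbour SWAPs, $|\mathcal T|=n-1$ and $\Delta(A_{\mathcal G(\mathcal T)})=2(1-\cos\tfrac\pi n)$ by Theorem~\ref{thm:CayleyGap}, which yields $\lambda_2\leq 1-\tfrac1{4(n-1)}(1-\cos\tfrac\pi n)$, exactly the stated bound. I expect the main obstacle to be the computation behind the first display — pushing the $T_2^\tau$ expansion through the double Wick projection and verifying the sign cancellations that make $\Pi$ a genuine symmetric doubly‑stochastic matrix — whereas the conceptual crux is the final comparison, where the ``squared'' transition operators $Q_\kappa$ must be brought back under Aldous' theorem via interlacing and the Kronecker/Kostka decompositions.
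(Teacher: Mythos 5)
Your proposal is correct, and it reaches exactly the paper's bound, but it arranges the ingredients differently from the paper's own proof. The paper never writes the projected operator as an explicit matrix: it observes (its Eq.~\eqref{eq:Vanishing1}) that the four cross terms $I\tau\tau I+\tau II\tau-II\tau\tau-\tau\tau II$ die after the YJM/Wick sandwich (your computation confirms this and, incidentally, fixes a typo there — the paper lists $\tau I\tau I$ among the vanishing terms while retaining it in the next line; the correct vanishing set is the one your cancellation produces), then upper-bounds the surviving $3\tau\tau\tau\tau$ by $3IIII$ so that the operator is dominated by $\frac{1}{|\mathcal T|}\sum_\tau\frac18(6IIII+I\tau I\tau+\tau I\tau I)$, whose gap follows from Aldous' theorem applied to $\tau\otimes\tau$ on the two like sectors, and finally passes to the projected operator by Cauchy interlacing. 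You instead project first, compute the compression exactly as the doubly stochastic matrix $\frac1{|\mathcal T|}\sum_\tau\bigl(\tfrac38 I\otimes I+\tfrac38 Q_\lambda^\tau\otimes Q_\mu^\tau+\tfrac18 Q_\lambda^\tau\otimes I+\tfrac18 I\otimes Q_\mu^\tau\bigr)$ (your Case-2 check that the eight terms permute roles but give the same matrix is right), and then use the tensor PSD splitting $R_\tau=a_\lambda^\tau\otimes a_\mu^\tau+\tfrac12 b_\lambda^\tau\otimes b_\mu^\tau$ to reduce to the single-sector chains $Q_\kappa$, which you bring under Aldous directly in the $\U(1)$ case and via interlacing from $\rho_\kappa\otimes\overline{\rho_\kappa}$ (trivial irrep multiplicity one) in the $\SU(d)$ case. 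What your route buys: the Markov-chain structure of Cases 1 and 2 becomes explicit (paralleling the paper's treatment of Case 3), the sign issue from $\tau^{\otimes4}$ having $-1$ eigenvalues is absorbed automatically rather than by the crude replacement $3\tau\tau\tau\tau\preceq 3IIII$, and the argument visibly works for any generating set $\mathcal T$; the price is the heavier explicit compression computation and the extra bookkeeping that the unit eigenvalue of each $Q_\kappa$ is simple, which you do need for the restriction of $(I-Q_\lambda)\otimes I+I\otimes(I-Q_\mu)$ to $(\mathbf 1\otimes\mathbf 1)^\perp$ and which indeed follows from the multiplicity-one facts plus the Aldous bound you invoke (and you correctly set aside the degenerate $d_\kappa=1$ sectors). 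With those points spelled out, your argument is a complete and valid alternative proof yielding $\lambda_2\le 1-\frac{1}{4(n-1)}(1-\cos\frac\pi n)$ for the 1D chain, matching the paper.
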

\begin{proof}
	We only prove \eqref{eq:Case1}.  \eqref{eq:Case2} follows by the same argumentation. Recall that we can also ignore the multiplicities when analyzing eigenvalues, thus we consider
	\begin{align}
		T_{2,\SU(d)}^{\mathcal{E}_{\CQA}} \vert_{S^\lambda \otimes S^\mu \otimes S^\lambda \otimes S^\mu }, \quad T_{2,\U(1)}^{\mathcal{E}_{\CQA}} \vert_{S^\lambda \otimes S^\mu \otimes S^\lambda \otimes S^\mu }
	\end{align} 
 	for some $S^\lambda \ncong S^\mu$. We study the $\SU(d)$ case first. Note that the operators in question are nontrivially supported on the first kind of basis elements from \eqref{eq:YJMBases1}. For any Young basis element $\ket{a_0}$ of $S^\lambda$ and $\ket{c_0}$ of $S^\mu$, since the representation of $\tau$ is always a real matrix, we assume
 	\begin{align}\label{eq:GeneralTauExapnsion}
 		\tau \ket{a_0} = \sum_i C_{\lambda,i} \ket{a_i}, \quad \tau \bra{a_0} = \sum_i C_{\lambda,i} \bra{a_i}, \quad 
 		\tau \ket{b_0} = \sum_j C_{\mu,j} \ket{b_j}, \quad \tau \bra{b_0} = \sum_j C_{\mu,j} \bra{b_j}.
 	\end{align}
 	These real coefficients can be found analytically by the Young orthogonal form introduced in Section \ref{sec:SnTheory} when $\tau$ is a nearest-neighbour transposition. They can be left as general as possible for our purpose here. By Eq.~\eqref{eq: Ttau-k2-integral},
 	\begin{align}
 		T_2^\tau = \frac{1}{8} (3 IIII + 3\tau\tau\tau\tau + I\tau I\tau + I\tau\tau I + \tau I I\tau + \tau I\tau I - I I \tau\tau - \tau\tau I I ).
 	\end{align}
 	Neglecting multiplicities and by Eq.~\eqref{eq:GeneralTauExapnsion} we find that
 	\begin{align}\label{eq:Vanishing1}
 		\big( I\tau\tau I + \tau I\tau I - I I \tau\tau - \tau\tau I I \big) E\indices{^{a_0}_{a_0} } \otimes  E\indices{^{c_0}_{c_0} } 
 		= \big( I\tau\tau I + \tau I\tau I - I I \tau\tau - \tau\tau I I \big)  \ket{ a_0 , c_0 } \bra{a_0,c_0}
 	\end{align}
 	is projected to zero by $T_2^{\YJM}$ for an arbitrary choice of basis element. Therefore,
 	\begin{align}
 		T_{2,\SU(d)}^{\mathcal{E}_{\CQA}} \vert_{S^\lambda \otimes S^\mu \otimes S^\lambda \otimes S^\mu }
 		= \Big[ \frac{1}{\vert \mathcal{T} \vert} \sum_{\tau \in \mathcal{T}}T_2^{\YJM} \frac{1}{8} (3 IIII + 3\tau\tau\tau\tau + I\tau I\tau + \tau I\tau I ) T_2^{\YJM} \Big]  \vert_{S^\lambda \otimes S^\mu \otimes S^\lambda \otimes S^\mu }.
 	\end{align}
 	
 	Even without the conjugate action of $T_2^{\YJM}$, we now know by Theorem \ref{thm:CayleyGap} and discussion afterwards that the spectral gap of
 	\begin{align}
 		\frac{1}{\vert \mathcal{T} \vert}  \sum_{\tau \in \mathcal{T}}  \frac{1}{8} (6 IIII + I\tau I\tau + \tau I\tau I )\vert_{S^\lambda \otimes S^\mu \otimes S^\lambda \otimes S^\mu }
 		\geq \frac{1}{\vert \mathcal{T} \vert} \sum_{\tau \in \mathcal{T}} \frac{1}{8} (3 IIII + 3\tau\tau\tau\tau + I\tau I\tau + \tau I\tau I )\vert_{S^\lambda \otimes S^\mu \otimes S^\lambda \otimes S^\mu }
 	\end{align}
 	can be obtained by Cayley graphs of $S_n$. For instance, when
 	\begin{enumerate}
 		\item $\mathcal{T} = \{(1,2),(2,3),...,(n-1,n)\}$ corresponding to the 1D chain, 
 		\begin{align}
 		\begin{aligned} 
 			& 1 - \frac{1}{4(n-1)} \Big(1- \cos\frac{\pi}{n}\Big) \approx 1 - \frac{1}{4(n-1)} \frac{\pi^2}{n^2} 
 			\geq \lambda_2 \Big( \frac{1}{\vert \mathcal{T} \vert}  \sum_{\tau \in \mathcal{T}}  \frac{1}{8} (6 IIII + I\tau I\tau + \tau I\tau I )\vert_{S^\lambda \otimes S^\mu \otimes S^\lambda \otimes S^\mu } \Big) \\
 			\geq & \lambda_2\Big( \frac{1}{\vert \mathcal{T} \vert} \sum_{\tau \in \mathcal{T}} \frac{1}{8} (3 IIII + 3\tau\tau\tau\tau + I\tau I\tau + \tau I\tau I )\vert_{S^\lambda \otimes S^\mu \otimes S^\lambda \otimes S^\mu } \Big) \\
 			\geq & \lambda_2 \big( T_{2,\SU(d)}^{\mathcal{E}_{\CQA}} \vert_{S^\lambda \otimes S^\mu \otimes S^\lambda \otimes S^\mu } \big),
 		\end{aligned}
 		\end{align}
 		where we used the Cauchy interlacing theorem in last inequality by noting that the last two operators share the same unique unit eigenvector, in the subspace in question, up to scaling (see Eq.~\eqref{eq:Case1-Eigenvector}).  
 		
 		\item  $\mathcal{T} = \{(1,n),(2,n),...,(n-1,n)\}$ corresponding to the star, the upper bound on the second largest eigenvalue is 
 		\begin{align}
 			1 - \frac{1}{8(n-1)}.
 		\end{align}
 		
 		\item $\mathcal{T} = \{(i,j); i < j\}$ corresponding to the complete graph, the upper bound on the second largest eigenvalue is 
 		\begin{align}
 			1 - \frac{1}{4(n-1)}.
 		\end{align}
 	\end{enumerate}
 	The $\U(1)$ case can be proved immediately by the same method.
\end{proof}

%--------------------------------------------------------------------------------------------------

\subsection{Proof of Lemma \ref{lemma:Modified}}

Before proving \eqref{eq:Case3}, we first prove basic properties of the modified moment operator in Definition \ref{def:Modified}. The proof works for both $\SU(d)$ and $\U(1)$  because we only use the property that $\tau$ is both a unitary and Hermitian matrix and $\tau^2 = I$. Accordingly, we omit the subscript indicating the symmetries here. By definition,
\begin{align}
	& M_2^\tau = \frac{1}{8}(6IIII + I \tau I \tau + \tau I \tau I + \tau I I \tau + I \tau \tau I - II \tau \tau - \tau \tau II), \tag{D12$^\ast$} \\
	& T_2^\tau = \frac{1}{8} (3 IIII + 3\tau\tau\tau\tau + I\tau I\tau + I\tau\tau I + \tau I I\tau + \tau I\tau I - I I \tau\tau - \tau\tau I I ). \tag{E4$^\ast$}
\end{align}
The definition of $T_2^\tau$ in Eq.~\eqref{eq: Ttau-k2-integral}) implies that it is positive semidefinite. Moreover, 
\begin{align}
	M_2^\tau \geq T_2^\tau \geq 0.
\end{align} 
Simple computation shows that $(M_2^\tau)^2 = T_2^\tau$, which implies that these positive semidefinite operators are simultaneously diagonalizable with
\begin{align}
	\lambda_i( M_2^\tau ) = \sqrt{ \lambda_i( T_2^\tau ) }.
\end{align}
Since all these eigenvalues are nonnegative and no larger than one,
\begin{align}
	2 \geq 1 + \sqrt{\lambda_i( T_2^\tau ) }  = \frac{1 - \lambda_i( T_2^\tau ) }{1 - \sqrt{\lambda_i( T_2^\tau ) }} = \frac{1 - \lambda_i( T_2^\tau ) }{1 - \lambda_i( M_2^\tau ) } 
	\implies 2( I - M_2^\tau ) \geq I - T_2^\tau. 
\end{align}
Summing over $\mathcal{T}$, we obtain the bound on second largest eigenvalues as well as spectral gaps of $M_2^{\mathcal{E}_{\CQA}}$ and $T_2^{\mathcal{E}_{\CQA}}$. The above inequality also indicates that they share the same unit eigenspace. 

We have already shown the invariance of $T_2^{\mathcal{E}_{\CQA}}$ in the most general form in Example \ref{example:InvariantSubK=2}.  Replacing the integral of conjugate actions in Eq.~\eqref{eq:Invariant} by that of $\tau^{\otimes 4} =  \tau^{\otimes 2} \otimes \tau^{\otimes 2}$, we see that $\tau^{\otimes 4}$ is also invariant under $\Ima\mathcal{P}_{\sbA},\Ima\mathcal{P}_{\sbB},\Ima\mathcal{P}_{\sbC}$ and $\Ima\mathcal{P}_{\sbD}$ defined in Eq.~\eqref{eq: projector-invariant-subspace}. Conjugating by the action of $T_2^{\YJM}$ (or $T_2^Z$), we see that $M_2^{\mathcal{E}_{\CQA}}$ is also well-defined within $S_{\sbA},S_{\sbB},S_{\sbC}$ and $S_{\sbD}$. 

%--------------------------------------------------------------------------------------------------

\subsection{Proof of \eqref{eq:Case3} under $\U(1)$ symmetry}\label{sec:detailsU(1)}

\subsubsection{Proof of Lemma \ref{lemma:CayletBound1}}

With respect to a fixed charge sector $S^\mu$, we now explicitly write out the transition matrices of $M_{2, \U(1)}^{\mathcal{E}_{\CQA}}$ in the invariant subspaces 
\begin{align}
    S_{\sbA} = \text{span} \left\{ \frac{1}{2} \Big( (E\indices{^a_a} E\indices{^b_b} + E\indices{^b_b} E\indices{^a_a}) +  (E\indices{^a_b} E\indices{^b_a} + E\indices{^b_a} E\indices{^a_b}) \Big), E\indices{^a_a} E\indices{^a_a}; a < b \right\} \tag{D5$^\ast$}
\end{align}
As before, we save the notation by removing indices of sectors.

\begin{claim}\label{claim:U(1)-1}
	The following facts hold by the modified CQA second-moment operator $M_2^{\mathcal{E}_{\CQA}}$:
	\begin{enumerate}
		\item Under bases given in Eq.~\eqref{eq:InvariantSubapce-states1}, $M_{2,\U(1),\sbA}^{\mathcal{E}_{\CQA}}$ is irreducible, doubly stochastic and symmetric.
		
		\item Any nonzero off-diagonal entry of  $M_{2,\U(1)}^{\mathcal{E}_{\CQA}}$ comes from the action of a unique $\tau \in \mathcal{T}$ when defining $\mathcal{E}_{\CQA}$.
	\end{enumerate}
\end{claim}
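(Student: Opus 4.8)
# Proof Proposal for Claim \ref{claim:U(1)-1}

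The plan is to compute the matrix elements of $M_{2,\U(1)}^{\mathcal{E}_{\CQA}}$ explicitly in the basis of Eq.~\eqref{eq:InvariantSubapce-states1}, tracking which transpositions $\tau \in \mathcal{T}$ contribute to each entry, and then read off all four asserted properties from the resulting formulas. Write $v_{ab} := \tfrac{1}{2}\big((E\indices{^a_a}E\indices{^b_b} + E\indices{^b_b}E\indices{^a_a}) + (E\indices{^a_b}E\indices{^b_a} + E\indices{^b_a}E\indices{^a_b})\big)$ for $a < b$ and $v_{aa} := E\indices{^a_a}E\indices{^a_a}$, where $a,b$ now label computational basis elements of the $\U(1)$ charge sector $S^\mu$. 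First I would recall from Section \ref{sec:detailsT2CQA} that $T_2^Z$ is the orthogonal projection onto the span of the operators in Eqs.~\eqref{eq:YJMBases1}--\eqref{eq:YJMBases3-2} (the diagonal-index operators), which is precisely the span of $\{v_{ab}\}$ once restricted to $\Ima\mathcal{P}_{\sbA}$; hence $M_{2,\U(1),\sbA}^{\mathcal{E}_{\CQA}} = \tfrac{1}{|\mathcal{T}|}\sum_{\tau} T_2^Z M_2^\tau T_2^Z$ acts on this span, and it suffices to evaluate $T_2^Z M_2^\tau T_2^Z\, v_{ab}$ for each $\tau = (i,i+1)$ and expand in the $v_{cd}$ basis.

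The core computation uses that on a $\U(1)$ charge sector a SWAP $\tau$ permutes computational basis vectors: either $\tau\ket{a} = \ket{a}$ or $\tau\ket{a} = \ket{a'}$ for a distinct basis vector $a'$ (no superpositions appear, unlike the $\SU(d)$ Young-orthogonal case). So $\tau^{\otimes 4}$ sends each $E\indices{^a_b}E\indices{^c_d}$ to another such operator with permuted indices, and $M_2^\tau = \tfrac{1}{8}(6IIII + I\tau I\tau + \tau I\tau I + \tau I I\tau + I\tau\tau I - II\tau\tau - \tau\tau II)$ produces a fixed signed combination. Applying $T_2^Z$ on the left then keeps only the diagonal-index terms. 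I expect the outcome to be: $M_{2,\U(1),\sbA}^{\mathcal{E}_{\CQA}}\, v_{ab}$ equals a positive-coefficient combination of $v_{ab}$ itself and of $v_{a'b}$, $v_{ab'}$ where $a',b'$ are the $\tau$-images; crucially, the troublesome sign-negative terms $II\tau\tau + \tau\tau II$ land on operators of the form $E\indices{^a_b}E\indices{^a_b}$ or similar mixed-index operators that are annihilated by $T_2^Z$ (this is the $\U(1)$ analogue of Eq.~\eqref{eq:Vanishing1} / Eq.~\eqref{eq:YJMPorjection}), so no negative entries survive. Symmetry of the matrix follows because $M_2^\tau$ is Hermitian and the $v_{ab}$ are (real) orthonormal up to normalization; double-stochasticity follows because the two unit eigenvectors of Eq.~\eqref{eq:Case3-Eigenvectors} restrict within $S_{\sbA}$ to the all-ones vector $\sum_{a\le b} v_{ab}$ (suitably normalized) being fixed, together with nonnegativity — a nonnegative symmetric matrix with the uniform vector as a $+1$ eigenvector is doubly stochastic. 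Irreducibility follows because $\mathcal{T}$ generates $S_n$, so the induced walk on unordered pairs $\{a,b\}$ of computational basis states within a charge sector is connected (any pair can be moved to any other by a sequence of adjacent SWAPs). For part (2), I would observe from the index bookkeeping that a given off-diagonal pair $(v_{ab}, v_{cd})$ with $\{a,b\}\neq\{c,d\}$ can only be linked if $\{c,d\}$ is obtained from $\{a,b\}$ by a single transposition in $\mathcal{T}$ acting on one of the two indices, and since distinct transpositions act distinctly on distinct index pairs, at most one $\tau$ contributes.

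The main obstacle I anticipate is the careful case analysis of which terms in $M_2^\tau$, after sandwiching by $T_2^Z$, produce surviving (diagonal-index) operators versus which are projected to zero — in particular verifying that every potential negative contribution is killed, so that the stochasticity claim genuinely holds and is not spoiled by a stray sign. This requires splitting into the cases where $\tau$ fixes both of $a,b$, fixes one and moves the other, or moves both, and in the last subcase further tracking whether the two images collide. None of this is deep, but it is the step where an error would invalidate the Markov-chain machinery of Lemma \ref{lemma:CayletBound1}, so I would present it as an explicit finite table of $M_2^\tau$-action on the relevant $E$-operators with the $T_2^Z$-survivors highlighted, rather than a narrative argument.
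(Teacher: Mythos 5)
Your overall route is the same as the paper's: expand $T_2^Z M_2^\tau T_2^Z$ case by case on the states $v_{ab}$, check each block is symmetric and stochastic, get irreducibility from the fact that $\mathcal{T}$ generates $S_n$, and get uniqueness of the contributing $\tau$ from the fact that a transposition of qubit positions mapping one bit string to a distinct bit string is uniquely determined. However, the one intermediate assertion on which you hang stochasticity is wrong: it is \emph{not} true that the negative terms $-II\tau\tau-\tau\tau II$ are always annihilated by $T_2^Z$. They are killed when $\tau$ moves $a$ and/or $b$ to labels outside $\{a,b\}$ (your analogue of Eq.~\eqref{eq:Vanishing1}), but in the case $\tau(a)=b$, i.e.\ when $\tau$ exchanges the two labels of the state $v_{ab}$, they survive the projection: $II\tau\tau\,(E\indices{^a_a}E\indices{^b_b}) = E\indices{^a_b}E\indices{^b_a}$ and $\tau\tau II\,(E\indices{^a_a}E\indices{^b_b}) = E\indices{^b_a}E\indices{^a_b}$, both of which lie in the $T_2^Z$-invariant span. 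Carrying the computation through in this case gives
\begin{align}
	T_2^Z M_2^\tau T_2^Z\, v_{ab} = \frac{1}{8}\big(4\,v_{ab} + 2\,v_{aa} + 2\,v_{bb}\big),
\end{align}
which is exactly the paper's first block $\frac{1}{8}\bigl(\begin{smallmatrix}4&2&2\\ 2&6&0\\ 2&0&6\end{smallmatrix}\bigr)$: the surviving negative contributions do not produce negative entries, but they do lower the diagonal coefficient from $6/8$ to $4/8$, and nonnegativity holds because the $\tfrac{6}{8}IIII$ term dominates, not because the negatives vanish. Since your double-stochasticity argument is ``nonnegativity $+$ symmetry $+$ the uniform vector is fixed,'' and nonnegativity is precisely what you justified by the false vanishing claim, item (1) as argued has a gap; your own proposed remedy (an explicit finite table of the $M_2^\tau$-action with the $T_2^Z$-survivors marked, split into the cases $\tau$ fixes both labels / moves one / moves both / exchanges them) is what the paper does and would expose and repair it.

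The remaining ingredients are sound and essentially identical to the paper's. The uniform vector $\sum_{a<b}v_{ab}+\sum_a v_{aa}=\Pi_+$ is indeed fixed by every $T_2^Z M_2^\tau T_2^Z$, symmetry follows since the $v_{ab}$ are orthonormal and the operator is real Hermitian, and irreducibility follows from connectivity of the induced walk on index pairs. For item (2), make ``distinct transpositions act distinctly'' precise: the case analysis shows a nonzero off-diagonal entry only ever moves a \emph{single} string label (the move $\{a,b\}\to\{\tau(a),\tau(b)\}$ with four distinct labels has matrix element zero, as the $4\times4$ block's zeros show), and for fixed distinct strings $a\neq c$ in a charge sector there is at most one position swap with $\tau(a)=c$, hence at most one $\tau\in\mathcal{T}$ contributes to that entry.
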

\begin{proof}
	The action of $M_{2,\U(1)}^\tau$ on \eqref{eq:InvariantSubapce-states1} can be decomposed into even smaller invariant subspaces classified as the following four types:
	\begin{enumerate}
		\item Suppose $\tau(a) = b$ for some fixed $a \neq b$, $M_{2,\U(1)}^\tau$ is restricted to the following $3 \times 3$ matrix 
		\begin{align}\label{eq:U(1)++StochasticMatrix1}
			\frac{1}{8} \begin{pmatrix}
				4 & 2 & 2 \\
				2 & 6 & 0 \\
				2 & 0 & 6 \\
			\end{pmatrix} 
		\end{align}
	    within
		\begin{align}
			\text{span}\Big\{ \frac{1}{2} \Big( (E\indices{^a_a} E\indices{^b_b} + E\indices{^b_b} E\indices{^a_a}) +  (E\indices{^a_b} E\indices{^b_a} + E\indices{^b_a} E\indices{^a_b}) \Big), E\indices{^a_a} E\indices{^a_a}, E\indices{^b_b} E\indices{^b_b} \Big\}.
		\end{align}
		A trick to obtain these matrix entries easily is, for example, simply computing the expansion coefficient of 
		\begin{align}
			\frac{1}{2} M_{2,\U(1)}^\tau \Big( (E\indices{^a_a} E\indices{^b_b} + E\indices{^b_b} E\indices{^a_a}) +  (E\indices{^a_b} E\indices{^b_a} + E\indices{^b_a} E\indices{^a_b}) \Big) 
		\end{align}
		at $E\indices{^a_a} E\indices{^b_b}, E\indices{^a_a} E\indices{^a_a}$ and $E\indices{^b_b} E\indices{^b_b}$, because each simple tensor appears in one and only one basis element from Eq.~\eqref{eq:InvariantSubapce-states1}. 
		
		\item Suppose $\tau(a) = c, \tau(b) = d$ for some fixed $a \neq b \neq c \neq d$, $M_{2,\U(1)}^\tau$ is restricted to the following $4 \times 4$ matrix
		\begin{align}\label{eq:U(1)++StochasticMatrix2}
			\frac{1}{8}  \begin{pmatrix}
				6 & 0 & 1 & 1 \\
				0 & 6 & 1 & 1 \\
				1 & 1 & 6 & 0 \\
				1 & 1 & 0 & 6 \\
			\end{pmatrix}
		\end{align}
		within
		\begin{align}
			\begin{aligned}
				\text{span}\Big\{ & \frac{1}{2} \Big( (E\indices{^a_a} E\indices{^b_b} + E\indices{^b_b} E\indices{^a_a}) +  (E\indices{^a_b} E\indices{^b_a} + E\indices{^b_a} E\indices{^a_b}) \Big), 
				\frac{1}{2} \Big( (E\indices{^c_c} E\indices{^d_d} + E\indices{^d_d} E\indices{^c_c}) +  (E\indices{^c_d} E\indices{^d_c} + E\indices{^d_c} E\indices{^c_d}) \Big), \\
				& \frac{1}{2} \Big( (E\indices{^a_a} E\indices{^d_d} + E\indices{^d_d} E\indices{^a_a}) +  (E\indices{^a_d} E\indices{^d_a} + E\indices{^d_a} E\indices{^a_d}) \Big),
				\frac{1}{2} \Big( (E\indices{^c_c} E\indices{^b_b} + E\indices{^b_b} E\indices{^c_c}) +  (E\indices{^c_b} E\indices{^b_c} + E\indices{^b_c} E\indices{^c_b}) \Big) \Big\}.
			\end{aligned}
		\end{align}
		
		\item Suppose $\tau(a) = c, \tau(b) = b$ for some fixed $a \neq b \neq c$, $M_{2,\U(1)}^\tau$ is restricted to the following $2 \times 2$ matrix
		\begin{align}\label{eq:U(1)++StochasticMatrix3}
			\frac{1}{8} \begin{pmatrix}
				7 & 1 \\
				1 & 7 \\
			\end{pmatrix}
		\end{align}
		within
		\begin{align}
			\text{span}\Big\{ \frac{1}{2} \Big( (E\indices{^a_a} E\indices{^b_b} + E\indices{^b_b} E\indices{^a_a}) +  (E\indices{^a_b} E\indices{^b_a} + E\indices{^b_a} E\indices{^a_b}) \Big), 
			\frac{1}{2} \Big( (E\indices{^c_c} E\indices{^b_b} + E\indices{^b_b} E\indices{^c_c}) +  (E\indices{^c_b} E\indices{^b_c} + E\indices{^b_c} E\indices{^c_b}) \Big) \Big\}.
		\end{align}
		Similarly, we can study the case for $\tau(a) = a, \tau(b) = d$ and $a \neq b \neq d$.
		
		\item Suppose $\tau(a) = a, \tau(b) = b$ for some fixed $a \neq b$, $M_{2,\U(1)}^\tau$ is trivially restricted to the following $1 \times 1$ matrix
		\begin{align}\label{eq:U(1)++StochasticMatrix4}
			\frac{1}{8} \begin{pmatrix}	8 \end{pmatrix}
		\end{align}
		within
		\begin{align}
			\text{span}\Big\{ \frac{1}{2} \Big( (E\indices{^a_a} E\indices{^b_b} + E\indices{^b_b} E\indices{^a_a}) +  (E\indices{^a_b} E\indices{^b_a} + E\indices{^b_a} E\indices{^a_b}) \Big) \Big\}.
		\end{align}
	\end{enumerate}
	Notice that each submatrix is symmetric and doubly stochastic and each subblock cannot overlap. After summing over a generating set $\mathcal{T}$ and divided by the normalization constant, $M_{2,\U(1),\sbA}^{\mathcal{E}_{\CQA}}$ is a doubly stochastic symmetric matrix. 
	
	To prove that it is irreducible, we first abbreviate the basis elements from \eqref{eq:InvariantSubapce-states1} as $v_{ab}$ for simplicity and we note that $\mathcal{T}$ is a generating set of $S_n$. For any $v_{cd} \neq v_{ab}$, we can find transpositions $\tau_1,...,\tau_r,\sigma_1,...,\sigma_s$ such that
	\begin{align}
		& \tau_1(a) = a_1,..., \tau_i(a_{i-1}) = a_i,...,\tau_r(a_{r-1}) = c, \\
		& \sigma_1(b) = {b_1},..., \sigma_j(\mathmakebox[6.6mm]{b_{j-1}}) = b_j,...,\sigma_s(\mathmakebox[7.5mm]{b_{s-1}}) = d.
	\end{align}
	By the previous discussion,
	\begin{align}
		\big\langle v_{a_i b}, M_2^{\tau_i} v_{a_{i-1} b} \big\rangle,  \quad \big\langle v_{c b_j}, M_2^{\sigma_j} v_{c b_{j-1}}, \big\rangle \neq 0.
	\end{align}
	This indicates that $M_{2,\U(1),\sbA}^{\mathcal{E}_{\CQA}}$, as a doubly stochastic matrix, is also irreducible. The unique stationary distribution is uniform:
	\begin{align}
		\pi_{\sbA}(v_{ab}) \equiv \frac{2}{d_\mu (d_\mu + 1)}, 
	\end{align}
	where $d_\mu = \dim S^\mu$ and $\frac{d_\mu (d_\mu + 1)}{2}$ is the total number of states in $S_{\sbA}$. Written as the unit eigenvector, it is
	\begin{align}
		\begin{aligned}
			& \frac{2}{d_\mu (d_\mu + 1)} \Big[ \sum_{a < b} \frac{1}{2}\Big( (E\indices{^a_a} E\indices{^b_b} + E\indices{^b_b} E\indices{^a_a}) + (E\indices{^a_b} E\indices{^b_a} + E\indices{^b_a} E\indices{^a_b}) \Big)
			+ \sum_a E\indices{^a_a} E\indices{^a_a} \Big] \\
			= & \frac{1}{d_{\mu }(d_\mu + 1)}(\hat{e} + \hat{\sigma})    
			= \frac{2}{d_{\mu }(d_\mu + 1)} \Pi_{+}.
		\end{aligned}
	\end{align}
	
	Finally, assume $M_{2,\U(1),\sbA}^{\mathcal{E}_{\CQA}}$ can transition $v_{ab}$ to $v_{cd}$ with $v_{ab} \neq v_{cd}$. Suppose $a \neq c$, in order to have this nonzero transition probability, there must be some $\tau$ satisfying $\tau(a) = c$. Since $\tau$ is a transposition and $a,b,c,d$ stand for the computation basis elements of binary strings, there cannot be any other transposition having the same property, which completes the proof. 
\end{proof}

%-------------------------------------------------------------------------------------

\begin{claim}\label{claim:Cayley-1}
	Analogously, for the Cayley moment operator
	\begin{align}
		\Cay_{2,\U(1)} = \frac{1}{\vert \mathcal{T} \vert} \sum_{\tau \in \mathcal{T}} T_2^Z \frac{1}{8}(6IIII + \tau I \tau I + I \tau I \tau) T_2^Z, \tag{D14$^\ast$}
	\end{align}
	let
	\begin{align}\label{sym-cay-state-space-1}
		\Sym \vcentcolon = \operatorname{span} \left\{ \frac{1}{2}\Big( E\indices{^a_a} E\indices{^b_b} + E\indices{^b_b} E\indices{^a_a} \Big), E\indices{^a_a} E\indices{^a_a}; a < b \right\} \subset (S^{\mu})^{\otimes 4}.
	\end{align}
	Then
	\begin{enumerate}		
		\item The Cayley operator is invariant under this subspace. Let $\Cay_{2,\U(1), \Sym}$ denotes its restriction. It is an irreducible reversible row-stochastic transition matrix under basis from Eq.~\eqref{sym-cay-state-space-1}.
		
		\item Any nonzero off-diagonal entry of $\Cay_{2,\U(1),\Sym}$ comes from the action of a unique $\tau \in \mathcal{T}$ when defining $\mathcal{E}_{\CQA}$.
	\end{enumerate}
\end{claim}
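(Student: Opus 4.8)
The plan is to prove Claim~\ref{claim:Cayley-1} by mirroring the structure of the proof of Claim~\ref{claim:U(1)-1}, exploiting the fact that the symmetrized operators $\tfrac12(\tau I \tau I + I \tau I \tau)$ act on the ``diagonal'' sector spanned by the $E\indices{^a_a}E\indices{^b_b}$-type basis elements without mixing in the off-diagonal permutation pieces. First I would verify invariance: since $\tau I \tau I$ and $I \tau I \tau$ each conjugate $E\indices{^a_a}E\indices{^b_b}$ into a linear combination of $E\indices{^{a'}_{a'}}E\indices{^{b'}_{b'}}$'s (because $\tau(\cdot)\tau^{-1}$ on a diagonal matrix unit $E\indices{^a_a}$ produces $E\indices{^{\tau(a)}_{\tau(a)}}$ when $\tau$ fixes or swaps basis labels — and, crucially, no genuine off-diagonal $E\indices{^a_b}$, $a\neq b$, arises from conjugating a diagonal unit by a permutation matrix), and since $T_2^Z$ projects onto operators that commute with all $\rho(Z_rZ_s)$ (Lemma~\ref{lemma:CQAk}), the image lands back in $\Sym$. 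Concretely, I would observe that on $\Sym$ the Wick projection $T_2^Z$ acts as the identity on the symmetrized diagonal basis, so $\Cay_{2,\U(1),\Sym}$ is just the restriction of $\tfrac1{|\mathcal T|}\sum_\tau \tfrac18(6\,IIII + \tau I\tau I + I\tau I\tau)$ to $\Sym$.

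Next I would compute the local $\tau$-blocks exactly as in Claim~\ref{claim:U(1)-1}, but now only the $\tau I \tau I + I \tau I \tau$ terms contribute (no $\tau\tau II$, $II\tau\tau$, etc.). Casework on the action of a single $\tau$ on the labels $a,b$: (i) $\tau(a)=b$; (ii) $\tau(a)=c,\tau(b)=d$ with all distinct; (iii) $\tau(a)=c,\tau(b)=b$; (iv) $\tau$ fixes both $a$ and $b$. In each case the $3\times3$, $4\times4$, $2\times2$, $1\times1$ submatrices are straightforward to write down and are manifestly row-stochastic with nonnegative entries; one checks they are reversible with respect to a suitable stationary weight (which, because the blocks are symmetric after rescaling by the binomial multiplicities of the states, will again be essentially uniform up to the $E\indices{^a_a}E\indices{^a_a}$ versus $E\indices{^a_a}E\indices{^b_b}$ distinction). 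Summing over the generating set $\mathcal T$ and dividing by $|\mathcal T|$ preserves row-stochasticity and nonnegativity. Irreducibility then follows by the same ladder argument: since $\mathcal T$ generates $S_n$, for any two diagonal-sector states one chains transpositions $\tau_i$ realizing $\tau_i(a_{i-1})=a_i$ and separately $\sigma_j(b_{j-1})=b_j$, each step giving a strictly positive transition amplitude, establishing a path between any pair of basis states.

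For the second part of the claim — that each nonzero off-diagonal entry of $\Cay_{2,\U(1),\Sym}$ comes from a \emph{unique} $\tau\in\mathcal T$ — I would argue exactly as in the last paragraph of the proof of Claim~\ref{claim:U(1)-1}: if the operator transitions the symmetrized state built from $\{a,b\}$ to the one built from $\{c,d\}$ with the pair distinct, then (say) $a\neq c$ forces some $\tau$ with $\tau(a)=c$; since $a$ and $c$ are computational-basis binary strings differing in at least one coordinate, a transposition of qubit positions carrying $a$ to $c$ is uniquely determined (one can only swap the two sites where the strings disagree in the specific way that matches), so no other $\tau\in\mathcal T$ produces the same transition. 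The only mild extra bookkeeping is that the symmetrized diagonal basis can lose track of which label is ``$a$'' and which is ``$b$'', but since $\tau$ must also act consistently on $b$, the constraint pins down $\tau$ uniquely in all cases.

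The main obstacle I expect is not any single step but the careful verification of invariance of $\Sym$ under the \emph{composite} operator $T_2^Z\circ\tfrac18(6\,IIII+\tau I\tau I+I\tau I\tau)\circ T_2^Z$: one must make sure that conjugation by $\tau I \tau I$ does not generate off-diagonal permutation operators that $T_2^Z$ would then handle differently — this is where the specific structure ``$\tau\otimes I\otimes\tau\otimes I$'' (acting on the $1$st--$3$rd tensor legs but not the $2$nd--$4$th) matters, and I would need to check that when restricted to the symmetric-subspace projector $\Pi_+$ the relevant terms collapse onto $\Sym$ rather than leaking into $\Ima\mathcal P_{\ytableaushort{{} {*(black)}, {*(black)} {}}}$. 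Once this is confirmed, the stochasticity, reversibility, and irreducibility all reduce to the elementary block computations already exemplified in Claim~\ref{claim:U(1)-1}, and the uniqueness statement is immediate.
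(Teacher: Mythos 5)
Your proposal is correct and follows essentially the same route as the paper's proof: you restrict to $\Sym$ (where $T_2^Z$ acts trivially on the commutant elements $E\indices{^a_a}E\indices{^b_b}$), compute the same four per-$\tau$ blocks, sum over $\mathcal{T}$ to obtain a row-stochastic (but not doubly stochastic) irreducible chain, reversible with respect to the stationary weights $2/d_\mu^2$ for $a<b$ versus $1/d_\mu^2$ for $a=b$, and identify the unique $\tau$ behind each off-diagonal entry via the binary-string argument, exactly as in Claim \ref{claim:U(1)-1}. Your only addition is the explicit verification that the symmetrized combination $\tau I \tau I + I \tau I \tau$ keeps $\Sym$ invariant, which the paper leaves implicit.
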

\begin{proof}
	As a reminder, the first kind of basis vectors from \eqref{sym-cay-state-space-1} are not normalized. This is made intentionally, which allows the matrix representation of $\Cay_{2,\U(1), \Sym}$ to be row stochastic as follows.
	
	For each $\tau$, the action of $\Cay_{2,\U(1)}^\tau$ on \eqref{sym-cay-state-space-1} can be decomposed into even smaller invariant subspaces classified as the following four types:
	\begin{enumerate}
		\item Suppose $\tau(a) = b$ for some fixed $a \neq b$, $\Cay_{2,\U(1)}^\tau$ is restricted to the following $3 \times 3$ matrix 
		\begin{align}
			\frac{1}{8} \begin{pmatrix}
				6 & 1 & 1 \\
				2 & 6 & 0 \\
				2 & 0 & 6 \\
			\end{pmatrix} 
		\end{align}
		within
		\begin{align}
			\text{span}\Big\{ \frac{1}{2}\Big( E\indices{^a_a} E\indices{^b_b} +  E\indices{^b_b} E\indices{^a_a} \Big), E\indices{^a_a} E\indices{^a_a}, E\indices{^b_b} E\indices{^b_b} \Big\}.
		\end{align} 

		\item Suppose $\tau(a) = c, \tau(b) = d$ for some fixed $a \neq b \neq c \neq d$, $\Cay_{2,\U(1)}^\tau$ is restricted to the following $4 \times 4$ matrix
		\begin{align}
			\frac{1}{8}  \begin{pmatrix}
				6 & 0 & 1 & 1 \\
				0 & 6 & 1 & 1 \\
				1 & 1 & 6 & 0 \\
				1 & 1 & 0 & 6 \\
			\end{pmatrix}
		\end{align}
		within
		\begin{align}
			\begin{aligned}
				\text{span}\Big\{ \frac{1}{2}\Big( E\indices{^a_a} E\indices{^b_b} + E\indices{^b_b} E\indices{^a_a} \Big), 
				\frac{1}{2}\Big( E\indices{^c_c} E\indices{^d_d} + E\indices{^d_d} E\indices{^c_c} \Big), 
				\frac{1}{2}\Big( E\indices{^a_a} E\indices{^d_d} + E\indices{^d_d} E\indices{^a_a} \Big),
				\frac{1}{2}\Big( E\indices{^c_c} E\indices{^b_b} + E\indices{^b_b} E\indices{^c_c} \Big) \Big\}.
			\end{aligned}
		\end{align}
		
		\item Suppose $\tau(a) = c, \tau(b) = b$ for some fixed $a \neq b \neq c$, $\Cay_{2,\U(1)}^\tau$ is restricted to the following $2 \times 2$ matrix
		\begin{align}
			\frac{1}{8} \begin{pmatrix}
				7 & 1 \\
				1 & 7 \\
			\end{pmatrix}
		\end{align}
		within
		\begin{align}
			\text{span}\Big\{ \frac{1}{2}\Big( E\indices{^a_a} E\indices{^b_b} + E\indices{^b_b} E\indices{^a_a} \Big), 
			\frac{1}{2}\Big( E\indices{^c_c} E\indices{^b_b} + E\indices{^b_b} E\indices{^c_c} \Big) \Big\}.
		\end{align}
		We can similarly consider the case for $\tau(a) = a, \tau(b) = d$ and $a \neq b \neq d$.
		
		\item Suppose $\tau(a) = a, \tau(b) = b$ for some fixed $a \neq b$, $\Cay_{2,\U(1)}^\tau$ is trivially restricted to the following $1 \times 1$ matrix
		\begin{align}
			\frac{1}{8} \begin{pmatrix}	8 \end{pmatrix}
		\end{align}
		within
		\begin{align}
			\text{span}\Big\{ \frac{1}{2}\Big( E\indices{^a_a} E\indices{^b_b} + E\indices{^b_b} E\indices{^a_a} \Big) \Big\}.
		\end{align}
	\end{enumerate} 
	By the same argumentation used in proving Claim \ref{claim:U(1)-1}, $\Cay_{2,\U(1), \Sym}$ is irreducible and row-stochastic. It should be noted that for the first case when $\tau(a) = b$, the matrix representation ceases to the doubly stochastic. Let $d_\mu = \dim S^\mu$. Since
	\begin{align}
		\frac{1}{d_\mu^2} \Big( 2 \cdot \frac{d_\mu( 1-d_\mu)}{2} + 1 \cdot d_\mu \Big) = 1,
	\end{align}
	one can check, by examining $\pi_{\Cay} \cdot \Cay_{2,\U(1)}^\tau$ for arbitrary $\tau$, that the current stationary distribution is
	\begin{align}
		\pi_{\Cay}(v_{ab}) = \begin{cases}
			\frac{2}{d_\mu^2}, & a < b \\
			\frac{1}{d_\mu^2}, & a = b
		\end{cases}, 
	\end{align}
	where we use abbreviation $v_{ab}$ here for basis vectors from \eqref{sym-cay-state-space-1}. Written as a vector, it is
	\begin{align} 
		\begin{aligned}
			& \frac{2}{d_\mu^2} \sum_{a < b}  (E\indices{^a_a} E\indices{^b_b} + E\indices{^b_b} E\indices{^a_a})
			+ \frac{1}{d_\mu^2} \sum_a E\indices{^a_a} E\indices{^a_a}.
		\end{aligned}
	\end{align}
	The transition matrix is reversible because for any $\tau$,
	\begin{align}
		\pi_{\Cay}(v_{ab}) \Cay_{2,\U(1)}^\tau(v_{ab}, v_{cd}) = \pi_{\Cay}(v_{cd}) \Cay_{2,\U(1)}^\tau(v_{cd}, v_{ab}),
	\end{align}
	which concludes the proof.
\end{proof}

%-------------------------------------------------------------------------------------

We are now going to bound the spectral gap of $\Cay_{2, \U(1),\Sym}$. As a caveat, the theory of $S_n$ Cayley graphs introduced in Section \ref{sec:Aldous} can be applied to evaluate the spectral gap of 
\begin{align}\label{eq:CayleySum}
	\Cay = \frac{1}{\vert \mathcal{T} \vert} \sum_{\tau \in \mathcal{T}} \frac{1}{8}(6IIII + \tau I \tau I + I \tau I \tau).
\end{align}
However, with additional projections given by second-order Pauli-$Z$ matrices on
\begin{align}
	\Cay_{2,\U(1)} = \frac{1}{\vert \mathcal{T} \vert} \sum_{\tau \in \mathcal{T}} T_2^Z \frac{1}{8}(6IIII + \tau I \tau I + I \tau I \tau) T_2^Z, \tag{D14$^\ast$}
\end{align}
or YJM elements, we have to study any potential influence on the unit eigenvalues and spectral gaps. 

To be specific, under $\U(1)$ symmetry, we explain at the end of in Section \ref{sec:Tabloid} that the decomposition of 2-fold tensor product of any $\U(1)$ charge sector $S^\mu = S^{(n-r, r)}$ admits $r+1$ many trivial irreps. As a result, the operator $\Cay$ defined in Eq.~\eqref{eq:CayleySum} has $(r+1)^2$ multiples of unit eigenvalues and the spectral gap is the difference between these unit eigenvalues and the $( (r+1)^2 + 1)$-th eigenvalue. On the contrary, after being projected through $T_2^Z$, $\Cay_{2,\U(1),\Sym}$ admits only one unit eigenvalue by Claim \ref{claim:U(1)-1}. We are going to show in the following that these differences will not affect our evaluation. Actually, the spectral gap of $\Cay_{2,\U(1)}$, as well as its restriction $\Cay_{2,\U(1),\Sym}$, is no less than that of $\Cay$. 

\begin{claim}\label{claim:Cayley-2}
	Let $S^\mu$ denote any $\U(1)$ charge sector with $\mu = (n-r, r)$ and $r = 0,...,n$. The second-largest eigenvalue of $\Cay_{2, \U(1)}$ defined under $(S^{\lambda})^{\otimes 4}$ is upper bounded by $\frac{1}{2} \Big( 1 + \frac{1}{\vert \mathcal{T} \vert} \lambda_2( \mathcal{G}(\mathcal{T}) ) \Big)$ with $\mathcal{G}(\mathcal{T})$ the Cayley graph generated by $\mathcal{T}$. 
\end{claim}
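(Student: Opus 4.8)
The plan is to bound $\lambda_2(\Cay_{2,\U(1)})$ by comparison with the Wick-unprojected Cayley operator, whose spectrum is handed to us by Aldous' spectral gap theorem (Section~\ref{sec:Aldous}), and then to push the estimate through the idempotent $T_2^Z$ by the Cauchy interlacing theorem.

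Since $T_2^Z$ is an orthogonal projection, I would first write $\Cay_{2,\U(1)}=T_2^Z\,\mathcal{C}\,T_2^Z$ with $\mathcal{C}:=\tfrac1{|\mathcal{T}|}\sum_{\tau\in\mathcal{T}}\tfrac18\big(6\,IIII+\tau I\tau I+I\tau I\tau\big)=\tfrac34\,IIII+\tfrac18(A_1+A_2)$, where $A_1:=\tfrac1{|\mathcal{T}|}\sum_\tau\tau I\tau I$ and $A_2:=\tfrac1{|\mathcal{T}|}\sum_\tau I\tau I\tau$. Identifying $(S^\mu)^{\otimes4}$ with $\operatorname{End}\!\big((S^\mu)^{\otimes2}\big)$, the operators $A_1,A_2$ act by averaging conjugation by $\rho_\mu(\tau)$ over $\tau\in\mathcal{T}$ on the first, respectively second, tensor copy; in particular $[A_1,A_2]=0$ and each $A_i$ has the same spectrum as $\bar P_\mu:=\tfrac1{|\mathcal{T}|}\sum_\tau\rho_\mu(\tau)\otimes\rho_\mu(\tau)$ on $S^\mu\otimes S^\mu$, where $\rho_\mu$ is the permutation action of $S_n$ on the charge sector $S^\mu\cong M^{(n-r,r)}$. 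Using $M^{(n-r,r)}\cong\bigoplus_{0\le s\le r}S^{(n-s,s)}$ (Kostka numbers equal to $1$) and invoking Aldous' theorem together with Theorem~\ref{thm:CayleyGap} and Eq.~\eqref{eq:CayleyTensor}, one gets that $\bar P_\mu$ has largest eigenvalue $1$ with multiplicity $r+1$—the diagonal summands $S^{(n-s,s)}\otimes S^{(n-s,s)}$—and every remaining eigenvalue bounded above by $\tfrac1{|\mathcal{T}|}\lambda_2(\mathcal{G}(\mathcal{T}))$, a value possible only through the copies of $S^{(n-1,1)}$. Hence $\mathcal{C}$ has a $(r+1)^2$-dimensional unit eigenspace and all other eigenvalues controlled in terms of $\tfrac1{|\mathcal{T}|}\lambda_2(\mathcal{G}(\mathcal{T}))$.

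Next I would transfer this to $\Cay_{2,\U(1)}=T_2^Z\,\mathcal{C}\,T_2^Z$. By the Wick identity $T_2^Z\,E^{a}_{b}E^{c}_{d}=0$ unless $\{a,c\}=\{b,d\}$ (Section~\ref{sec:detailsT2CQA}), $\Ima T_2^Z\cap(S^\mu)^{\otimes4}$ is spanned by the aligned operators $\{E^a_aE^b_b,\,E^a_bE^b_a\}$, which meets the degenerate unit eigenspace of $\mathcal{C}$ only in the line spanned by the maximally mixed state on the two copies—exactly the uniqueness asserted in Claim~\ref{claim:Cayley-1}—while the off-diagonal blocks $\Ima\mathcal{P}_{\sbB},\Ima\mathcal{P}_{\sbC}$ still carry no unit eigenvector. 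Cauchy interlacing for the compression $M\mapsto T_2^Z M T_2^Z$ then bounds the $k$-th largest eigenvalue of $\Cay_{2,\U(1)}$ by the $k$-th largest eigenvalue of $\mathcal{C}$; combined with the invariant block decomposition and the small per-$\tau$ matrices of Claim~\ref{claim:Cayley-1}, in which each compressed block reduces to a combination of the identity with the averaged permutation action whose nontrivial spectrum is governed by $S^{(n-1,1)}$, one reads off the asserted bound $\lambda_2(\Cay_{2,\U(1)})\le\tfrac12\big(1+\tfrac1{|\mathcal{T}|}\lambda_2(\mathcal{G}(\mathcal{T}))\big)$ on $(S^\mu)^{\otimes4}$, the analogous estimate for each remaining block reducing once more to the Aldous bound on $\bar P_\mu$.

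The step I expect to be the main obstacle is precisely this last transfer through $T_2^Z$: because $\mathcal{C}$ has a highly degenerate unit eigenspace (of dimension $(r+1)^2$), a blunt application of interlacing only certifies $\lambda_2(\Cay_{2,\U(1)})<1$ rather than a quantitative gap. Making it quantitative requires knowing exactly which eigenvectors of $\mathcal{C}$ survive the compression by $T_2^Z$—i.e. the explicit description of $\Ima T_2^Z$ inside $(S^\mu)^{\otimes4}$ together with the block matrices of Claim~\ref{claim:Cayley-1}—so that the multiplicities collapse and only the single eigenvalue $\tfrac1{|\mathcal{T}|}\lambda_2(\mathcal{G}(\mathcal{T}))$ of $\bar P_\mu$, carried by $S^{(n-1,1)}$, ends up governing the subleading behaviour in every block.
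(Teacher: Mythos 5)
Your proposal follows essentially the same route as the paper's own proof: compare $\Cay_{2,\U(1)} = T_2^Z\,\Cay\,T_2^Z$ with the unprojected operator $\Cay$, whose non-unit spectrum on $(S^{\mu})^{\otimes 4}$ is controlled through $\sum_{\tau\in\mathcal{T}}\tau\otimes\tau$ and Aldous' theorem (Theorem \ref{thm:CayleyGap}), and then argue that of the $(r+1)^2$ unit eigenvectors $\Psi\indices{^a_b}\Psi\indices{^c_d}$ of $\Cay$ only the diagonal one with $a=b$, $c=d$ remains a unit eigenvector after the Wick projection. The "main obstacle" you single out---that interlacing against a degenerate unit eigenspace only certifies $\lambda_2<1$---is precisely the step the paper handles not by interlacing but by (i) the explicit overlap computation showing which unit eigenvector survives $T_2^Z$ and (ii) the direct norm bound $\Vert T_2^Z \Cay\, T_2^Z (u)\Vert \leq \lambda_2(\Cay)\Vert u\Vert$ for $u$ orthogonal to the unit eigenspace of $\Cay$, which is the same resolution you anticipate in your closing paragraph.
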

\begin{proof}
	We prove this lemma by assembling the following simple facts:
	\begin{enumerate}
		\item Under computational basis, the unit eigenspace of $\sum_{\tau \in \mathcal{T}} \tau\tau$ can be spanned by $r+1$ orthogonal eigenvectors of the following form
		\begin{align}
			u_{ab} = \sum_{\sigma \in S_n} E\indices{^{\sigma(a)}_{\sigma(b)}}.
		\end{align} 
		In particular, distinct orthogonal basis vectors can be differentiated by examining the Hamming weights $a + b = 0, 2, \cdots, 2r; \mod 2$ in $\mathbb{F}_2$ field.
		
		\item Then the $(r+1)^2$ unit eigenvectors of $\Cay$ are given as
		\begin{align}\label{eq:CayEigenvector}
			\Psi\indices{^a_b}\Psi\indices{^c_d} = \sum_{\sigma, \pi \in S_n} E\indices{^{\sigma(a)}_{\sigma(b)}} E\indices{^{\pi(c)}_{\pi(d)}} = u_{ab} \otimes u_{cd},
		\end{align}
		where $(a,b)$ and $(c,d)$ are taken from the $r+1$ equivalence classes defined in the above congruence equation. It is easy to see that
		\begin{align}
			\langle E\indices{^a_a} E\indices{^c_c}, \Psi\indices{^a_b}\Psi\indices{^c_d} \rangle = 0
		\end{align}
		unless $a = b$ and $c = d$. By Lemma \ref{lemma:CQAk}, $T_2^Z (\Psi\indices{^a_a}\Psi\indices{^c_c} ) = \Psi\indices{^a_a}\Psi\indices{^c_c}$. It is thus the unit eigenvector, unique up to scaling, of $\Cay_{2,\U(1),\Sym}$.
		
		\item There are still vectors $u$ orthogonal to the unit eigenspace of $\Cay$, yet capable of surviving under the projection of $T_2^Z$. Evidently, by definition,
		\begin{align}
			\Vert \Cay_{2,\U(1)} (u) \Vert = \Vert T_2^Z \Cay T_2^Z (u) \Vert \leq \lambda_2( \Cay ) \Vert u \Vert.
		\end{align}
	\end{enumerate}
	Since we have eliminated the concern of multiple unit eigenvectors, 
	\begin{align}
		\lambda_2 (\Cay_{2, \U(1)}) \leq \lambda_2( \Cay ) = \frac{1}{2} \Big( 1 + \frac{1}{\mathcal{T}} \sum_{\tau \in \mathcal{T}} \tau\tau \Big) 
		\leq \frac{1}{2} \Big( 1 + \frac{1}{\vert \mathcal{T} \vert} \lambda_2( \mathcal{G}(\mathcal{T}) ) \Big)
		= \frac{1}{\vert \mathcal{T} \vert} O\Big( \lambda_2( \mathcal{G}(\mathcal{T}) ) \Big),
	\end{align} 
	which completes the proof.
\end{proof}

%-------------------------------------------------------------------------------------

We are now at the position to verify Lemma \ref{lemma:CayletBound1}. Let us recall the notion of Dirichlet form in Definition \ref{def:dirichlet-form}
\begin{align}
    \mathcal{E}(f) = \frac{1}{2} \sum_{x, y \in \mathcal{X}} (f(x) - f(y))^2 \pi_P(x) P(x, y)
\end{align}
on a reversible chain with a stationary distribution $(\mathcal{X}, P, \pi)$.

By Claim \ref{claim:U(1)-1} \& \ref{claim:Cayley-1}, let 
\begin{align}
	P_{\sbA} \vcentcolon = M_{2,\U(1),\sbA}^{\mathcal{E}_{\CQA}}, \quad P_{\Cay} \vcentcolon = \Cay_{2,\U(1),\Sym}.
\end{align}
We consider Markov processes $(S_{\sbA}, P_{\sbA}, \pi_{\sbA})$ and $(\Sym, P_{\Cay}, \pi_{\Cay})$. Taking a close look at Eq.~\eqref{eq:InvariantSubapce-states1} \& \eqref{sym-cay-state-space-1}, we notice that there is a natural one-to-one correspondence between basis states from $S_{\sbA}$ and $\Sym$. The bijection is not an isometry as it does not preserve the inner products. For our purpose it is sufficient to regard this mapping as a set level bijection because we only discuss probability theory instead of geometry within these state spaces. For what follows, we will simply identify them as the same state space, all denoted by $S_{\sbA}$.

Then for any function $f$ on $S_{\sbA}$, 
\begin{align}
    \begin{aligned}
    	\mathcal{E}_{\Cay}(f) = & \frac{1}{2} \sum_{x, y \in S_{\sbA}} (f(x) - f(y))^2 \pi_{\Cay}(x) P_{\Cay}(x, y)  \\
    	& \leq \frac{d_\mu + 1}{d_\mu} \mathcal{E}_{\sbA}(f) =  \frac{d_\mu + 1}{2 d_\mu} \sum_{x, y \in S_{\sbA}} (f(x) - f(y))^2 \pi_{\sbA}(x) P_{\sbA} (x, y).
    \end{aligned}
\end{align}
The reason can be see from Claim \ref{claim:U(1)-1} \& \ref{claim:Cayley-1}. Especially the fact that any nonzero off-diagonal entry of these transition matrices come from the action of a unique $\tau \in \mathcal{T}$. Then we can compare $\pi_{\Cay}(x) P_{\Cay}(x, y), \pi_{\sbA}(x) P_{\sbA} (x, y)$ simply by checking the matrix representations induced by each $\tau$ separately. For any $x = v_{ab} \neq v_{cd} = y$ with $v_{ab}, v_{cd}$ being abbreviations of state vectors in Eq.~\eqref{eq:InvariantSubapce-states1}, there are simply three cases when we have nonzero transition probabilities: 
\begin{align}
	\begin{cases}
		\pi_{\Cay}(x) P_{\Cay}(x, y) = \frac{2}{d_\mu^2} \frac{1}{8\vert \mathcal{T} \vert}
		> \pi_{\sbA}(x) P_{\sbA} (x, y) = \frac{2}{d_\mu (d_\mu + 1)} \frac{1}{8\vert \mathcal{T} \vert}, & a \neq b, c \neq d;  \\
		\pi_{\Cay}(x) P_{\Cay}(x, y) = \frac{2}{d_\mu^2} \frac{1}{8\vert \mathcal{T} \vert}
		< \pi_{\sbA}(x) P_{\sbA} (x, y) = \frac{2}{d_\mu (d_\mu + 1)} \frac{2}{\vert 8\mathcal{T} \vert}, & a \neq b, c = d; \\
		\pi_{\Cay}(x) P_{\Cay}(x, y) = \frac{1}{d_\mu^2} \frac{2}{8\vert \mathcal{T} \vert}
		< \pi_{\sbA}(x) P_{\sbA} (x, y) = \frac{2}{d_\mu (d_\mu + 1)} \frac{2}{8\vert \mathcal{T} \vert}, & a = b, c \neq d.
	\end{cases}
\end{align}
As a reminder, there is no transitions between states with $a = b$ and $c = d$. As a result, by Theorem \ref{thm:comparison-dirichlet}, we obtain that 
\begin{align}
     \Delta(\Cay_{2,\U(1),\Sym})
     \leq \max_{x \in S_{\sbA}} \frac{  \pi_{\Cay}(x)}{\pi_{\sbA}(x)} \frac{d_\mu + 1}{d_\mu}  \Delta(M_{2,\U(1),\sbA}) 
     = \Big(\frac{d_\mu + 1}{d_\mu} \Big)^2 \Delta(M_{2,\U(1),\sbA}) < 2\Delta(M_{2,\U(1),\sbA}), 
\end{align}
where we discard the trivial case when $d_\mu = 1$. The next smallest dimension of $\U(1)$ charge sector is $d_\mu = n$. The inequality holds when $n \geq 3$. We can also argue in the converse direction so that we obtain 
\begin{align}\label{eq:Inequality-1}
    & \Delta(M_{2,\U(1),\sbA})  
    \leq \Big(\frac{2 d_\mu}{d_\mu + 1} \Big)^2 \Delta(\Cay_{2,\U(1), \Sym}) \leq 4\Delta(\Cay_{2,\U(1),\Sym}) \\
    \implies & \frac{1}{4} \Delta(M_{2,\U(1),\sbA})   
    \leq \Delta(\Cay_{2,\U(1),\Sym}) 
    \leq 2 \Delta(M_{2,\U(1),\sbA}).  
\end{align}
Obviously, since $\Sym \subset (S^{\lambda})^{\otimes 4}$ is a subspace, $\lambda_2 (\Cay_{2, \U(1),\Sym}) \leq \lambda_2 (\Cay_{2, \U(1)})$ and we complete the proof of Lemma \ref{lemma:CayletBound1}. 

%-------------------------------------------------------------------------------------

\subsubsection{Proof of Lemma \ref{lemma:CayletBound2}}

We now study $M_{2,\U(1), \ytableaushort{ {} {*(black)} , {*(black)} {} }}^{\mathcal{E}_{\CQA}})$ which is the modified moment operator restricted to the invariant subspace $S_{\sbB} \oplus S_{\sbC}$ with
\begin{align}
	& S_{\sbB} = \operatorname{span} \left\{ \frac{1}{2} \Big( E\indices{^a_a} E\indices{^b_b} - E\indices{^b_b} E\indices{^a_a}) - (E\indices{^a_b} E\indices{^b_a} - E\indices{^b_a} E\indices{^a_b} \Big); a < b \right\}, \tag{D7$^\ast$} \\
	& S_{\sbC} = \operatorname{span} \left\{ \frac{1}{2} \Big( E\indices{^a_a} E\indices{^b_b} - E\indices{^b_b} E\indices{^a_a}) + (E\indices{^a_b} E\indices{^b_a} - E\indices{^b_a} E\indices{^a_b} \Big); a < b \right\}. \tag{D8$^\ast$}
\end{align}
Checking its matrix representations by methods used in Claim \ref{claim:U(1)-1}, we find that, when acting on these basis elements,
\begin{align}\label{eq:Vanishing2}
	\Big[ T_2^{Z} \Big( \frac{1}{8\vert \mathcal{T} \vert} \sum_\tau I\tau\tau I + \tau I \tau I - \tau\tau II - II \tau\tau \Big) T_2^{Z} \Big] \vert_{S_{\sbB} \oplus S_{\sbC}} = 0.
\end{align}		
It is similar to the situation happens to Eq.~\eqref{eq:Vanishing1} in proving Theorem \ref{thm:Case1&2}. Accordingly, the operator $ M_{2,\U(1),\ytableaushort{ {} {*(black)} , {*(black)} {} }}^{\mathcal{E}_{\CQA}}$ degenerates to $\Cay_{2,\U(1)}$ within this subspace. By Eq.~\eqref{eq:CayEigenvector}, the unit eigenvector of $\Cay_{2,\U(1)}$ is obviously orthogonal to $S_{\sbB} \oplus S_{\sbC}$ and therefore,
\begin{align}\label{eq:Inequality-2}
	\lambda_1( M_{2,\U(1),\ytableaushort{ {} {*(black)} , {*(black)} {} }}^{\mathcal{E}_{\CQA}}) ) \leq \lambda_2( \Cay_{2,\U(1)}).
\end{align}
As mentioned at the end of Section \ref{sec:detailsT2CQA}, unit eigenvectors of $M_{2,\U(1)}$ are located in $S_{\sbA}$ and $S_{\sbD}$. The $\SU(d)$ case can be solved in the same way. To testify Eq.~\eqref{eq:Vanishing2} with the projection replaced by $T_2^{\YJM}$:
\begin{align}
	T_2^{\YJM} \Big( \frac{1}{8\vert \mathcal{T} \vert} \sum_\tau I\tau\tau I + \tau I \tau I - \tau\tau II - II \tau\tau \Big) T_2^{\YJM} = 0,
\end{align}
we make use of Young orthogonal form presented in Section \ref{sec:detailsSU(d)}.

%-------------------------------------------------------------------------------------

\subsubsection{Proof of Lemma \ref{lemma:induced-markov-process} }

We now consider the case when $M_{2,\U(1)}^{\mathcal{E}_{\CQA}}$ is restricted to
\begin{align}
	S_{\sbD} = \operatorname{span} \left\{ \frac{1}{2} \Big( E\indices{^a_a} E\indices{^b_b} + E\indices{^b_b} E\indices{^a_a}) -  (E\indices{^a_b} E\indices{^b_a} + E\indices{^b_a} E\indices{^a_b} \Big); a < b \right\}. \tag{D6$^\ast$}
\end{align}

\begin{claim}\label{claim:U(1)-2}
	Under bases given in Eq.~\eqref{eq:InvariantSubapce-states2}, $M_{2,\U(1),\sbD}^{\mathcal{E}_{\CQA}}$ is an irreducible doubly stochastic symmetric matrix. Any nonzero off-diagonal entry of  $M_{2,\U(1),\sbD}^{\mathcal{E}_{\CQA}}$ comes from the action of a unique $\tau \in \mathcal{T}$ when defining $\mathcal{E}_{\CQA}$.
\end{claim}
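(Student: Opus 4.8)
The plan is to mirror almost verbatim the proof of Claim~\ref{claim:U(1)-1}, now working with the basis states of $S_{\sbD}$, which I abbreviate $s_{ab}:=\frac{1}{2}\big(E\indices{^a_a}E\indices{^b_b}+E\indices{^b_b}E\indices{^a_a}\big)-\big(E\indices{^a_b}E\indices{^b_a}+E\indices{^b_a}E\indices{^a_b}\big)$ for $a<b$, where $a,b,\dots$ run over the computational basis of a fixed charge sector $S^\mu$. First I would use Lemma~\ref{lemma:Modified}(3) to know that $M_{2,\U(1)}^\tau$ maps $S_{\sbD}$ into itself, and then, for a fixed $\tau\in\mathcal{T}$, split the restriction of $M_{2,\U(1)}^\tau$ to $S_{\sbD}$ into mutually non-overlapping small invariant blocks according to how $\tau$ acts on the label pair $\{a,b\}$: (i) $\tau(a)=b$; (ii) $\tau(a)=c,\tau(b)=d$ with $a,b,c,d$ all distinct; (iii) $\tau(a)=c,\tau(b)=b$ together with the symmetric case $\tau(a)=a,\tau(b)=d$; (iv) $\tau(a)=a,\tau(b)=b$. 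For each block I would read off the matrix entries by the same trick used in Claim~\ref{claim:U(1)-1}: every simple tensor $E\indices{^{\cdot}_{\cdot}}E\indices{^{\cdot}_{\cdot}}$ occurring in the $s_{\cdot\cdot}$ appears in exactly one basis state of $S_{\sbD}$, so the expansion coefficients of $M_{2,\U(1)}^\tau s_{ab}$ at those tensors give the column directly. The structural difference from the $S_{\sbA}$ case is that $S_{\sbD}$ contains no diagonal state $E\indices{^a_a}E\indices{^a_a}$, so case (i) degenerates to a trivial $1\times1$ block (entry $1$ after normalization) rather than the $3\times3$ block of \eqref{eq:U(1)++StochasticMatrix1}, while cases (ii)--(iv) should come out in the same symmetric, doubly stochastic form as \eqref{eq:U(1)++StochasticMatrix2}--\eqref{eq:U(1)++StochasticMatrix4}. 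Summing over $\mathcal{T}$ and dividing by $|\mathcal{T}|$ then gives a symmetric doubly stochastic matrix, and since $T_2^Z$ annihilates no $s_{ab}$ the matrix is a genuine transition operator on the full index set $\{a<b\}$.

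For irreducibility I would reuse the argument of Claim~\ref{claim:U(1)-1}: given $s_{ab}\neq s_{cd}$, because $\mathcal{T}$ generates $S_n$ there are transpositions $\tau_1,\dots,\tau_r$ carrying $a$ to $c$ through intermediate labels and $\sigma_1,\dots,\sigma_s$ carrying $b$ to $d$, and the explicit block computations show each elementary step has a strictly positive amplitude $\langle s_{a'b'},M_2^{\tau}s_{ab}\rangle\neq0$; composing them connects $s_{ab}$ to $s_{cd}$. The resulting irreducible doubly stochastic chain has uniform stationary distribution $\pi_{\sbD}(s_{ab})\equiv \tfrac{2}{d_\mu(d_\mu-1)}$ on the $\binom{d_\mu}{2}$ states, which as a matrix is (up to normalization) the antisymmetric maximally mixed eigenvector $\propto\Pi_-$ of $T_{2,\U(1)}$ living in $\Ima\mathcal{P}_{\sbD}$. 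The final assertion, that any nonzero off-diagonal entry of $M_{2,\U(1),\sbD}^{\mathcal{E}_{\CQA}}$ comes from a unique $\tau\in\mathcal{T}$, follows exactly as at the end of the proof of Claim~\ref{claim:U(1)-1}: a transition $s_{ab}\to s_{cd}$ with, say, $a\neq c$ forces some $\tau$ with $\tau(a)=c$, and since $a,c$ are distinct binary strings of a fixed charge sector there is at most one transposition with that property.

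I expect the only mildly delicate step to be the explicit verification of blocks (i) and (iii): one must check that the ``$-$'' sign combination defining the $s_{ab}$ does not produce negative matrix entries and does not leak into $S_{\sbA}$ or $S_{\sbB}\oplus S_{\sbC}$. Invariance of each block is guaranteed in principle by Lemma~\ref{lemma:Modified}(3), but nonnegativity of the surviving coefficients still has to be confirmed by the small-matrix computation; this is the one place where a direct calculation is unavoidable, and it is entirely routine.
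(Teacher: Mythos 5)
Your proposal is correct and follows essentially the same route as the paper's proof: the same case split (i)--(iv) by the action of $\tau$ on the label pair, the same coefficient-reading trick, with case (i) collapsing to the trivial $1\times1$ block of entry $1$ and cases (ii)--(iv) reproducing the blocks \eqref{eq:U(1)++StochasticMatrix2}--\eqref{eq:U(1)++StochasticMatrix4}, followed by the same irreducibility, stationary-distribution ($\propto \Pi_-$, uniform $\tfrac{2}{d_\mu(d_\mu-1)}$), and unique-$\tau$ arguments. No gaps beyond the routine small-matrix verification you already flag, which is exactly what the paper does.
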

\begin{proof}
	The proof proceeds nearly the same as Claim \ref{claim:U(1)-1}. The action of $M_{2,\U(1)}^\tau$ on Eq.~\eqref{eq:InvariantSubapce-states2} can be described using even smaller invariant subspaces classified as the following four types:
	\begin{enumerate}
		\item Suppose $\tau(a) = b$ for some fixed $a \neq b$, $M_{2,\U(1)}^\tau$ is restricted to the following $1 \times 1$ matrix 
		\begin{align}
			\frac{1}{8} \begin{pmatrix}
				8 \\
			\end{pmatrix} 
		\end{align}
		within
		\begin{align}
			\text{span}\Big\{ \frac{1}{2} \Big( (E\indices{^a_a} E\indices{^b_b} + E\indices{^b_b} E\indices{^a_a}) -  (E\indices{^a_b} E\indices{^b_a} + E\indices{^b_a} E\indices{^a_b}) \Big) \Big\}.
		\end{align}
		
		\item Suppose $\tau(a) = c, \tau(b) = d$ for some fixed $a \neq b \neq c \neq d$, $M_{2,\U(1)}^\tau$ is restricted to the following $4 \times 4$ matrix
		\begin{align}
			\frac{1}{8}  \begin{pmatrix}
				6 & 0 & 1 & 1 \\
				0 & 6 & 1 & 1 \\
				1 & 1 & 6 & 0 \\
				1 & 1 & 0 & 6 \\
			\end{pmatrix}
		\end{align}
		within
		\begin{align}
			\begin{aligned}
				\text{span}\Big\{ & \frac{1}{2} \Big( (E\indices{^a_a} E\indices{^b_b} + E\indices{^b_b} E\indices{^a_a}) -  (E\indices{^a_b} E\indices{^b_a} + E\indices{^b_a} E\indices{^a_b}) \Big), 
				\frac{1}{2} \Big( (E\indices{^c_c} E\indices{^d_d} + E\indices{^d_d} E\indices{^c_c}) - (E\indices{^c_d} E\indices{^d_c} + E\indices{^d_c} E\indices{^c_d}) \Big), \\
				& \frac{1}{2} \Big( (E\indices{^a_a} E\indices{^d_d} + E\indices{^d_d} E\indices{^a_a}) - (E\indices{^a_d} E\indices{^d_a} + E\indices{^d_a} E\indices{^a_d}) \Big),
				\frac{1}{2} \Big( (E\indices{^c_c} E\indices{^b_b} + E\indices{^b_b} E\indices{^c_c}) - (E\indices{^c_b} E\indices{^b_c} + E\indices{^b_c} E\indices{^c_b}) \Big) \Big\}.
			\end{aligned}
		\end{align}
		
		\item Suppose $\tau(a) = c, \tau(b) = b$ for some fixed $a \neq b \neq c$, $M_{2,\U(1)}^\tau$ is restricted to the following $2 \times 2$ matrix
		\begin{align}
			\frac{1}{8} \begin{pmatrix}
				7 & 1 \\
				1 & 7 \\
			\end{pmatrix}
		\end{align}
		within
		\begin{align}
			\text{span}\Big\{ \frac{1}{2} \Big( (E\indices{^a_a} E\indices{^b_b} + E\indices{^b_b} E\indices{^a_a}) -  (E\indices{^a_b} E\indices{^b_a} + E\indices{^b_a} E\indices{^a_b}) \Big), 
			\frac{1}{2} \Big( (E\indices{^c_c} E\indices{^b_b} + E\indices{^b_b} E\indices{^c_c}) - (E\indices{^c_b} E\indices{^b_c} + E\indices{^b_c} E\indices{^c_b}) \Big) \Big\}.
		\end{align}
		We can similarly consider the case for $\tau(a) = a, \tau(b) = d$ and $a \neq b \neq d$.
		
		\item Suppose $\tau(a) = a, \tau(b) = b$ for some fixed $a \neq b$, $M_{2,\U(1)}^\tau$ is trivially restricted to the following $1 \times 1$ matrix
		\begin{align}
			\frac{1}{8} \begin{pmatrix}	8 \end{pmatrix}
		\end{align}
		within
		\begin{align}
			\text{span}\Big\{ \frac{1}{2} \Big( (E\indices{^a_a} E\indices{^b_b} + E\indices{^b_b} E\indices{^a_a}) -  (E\indices{^a_b} E\indices{^b_a} + E\indices{^b_a} E\indices{^a_b}) \Big) \Big\}.
		\end{align}
	\end{enumerate}
	Notice that each submatrix is symmetric and doubly stochastic and each subblock cannot overlap. After summing over $\mathcal{T}$ and divided by the normalization constant, $M_{2,\U(1),\sbD}^{\mathcal{E}_{\CQA}}$ is a doubly stochastic symmetric matrix. Any its nonzero off-diagonal entry comes from the action of a unique $\tau \mathcal{T}$. The irreducibility can be verified similarly as in Claim \ref{claim:U(1)-1}. The stationary distribution is uniform:
	\begin{align}\label{eq:--Stationary}
		\pi_{\sbD}(v_{ab}) \equiv \frac{2}{d_\mu (d_\mu - 1)}, 
	\end{align}
	where $\frac{d_{\mu }(d_\mu - 1)}{2}$ is the total number of states in $S_{\sbD}$. As a vector, it can be written as 
	\begin{align}
		\begin{aligned}
			& \frac{2}{d_\mu (d_\mu - 1)} \Big[ \sum_{a < b} \frac{1}{2}\Big( (E\indices{^a_a} E\indices{^b_b} + E\indices{^b_b} E\indices{^a_a}) - (E\indices{^a_b} E\indices{^b_a} + E\indices{^b_a} E\indices{^a_b}) \Big) \Big] \\
			= & \frac{1}{d_\mu (d_\mu - 1)}(\hat{e} - \hat{\sigma})    
			= \frac{2}{d_\mu (d_\mu - 1)} \Pi_-,
		\end{aligned}
	\end{align}
	which completes the proof.
\end{proof}

Our next strategy is to bound the spectral gap of $M_{2,\U(1),\sbD}^{\mathcal{E}_{\CQA}}$ via that of $M_{2,\U(1),\sbA}^{\mathcal{E}_{\CQA}}$. However, not like the case when we work with Cayley moment operators having $S_{\sbA} \cong \Sym$, $S_{\sbA} \ncong S_{\sbD}$ by definition. This hinders the use of comparing Dirichlet forms directly. Fortunately, we can solve this problem by the notion of induced chain introduced in Definition~\ref{def:induced-markov-chain}. Let 
\begin{align}
	\tilde{S}_{\sbA} \vcentcolon = \operatorname{span} \left\{ \frac{1}{2} \Big( (E\indices{^a_a} E\indices{^b_b} + E\indices{^b_b} E\indices{^a_a}) + (E\indices{^a_b} E\indices{^b_a} + E\indices{^b_a} E\indices{^a_b}) \Big); a < b \right\}.
\end{align}
be the subspace of $S_{\sbA}$ by removing basis elements $E\indices{^a_a} E\indices{^a_a}$ for all $a$. Then there is a canonical bijection between $\tilde{S}_{\sbA}$ and $S_{\sbD}$. For brevity, we rewrite the these basis states by $s_{ab}$ with $a \leq b$.

Recall we set $P_{\sbA} = M_{2,\U(1),\sbA}^{\mathcal{E}_{\CQA}}$ and now we consider the induced Markov chain $(\widetilde{S}_{\sbA}, \widetilde{P}_{\sbA}, \widetilde{\pi})$. By Definition \ref{def:induced-markov-chain} and Example \ref{example:Markov}, transitions of a basis state $s_{ab}$ under $\widetilde{P}_{\sbA}$ can be summarized into the following cases:
\begin{enumerate}
	\item Transition to itself with
	\begin{align}\label{eq:tildeP-1}
		\widetilde{P}_{\sbA}(s_{ab}, s_{ab}) & = P_{\sbA}(s_{ab}, s_{ab})
		+ P_{\sbA}(s_{ab}, s_{aa}) \sum^{\infty}_{p=0} P_{\sbA}(s_{aa}, s_{aa})^p P_{\sbA}(s_{aa}, s_{ab})
		+ P_{\sbA}(s_{ab}, s_{bb}) \sum^{\infty}_{p=0} P_{\sbA}(s_{bb}, s_{bb})^p P_{\sbA}(s_{bb}, s_{ab}) \notag \\
		& = P_{\sbA}(s_{ab}, s_{ab})
		+ P_{\sbA}(s_{ab}, s_{aa}) \frac{P_{\sbA}(s_{bb}, s_{ab}) }{1 - P_{\sbA}(s_{aa}, s_{aa}) }
		+ P_{\sbA}(s_{ab}, s_{bb}) \frac{P_{\sbA}(s_{bb}, s_{ab}) }{1 - P_{\sbA}(s_{bb}, s_{bb}) },
	\end{align}
	where the last two terms is nonzero if and only if there is some SWAP $\tau$ such that $\tau(a) = b$ (and hence $\tau(b) = a$).
	
	\item Transition to $s_{cd}$ for which $c,d \neq a,b$. Then
	\begin{align}\label{eq:tildeP-2}
		\widetilde{P}_{\sbA}(s_{ab}, s_{cd}) = P_{\sbA}(s_{ab}, s_{cd}).
	\end{align}
	
	\item Transition to $s_{ac}$ with
	\begin{align}\label{eq:tildeP-3}
		\widetilde{P}_{\sbA}(s_{ab}, s_{ac}) & = P_{\sbA}(s_{ab}, s_{ac})
		+ P_{\sbA}(s_{ab}, s_{aa}) \sum^{\infty}_{p=0} P_{\sbA}(s_{aa}, s_{aa})^p P_{\sbA}(s_{aa}, s_{ac}) \notag \\
		& = P_{\sbA}(s_{ab}, s_{ac})
		+ P_{\sbA}(s_{ab}, s_{aa}) \frac{1}{1 - P_{\sbA}(s_{aa}, s_{aa}) } P_{\sbA}(s_{aa}, s_{ac}).
	\end{align}
	where the last term is nonzero if and only if there is some SWAPs $\tau_1,\tau_2$ such that $\tau_1(a) = b$ and $\tau_2(a) = c$. Diagram \eqref{diagram:paths1} illustrates this case with break lines standing for transitions outside $\widetilde{S}_{\sbA}$.
	
	\item Transition to $s_{bd}$ with
	\begin{align}\label{eq:tildeP-4}	
		\widetilde{P}_{\sbA}(s_{ab}, s_{bd}) & = P_{\sbA}(s_{ab}, s_{bd})
		+ P_{\sbA}(s_{ab}, s_{aa}) \sum^{\infty}_{p=0} P_{\sbA}(s_{bb}, s_{bb})^p P_{\sbA}(s_{bb}, s_{bd}) \notag \\
		& = P_{\sbA}(s_{ab}, s_{bd})
		+ P_{\sbA}(s_{ab}, s_{bb}) \frac{1}{1 - P_{\sbA}(s_{bb}, s_{bb}) } P_{\sbA}(s_{bb}, s_{bd}).
	\end{align}
	where the last term is nonzero if and only if there is some SWAPs $\tau_3,\tau_4$ such that $\tau_3(b) = a$ and $\tau_4(b) = d$.
\end{enumerate} 
These formulas encompass all possible transition paths starting from $s_{ab}$, possibly passing through states that outside $\widetilde{S}_{\sbA}$ and returning back. A crucial observation is that there can never be any transitions between states $s_{aa}$ and $s_{bb}$ with $a \neq b$ for $P_{\sbA}$, which largely simplifies the computation. 

It is also straightforward to see that $\widetilde{P}_{\sbA}$ is both irreducible and symmetric with stationary distribution $\widetilde{\pi}$ being uniform. Identifying $\widetilde{S}_{\sbA}$ and $S_{\sbD}$, $\widetilde{\pi}$ equals Eq.~\eqref{eq:--Stationary}. We are going to apply Theorem \ref{thm:induced-chain-spectral-gap} \& \ref{thm:path-comparison-theorem} to conclude that: 

\begin{claim}\label{claim:U(1)-3}
    The following inequalities about spectral gaps hold:
    \begin{align}\label{eq:Inequality-3}
    	\Delta(M_{2,\U(1),\sbA}^{\mathcal{E}_{\CQA}}) 
    	= \Delta (P_{\sbA})  
    	\leq \Delta(\widetilde{P}_{\sbA}) 
    	\leq 5 \Delta(P_{\sbD}) 
    	= 5 \Delta(M_{2,\U(1),\sbD}^{\mathcal{E}_{\CQA}})
    	\leq 5 \Delta(\widetilde{P}_{\sbA}). 
    \end{align}
\end{claim}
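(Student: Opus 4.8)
The plan is to read the chain of (in)equalities as an assembly of three comparison results for the three transition operators $P_{\sbA} = M_{2,\U(1),\sbA}^{\mathcal{E}_{\CQA}}$, its induced chain $\widetilde{P}_{\sbA}$ on $\widetilde{S}_{\sbA} = S_{\sbA}\setminus\{E\indices{^a_a}E\indices{^a_a}\}$, and $P_{\sbD} = M_{2,\U(1),\sbD}^{\mathcal{E}_{\CQA}}$, each of which has already been shown (Claim~\ref{claim:U(1)-1}, Claim~\ref{claim:U(1)-2}, and the explicit transition formulas \eqref{eq:tildeP-1}--\eqref{eq:tildeP-4}) to be irreducible, symmetric, and doubly stochastic, hence reversible with \emph{uniform} stationary distribution. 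The two equalities in the chain are merely Definition~\ref{def:Modified} unpacked, and the first inequality $\Delta(P_{\sbA}) \le \Delta(\widetilde{P}_{\sbA})$ is an immediate instance of Theorem~\ref{thm:induced-chain-spectral-gap}, since $\widetilde{P}_{\sbA}$ is by construction the Markov chain induced by $P_{\sbA}$ on the subset $\widetilde{S}_{\sbA}$.

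For the final inequality $\Delta(P_{\sbD}) \le \Delta(\widetilde{P}_{\sbA})$ (which, multiplied by $5$, closes the chain), I would first identify $\widetilde{S}_{\sbA}$ with $S_{\sbD}$ through the canonical set-level bijection $s_{ab} \leftrightarrow \tfrac12\big((E\indices{^a_a}E\indices{^b_b}+E\indices{^b_b}E\indices{^a_a})-(E\indices{^a_b}E\indices{^b_a}+E\indices{^b_a}E\indices{^a_b})\big)$, and then check from \eqref{eq:tildeP-1}--\eqref{eq:tildeP-4} together with the block submatrices in Claim~\ref{claim:U(1)-2} that $\widetilde{P}_{\sbA}(x,y)\ge P_{\sbD}(x,y)$ for every off-diagonal pair: the two chains coincide except that $\widetilde{P}_{\sbA}$ carries the extra "detour" transitions of types \eqref{eq:tildeP-3}--\eqref{eq:tildeP-4} routed through the deleted states $s_{aa},s_{bb}$, which only add mass off the diagonal. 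Pointwise domination of off-diagonal entries plus equality of the (uniform) stationary distributions gives $\mathcal{E}(P_{\sbD},f)\le\mathcal{E}(\widetilde{P}_{\sbA},f)$ for all $f$, and Theorem~\ref{thm:comparison-dirichlet} with $\alpha=1$ and trivial stationary-distribution ratio yields $\Delta(P_{\sbD})\le\Delta(\widetilde{P}_{\sbA})$.

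The substantive step is the middle inequality $\Delta(\widetilde{P}_{\sbA})\le 5\,\Delta(P_{\sbD})$, which cannot be obtained by a direct Dirichlet-form comparison precisely because the detour edges have $\widetilde{P}_{\sbA}(x,y)>0=P_{\sbD}(x,y)$. Here I would apply the path-comparison theorem, Theorem~\ref{thm:path-comparison-theorem}, with $P_1=\widetilde{P}_{\sbA}$ and $P_2=P_{\sbD}$: for an edge $(x,y)$ of $\widetilde{P}_{\sbA}$ with $P_{\sbD}(x,y)>0$ take the length-$1$ path, and for a detour edge $(x,y)$ take a length-$2$ path $(x,z),(z,y)$ with $P_{\sbD}(x,z),P_{\sbD}(z,y)>0$ — the intermediate state $z$ exists and is essentially forced by which $\tau\in\mathcal{T}$ implement the two relevant swaps. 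One then bounds the congestion ratio $A$: every path has length $\le 2$; all nonzero transition probabilities of both chains are sandwiched between $\tfrac{1}{8|\mathcal{T}|}$ and $O(1/|\mathcal{T}|)$; and a fixed edge $(p,q)$ of $P_{\sbD}$ lies on at most a bounded number of the chosen paths, so $\tfrac{1}{\pi_{P_{\sbD}}(p)P_{\sbD}(p,q)}\sum_{(p,q)\in\gamma_{xy}}\widetilde{\pi}(x)\widetilde{P}_{\sbA}(x,y)|\gamma_{xy}| = \Theta(1)$; a careful worst-case count over the four transition types of Claim~\ref{claim:U(1)-2} pins it at $A\le 5$, and since $\max_x \widetilde{\pi}(x)/\pi_{P_{\sbD}}(x)=1$ the theorem gives the bound. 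The main obstacle is exactly this congestion bookkeeping: one must verify that no edge of $P_{\sbD}$ is overloaded by too many detour paths, i.e.\ that for fixed $(p,q)$ only a constant number of triples $(x,y,z)$ route through it, and that the constant comes out to $5$ uniformly; everything else is definitional or a direct citation of the quoted Markov-chain comparison theorems.
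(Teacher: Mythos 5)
Your proposal matches the paper's proof essentially step for step: the first inequality via the induced-chain spectral gap theorem, the last via pointwise off-diagonal domination of $\widetilde{P}_{\sbA}$ over $P_{\sbD}$ together with the shared uniform stationary distribution and the Dirichlet-form comparison, and the middle via the path-comparison theorem with length-$1$ paths on common edges and length-$2$ detours through the deleted diagonal states. The only part you defer — verifying that each edge of $P_{\sbD}$ carries at most one detour path and that the resulting congestion ratio is exactly $A\le 1+2\cdot 2=5$ using $\widetilde{P}_{\sbA}(s_{ab},s_{ac})\le \tfrac{2}{8|\mathcal{T}|}$ and $P_{\sbD}(s_{ab},s_{bc})=\tfrac{1}{8|\mathcal{T}|}$ — is precisely the bookkeeping the paper carries out with its path diagrams, so your plan is sound and complete in outline.
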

\begin{proof}
	By the Theorem \ref{thm:induced-chain-spectral-gap}, $\Delta(P_{\sbA}) \leq \Delta (\widetilde{P}_{\sbA})$. Since we have bounded $\Delta(P_{\sbA})$ in Lemma \ref{lemma:CayletBound1}, it suffices to show $\Delta(\widetilde{P}_{\sbA}) = O(\Delta(P_{\sbD}))$. We actually reach a stronger relation: 
	\begin{align}
		\Delta(\widetilde{P}_{\sbA}) = \Theta(\Delta(P_{\sbD}))
	\end{align}
	later. It can be easily seen from the above definition of $\widetilde{P}_{\sbA}$, together with the matrix representation of $P_{\sbD}$ in Claim \ref{claim:U(1)-2} that
	\begin{align}
		\begin{cases}
			\widetilde{P}_{\sbA}(s_{ab},s_{cd}) \geq P_{\sbD}(s_{ab},s_{cd}), & s_{ab} \neq s_{cd}, \\
			\widetilde{P}_{\sbA}(s_{ab},s_{ab}) \leq P_{\sbD}(s_{ab},s_{ab}), & \text{for any } s_{ab},
		\end{cases}
	\end{align}
	which intuitively indicates that $\widetilde{P}_{\sbA}$ has a better transition property. Indeed, the Dirichlet form of $\widetilde{P}_{\sbA}$ is thus larger than that of $P_{\sbD}$. Since they also share the same stationary distribution in $\tilde{S}_{\sbA} \cong S_{\sbD}$,
	\begin{align}
		\Delta(\widetilde{P}_{\sbA}) \geq \Delta(P_{\sbD}).
	\end{align}
	
	What we need is the converse part. We first set up conditions in order to apply Theorem \ref{thm:path-comparison-theorem} afterwards: given two basis states $x \neq y \in \tilde{S}_{\sbA} \cong S_{\sbD}$ such that $\widetilde{P}_{\sbA}(x,y) > 0$. We assign a path $\gamma_{xy}$ indicating transition from $x$ to $y$ via $P_{\sbD}(x, y)$ as follows:
	\begin{enumerate}
		\item Suppose $P_{\sbD}(x, y)$ is also nonzero, we just set $\gamma_{xy} = (x, y)$.
		
		\item Suppose $P_{\sbD}(x, y) = 0$. Since $\widetilde{P}_{\sbA} > 0$, there must be a length-two path starting from $x$, passing through a \emph{unique} state $z$ to $y$ like the path in red in Diagram \ref{diagram:paths1} for which $P_{\sbD}(x, z), P_{\sbD}(z, y) > 0$. Then we set $\gamma_{xy} = (x,z,y)$.
		
		\item Since $P_{\sbD}$ is symmetric, $\gamma_{yx}$ is defined by the same path with the reversed direction.
	\end{enumerate}
	\begin{equation}\label{diagram:paths1}
		\begin{tikzcd}
			{s_{ab}} && {s_{aa}} \\
			\\
			{s_{bc}} && {s_{ac}}
			\arrow["{\tau_2:a \leftrightarrow c}"', color={rgb,255:red,214;green,92;blue,92}, no head, from=1-1, to=3-1]
			\arrow["{\tau_1:a \leftrightarrow b}", squiggly, no head, from=1-1, to=1-3]
			\arrow["{\tau_2:a \leftrightarrow c}", squiggly, no head, from=1-3, to=3-3]
			\arrow["{\tau_1:a \leftrightarrow b}"', color={rgb,255:red,214;green,92;blue,92}, no head, from=3-1, to=3-3]
		\end{tikzcd}
	\end{equation}
	
	Then we check the congestion ratio is defined in Definition \ref{def:Congestion}: 
	\begin{align}
		A = \max_{(p,q), P_{\sbD} (p,q) > 0} \frac{1}{\pi_{\sbD}(p) P_{\sbD}(p,q)} \sum_{\stackrel{x, y}{(p,q) \in \gamma_{xy}}} \widetilde{\pi}(x) \widetilde{P}_{\sbA}(x,y) \vert \gamma_{xy} \vert.
	\end{align}
	To determine the maximum, we discuss all three cases when choosing different $p,q$:
	\begin{enumerate}
		\item Suppose $p = s_{ab}, q = s_{cd}$ with $a,b \neq c,d$. Then $(p, q)$ is the only path containing itself and $\widetilde{P}_{\sbA}(p,q) = P_{\sbD}(p,q)$ by Eq.~\eqref{eq:tildeP-2} and Claim \ref{claim:U(1)-1}. As a result,
		\begin{align}
			\frac{1}{\pi_{\sbD}(p) P_{\sbD}(p,q)} \sum_{\stackrel{x, y}{(p,q) \in \gamma_{xy}}} \widetilde{\pi}(x) \widetilde{P}_{\sbA}(x,y) \vert \gamma_{xy} \vert 
			= \frac{1}{\pi_{\sbD}(p) P_{\sbD}(p,q)} \widetilde{\pi}(p) \widetilde{P}_{\sbA}(p,q) = 1.
		\end{align}
		
		\item Suppose $p = s_{ab}, q = s_{bc}$. That is, they share one common basis indices. Suppose there is no transposition exchanging either $a,b$ or $b,c$, then we cannot draw any 2-length path like that in red in Diagram \ref{diagram:paths1}. Consequently, we still $\widetilde{P}_{\sbA}(p,q) = P_{\sbD}(p,q)$ and recover the above case.
		
		\item Suppose $p = s_{ab}, q = s_{bc}$, but we assume that there exists the SWAP $\tau_1$ exchanging $a,b$. The assumption $P_{\sbD}(p,q) > 0$ from above indicates that there is also $\tau_2$ exchanging $a,c$. Consequently, both the trivial path $(s_{ab}, s_{bc})$ and the two-length path $(s_{ab}, s_{bc}, s_{ac})$ are defined and contain $(s_{ab}, s_{bc})$. One cannot find other different two-length paths that include $(s_{ab}, s_{bc})$. To check the this claim, we sketch in Diagram \eqref{diagram:paths2} with four more states $s_{ae},s_{bg},s_{bd},s_{cf}$ connecting $s_{ab}, s_{bc}$ respectively in dotted lines. It turns out that all these paths cannot exists, which we draw in dashed lines. Otherwise, assume we have the path $(s_{ab},s_{bc},s_{bd})$. Recall that we have conditions to set such a two-length path:
		\begin{align}
			\widetilde{P}_{\sbA}(s_{ab},s_{bd}) > 0, \quad 
			P_{\sbD}(s_{ab},s_{bd}) = 0,
		\end{align}
		which further implies that 
		\begin{align}
			P_{\sbA}(s_{ab},s_{bb}) > 0, \quad  P_{\sbD}(s_{bb},s_{bd}) > 0 
		\end{align}
		illustrated by break lines on the RHS of the diagram. This contradicts the criteria to assign two-length paths, which have to pass through $s_{ad}$ in green in the diagram. Another typical case like $(s_{ab},s_{bc},s_{cf})$ does not exist in the very beginning because this path involving four distinct labels $a,b,c,d$. If $\widetilde{P}_{\sbA}(s_{ab},s_{cf}) > 0$, we must have $P_{\sbD}(s_{ab},s_{cf}) > 0$ by Eq.~\eqref{eq:tildeP-2} and Claim \ref{claim:U(1)-1}. All other cases can be analyzed in the same way. 
		
		\item Reversing $p,q$ in the previous case, by the same argument, $(s_{bc},s_{ab})$ and the two-length path $(s_{ac}, s_{bc}, s_{ab})$ are the only two paths defined that include $(s_{bc},s_{ab})$. 
	\end{enumerate}
	\begin{equation}\label{diagram:paths2}
		\begin{tikzcd}
			&& {s_{bb}} \\
			{s_{ae}} && {s_{ad}} & {s_{bd}} \\
			& {s_{ab}} & {s_{bc}} && {s_{ac}} \\
			{s_{bg}} & {s_{cg}} && {s_{cf}} \\
			& {s_{bb}}
			\arrow[color={rgb,255:red,214;green,92;blue,92}, no head, from=3-3, to=3-2]
			\arrow[color={rgb,255:red,214;green,92;blue,92}, no head, from=3-3, to=3-5]
			\arrow[color={rgb,255:red,92;green,214;blue,92}, no head, from=3-2, to=2-3]
			\arrow[color={rgb,255:red,92;green,214;blue,92}, no head, from=2-3, to=2-4]
			\arrow[curve={height=-6pt}, squiggly, no head, from=3-2, to=1-3]
			\arrow[draw=none, from=1-3, to=2-4]
			\arrow[curve={height=-6pt}, squiggly, no head, from=1-3, to=2-4]
			\arrow[dashed, no head, from=3-3, to=4-4]
			\arrow[dashed, no head, from=3-3, to=2-4]
			\arrow[dashed, no head, from=3-2, to=2-1]
			\arrow[dashed, no head, from=3-2, to=4-1]
			\arrow[curve={height=-6pt}, squiggly, no head, from=3-3, to=5-2]
			\arrow[curve={height=-6pt}, squiggly, no head, from=5-2, to=4-1]
			\arrow[color={rgb,255:red,214;green,153;blue,92}, no head, from=3-3, to=4-2]
			\arrow[color={rgb,255:red,214;green,153;blue,92}, no head, from=4-2, to=4-1]
		\end{tikzcd} 
	\end{equation} 
	For the third case above, we have
	\begin{align}\label{eq:Comparision}
	\begin{aligned}
		& \frac{1}{\pi_{\sbD}(p) P_{\sbD}(p,q)} \sum_{\stackrel{x, y}{(p,q) \in \gamma_{xy}}} \widetilde{\pi}(x) \widetilde{P}_{\sbA}(x,y) \vert \gamma_{xy} \vert \\
		= & \frac{1}{\pi_{\sbD}(s_{ab}) P_{\sbD}(s_{ab},s_{bc})} \widetilde{\pi}(s_{ab}) \widetilde{P}_{\sbA}(s_{ab},s_{bc}) 
		+ \frac{1}{\pi_{\sbD}(s_{ab}) P_{\sbD}(s_{ab},s_{bc})} \widetilde{\pi}(s_{ab}) \widetilde{P}_{\sbA}(s_{ab},s_{ac}) \cdot 2,
	\end{aligned}
	\end{align}
	with
	\begin{align}
		\begin{aligned}
			\widetilde{P}_{\sbA}(s_{ab},s_{ac}) 
			= P_{\sbA}(s_{ab}, s_{aa}) \frac{1}{1 - P_{\sbA}(s_{aa}, s_{aa}) } P_{\sbA}(s_{aa}, s_{ac}) 
			= \frac{2}{8 \vert \mathcal{T} \vert} \frac{1}{1 - P_{\sbA}(s_{aa}, s_{aa}) } \frac{2}{8 \vert \mathcal{T} \vert}
			\leq \frac{2}{8 \vert \mathcal{T} \vert},
		\end{aligned} 
	\end{align} 
	where the values of	$P_{\sbA}(s_{ab}, s_{aa}), P_{\sbA}(s_{aa}, s_{ac}),P_{\sbA}(s_{ab}, s_{bc})$ are obtained from \eqref{eq:U(1)++StochasticMatrix1} \& \eqref{eq:U(1)++StochasticMatrix3}. These matrix representations also provide us a simple upper bound on $P_{\sbA}(s_{aa}, s_{aa}) \leq 1 - \frac{2}{8 \vert \mathcal{T} \vert}$ because there is at least one $\tau$ acting on $a$ nontrivially. 
	
	For the last case from above, we have
	\begin{align}\label{eq:Comparision2}
	\begin{aligned}
		& \frac{1}{\pi_{\sbD}(q) P_{\sbD}(q,p)} \sum_{\stackrel{x, y}{(q,p) \in \gamma_{xy}}} \widetilde{\pi}(x) \widetilde{P}_{\sbA}(x,y) \vert \gamma_{xy} \vert \\
		= & \frac{1}{\pi_{\sbD}(s_{bc}) P_{\sbD}(s_{bc},s_{ab})} \widetilde{\pi}(s_{bc}) \widetilde{P}_{\sbA}(s_{ab},s_{bc}) 
		+ \frac{1}{\pi_{\sbD}(s_{bc}) P_{\sbD}(s_{bc},s_{ab})} \widetilde{\pi}(s_{ac}) \widetilde{P}_{\sbA}(s_{ac},s_{ab}) \cdot 2,
	\end{aligned}
	\end{align}
	By Claim \ref{claim:U(1)-2} and discussions afterwards, $\pi_{\sbD} = \widetilde{\pi}$ and they are uniformly distributed, which implies $\widetilde{\pi}(s_{ac})/ \pi_{\sbD}(s_{bc}) = 1$ from the second term in the above equation. Since all involved transition matrices are symmetric, this equation can be bounded in the same way as Eq.~\eqref{eq:Comparision}. Additionally, $P_{\sbD}(s_{ab}, s_{bc}) = \frac{1}{8 \vert \mathcal{T} \vert}$ by Claim \ref{claim:U(1)-2}. Assembling all these results, we conclude that
	\begin{align}
		A \leq 1 + \frac{1}{P_{\sbD}(s_{ab}, s_{bc})} \frac{2}{8 \vert \mathcal{T} \vert} \cdot 2 = 5
		\implies \Delta(\widetilde{P}_{\sbA})  
		\leq 5\Delta(P_{\sbD}).
	\end{align}
	by Theorem \ref{thm:path-comparison-theorem}.
\end{proof}

%-------------------------------------------------------------------------------------

Since the spectral gap of $M_{2,\U(1)}^{\mathcal{E}_{\CQA}}$ is
\begin{align}
	\min_{\lambda} \Big\{ \Delta(M_{2,\U(1),\sbA}^{\mathcal{E}_{\CQA}}), \ \Delta(M_{2,\U(1),\sbD}^{\mathcal{E}_{\CQA}}), \ 1 - \lambda_1( M_{2,\U(1),\ytableaushort{ {} {*(black)} , {*(black)} {} }}^{\mathcal{E}_{\CQA}}) )  \Big\}, \tag{D21$^\ast$}
\end{align}
combining \eqref{eq:Inequality-1}, \eqref{eq:Inequality-2} and \eqref{eq:Inequality-3}, we conclude that
\begin{align}
	& \begin{cases}
		& \frac{1}{4} \Delta(M_{2,\U(1),\sbA}^{\mathcal{E}_{\CQA}})
		\leq \Delta( \Cay_{2,\U(1)} ) 
		\leq 2 \Delta(M_{2,\U(1),\sbA}^{\mathcal{E}_{\CQA}}) 
		\leq 10 \Delta(M_{2,\U(1),\sbD}^{\mathcal{E}_{\CQA}}) \\
		& \lambda_1( M_{2,\U(1),\ytableaushort{ {} {*(black)} , {*(black)} {} }}^{\mathcal{E}_{\CQA}}) ) \leq \lambda_2( \Cay_{2,\U(1)}),
	\end{cases} \\
	\implies & \frac{1}{40} \Delta( M_{2,\U(1)}^{\mathcal{E}_{\CQA}} ) \leq \frac{1}{10} \Delta( \Cay_{2,\U(1)} )  \leq \Delta( M_{2,\U(1)}^{\mathcal{E}_{\CQA}} ),
\end{align}
where $\Delta( \Cay_{2,\U(1)} )$ is bounded by Theorem \ref{thm:CayleyGap} and Claim \ref{claim:Cayley-2}. This completes the proof of $\U(1)$ case in Theorem \ref{thm:CQAGap}.

%-------------------------------------------------------------------------------------

\subsection{Proof of \eqref{eq:Case3} under $\SU(d)$ symmetry}\label{sec:detailsSU(d)}

We do not need to prove from scratch for the $\SU(d)$ case since the Young orthogonal form introduced in Section \ref{sec:SnTheory} of any \emph{adjacent} transposition/SWAP $\tau = (j, j+1)$ admits a matrix representation fairly similar to the case when the SWAP acts on computational basis:
\begin{align}\label{eq:YoungOrthogonal2}
	\begin{pmatrix}
		\frac{1}{r} & \sqrt{1 - \frac{1}{r^2}} \\ \sqrt{1 - \frac{1}{r^2}} & -\frac{1}{r}
	\end{pmatrix}, \quad 
	\begin{pmatrix}
		0 & 1 \\ 1 & 0
	\end{pmatrix},
\end{align} 
where $r \in \{\pm 1,...,\pm n-1\}$ is determined by content vectors (see Section \ref{sec:SnTheory}). In a non-rigorous sense, the LHS matrix tends to the RHS as $r \to \infty$, but we have to keep in mind that these two matrices act on distinct representation spaces under different bases. Nevertheless, this observation indicates that many technique details we verified for the $\U(1)$ case still work here.

We now evaluate the spectral gap of $M^{\mathcal{E}_{\CQA}}_{2, \SU(d)}$, still using the notion of invariant subspaces as introduced in Section \ref{sec:kDesigns}. As a caveat, most notations like $S_{\sbA}$ are kept the same as before, but indices of basis states like $(E\indices{^a_a} E\indices{^b_b} + E\indices{^b_b} E\indices{^a_a}) +  (E\indices{^a_b} E\indices{^b_a} + E\indices{^b_a} E\indices{^a_b})$ stand for Young basis elements now. Moreover, the generating set $\mathcal{T}$ now exclusively refers to the collection of all nearest-neighbour, geometrically local transpositions because the proofs are largely based on Young orthogonal forms, which are only defined for these transpositions.

To prove Lemma \ref{lemma:CayletBound1} for the case under $\SU(d)$ symmetry, we first verify the following claims using Young orthogonal form:

\begin{claim}\label{claim:SU(d)-1}
	The following facts hold for the modified CQA second moment operator $M_{2,\SU(d)}^{\mathcal{E}_{\CQA}}$:
	\begin{enumerate}
		\item Within 
		\begin{align}
			S_{\sbA} = \operatorname{span} \left\{ \frac{1}{2} \Big( (E\indices{^a_a} E\indices{^b_b} + E\indices{^b_b} E\indices{^a_a}) +  (E\indices{^a_b} E\indices{^b_a} + E\indices{^b_a} E\indices{^a_b}) \Big), E\indices{^a_a} E\indices{^a_a}; a < b  \right\} \tag{D5$^\ast$}
		\end{align}
		and the under the above prescribed basis states, $M_{2,\SU(d),\sbA}^{\mathcal{E}_{\CQA}}$ is irreducible, doubly stochastic and symmetric.
		
		\item Any nonzero off-diagonal entry of  $M_{2,\SU(d),\sbA}^{\mathcal{E}_{\CQA}}$ comes from the action of a unique $\tau$ when defining $\mathcal{E}_{\CQA}$.
	\end{enumerate}
\end{claim}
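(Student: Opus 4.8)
The plan is to transplant the argument of Claim~\ref{claim:U(1)-1} to the $\SU(d)$ setting, replacing the permutation action of a SWAP on the computational basis by its Young orthogonal form \eqref{eq:YoungOrthogonal}--\eqref{eq:YoungOrthogonal2} on the Young basis of $S^\lambda$. Ignoring multiplicities and fixing a nearest-neighbour $\tau=(j,j+1)$, recall from \eqref{eq:YoungOrthogonal2} that on $\operatorname{span}(\ket a,\ket{\tau\cdot a})$ the operator $\tau$ acts either as $\pm1$ (axial distance $\pm1$) or as the real \emph{symmetric} orthogonal $2\times2$ block with entries $\tfrac1r$, $\sqrt{1-\tfrac1{r^2}}$. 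Two preliminary facts should be set down first. (i) Since $\tau$ is real symmetric with $\tau^2=I$, every pattern $\tau I\tau I$, $I\tau\tau I$, $\ldots$ and hence $M_2^\tau$ and $T_2^{\YJM}$ are real symmetric; as the listed vectors $\tfrac12(E\indices{^a_a}E\indices{^b_b}+E\indices{^b_b}E\indices{^a_a}+E\indices{^a_b}E\indices{^b_a}+E\indices{^b_a}E\indices{^a_b})$ ($a<b$) together with $E\indices{^a_a}E\indices{^a_a}$ form an \emph{orthonormal} basis of $S_{\sbA}$ (which by Lemma~\ref{lemma:Modified} is invariant under $M^{\mathcal{E}_{\CQA}}_{2,\SU(d)}$), the matrix of $M^{\mathcal{E}_{\CQA}}_{2,\SU(d),\sbA}$ in that basis is automatically symmetric. (ii) Writing $v_{ab}=\ket{s_{ab}}\bra{s_{ab}}$ with $\ket{s_{ab}}$ the symmetric basis states of $S^\lambda\otimes S^\lambda$, one has $\sum_{a\le b}v_{ab}=\Pi_+$; a direct computation using that $\tau^{\otimes2}$ commutes with the flip shows $M_2^\tau(\Pi_+)=\Pi_+$ for every $\tau$, and since $\Pi_+$ lies in the unit eigenspace of $T_2^{\YJM}$ this gives that the all-ones vector is a $1$-eigenvector of $M^{\mathcal{E}_{\CQA}}_{2,\SU(d),\sbA}$, so its rows --- and, by (i), its columns --- sum to $1$.

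It then remains to check nonnegativity of the entries, which is where the Young orthogonal form does the real work. Following Claim~\ref{claim:U(1)-1}, I would split the action of a fixed $M_2^\tau$ on the $v_{ab}$'s into the same short list of invariant sub-blocks, now indexed by how $\tau$ mixes the label pair $\{a,b\}$ ($\ket b\in\operatorname{span}(\ket a,\ket{\tau\cdot a})$; $\tau$ moves $\ket a$ to a third Young vector; $\tau$ moves $\ket a$ and $\ket b$ to two fresh vectors; or $\tau$ fixes $\ket a$ and/or $\ket b$ up to sign). The decisive simplification is that the two ``minus'' terms $-II\tau\tau-\tau\tau II$ of $M_2^\tau$ are one-sided multiplications by $\tau^{\otimes2}$, so their matrix elements carry a factor $\langle s_{cd}|s_{ab}\rangle$ and contribute only to the diagonal; using $T_2^{\YJM}$ to discard the inhomogeneous terms \eqref{eq:YJMPorjection}, the off-diagonal entry of $M_2^\tau$ equals $\tfrac12\langle s_{cd}|(\tau\otimes I)|s_{ab}\rangle^2\ge0$ and the diagonal entry equals $\tfrac18\bigl(6+4\langle s_{ab}|(\tau\otimes I)|s_{ab}\rangle^2-2\langle s_{ab}|\tau^{\otimes2}|s_{ab}\rangle\bigr)\ge\tfrac12$. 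Summing over $\mathcal T$ and normalising gives a symmetric doubly stochastic matrix --- reproducing, up to $r$-dependent deformations of the numerical entries, the $3\times3$, $4\times4$, $2\times2$ and $1\times1$ blocks of Claim~\ref{claim:U(1)-1} --- with uniform stationary distribution $\tfrac{2}{d_\lambda(d_\lambda+1)}\Pi_+$.

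Irreducibility and the ``unique $\tau$'' assertion then follow as in the $\U(1)$ case. From $v_{ab}$ one reaches, with positive probability, $v_{a'b}$ whenever $a'=\tau\cdot a$ for some $\tau\in\mathcal T$ of axial distance $\neq\pm1$ (and likewise for $b$), and one moves between $v_{ab}$ and $v_{aa}$ whenever $a,b$ are $\tau$-adjacent, so the chain is irreducible provided the ``tableau graph'' --- standard Young tableaux of shape $\lambda$, joined by adjacent transpositions that actually mix them --- is connected; this holds because otherwise every $\rho_\lambda((i,i+1))$ would be block-diagonal for a nontrivial splitting of the Young basis, hence so would every $\rho_\lambda(\sigma)$, contradicting irreducibility of $S^\lambda$. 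Finally, a nonzero off-diagonal entry $\langle v_{cd}|M^{\mathcal{E}_{\CQA}}_{2,\SU(d),\sbA}|v_{ab}\rangle$ forces $\{c,d\}\in\{\{\tau\cdot a,b\},\{a,\tau\cdot b\}\}$ for some $\tau$, and two standard tableaux are related by at most one adjacent transposition, so exactly one $\tau\in\mathcal T$ contributes. The main thing to be careful about is tracking the signs $\pm1$ and the possibly negative factors $\tfrac1r$ through the block computation and confirming that, after restriction to $S_{\sbA}$ and projection by $T_2^{\YJM}$, they enter only through the manifestly nonnegative combinations above; this is the $\SU(d)$ reason that $M_2^\tau$, rather than $T_2^\tau$, is the right operator to restrict to $S_{\sbA}$.
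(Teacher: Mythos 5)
Your proposal is correct and follows the same overall route as the paper's proof: restrict to $S_{\sbA}$, analyze each $M_2^\tau$ through the Young orthogonal form, and transplant the irreducibility and ``unique $\tau$'' arguments of Claim~\ref{claim:U(1)-1}. Where you differ is in the bookkeeping. The paper exhibits the explicit $r$-dependent $3\times3$, $4\times4$, $2\times2$, $1\times1$ blocks of Eq.~\eqref{eq:stochastic-matrix-invariant-++-SU(d)} and checks each is symmetric and doubly stochastic; you instead derive closed-form matrix elements --- off-diagonal $\tfrac12\langle s_{cd}|(\tau\otimes I)|s_{ab}\rangle^2$ and diagonal $\tfrac18\bigl(6+4\langle s_{ab}|(\tau\otimes I)|s_{ab}\rangle^2-2\langle s_{ab}|\tau^{\otimes 2}|s_{ab}\rangle\bigr)$ --- which make nonnegativity manifest, and you get the row/column sums from the fixed point $M_2^\tau(\Pi_+)=\Pi_+$ (together with $T_2^{\YJM}\Pi_+=\Pi_+$, self-adjointness and orthonormality of the $v_{ab}$) rather than block-by-block; these formulas do reproduce the paper's entries (e.g.\ $\alpha=0$, $\beta=1-2/r_{ab}^2$ gives the $\tfrac18(4+4/r_{ab}^2)$ diagonal of the first block), so the two computations are equivalent. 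You also make explicit something the paper leaves implicit when it ``proves irreducibility in the same way as Claim~\ref{claim:U(1)-1}'': connectivity of the graph of standard tableaux under admissible adjacent transpositions, which you deduce by contradiction from irreducibility of $S^\lambda$. Both treatments land on the same uniform stationary distribution $\pi_{\sbA}\equiv 2/(d_\lambda(d_\lambda+1))$, and your per-entry uniqueness of $\tau$ (two distinct standard tableaux determine the adjacent transposition relating them, and only one label of the pair $\{a,b\}$ can change) is exactly the paper's content-vector argument. In short: same approach and same conclusions, with your version trading the explicit matrices for general matrix-element identities plus a fixed-point argument, and adding a self-contained connectivity proof.
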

\begin{proof}
	The proof is similar to that for Claim \ref{claim:U(1)-1} except we should consider the matrix representation of 
	\begin{align}
		M_{2,\SU(d)}^\tau = T^{\YJM}_2 \left( \frac{6}{8} IIII + \frac{1}{8}(I \tau I \tau + \tau I \tau I + \tau I I \tau + I \tau \tau I - II \tau \tau - \tau \tau II) \right) T^{\YJM}_2
	\end{align}
	using Young orthogonal forms. The action of $M_{2,\U(1)}^\tau$ on \eqref{eq:InvariantSubapce-states1} can be decomposed into even smaller invariant subspaces classified as the following four types (cf. $\U(1)$ cases in Claim \ref{claim:U(1)-1}):
	\begin{align}\label{eq:stochastic-matrix-invariant-++-SU(d)}
		\begin{cases}
			\frac{1}{8} \begin{pmatrix}
				6 + \frac{2}{r_{ab}^2} - 2\Big(1 - \frac{1}{r_{ab}^2} \Big) & 2\Big(1 - \frac{1}{r_{ab}^2} \Big) & 2\Big(1 - \frac{1}{r_{ab}^2} \Big) \\
				2\Big(1 - \frac{1}{r_{ab}^2} \Big) & 6 + \frac{2}{r_{ab}^2}  & 0 \\
				2\Big(1 - \frac{1}{r_{ab}^2} \Big) & 0 & 6 +  \frac{2}{r_{ab}^2}  \\
			\end{pmatrix}, & \text{if }\ \tau\vert_{\text{span}\{a,b\}} = \begin{pmatrix}
			\frac{1}{r_{ab}} & \sqrt{1 - \frac{1}{r_{ab}^2}} \\ \sqrt{1 - \frac{1}{r_{ab}^2}} & -\frac{1}{r_{ab}}
			\end{pmatrix};
			\vspace{3 mm} \\
			\frac{1}{8}  \begin{pmatrix}
				6 + \frac{1}{r_{ac}^2} +  \frac{1}{r_{bd}^2} & 0 & 1 - \frac{1}{r_{bd}^2} & 1 - \frac{1}{r_{ac}^2}  \\
				0 & 6 + \frac{1}{r_{ac}^2} +  \frac{1}{r_{bd}^2} & 1 - \frac{1}{r_{ac}^2} &  1 - \frac{1}{r_{bd}^2} \\
				 1 - \frac{1}{r_{bd}^2} & 1 - \frac{1}{r_{ac}^2} & 6 + \frac{1}{r_{ac}^2} +  \frac{1}{r_{bd}^2} & 0 \\
				1 - \frac{1}{r_{ac}^2} &  1 - \frac{1}{r_{bd}^2} & 0 & 6 + \frac{1}{r_{ac}^2} +  \frac{1}{r_{bd}^2} \\
			\end{pmatrix}, & \text{if } \begin{cases}
			 \tau\vert_{\text{span}\{a,c\}} = \begin{pmatrix}
				\frac{1}{r_{ac}} & \sqrt{1 - \frac{1}{r_{ac}^2}} \\ \sqrt{1 - \frac{1}{r_{ac}^2}} & -\frac{1}{r_{ac}}
			\end{pmatrix}, \\
			\tau\vert_{\text{span}\{b,d\}} = \begin{pmatrix} 
				\frac{1}{r_{bd}} & \sqrt{1 - \frac{1}{r_{bd}^2}} \\ \sqrt{1 - \frac{1}{r_{bd}^2}} & -\frac{1}{r_{bd}}
			\end{pmatrix};
			\end{cases}
			\vspace{3 mm} \\
			\frac{1}{8} \begin{pmatrix}
				7 + \frac{1}{r_{ac}^2} & 1 -\frac{1}{r_{ac}^2} \\
				1 - \frac{1}{r_{ac}^2} & 7 + \frac{1}{r_{ac}^2} \\
			\end{pmatrix}, & \text{if } \begin{cases}
			\tau\vert_{\text{span}\{a,c\}} = \begin{pmatrix}
				\frac{1}{r_{ac}} & \sqrt{1 - \frac{1}{r_{ac}^2}} \\ \sqrt{1 - \frac{1}{r_{ac}^2}} & -\frac{1}{r_{ac}}
			\end{pmatrix}, \\
			\tau(b) = \pm b;
			\end{cases}
			\vspace{3 mm} \\
			\frac{1}{8}\begin{pmatrix}
				8 \\
			\end{pmatrix}, & \text{if }\ \tau(a) = \pm a, \tau(b) = \pm b.
		\end{cases}
	\end{align}
	Note that a case similar to the third one happens if  $\tau\vert_{\text{span}\{b,d\}}$ restricts to an Young orthogonal form with $\tau(a) = \pm a$.
	
	Notice that each submatrix is symmetric and doubly stochastic and each subblock cannot overlap. After summing over the generating set $\mathcal{T}$ and divided by the normalization constant, $M_{2,\SU(d)}^{\mathcal{E}_{\CQA}}$ is a doubly stochastic symmetric matrix. The irreducibility is proved in the same way as Claim \ref{claim:U(1)-1} which guarantees a unique uniform stationary distribution:
	\begin{align}
		\pi_{\sbA}(v_{ab}) \equiv \frac{2}{d_{\lambda }(d_\lambda + 1)}, 
	\end{align}
	where $d_\lambda$ now refers to the dimension of the $S_n$ irrep $S^\lambda$.
	
	The second property is proved by the fact that $\tau$ is a nearest-neighbour transposition. Its action on Young basis, even though being different from the action on computational basis, is just given by transposition on components of the content vectors corresponding to the Young basis elements. Like the $\U(1)$ case, any nonzero off-diagonal entry $M_{2,\SU(d),\sbA}^{\mathcal{E}_{\CQA}}$ comes from the action of a unique  $M_{2,\SU(d),\sbA}^{\tau}$.
\end{proof}

%-------------------------------------------------------------------------------------

\begin{claim}
	Analogously, for the Cayley moment operator 
	\begin{align}
		\Cay_{2,\SU(d)} = \frac{1}{\vert \mathcal{T} \vert} \sum_{\tau \in \mathcal{T}} T_2^{\YJM} \frac{1}{8}(6IIII + \tau I \tau I + I \tau I \tau) T_2^{\YJM}, 
	\end{align}
	let
	\begin{align}\label{sym-cay-state-space-2}
		\Sym \vcentcolon = \operatorname{span} \left\{ \frac{1}{2}\Big( E\indices{^a_a} E\indices{^b_b} + E\indices{^b_b} E\indices{^a_a} \Big), E\indices{^a_a} E\indices{^a_a}; a < b \right\} \subset (S^{\lambda})^{\otimes 4}.
	\end{align}
	Then
	\begin{enumerate}		
		\item The Cayley operator is invariant under this subspace. Let $\Cay_{2,\SU(d), \Sym}$ denotes its restriction. It is an irreducible reversible row-stochastic transition matrix under basis from Eq.~\eqref{sym-cay-state-space-2}.
		
		\item Any nonzero off-diagonal entry of $\Cay_{2,\SU(d), \Sym}$ comes from the action of a unique $\tau \in \mathcal{T}$ when defining $\mathcal{E}_{\CQA}$.
	\end{enumerate}
\end{claim}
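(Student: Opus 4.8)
The plan is to prove this $\SU(d)$ statement by the same route as its $\U(1)$ counterpart, Claim~\ref{claim:Cayley-1}, substituting the \emph{Young orthogonal form} \eqref{eq:YoungOrthogonal}--\eqref{eq:YoungOrthogonal2} of an adjacent SWAP $\tau=(j,j+1)$ on the Young basis of $S^\lambda$ for the permutation-matrix action on the computational basis used there. This is exactly the substitution already carried out in passing from Claim~\ref{claim:U(1)-1} to Claim~\ref{claim:SU(d)-1}, and it is why $\mathcal{T}$ must be taken to consist of nearest-neighbour SWAPs, the only ones for which \eqref{eq:YoungOrthogonal2} is available.

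First I would verify that $\Cay_{2,\SU(d)}$ leaves the subspace $\Sym$ of \eqref{sym-cay-state-space-2} invariant. Writing $\Cay_{2,\SU(d)}^{\tau}=T_2^{\YJM}\,\tfrac18(6\,IIII+\tau I\tau I+I\tau I\tau)\,T_2^{\YJM}$, the operator $\tau\otimes I\otimes\tau\otimes I$ conjugates the first $\operatorname{End}(S^\lambda)$-factor by $\tau$, sending $E\indices{^a_a}$ to $\tfrac1{r^2}E\indices{^a_a}+\tfrac{c}{r}(E\indices{^a_{a'}}+E\indices{^{a'}_a})+c^2E\indices{^{a'}_{a'}}$ with $c=\sqrt{1-1/r^2}$ and $a'=\tau\cdot T_a$ the swapped standard tableau, and $I\otimes\tau\otimes I\otimes\tau$ conjugates the second factor analogously; unlike the $\U(1)$ case this is not diagonal, but the off-diagonal terms $E\indices{^a_{a'}},E\indices{^{a'}_a}$ are annihilated by the outer $T_2^{\YJM}$ via Lemma~\ref{lemma:CQAk} and \eqref{eq:YJMPorjection}, leaving only diagonal-index operators $E\indices{^{a''}_{a''}}E\indices{^{b''}_{b''}}$ with $a''\in\{a,\tau\cdot T_a\}$, $b''\in\{b,\tau\cdot T_b\}$. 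Because $\tau I\tau I+I\tau I\tau$ is symmetric under exchanging the two $\operatorname{End}(S^\lambda)$-factors, the image of a symmetric diagonal-index state is again symmetric, hence lies in $\Sym$, and $T_2^{\YJM}$ fixes $\Sym$ pointwise by \eqref{eq:YJMBases3-1}.

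Next I would, for a fixed $\tau$ and a fixed index pair, classify the action of $\Cay_{2,\SU(d)}^\tau$ into the same four cases as in Claim~\ref{claim:Cayley-1}: (i) $\tau$ mixes $a$ and $b$ (axial distance $|r_{ab}|\ge2$), a $3\times3$ block; (ii) $\tau$ mixes $a\leftrightarrow c$ and $b\leftrightarrow d$ with $a,b,c,d$ distinct, a $4\times4$ block; (iii) $\tau$ mixes $a\leftrightarrow c$ while acting on $b$ as $\pm\mathrm{id}$, or the mirror situation, a $2\times2$ block; (iv) $\tau$ acts as $\pm\mathrm{id}$ on both, the trivial $1\times1$ block. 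Each block is the corresponding $\U(1)$ block of Claim~\ref{claim:Cayley-1} with the integer entries replaced by the $r$-dependent quantities $1/r^2$ and $1-1/r^2$, the same corrections already displayed in \eqref{eq:stochastic-matrix-invariant-++-SU(d)}; since $|r|\ge2$ on all non-degenerate moves the entries stay nonnegative and each row still sums to $1$, so after averaging over $\mathcal{T}$ the matrix $\Cay_{2,\SU(d),\Sym}$ is row-stochastic (only row-, not doubly-, stochastic because the vectors $\tfrac12(E\indices{^a_a}E\indices{^b_b}+E\indices{^b_b}E\indices{^a_a})$ are deliberately left unnormalised). Irreducibility follows as in Claim~\ref{claim:SU(d)-1}: $\mathcal{T}$ generates $S_n$ and $S^\lambda$ is irreducible, so the graph on standard tableaux of shape $\lambda$ whose edges are the moves $T\mapsto\tau\cdot T$ is connected, and this lifts to connectivity of the chain on $\Sym$. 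Reversibility is then checked block by block with the stationary distribution $\pi_{\Cay}(v_{ab})=2/d_\lambda^2$ for $a<b$ and $\pi_{\Cay}(v_{ab})=1/d_\lambda^2$ for $a=b$ (with $d_\lambda=\dim S^\lambda$), exactly as in the $\U(1)$ case; detailed balance is immediate once the $a<b$ coordinates are rescaled by $2$. Finally, for the second assertion, a nonzero off-diagonal transition $v_{ab}\mapsto v_{cd}$ forces $\tau$ to carry the content vector of $T_a$ or of $T_b$ to that of $T_c$ or $T_d$ by a single adjacent swap of its components, which pins down $\tau$ uniquely among $\mathcal{T}$, just as in the last paragraph of Claim~\ref{claim:SU(d)-1}.

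The hard part will be the invariance step: one must confirm that the $T_2^{\YJM}$ projectors collapse precisely the off-diagonal Young-basis terms produced by $\tau I\tau I$ and $I\tau I\tau$, so that $\Cay_{2,\SU(d)}$ preserves $\Sym$ and not a strictly larger subspace, and one must take care with the degenerate axial distances $r=\pm1$, where $\tau$ acts as $\pm\mathrm{id}$ on a Young basis vector and its contribution must be routed into cases (iii) or (iv) rather than producing a new target tableau — a phenomenon that does not arise over $\U(1)$, where conjugation of a diagonal operator by a permutation stays diagonal. Everything else — nonnegativity of the entries, the row sums, the form of the stationary distribution, detailed balance, and the uniqueness of $\tau$ — is a routine transcription of the $\U(1)$ computation of Claim~\ref{claim:Cayley-1} with the substitutions exhibited in \eqref{eq:stochastic-matrix-invariant-++-SU(d)}.
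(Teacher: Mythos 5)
Your proposal is correct and follows essentially the same route as the paper: expand each $\Cay_{2,\SU(d)}^{\tau}$ via the Young orthogonal form into the same four block types (with $1/r^2$, $1-1/r^2$ replacing the integer $\U(1)$ entries), check row-stochasticity, irreducibility, detailed balance against $\pi_{\Cay}=2/d_\lambda^2$ (resp.\ $1/d_\lambda^2$), and uniqueness of the contributing $\tau$. Your explicit justification of the invariance of $\Sym$ (the outer $T_2^{\YJM}$ annihilating the off-diagonal Young-basis terms, with $r=\pm1$ routed to the degenerate cases) is a step the paper leaves implicit via Lemma~\ref{lemma:CQAk} and Section~\ref{sec:detailsT2CQA}, and it is handled correctly.
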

\begin{proof}
	We only write down the matrix representation induced by each $\tau$ when defining $\Cay_{2,\SU(2)}^\tau$:
	\begin{align}\label{eq: stochastic-matrix-invariant-++-SU(d)}
		\begin{cases}
			\frac{1}{8} \begin{pmatrix}
				6 + \frac{2}{r_{ab}^2} & 1 - \frac{1}{r_{ab}^2}  & 1 - \frac{1}{r_{ab}^2} \\
				2\Big(1 - \frac{1}{r_{ab}^2} \Big) & 6 + \frac{2}{r_{ab}^2}  & 0 \\
				2\Big(1 - \frac{1}{r_{ab}^2} \Big) & 0 & 6 +  \frac{2}{r_{ab}^2}  \\
			\end{pmatrix}, & \text{if }\ \tau\vert_{\text{span}\{a,b\}} = \begin{pmatrix}
				\frac{1}{r_{ab}} & \sqrt{1 - \frac{1}{r_{ab}^2}} \\ \sqrt{1 - \frac{1}{r_{ab}^2}} & -\frac{1}{r_{ab}}
			\end{pmatrix};
			\vspace{3 mm} \\
			\frac{1}{8}  \begin{pmatrix}
				6 + \frac{1}{r_{ac}^2} +  \frac{1}{r_{bd}^2} & 0 & 1 - \frac{1}{r_{bd}^2} & 1 - \frac{1}{r_{ac}^2}  \\
				0 & 6 + \frac{1}{r_{ac}^2} +  \frac{1}{r_{bd}^2} & 1 - \frac{1}{r_{ac}^2} &  1 - \frac{1}{r_{bd}^2} \\
				1 - \frac{1}{r_{bd}^2} & 1 - \frac{1}{r_{ac}^2} & 6 + \frac{1}{r_{ac}^2} +  \frac{1}{r_{bd}^2} & 0 \\
				1 - \frac{1}{r_{ac}^2} &  1 - \frac{1}{r_{bd}^2} & 0 & 6 + \frac{1}{r_{ac}^2} +  \frac{1}{r_{bd}^2} \\
			\end{pmatrix}, & \text{if } \begin{cases}
				\tau\vert_{\text{span}\{a,c\}} = \begin{pmatrix}
					\frac{1}{r_{ac}} & \sqrt{1 - \frac{1}{r_{ac}^2}} \\ \sqrt{1 - \frac{1}{r_{ac}^2}} & -\frac{1}{r_{ac}}
				\end{pmatrix}, \\
				\tau\vert_{\text{span}\{b,d\}} = \begin{pmatrix} 
					\frac{1}{r_{bd}} & \sqrt{1 - \frac{1}{r_{bd}^2}} \\ \sqrt{1 - \frac{1}{r_{bd}^2}} & -\frac{1}{r_{bd}}
				\end{pmatrix};
			\end{cases}
			\vspace{3 mm} \\
			\frac{1}{8} \begin{pmatrix}
				7 + \frac{1}{r_{ac}^2} & 1 -\frac{1}{r_{ac}^2} \\
				1 - \frac{1}{r_{ac}^2} & 7 + \frac{1}{r_{ac}^2} \\
			\end{pmatrix}, & \text{if } \begin{cases}
				\tau\vert_{\text{span}\{a,c\}} = \begin{pmatrix}
					\frac{1}{r_{ac}} & \sqrt{1 - \frac{1}{r_{ac}^2}} \\ \sqrt{1 - \frac{1}{r_{ac}^2}} & -\frac{1}{r_{ac}}
				\end{pmatrix}, \\
				\tau(b) = \pm b;
			\end{cases}
			\vspace{3 mm} \\
			\frac{1}{8}\begin{pmatrix}
				8 \\
			\end{pmatrix}, & \text{if }\ \tau(a) = \pm a, \tau(b) = \pm b.
		\end{cases}
	\end{align}
	Note that a case similar to the third one happens if  $\tau\vert_{\text{span}\{b,d\}}$ restricts to an Young orthogonal form with $\tau(a) = \pm a$.
	
	As a result, $\Cay_{2,\SU(1),\Sym}$ is irreducible, reversible and row-stochastic, where the first case for above ceases to the doubly stochastic. The stationary distribution is given by 
	\begin{align}
		\pi_{\Cay}(s_{ab})  = \begin{cases}
			\frac{2}{d_\lambda}, & a < b; \\
			\frac{1}{d_\lambda}, & a = b.
		\end{cases}
	\end{align}
	Or as a vector, it is
	\begin{align}
		\begin{aligned}
			& \frac{2}{d_\lambda^2} \sum_{a < b}  (E\indices{^a_a} E\indices{^b_b} + E\indices{^b_b} E\indices{^a_a})
			+ \frac{1}{d_\lambda^2} \sum_a E\indices{^a_a} E\indices{^a_a}.
		\end{aligned}
	\end{align}
\end{proof}

Different from the $\U(1)$ case, the decomposition of the $2$-fold tensor product $S^\lambda \otimes S^\lambda$ of any $S_n$ irrep has a multiplicity-free trivial irrep~\cite{Fulton1997,Sagan01}. Therefore, the number of unit eigenvalues of
\begin{align}
	\frac{1}{\vert \mathcal{T} \vert}  \sum_{\tau \in \mathcal{T}}  \tau I \tau I + I \tau I \tau 
\end{align} 
as well as $\Cay_{2,\SU(d)}$ projected by $T_2^{\YJM}$, is just one within $(S^\lambda)^{\otimes 4}$. Consequently,
\begin{align}
	\lambda_2 (\Cay_{2, \SU(d),\Sym}) \leq \lambda_2( \Cay ) = \frac{1}{2} \Big( 1 + \frac{1}{\mathcal{T}} \sum_{\tau \in \mathcal{T}} \tau\tau \Big) 
	\leq \frac{1}{2} \Big( 1 + \frac{1}{\vert \mathcal{T} \vert} \lambda_2( \mathcal{G}(\mathcal{T}) ) \Big)
	= \frac{1}{\vert \mathcal{T} \vert} O\Big( \lambda_2( \mathcal{G}(\mathcal{T}) ) \Big).
\end{align}

%-------------------------------------------------------------------------------------

Let
\begin{align}
	P_{\sbA} \vcentcolon = M_{2,\SU(d),\sbA }^{\mathcal{E}_{\CQA}}, \quad P_{\Cay} \vcentcolon = \Cay_{2,\SU(d), \Sym}
\end{align}
As before, we compare the Dirichlet forms of $(S_{\sbA}, P_{\sbA}, \pi_{\sbA})$ and $(S_{\sbA}, P_{\Cay}, \pi_{\Cay})$. Since any nonzero off-diagonal entry of these transition matrices come from the action of a unique $\tau \in \mathcal{T}$, for any $x = v_{ab} \neq v_{cd} = y$ with $v_{ab}, v_{cd}$ being abbreviations of state vectors in Eq.~\eqref{eq:InvariantSubapce-states1}, there are simply three cases when we have nonzero transition probabilities: 
\begin{align}
	\begin{cases}
		\pi_{\Cay}(x) P_{\Cay}(x, y) = \frac{2}{d_\lambda^2} \frac{1}{8\vert \mathcal{T} \vert} \Big(1 - \frac{1}{r_1^2} \Big)
		> \pi_{\sbA}(x) P_{\sbA} (x, y) = \frac{2}{d_\lambda(d_\lambda + 1)} \frac{1}{8\vert \mathcal{T} \vert} \Big(1 - \frac{1}{r_1^2} \Big), & a \neq b, c \neq d;  \\
		\pi_{\Cay}(x) P_{\Cay}(x, y) =  \frac{2}{d_\lambda^2} \frac{1}{8\vert \mathcal{T} \vert} \Big(1 - \frac{1}{r_2^2} \Big)
		< \pi_{\sbA}(x) P_{\sbA} (x, y) = \frac{2}{d_\lambda(d_\lambda + 1)} \frac{2}{\vert 8\mathcal{T} \vert} \Big(1 - \frac{1}{r_2^2} \Big), & a \neq b, c = d; \\
		\pi_{\Cay}(x) P_{\Cay}(x, y) = \frac{1}{d_\lambda^2} \frac{2}{8\vert \mathcal{T} \vert} \Big(1 - \frac{1}{r_3^2} \Big)
		< \pi_{\sbA}(x) P_{\sbA} (x, y) = \frac{1}{d_\lambda(d_\lambda + 1)} \frac{2}{8\vert \mathcal{T} \vert} \Big(1 - \frac{1}{r_3^2} \Big), & a = b, c \neq d,
	\end{cases}
\end{align}
where $r_1,r_2,r_3$ are determined by content vectors of the Young basis elements $a,b,c,d$ (see Section \ref{sec:SnTheory}) indexed the states. In any case, we still have
\begin{align}
\frac{1}{4} \Delta(M_{2,\SU(d),\sbA})   
\leq \Delta(\Cay_{2,\SU(d), \Sym}) 
\leq 2 \Delta(M_{2,\SU(d),\sbA}),
\end{align}
which is formally the same as the $\U(1)$ case.

%-------------------------------------------------------------------------------------

We now consider the case when $M_{2,\SU(d)}^{\mathcal{E}_{\CQA}}$ is restricted to
\begin{align}
	S_{\sbD} = \operatorname{span} \left\{ \frac{1}{2} \Big( E\indices{^a_a} E\indices{^b_b} + E\indices{^b_b} E\indices{^a_a}) -  (E\indices{^a_b} E\indices{^b_a} + E\indices{^b_a} E\indices{^a_b} \Big); a < b \right\}.  \tag{D6$^\ast$}
\end{align}

\begin{claim}\label{claim:SU(d)-2}
	Under bases given in Eq.~\eqref{eq:InvariantSubapce-states2}, $M_{2,\U(1),\sbD}^{\mathcal{E}_{\CQA}}$ is an irreducible doubly stochastic symmetric matrix. Any nonzero off-diagonal entry of  $M_{2,\U(1),\sbD}^{\mathcal{E}_{\CQA}}$ comes from the action of a unique $\tau \in \mathcal{T}$ when defining $\mathcal{E}_{\CQA}$.
\end{claim}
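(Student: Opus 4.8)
The plan is to follow the template of Claim \ref{claim:SU(d)-1} almost verbatim, replacing the symmetric subspace $S_{\sbA}$ by the antisymmetric subspace $S_{\sbD}$ and the symmetric part of each elementary transition block by its antisymmetric counterpart (note that the operator in question is really $M_{2,\SU(d),\sbD}^{\mathcal{E}_{\CQA}}$). Fix an adjacent transposition $\tau=(j,j+1)\in\mathcal{T}$. By the Young orthogonal form \eqref{eq:YoungOrthogonal}, on the span of any two Young basis vectors $\ket{\alpha_a},\ket{\alpha_b}$ of $S^\lambda$ the matrix of $\tau$ is either $\pm\mathrm{id}$ (when $\tau$ fixes both content labels), the pure swap, or the reflection with axial distance $r_{ab}$, and $|r_{ab}|\ge 2$ whenever $\tau$ does not act as $\pm\mathrm{id}$. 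Combined with Lemma \ref{lemma:CQAk}, which guarantees that $T_2^{\YJM}$ projects onto the span of the homogeneous basis operators $E\indices{^a_a}E\indices{^b_b}$ and $E\indices{^a_b}E\indices{^b_a}$, this lets me compute $M_{2,\SU(d)}^\tau$ restricted to $S_{\sbD}$ directly on the prescribed basis states $\tfrac12\big((E\indices{^a_a}E\indices{^b_b}+E\indices{^b_b}E\indices{^a_a})-(E\indices{^a_b}E\indices{^b_a}+E\indices{^b_a}E\indices{^a_b})\big)$.

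First I would establish the vanishing identity needed to reduce $M_{2,\SU(d)}^\tau$ to its ``diagonal-in-subspace'' part: exactly as for the $\SU(d)$ analogue of \eqref{eq:Vanishing2}, the Young orthogonal form gives $T_2^{\YJM}\big(\tfrac1{8|\mathcal{T}|}\sum_\tau I\tau\tau I+\tau I\tau I-\tau\tau II-II\tau\tau\big)T_2^{\YJM}\big|_{S_{\sbD}}=0$, so only $\tfrac68 IIII+\tfrac18(I\tau I\tau+\tau I\tau I)$ contributes. Then I would partition into the same four cases as in Claims \ref{claim:U(1)-1}, \ref{claim:U(1)-2} and \ref{claim:SU(d)-1}, according to whether $\tau$ moves none, one, or both of the labels $\{a,b\}$: in the $\tau(a)=b$ case the antisymmetric state is essentially an eigenvector, so the block is $1\times1$ (the analogue of the $(8)/8$ block in Claim \ref{claim:U(1)-2}); in the $\tau(a)=c,\tau(b)=d$ case one obtains the $4\times4$ matrix appearing in Claim \ref{claim:SU(d)-1} with entries $1-1/r_{ac}^2$ and $1-1/r_{bd}^2$; in the $\tau(a)=c,\tau(b)=b$ case the corresponding $2\times2$ matrix with off-diagonal $1-1/r_{ac}^2$; and the fixed-fixed case the trivial $1\times1$ block. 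The essential points are that each such block is symmetric with all row and column sums equal to $1$, and that distinct blocks occupy disjoint sets of basis states, so averaging over $\mathcal{T}$ and normalizing yields a symmetric, doubly stochastic matrix.

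Next, irreducibility follows as in Claim \ref{claim:U(1)-1}: since $\mathcal{T}$ generates $S_n$, for any two index pairs $(a,b)$ and $(c,d)$ there is a sequence of adjacent transpositions transporting $a\mapsto c$ and $b\mapsto d$ through intermediate Young basis labels, and every such elementary step contributes a strictly positive off-diagonal probability in $S_{\sbD}$ (the reflection blocks have off-diagonal entries $1-1/r^2>0$ for $|r|\ge2$). Hence $M_{2,\SU(d),\sbD}^{\mathcal{E}_{\CQA}}$ is an irreducible, doubly stochastic, symmetric Markov transition matrix whose unique stationary distribution is uniform, $\pi_{\sbD}(v_{ab})\equiv\tfrac{2}{d_\lambda(d_\lambda-1)}$, representable as $\tfrac{2}{d_\lambda(d_\lambda-1)}\Pi_-$. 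The uniqueness-of-$\tau$ claim is immediate: an adjacent transposition acts on the Young basis by transposing two components of the content vector, so a prescribed label move $a\mapsto c$ with $a\ne c$ pins down $\tau$ uniquely among $\mathcal{T}$.

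I expect the only genuine work to be bookkeeping: tracking the $1/r^2$ coefficients through the four blocks and confirming that the $T_2^{\YJM}$ conjugation leaves the row sums at exactly $1$ rather than $1\pm O(1/r^2)$, together with the routine verification that the $\SU(d)$ analogue of \eqref{eq:Vanishing2} holds under the Young orthogonal form. This is also where the restriction to nearest-neighbour (geometrically local) $\mathcal{T}$ is used, since the Young orthogonal form is available only for adjacent transpositions; no conceptual ingredient beyond Claim \ref{claim:SU(d)-1} is required.
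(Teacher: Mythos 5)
Your proposal follows the same architecture as the paper's proof (case-by-case computation of $M_{2,\SU(d)}^{\tau}$ on $S_{\sbD}$ via Young orthogonal forms, symmetry and double stochasticity of non-overlapping blocks, irreducibility from $\mathcal{T}$ generating $S_n$, uniqueness of $\tau$ from its action on content vectors), but the reduction you lead with is incorrect, and it fails exactly in the one block where $M_2^{\tau}$ differs from the Cayley-type operator. The vanishing identity you invoke is the analogue of Eq.~\eqref{eq:Vanishing2}, which the paper states (and which holds) only on the off-diagonal subspaces $S_{\sbB}\oplus S_{\sbC}$; it is false on $S_{\sbD}$. Take your first case, $\tau$ acting nontrivially on $\mathrm{span}\{a,b\}$ with axial distance $r$, set $c=\sqrt{1-1/r^{2}}$, and let $v$ denote the $S_{\sbD}$ basis state labelled by $(a,b)$. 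A direct computation with the Young orthogonal form gives $(\tau I\tau I+I\tau I\tau)(v)=-(\tau II\tau+I\tau\tau I)(v)$, and after the $T_2^{\YJM}$ projection this common quantity equals $\frac{2}{r^{2}}v_{+}+c^{2}\bigl(E\indices{^a_a}E\indices{^a_a}+E\indices{^b_b}E\indices{^b_b}\bigr)$, where $v_{+}$ is the corresponding $S_{\sbA}$ state; in particular it has \emph{no} component along $v$. Meanwhile $(II\tau\tau+\tau\tau II)(v)=-2v$ because $\det(\tau|_{\mathrm{span}\{a,b\}})=-1$. Only when all seven terms of $M_2^{\tau}$ are kept do these contributions cancel and give $M_2^{\tau}(v)=v$, i.e.\ the $(8)/8$ block in Eq.~\eqref{eq:--StochasticMatrixSU(d)}. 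Under your truncation to $\frac{6}{8}IIII+\frac{1}{8}(I\tau I\tau+\tau I\tau I)$ the diagonal entry on $v$ would be $6/8$ with the remaining weight leaking into $S_{\sbA}$, so the restriction to $S_{\sbD}$ would be sub-stochastic and $v$ would not be an eigenvector---contradicting the $1\times1$ identity block you then assert. Your statement that ``the antisymmetric state is essentially an eigenvector'' is true for the full $M_2^{\tau}$ but false for the truncated operator your vanishing claim leaves you with. (Incidentally, the combination you quote also inherits the paper's typo: the cross terms are $I\tau\tau I+\tau II\tau$, not $I\tau\tau I+\tau I\tau I$.)

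The cross and negative terms are indeed annihilated by the Wick projection in the remaining three cases (where $\tau$ does not couple $a$ to $b$), which is why your $4\times4$, $2\times2$ and trivial blocks, and their unit row sums, agree with the paper; and your irreducibility argument, the uniform stationary distribution $\pi_{\sbD}\equiv 2/(d_\lambda(d_\lambda-1))$, and the uniqueness-of-$\tau$ observation are exactly the paper's and are fine. The repair is simply to drop the vanishing claim on $S_{\sbD}$ and compute the full operator $T_2^{\YJM}\,\frac{1}{8}\bigl(6IIII+I\tau I\tau+\tau I\tau I+\tau II\tau+I\tau\tau I-II\tau\tau-\tau\tau II\bigr)\,T_2^{\YJM}$ in each of the four cases, as the paper does; the ``bookkeeping'' you defer is precisely where the cancellation above must be carried out.
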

\begin{proof}
    Analogous to Claim \ref{claim:U(1)-2}, we write down the matrix representation of $M_{2,\SU(d),\sbD}^\tau$ as the following four types:
	\begin{align}\label{eq:--StochasticMatrixSU(d)}
		\begin{cases}
			\frac{1}{8} \begin{pmatrix} 8  \\
			\end{pmatrix}, & \text{if }\ \tau\vert_{\text{span}\{a,b\}} = \begin{pmatrix}
				\frac{1}{r_{ab}} & \sqrt{1 - \frac{1}{r_{ab}^2}} \\ \sqrt{1 - \frac{1}{r_{ab}^2}} & -\frac{1}{r_{ab}}
			\end{pmatrix};
			\vspace{3 mm} \\
			\frac{1}{8}  \begin{pmatrix}
				6 + \frac{1}{r_{ac}^2} +  \frac{1}{r_{bd}^2} & 0 & 1 - \frac{1}{r_{bd}^2} & 1 - \frac{1}{r_{ac}^2}  \\
				0 & 6 + \frac{1}{r_{ac}^2} +  \frac{1}{r_{bd}^2} & 1 - \frac{1}{r_{ac}^2} &  1 - \frac{1}{r_{bd}^2} \\
				1 - \frac{1}{r_{bd}^2} & 1 - \frac{1}{r_{ac}^2} & 6 + \frac{1}{r_{ac}^2} +  \frac{1}{r_{bd}^2} & 0 \\
				1 - \frac{1}{r_{ac}^2} &  1 - \frac{1}{r_{bd}^2} & 0 & 6 + \frac{1}{r_{ac}^2} +  \frac{1}{r_{bd}^2} \\
			\end{pmatrix}, & \text{if } \begin{cases}
				\tau\vert_{\text{span}\{a,c\}} = \begin{pmatrix}
					\frac{1}{r_{ac}} & \sqrt{1 - \frac{1}{r_{ac}^2}} \\ \sqrt{1 - \frac{1}{r_{ac}^2}} & -\frac{1}{r_{ac}}
				\end{pmatrix}, \\
				\tau\vert_{\text{span}\{b,d\}} = \begin{pmatrix} 
					\frac{1}{r_{bd}} & \sqrt{1 - \frac{1}{r_{bd}^2}} \\ \sqrt{1 - \frac{1}{r_{bd}^2}} & -\frac{1}{r_{bd}}
				\end{pmatrix};
			\end{cases}
			\vspace{3 mm} \\
			\frac{1}{8} \begin{pmatrix}
				7 + \frac{1}{r_{ac}^2} & 1 -\frac{1}{r_{ac}^2} \\
				1 - \frac{1}{r_{ac}^2} & 7 + \frac{1}{r_{ac}^2} \\
			\end{pmatrix}, & \text{if } \begin{cases}
				\tau\vert_{\text{span}\{a,c\}} = \begin{pmatrix}
					\frac{1}{r_{ac}} & \sqrt{1 - \frac{1}{r_{ac}^2}} \\ \sqrt{1 - \frac{1}{r_{ac}^2}} & -\frac{1}{r_{ac}}
				\end{pmatrix}, \\
				\tau(b) = \pm b;
			\end{cases}
			\vspace{3 mm} \\
			\frac{1}{8}\begin{pmatrix}
				8 \\
			\end{pmatrix}, & \text{if }\ \tau(a) = \pm a, \tau(b) = \pm b.
		\end{cases}
	\end{align}
	Note that a case similar to the third one happens if  $\tau\vert_{\text{span}\{b,d\}}$ restricts to an Young orthogonal form with $\tau(a) = \pm a$.
	
	Using almost the same argument as before, we can show that $M_{2,\SU(d)}^{\mathcal{E}_{\CQA}}$ is a doubly stochastic, irreducible and symmetric matrix. All of its nonzero off-diagonal entries come from the action of a unique $\tau \in \mathcal{T}$. The stationary distribution is uniform:
	\begin{align}\label{eq:--StationarySU(d)}
		\pi_{\sbD}(v_{ab}) \equiv \frac{2}{d_{\lambda}(d_\lambda - 1)}, 
	\end{align}
	where $d_\lambda$ is the dimension of the $S_n$ irrep $S^\lambda$ now.
\end{proof}

As before, we are going to bound the spectral gap of $M_{2,\SU(d),\sbD}^{\mathcal{E}_{\CQA}}$ using the notion of induced Markov chain. Particularly, let
\begin{align}
	\tilde{S}_{\sbA} \vcentcolon = \operatorname{span} \left\{ \frac{1}{2} \Big( (E\indices{^a_a} E\indices{^b_b} + E\indices{^b_b} E\indices{^a_a}) + (E\indices{^a_b} E\indices{^b_a} + E\indices{^b_a} E\indices{^a_b}) \Big); a < b \right\} \cong S_{\sbD}.
\end{align}
We rewrite the these basis states by $s_{ab}$ with $a \leq b$ for brevity. Let the generating set $\mathcal{T}$ \emph{exclusively} contain nearest-neighbour SWAPs. Expanding by Young orthogonal forms, the transition properties of $\widetilde{P}_{\sbA}$ induced from $P_{\sbA} = M_{2,\SU(d),\sbA}^{\mathcal{E}_{\CQA}}$ has a similar behaviour as that in the $\U(1)$ case: 
\begin{enumerate}
	\item Transition to itself with
	\begin{align}\label{eq:tildeP-5}
		\widetilde{P}_{\sbA}(s_{ab}, s_{ab}) & = P_{\sbA}(s_{ab}, s_{ab})
		+ P_{\sbA}(s_{ab}, s_{aa}) \sum^{\infty}_{p=0} P_{\sbA}(s_{aa}, s_{aa})^p P_{\sbA}(s_{aa}, s_{ab})
		+ P_{\sbA}(s_{ab}, s_{bb}) \sum^{\infty}_{p=0} P_{\sbA}(s_{bb}, s_{bb})^p P_{\sbA}(s_{bb}, s_{ab}) \notag \\
		& = P_{\sbA}(s_{ab}, s_{ab})
		+ P_{\sbA}(s_{ab}, s_{aa}) \frac{P_{\sbA}(s_{bb}, s_{ab}) }{1 - P_{\sbA}(s_{aa}, s_{aa}) }
		+ P_{\sbA}(s_{ab}, s_{bb}) \frac{P_{\sbA}(s_{bb}, s_{ab}) }{1 - P_{\sbA}(s_{bb}, s_{bb}) },
	\end{align}
	where the last two terms is nonzero if and only if there is some nearest-neighbour SWAP $\tau$ such that $\tau\vert_{\text{span}\{a,b\}}$ is a nontrivial Young orthogonal form.
	
	\item Transition to $s_{cd}$ for which $c,d \neq a,b$. Then
	\begin{align}\label{eq:tildeP-6}
		\widetilde{P}_{\sbA}(s_{ab}, s_{cd}) = P_{\sbA}(s_{ab}, s_{cd}).
	\end{align}
	
	\item Transition to $s_{ac}$ with
	\begin{align}\label{eq:tildeP-7}
		\widetilde{P}_{\sbA}(s_{ab}, s_{ac}) & = P_{\sbA}(s_{ab}, s_{ac})
		+ P_{\sbA}(s_{ab}, s_{aa}) \sum^{\infty}_{p=0} P_{\sbA}(s_{aa}, s_{aa})^p P_{\sbA}(s_{aa}, s_{ac}) \notag \\
		& = P_{\sbA}(s_{ab}, s_{ac})
		+ P_{\sbA}(s_{ab}, s_{aa}) \frac{1}{1 - P_{\sbA}(s_{aa}, s_{aa}) } P_{\sbA}(s_{aa}, s_{ac}).
	\end{align}
	where the last term is nonzero if and only if there is some nearest-neighbour SWAPs $\tau_1,\tau_2$ such that $\tau_1\vert_{\text{span}\{a,b\}}, \tau_2\vert_{\text{span}\{a,c\}}$ are all nontrivial. Diagram \eqref{diagram:paths1} in the $\U(1)$ case can still be used to illustrates the situation intuitively with break lines standing for transitions outside $\widetilde{S}_{\sbA}$.
	
	\item Transition to $s_{bd}$ with
	\begin{align}\label{eq:tildeP-8}	
		\widetilde{P}_{\sbA}(s_{ab}, s_{bd}) & = P_{\sbA}(s_{ab}, s_{bd})
		+ P_{\sbA}(s_{ab}, s_{aa}) \sum^{\infty}_{p=0} P_{\sbA}(s_{bb}, s_{bb})^p P_{\sbA}(s_{bb}, s_{bd}) \notag \\
		& = P_{\sbA}(s_{ab}, s_{bd})
		+ P_{\sbA}(s_{ab}, s_{bb}) \frac{1}{1 - P_{\sbA}(s_{bb}, s_{bb}) } P_{\sbA}(s_{bb}, s_{bd}).
	\end{align}
	where the last term is nonzero if and only if there is some nearest-neighbour SWAPs $\tau_3,\tau_4$ such that $\tau_1\vert_{\text{span}\{b,a\}}, \tau_2\vert_{\text{span}\{b,d\}}$ are nontrivial.
\end{enumerate} 
As a caveat, the precise matrix entries of $\widetilde{P}_{\sbA}$ are different from those in the $\U(1)$ case. They need being recalculated using content vectors like we do in proving Claim \ref{claim:SU(d)-1} \& \ref{claim:SU(d)-2}.

\begin{claim}\label{claim:SU(d)-3}
	The following inequalities of the spectral gaps hold:
	\begin{align}
		\Delta(M_{2,\SU(d),\sbA}^{\mathcal{E}_{\CQA}}) 
		= \Delta (P_{\sbA})  
		\leq \Delta(\widetilde{P}_{\sbA}) 
		\leq 7 \Delta(P_{\sbD}) 
		= 7 \Delta(M_{2,\SU(d),\sbD}^{\mathcal{E}_{\CQA}})
		\leq 7 \Delta(\widetilde{P}_{\sbA}). 
	\end{align}
\end{claim}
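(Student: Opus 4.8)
The plan is to run the proof of Claim~\ref{claim:U(1)-3} essentially verbatim, replacing the $\U(1)$ transition matrices with the Young--orthogonal--form blocks \eqref{eq:stochastic-matrix-invariant-++-SU(d)} and \eqref{eq:--StochasticMatrixSU(d)}: since these blocks have the same nonzero pattern and the same dependence on which adjacent SWAPs lie in $\mathcal{T}$, only the numerical estimate at the very end changes, where the rotation coefficients $1-1/r^{2}$ turn the $\U(1)$ constant $5$ into $7$. Throughout, $\mathcal{T}$ is the set of all $n-1$ adjacent transpositions (as required for the Young orthogonal form), the case $\dim S^{\lambda}=1$ is trivial, and I write $P_{\sbA}:=M_{2,\SU(d),\sbA}^{\mathcal{E}_{\CQA}}$, $P_{\sbD}:=M_{2,\SU(d),\sbD}^{\mathcal{E}_{\CQA}}$, and $\widetilde{P}_{\sbA}$ for the Markov chain induced by $P_{\sbA}$ on $\widetilde{S}_{\sbA}\cong S_{\sbD}$.

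First I would record the facts that make the $\U(1)$ bookkeeping transfer unchanged. By Claims~\ref{claim:SU(d)-1} and~\ref{claim:SU(d)-2}, $P_{\sbA}$ and $P_{\sbD}$ are irreducible, symmetric and doubly stochastic with uniform stationary distributions, and each nonzero off-diagonal entry comes from a single $\tau\in\mathcal{T}$. From the $3\times3$ submatrix on $\{v_{ab},s_{aa},s_{bb}\}$ in \eqref{eq:stochastic-matrix-invariant-++-SU(d)}, the $(s_{aa},s_{bb})$ and $(s_{bb},s_{aa})$ entries vanish, so there is never a transition between two distinct diagonal states; this is exactly what makes the geometric sums \eqref{eq:tildeP-5}--\eqref{eq:tildeP-8} collapse to the closed forms used in the $\U(1)$ argument. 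Moreover the $4\times4$ and $2\times2$ blocks in \eqref{eq:stochastic-matrix-invariant-++-SU(d)} and \eqref{eq:--StochasticMatrixSU(d)} are literally identical, and a direct transition $v_{ab}\to v_{ac}$ is admissible under $P_{\sbA}$ iff it is under $P_{\sbD}$ (both require the unique adjacent SWAP exchanging the content-vector components in which the tableaux $a$ and $c$ differ while fixing $a$); hence $\widetilde{P}_{\sbA}\ge P_{\sbD}$ off the diagonal and $\widetilde{P}_{\sbA}\le P_{\sbD}$ on it. One $\SU(d)$-specific point must be checked: for every Young basis vector $a$ of a nontrivial $S_n$ irrep some adjacent SWAP acts nontrivially on $a$---otherwise $a$ would be a joint $\pm1$-eigenvector of all adjacent transpositions, spanning a one-dimensional subrepresentation and forcing the irrep to be trivial or sign---so $1-P_{\sbA}(s_{aa},s_{aa})>0$, and since that SWAP contributes $\tfrac{2}{8|\mathcal{T}|}(1-1/r^{2})\ge\tfrac{3}{16|\mathcal{T}|}$ to the escape probability (using $|r|\ge2$) we obtain the quantitative bound $1/(1-P_{\sbA}(s_{aa},s_{aa}))\le\tfrac{16|\mathcal{T}|}{3}$.

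With these in hand the chain is assembled as before. The two equalities are definitions; $\Delta(P_{\sbA})\le\Delta(\widetilde{P}_{\sbA})$ is Theorem~\ref{thm:induced-chain-spectral-gap}; and $\Delta(P_{\sbD})\le\Delta(\widetilde{P}_{\sbA})$ follows from Theorem~\ref{thm:comparison-dirichlet} with $\alpha=1$, because $\widetilde{P}_{\sbA}$ dominates $P_{\sbD}$ off-diagonal and they share the uniform stationary measure. The only substantive link is $\Delta(\widetilde{P}_{\sbA})\le 7\,\Delta(P_{\sbD})$, via the path-comparison theorem, Theorem~\ref{thm:path-comparison-theorem}: to every edge $(x,y)$ of $\widetilde{P}_{\sbA}$ I assign the one-step $P_{\sbD}$-path $(x,y)$ when $P_{\sbD}(x,y)>0$, and otherwise---this happens only for $x=v_{ab}$, $y=v_{ac}$ sharing one index---the unique two-step path $v_{ab}\to v_{bc}\to v_{ac}$ of Diagram~\eqref{diagram:paths1}. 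The exclusion argument of Diagram~\eqref{diagram:paths2} uses only the nonzero pattern of the transition matrices, so it carries over word for word: a fixed $P_{\sbD}$-edge $(v_{ab},v_{bc})$ is traversed by at most the trivial path and the single two-step path $(v_{ab},v_{bc},v_{ac})$, together with their reverses. Running the estimates of Eqs.~\eqref{eq:Comparision}--\eqref{eq:Comparision2} with $\widetilde{\pi}=\pi_{\sbD}$, $|\gamma_{xy}|\le2$, $\widetilde{P}_{\sbA}(v_{ab},v_{ac})\le\tfrac{2}{8|\mathcal{T}|}$ (the factor $1-1/r^{2}\le1$ only helps here), and $P_{\sbD}(v_{ab},v_{bc})=\tfrac{1}{8|\mathcal{T}|}(1-1/r^{2})\ge\tfrac{3}{32|\mathcal{T}|}$, the congestion ratio of Definition~\ref{def:Congestion} is bounded by $A\le1+\tfrac{16}{3}<7$; the only loss relative to the $\U(1)$ value $5$ is the $\tfrac43$ coming from $1-1/r^{2}\ge\tfrac34$ in the denominator. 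Theorem~\ref{thm:path-comparison-theorem} then gives $\Delta(\widetilde{P}_{\sbA})\le7\,\Delta(P_{\sbD})$, and combining the links proves the claim (and, with Lemma~\ref{lemma:CayletBound1} and Theorem~\ref{thm:CayleyGap}, the $\SU(d)$ half of Theorem~\ref{thm:CQAGap}).

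I expect the main obstacle to be the case analysis behind the path assignment rather than either spectral-gap comparison: one must verify, as in Diagram~\eqref{diagram:paths2}, that reinterpreting ``$\tau$ exchanges $a$ and $b$'' as ``$\tau$ swaps the two content-vector components in which the tableaux $a$ and $b$ differ'' creates no new two-step $P_{\sbD}$-path through a fixed edge, and that whenever $P_{\sbD}(v_{ab},v_{ac})=0$ the direct term $P_{\sbA}(v_{ab},v_{ac})$ in \eqref{eq:tildeP-7} also vanishes, so that only the hidden-path term contributes. Both hold because admissibility of a one-step transition is dictated solely by which adjacent SWAPs lie in $\mathcal{T}$, but this is the place that has to be spelled out case by case rather than merely invoked by analogy.
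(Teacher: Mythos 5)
Your overall route is exactly the paper's: treat the two equalities as definitions, get $\Delta(P_{\sbA})\le\Delta(\widetilde P_{\sbA})$ from Theorem~\ref{thm:induced-chain-spectral-gap}, get $\Delta(P_{\sbD})\le\Delta(\widetilde P_{\sbA})$ from the off-diagonal domination $\widetilde P_{\sbA}\ge P_{\sbD}$ with the common uniform stationary law, and obtain the substantive link $\Delta(\widetilde P_{\sbA})\le 7\Delta(P_{\sbD})$ by the path-comparison theorem with the same one- and two-step paths and the same exclusion analysis as Diagrams~\eqref{diagram:paths1}--\eqref{diagram:paths2}; your preliminary observations (identical $4\times4$/$2\times2$ blocks, no $s_{aa}\leftrightarrow s_{bb}$ transitions, matching nonzero patterns of $P_{\sbA}$ and $P_{\sbD}$ on index-sharing pairs, every Young basis vector of a nontrivial irrep being moved by some adjacent SWAP) are all correct and are what the paper uses implicitly.

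However, the one quantitative step is not justified as written. You claim $\widetilde P_{\sbA}(v_{ab},v_{ac})\le\tfrac{2}{8\vert\mathcal{T}\vert}$ with the remark that the factors $1-1/r^{2}\le 1$ ``only help,'' but those factors also sit in the escape probability: your own bound $1/(1-P_{\sbA}(s_{aa},s_{aa}))\le\tfrac{16\vert\mathcal{T}\vert}{3}$ together with $1-1/r^{2}\le1$ in the two hops yields only $\widetilde P_{\sbA}(v_{ab},v_{ac})\le\tfrac{2}{8\vert\mathcal{T}\vert}\cdot\tfrac{16\vert\mathcal{T}\vert}{3}\cdot\tfrac{2}{8\vert\mathcal{T}\vert}=\tfrac{1}{3\vert\mathcal{T}\vert}$, which exceeds $\tfrac{2}{8\vert\mathcal{T}\vert}$. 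Feeding that consistent bound into your congestion estimate with $P_{\sbD}(v_{ab},v_{bc})\ge\tfrac{3}{32\vert\mathcal{T}\vert}$ gives $A\le 1+\tfrac{64}{9}\approx 8.1>7$, so your bookkeeping does not reach the constant $7$. The repair is easy and there are two ways. The paper's way: keep the factor $1-1/r_{2,ac}^{2}$ in the numerator, i.e.\ $\widetilde P_{\sbA}(s_{ab},s_{ac})\le\tfrac{8}{3}\cdot\tfrac{1}{8\vert\mathcal{T}\vert}\bigl(1-\tfrac{1}{r_{2,ac}^{2}}\bigr)$, and observe that $P_{\sbD}(s_{ab},s_{bc})=\tfrac{1}{8\vert\mathcal{T}\vert}\bigl(1-\tfrac{1}{r_{2,ac}^{2}}\bigr)$ carries exactly the same factor, so it cancels in the congestion ratio and $A\le 1+\tfrac{8}{3}\cdot 2=\tfrac{19}{3}<7$. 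Alternatively (a sharper route that actually vindicates your intermediate claim): bound the escape probability by the single term $1-P_{\sbA}(s_{aa},s_{aa})\ge P_{\sbA}(s_{aa},s_{ab})=\tfrac{2}{8\vert\mathcal{T}\vert}\bigl(1-\tfrac{1}{r_{1,ab}^{2}}\bigr)$, which cancels the first-hop factor and gives $\widetilde P_{\sbA}(s_{ab},s_{ac})\le\tfrac{2}{8\vert\mathcal{T}\vert}\bigl(1-\tfrac{1}{r_{2,ac}^{2}}\bigr)$, whence even $A\le 5$. Either fix restores the claim; as submitted, the estimate is internally inconsistent at precisely the point where the $\SU(d)$ case differs from $\U(1)$.
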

\begin{proof}
	Like Claim \ref{claim:U(1)-3}, we need to define paths relative to transitions given by $P_{\sbD}$ in order to apply Theorem \ref{thm:comparison-dirichlet}. With the understanding on actions of nearest-neighbour SWAPs through Young orthogonal forms and content vectors, we can easily confirm that these paths can be defined by the same criteria as the $\U(1)$ case. Especially, Diagrams \ref{diagram:paths1} \& \ref{diagram:paths2} can be reused to indicate valid paths when the basis indices $a,b,c,d,...$ are treated as Young basis elements here.
	
	As a result, we only need to examine \eqref{eq:Comparision} \& \eqref{eq:Comparision2} for the $\SU(d)$ case. Suppose $p = s_{ab}, q = s_{bc}$ and suppose there is nearest-neighbour SWAPs $\tau_1,\tau_2$ such that $\tau_1\vert_{\text{span}\{a,b\}}, \tau_2\vert_{\text{span}\{a,c\}}$ are all nontrivial. We compute
	\begin{align}
	\begin{aligned}
		& \frac{1}{\pi_{\sbD}(p) P_{\sbD}(p,q)} \sum_{\stackrel{x, y}{(p,q) \in \gamma_{xy}}} \widetilde{\pi}(x) \widetilde{P}_{\sbA}(x,y) \vert \gamma_{xy} \vert \\
		= & \frac{1}{\pi_{\sbD}(s_{ab}) P_{\sbD}(s_{ab},s_{bc})} \widetilde{\pi}(s_{ab}) \widetilde{P}_{\sbA}(s_{ab},s_{bc}) 
		+ \frac{1}{\pi_{\sbD}(s_{ab}) P_{\sbD}(s_{ab},s_{bc})} \widetilde{\pi}(s_{ab}) \widetilde{P}_{\sbA}(s_{ab},s_{ac}) \cdot 2.
	\end{aligned}
	\end{align}
	By Claim \ref{claim:SU(d)-3},
	\begin{align}
		P_{\sbA}(s_{ab}, s_{bc}) = \frac{1}{8 \vert \mathcal{T} \vert} \Big( 1 - \frac{1}{r_{2,ac}^2} \Big), \quad 
		P_{\sbA}(s_{ab}, s_{aa}) = \frac{1}{8 \vert \mathcal{T} \vert} 2\Big( 1 - \frac{1}{r_{1,ab}^2} \Big), \quad
		P_{\sbA}(s_{aa}, s_{ac}) = \frac{1}{8 \vert \mathcal{T} \vert} 2\Big( 1 - \frac{1}{r_{2,ac}^2} \Big),
	\end{align}
	where $r_{1,ab}, r_{2,ac}$ appear from the Young orthogonal forms of $\tau_1,\tau_2$ respectively (cf. Diagram \eqref{diagram:paths1}). Then
	\begin{align}
		\begin{aligned}
			\widetilde{P}_{\sbA}(s_{ab},s_{ac}) 
			= & P_{\sbA}(s_{ab}, s_{aa}) \frac{1}{1 - P_{\sbA}(s_{aa}, s_{aa}) } P_{\sbA}(s_{aa}, s_{ac}) \\
			= & \frac{2}{8 \vert \mathcal{T} \vert} \Big( 1 - \frac{1}{r_{1,ab}^2} \Big) \frac{1}{1 - P_{\sbA}(s_{aa}, s_{aa}) } \frac{2}{8 \vert \mathcal{T} \vert} \Big( 1 - \frac{1}{r_{2,ac}^2} \Big)
			\leq \frac{1}{8 \vert \mathcal{T} \vert} \Big( 1 - \frac{1}{r_{2,ac}^2} \Big) \cdot \frac{8}{3}.
		\end{aligned} 
	\end{align} 
	To find the above upper bound, we set $r_{1,ab}^2 \to \infty$ (even they are upper bounded by $n^2$), and we take a simple upper bound on $P_{\sbA}(s_{aa}, s_{aa})$:
	\begin{align}
		P_{\sbA}(s_{aa}, s_{aa}) \leq 1 - \frac{2}{8 \vert \mathcal{T} \vert} \Big( 1 - \frac{1}{r^2} \Big) < 1 - \frac{2}{8 \vert \mathcal{T} \vert} \cdot \frac{3}{4}
	\end{align}
	where $r$ appears from the Young orthogonal form of a certain $\tau$ acting on $a$ nontrivially and we set $r^2 = 4$. 
	
	On the other hand,
	\begin{align}
		\begin{aligned}
			& \frac{1}{\pi_{\sbD}(q) P_{\sbD}(q,p)} \sum_{\stackrel{x, y}{(q,p) \in \gamma_{xy}}} \widetilde{\pi}(x) \widetilde{P}_{\sbA}(x,y) \vert \gamma_{xy} \vert \\
			= & \frac{1}{\pi_{\sbD}(s_{bc}) P_{\sbD}(s_{bc},s_{ab})} \widetilde{\pi}(s_{bc}) \widetilde{P}_{\sbA}(s_{ab},s_{bc}) 
			+ \frac{1}{\pi_{\sbD}(s_{bc}) P_{\sbD}(s_{bc},s_{ab})} \widetilde{\pi}(s_{ac}) \widetilde{P}_{\sbA}(s_{ac},s_{ab}) \cdot 2
		\end{aligned}
	\end{align}
	can be bounded in the same way as earlier one because all involved transition matrices are symmetric and we still have $\widetilde{\pi}(s_{ac})/ \pi_{\sbD}(s_{bc}) = 1$ from the second term in the above equation, just like the case in Claim \ref{claim:U(1)-3}. By Claim \ref{claim:U(1)-2},
	\begin{align}
		P_{\sbD}(s_{ab}, s_{bc}) = \frac{1}{8 \vert \mathcal{T} \vert} \Big( 1 - \frac{1}{r_{2,ac}^2} \Big).
	\end{align}
	Assembling all these results, we conclude that
	\begin{align}
		A \leq 1 + \frac{1}{P_{\sbD}(s_{ab}, s_{bc})} \frac{1}{8 \vert \mathcal{T} \vert} \Big( 1 - \frac{1}{r_{2,ac}^2} \Big)  \cdot \frac{8}{3} \cdot 2 < 7 
		\implies  \Delta(\widetilde{P}_{\sbA})  
		\leq 7\Delta(P_{\sbD}) 
	\end{align}
	by Theorem \ref{thm:path-comparison-theorem}.
\end{proof}

%-------------------------------------------------------------------------------------

Since the spectral gap of $M_{2,\SU(d)}^{\mathcal{E}_{\CQA}}$ is
\begin{align}
\min_{\lambda} \Big\{ \Delta(M_{2,\SU(d),\sbA}^{\mathcal{E}_{\CQA}}), \ \Delta(M_{2,\SU(d),\sbD}^{\mathcal{E}_{\CQA}}), \ 1 - \lambda_1( M_{2,\SU(d),\ytableaushort{ {} {*(black)} , {*(black)} {} }}^{\mathcal{E}_{\CQA}}) )  \Big\}, \tag{D21$^\ast$}
\end{align}
we conclude that
\begin{align}
& \begin{cases}
	& \frac{1}{4} \Delta(M_{2,\SU(1),\sbA}^{\mathcal{E}_{\CQA}})
	\leq \Delta( \Cay_{2,\SU(d)} ) 
	\leq 2 \Delta(M_{2,\SU(d),\sbA}^{\mathcal{E}_{\CQA}}) 
	\leq 14 \Delta(M_{2,\SU(d),\sbD}^{\mathcal{E}_{\CQA}}) \\
	& \lambda_1( M_{2,\SU(d),\ytableaushort{ {} {*(black)} , {*(black)} {} }}^{\mathcal{E}_{\CQA}}) ) \leq \lambda_2( \Cay_{2,\SU(d)}),
\end{cases} \\
\implies & \frac{1}{56} \Delta( M_{2,\U(1)}^{\mathcal{E}_{\CQA}} ) \leq \frac{1}{14} \Delta( \Cay_{2,\U(1)} )  \leq \Delta( M_{2,\U(1)}^{\mathcal{E}_{\CQA}} ),
\end{align}
where $\Delta( \Cay_{2,\SU(d)} )$ is bounded by Theorem \ref{thm:CayleyGap}. This completes the proof of $\SU(d)$ case in Theorem \ref{thm:CQAGap}.

%-------------------------------------------------------------------------------------

\end{document}